\documentclass[10pt]{iopart}

\usepackage{iopams}
\usepackage{amssymb}
\usepackage{amsfonts}
\usepackage{amstext}
\usepackage{graphicx}
\usepackage{setstack}
\usepackage{amscd}
\usepackage{mathrsfs}
\usepackage{stmaryrd}
\usepackage{bbm}
\usepackage{color}

\newcommand{\ihbar}{\imath \hbar}
\newcommand{\Aut}{\mathrm{Aut}}
\newcommand{\R}{\Re\mathrm{e}}
\newcommand{\I}{\Im\mathrm{m}}
\newcommand{\llangle}{\langle \hspace{-0.2em} \langle}
\newcommand{\rrangle}{\rangle \hspace{-0.2em} \rangle}

\newcommand{\Sp}{\mathrm{Sp}}
\newcommand{\mod}{\ \mathrm{mod}\ }
\newcommand{\Te}{\mathbb{T}e}

\newtheorem{defi}{Definition}
\newtheorem{theo}{Theorem}
\newtheorem{prop}{Property}
\newtheorem{propo}{Proposition}

\newtheorem{lem}{Lemma}

\newenvironment{proof}{\noindent \textit{Proof:}}{\hfill $\Box$ \\}

\begin{document}

\title[Quasienergy states of quantum systems driven by a classical flow]{Schr\"odinger-Koopman quasienergy states of quantum systems driven by a classical flow}

\author{David Viennot \& Lucile Aubourg}
\address{Institut UTINAM (CNRS UMR 6213, Universit\'e de Bourgogne-Franche-Comt\'e, Observatoire de Besan\c con), 41bis Avenue de l'Observatoire, BP1615, 25010 Besan\c con cedex, France.}

\begin{abstract}
We study the properties of the quasienergy states of a quantum system driven by a classical dynamical system. The quasienergies are defined in a same manner as in light-matter interaction but where the Floquet approach is generalized by the use of the Koopman approach of dynamical systems. We show how the properties of the classical flow (fixed and cyclic points, ergodicity, chaos) influence the driven quantum system. This approach of the Schr\"odinger-Koopman quasienergies can be applied to quantum control, quantum information in presence of noises, and dynamics of mixed classical-quantum systems. We treat the example of a kicked spin ensemble where the kick modulation is governed by discrete classical flows as the Arnold's cat map and the Chirikov standard map.
\end{abstract}

\pacs{03.65.Db, 02.30.Sa, 03.65.Yz, 05.45.Mt, 03.65.Vf}

\section{Introduction}
The concept of quasienergy has been introduced in the semiclassical theory of light-matter interactions in the sequel of the introduction of the Floquet theory by Shirley \cite{Shirley}. The Floquet theory and the concept of quasienergy states has then been used in various works \cite{Sambe, Barone, Guerin, Drese, Guerin2, Viennot}. A quantum system interacting with a laser field is described by a Schr\"odinger equation governed in the Hilbert space $\mathcal H$ by a periodic quantum Hamiltonian, as for example $H(\omega t) = H_0 + \mu E\cos(\omega t + \theta_0)$ (where $H_0$ is the free Hamiltonian, $\mu$ is the dielectric dipole, $E$ the laser field amplitude and $\omega$ the laser field frequency). The Floquet theory consists to consider the Schr\"odinger-Floquet equation governed by the Floquet Hamiltonian $H_F = -\ihbar \omega \frac{\partial}{\partial \theta} + H(\theta)$ living in the enlarged Hilbert space $\mathcal F = L^2(\mathbb S^1,\frac{d\theta}{2\pi}) \otimes \mathcal H$ (where $L^2(\mathbb S^1,\frac{d\theta}{2\pi})$ is the space of square-integrable functions on the circle $\mathbb S^1$). The quasienergy spectrum is $\Sp(H_F)$ and the quasienergy states are the eigenvectors of $H_F$. In contrast with $H$, $H_F$ (and then the quasienergies) is invariant under Weyl gauge transformations; quasienergy states represent the states of the quantum system dressed by the photons of the field; and the Floquet theory is closely related to the pure quantum theory of light-matter interactions \cite{Guerin3}. This approach has also been used for periodically kicked systems \cite{Haake}, quasiperiodic laser control (quantum system driven by multifrequency laser fields) \cite{Guerin2}, and to define (non-adiabatic) perdiodic geometric phases \cite{Moore1, Moore2, Moore3, Moore4, Moore5}.\\
The light-matter interaction example consists to a quantum system driven by a classical flow onto the circle, $\varphi^t \in \Aut(\mathbb S^1)$, defined by $\varphi^t(\theta) = \omega t + \theta \mod 2\pi$ ($\Aut(\mathbb S^1)$ is the space of the automorphisms of $\mathbb S^1$). It is possible to generalize the approach to any classical flow (non-necessarily periodic and eventually chaotic) by replacing the Floquet theorem by the Koopman approach of dynamical systems \cite{Koopman1, Koopman, Budisic, RS1, Lasota, Eisner}. This method has been used to define quantum Lyapunov exponents \cite{Sapin} and to define entanglement of mixed classical-quantum systems \cite{Jauslin}. These works focus on some properties of the Schr\"odinger-Koopman equation; in the present paper we want to study the physics supported by the quasienergy states involved by this approach. In particular, we want to show how some properties of the classical flow are ``transmitted'' to the driven quantum system. Schr\"odinger-Koopman quasienergy states can be used to study mixed classical-quantum systems \cite{Jauslin,Gay}, quantum information of open quantum systems where the classical flow modelizes the environmental noise \cite{Viennot2, Aubourg}, and quantum control problems \cite{Aubourg2}.\\
This paper is organized as follows. Section 2 recalls the principle of the Schr\"odinger-Koopman (SK) approach. This section is a review of kown results needed to understand the present paper. From section 3, we present new considerations and results which are not been considered in previous works. Section 3 is dedicated to the SK quasienergy states, their fundamental properties and how compute them. The role for the controlled quantum system of the fixed points, cycles and ergodic components of the classical system is explored. Section 4 studies the dynamics starting from a quasienergy state. In particular, we introduce a new geometric phase occuring for quantum system driven by an ergodic flow and we study the density matrix resulting from the entanglement between the quantum and the classical systems. The effects of the ergodic and mixing properties of the classical system onto the controlled quantum system are studied. Finally section 5 exhibits quasienergy states for quantum kicked spin systems where the kicks are modulated by three representative classical flows, a cyclic continuous automorphism of the torus (CAT) map, the Arnold's CAT map and the Chirikov standard map. The main result of this paper is the extension of the notion of quasienergy state to any classical flow and the obtention of their properties, which are presented sections 3 and 4. Section 6 presents a discussion concerning how these quasienergy states can be used in quantum control and quantum information problems.

\section{The Schr\"odinger-Koopman approach}
In this section, we recall some results usefull to understand the sequel of this paper. Some usefull results of the Koopman theory can also be found in \ref{appendixA}. More complete expositions of the Koopman theory can be found in \cite{Budisic, Lasota, Eisner}, and of the Schr\"odinger-Koopman approach in \cite{Sapin, Jauslin}. In this paper, we use the terminology ``Koopman approach'' for the use of the Koopman operator to treat a single classical system, whereas the terminology ``Schr\"odinger-Koopman approach'' is used for a quantum system controlled by a classical flow.

\subsection{The Koopman approach of dynamical systems}
\begin{defi}[Continuous time classical dynamical system]
A continuous time (autonomous) classical dynamical system is the three kinds of data $(\Gamma,\varphi^t,\mu)$ where $\Gamma$ is a topological space called the phase space, $\mathbb R^{(+)} \ni t \mapsto \varphi^t \in \mathrm{Aut}(\Gamma)$ is a one parameter continuous group of automorphisms of $\Gamma$ called the flow, and $\mu$ is a measure on $\Gamma$ defined with a $\sigma$-algebra $\mathscr T$. The dynamical system is said conservative if for all open set $A \in \mathscr T$, $\mu(\varphi^t(A)) = \mu(A)$.
\end{defi}
For convenience reasons, in this paper we consider that $\Gamma = \mathbb T^m$ ($m$-torus) and $\mu(\Gamma)=1$ with $\mathscr T$ the Borel $\sigma$-algebra. A point of $\Gamma$ is denoted by $\theta = (\theta^1,...,\theta^m)$ with the abuse of notation consisting to denoting a point with their local coordinates. Moreover we restrict our attention only on conservative dynamical systems.\\

Let $\theta(t) = \varphi^t(\theta_0)$ be a phase trajectory ($\theta_0 \in \Gamma$). $F \in \mathrm{Aut}(\Gamma)$ defined by
\begin{equation}
\dot \theta = F(\theta)
\end{equation}
is called the generator of the flow $\varphi^t$.

\begin{defi}[Koopman operator]
The Koopman operator of a dynamical system $(\Gamma,\varphi^t,\mu)$ is the linear operator $\mathcal T^t \in \mathcal L(L^2(\Gamma,d\mu))$ defined by
\begin{equation}
\forall f \in L^2(\Gamma,d\mu), \quad \mathcal T^t f(\theta) = f(\varphi^t(\theta))
\end{equation}
\end{defi}
$L^2(\Gamma,d\mu)$ is the space of square-integrable observables of the dynamical system. The Koopman operator permits to treat the nonlinear dynamics $\dot \theta = F(\theta)$ of the phase space as a linear dynamics on the space of the observables.

\begin{prop}
The linear generator of the Koopman operator is $F^\mu(\theta) \frac{\partial}{\partial \theta^\mu} \in \mathcal L(L^2(\Gamma,d\mu))$, i.e. $\mathcal T^t = e^{t F^\mu \partial_\mu}$.
\end{prop}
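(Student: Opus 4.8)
The plan is to show that the one-parameter family $\mathcal T^t$ is a strongly continuous semigroup on $L^2(\Gamma,d\mu)$ and then to identify its infinitesimal generator by differentiating at $t=0$ on a dense core of smooth functions. First I would check the semigroup (group) law: since $\varphi^{t+s} = \varphi^t \circ \varphi^s$ by the group property of the flow, one has $\mathcal T^{t+s}f(\theta) = f(\varphi^{t+s}(\theta)) = f(\varphi^s(\varphi^t(\theta))) = (\mathcal T^s f)(\varphi^t(\theta)) = \mathcal T^t \mathcal T^s f(\theta)$, and $\mathcal T^0 = \id$; moreover conservativity ($\mu(\varphi^t(A)) = \mu(A)$) gives $\|\mathcal T^t f\|_{L^2} = \|f\|_{L^2}$, so each $\mathcal T^t$ is unitary and in particular bounded. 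Strong continuity $\mathcal T^t f \to f$ as $t\to 0$ follows for continuous $f$ from the continuity of $(t,\theta)\mapsto\varphi^t(\theta)$ and dominated convergence on the compact space $\Gamma=\mathbb T^m$, then extends to all of $L^2$ by density of $C(\Gamma)$ and the uniform bound $\|\mathcal T^t\|=1$.

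Next I would compute the generator. For $f \in C^1(\mathbb T^m)$ the trajectory $t \mapsto \varphi^t(\theta)$ solves $\dot\theta = F(\theta)$, so by the chain rule
\begin{equation}
\frac{d}{dt}\Big|_{t=0} \mathcal T^t f(\theta) = \frac{d}{dt}\Big|_{t=0} f(\varphi^t(\theta)) = \frac{\partial f}{\partial \theta^\mu}(\theta)\, \dot\theta^\mu\Big|_{t=0} = F^\mu(\theta)\,\frac{\partial f}{\partial\theta^\mu}(\theta),
\end{equation}
with summation over $\mu$. Thus on the dense subspace $C^\infty(\mathbb T^m) \subset L^2(\Gamma,d\mu)$ the strong derivative of $\mathcal T^t$ at $t=0$ is the first-order differential operator $F^\mu \partial_\mu$. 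To promote this to the statement $\mathcal T^t = e^{tF^\mu\partial_\mu}$ I would invoke the standard semigroup fact that a core for the generator is enough: $C^\infty(\mathbb T^m)$ is invariant under $\mathcal T^t$ (composition with the smooth flow preserves smoothness) and dense, hence by Nelson's/the core theorem it is a core for the infinitesimal generator $A$ of $(\mathcal T^t)$, and since $A$ agrees with $F^\mu\partial_\mu$ there, $A$ is the closure of $F^\mu\partial_\mu$ and $\mathcal T^t = e^{tA}$.

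The main obstacle is the functional-analytic bookkeeping rather than any computation: the pointwise derivative above must be upgraded to a genuine $L^2$-norm limit (uniform control of the remainder in Taylor's formula over the compact $\Gamma$, using smoothness of $F$ and $\varphi^t$, makes this routine), and one must be slightly careful that $F^\mu\partial_\mu$ is only essentially self-/skew-adjoint on $C^\infty$ — it is its closure that is the true generator. I would also remark that conservativity makes $F$ divergence-free with respect to $\mu$, so $F^\mu\partial_\mu$ is skew-adjoint and $\mathcal T^t$ unitary, consistent with Stone's theorem; this is the cleanest route but for the bare statement the core argument above suffices. Everything else — the group law, boundedness, strong continuity — is immediate from the definitions recalled in the excerpt and the standing assumption $\Gamma=\mathbb T^m$ compact with $\varphi^t$ a continuous flow.
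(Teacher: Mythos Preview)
Your proposal is correct and is essentially a fleshed-out version of the paper's own argument, which consists of the single sentence ``This property results from a direct application of the Stone theorem''; you have supplied precisely the verifications (strong continuity, unitarity from conservativity, identification of the generator on the $C^\infty$ core via the chain rule) that make Stone's theorem applicable and pin down the generator. The only minor remark is that your core/Nelson route and the Stone route you mention at the end are really the same thing here, so there is no genuine divergence from the paper.
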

This property results from a direct application of the Stone theorem \cite{RS1}.\\

The Koopman operator is unitary and its generator is anti-selfadjoint for a conservative flow. We can note the interesting case of an Hamiltonian system with $\theta = (q,p)$ where $p_i$ is the conjugate momentum of $q^i$ and with
\begin{eqnarray}
\dot q^i & = & \frac{\partial \mathscr H(q,p)}{\partial p_i} \\ \dot p_i & = & - \frac{\partial \mathscr H(q,p)}{\partial q^i} 
\end{eqnarray}
where $\mathscr H \in \mathcal C^1(\Gamma)$ is the classical Hamiltonian of the dynamical system. In that case we have
\begin{eqnarray}
F^\mu \partial_\mu & = & \frac{\partial \mathscr H(q,p)}{\partial p_i} \frac{\partial}{\partial q^i} - \frac{\partial \mathscr H(q,p)}{\partial q^i}  \frac{\partial}{\partial p_i} \\
& = & \{ \cdot, \mathscr H \}
\end{eqnarray}
where $\{\cdot,\cdot\}$ denotes the Poisson braket.\\

Throughout this paper we use the Koopman eigenvalues $\lambda \in \Sp(F^\mu \partial_\mu)$ and $f_\lambda \in \mathcal C^1(\Gamma)$ their associated eigenfunctions.
\begin{eqnarray}
F^\mu(\theta) \frac{\partial f_\lambda(\theta)}{\partial \theta^\mu} & = & \lambda f_\lambda(\theta) \\
\mathcal T^t f_\lambda(\theta) & = & e^{\lambda t} f_\lambda(\theta)
\end{eqnarray}
$f_\lambda$ is called a Koopman mode.\\

In a same manner, we can define for a \textbf{discrete time classical dynamical system} $(\Gamma,\varphi,\mu)$ ($\varphi \in \Aut(\Gamma)$, $\theta_{n+1} = \varphi(\theta_n)$) a Koopman operator such that $\forall f \in L^2(\Gamma,d\mu)$, $\mathcal T f(\theta) = f(\varphi(\theta))$.\\

\subsection{The Schr\"odinger-Koopman equation}
\begin{defi}[Driven continuous time quantum system]
A continuous time quantum system driven by a classical dynamical system is the five kinds of data $(\Gamma,\mu,\varphi^t,\mathcal H,H)$ where $(\Gamma,\mu,\varphi^t)$ is a classical dynamical system, $\mathcal H$ is a quantum state Hilbert space and $\Gamma \ni \theta \mapsto H(\theta) \in \mathcal L(\mathcal H)$ is a familly of self-adjoint Hamiltonians strongly continuous with respect to $\theta$. The dynamics of the driven quantum system is governed by the Schr\"odinger equation:
\begin{equation}
\ihbar \frac{d\tilde \psi(t)}{dt} = H(\varphi^t(\theta_0)) \tilde \psi(t)
\end{equation}
with $\tilde \psi(t=0)=\tilde \psi_0 \in \mathcal H$ the initial condition for the quantum system, $\theta_0 \in \Gamma$ being the initial condition for the classical system.
\end{defi}
Generally we have $H(\theta) = H_0 + H_{ctrl}(\theta)$ where $H_0$ independent of $\theta$ is the Hamiltonian of the isolated quantum system (spin, atom, molecule, ...) and $H_{ctrl}(\theta)$ is a control Hamiltonian representing the action on the quantum system of a classical control system (electromagnetic fields, STM, classical medium out of equilibrium, ...) obeying to the dynamics of $(\Gamma,\mu,\varphi^t)$.\\

The definition can be extended to a discrete time dynamical system:
\begin{defi}[Driven stroboscopic quantum system]
  A stroboscopic quantum system driven by a classical dynamical system is the five kinds of data $(\Gamma,\mu,\varphi,\mathcal H,U)$ where $(\Gamma,\mu,\varphi)$ is a discrete time classical dynamical system, $\mathcal H$ is a quantum state Hilbert space and $\Gamma \ni \theta \mapsto U(\theta) \in \mathcal U(\mathcal H)$ is a family of unitary evolution operators strongly continuous with respect to $\theta$. The stroboscopic dynamics of the driven quantum system is governed by the equation
  \begin{equation}
    \tilde \psi_{n+1} = U(\varphi^n(\theta_0)) \tilde \psi_n
  \end{equation}
  with $\tilde \psi_0 \in \mathcal H$ the initial condition for the quantum system, $\theta_0 \in \Gamma$ beging the initial condition for the classical system.
\end{defi}
Such a system results from a quantum system governed by a time-dependent Hamiltonian as
\begin{equation}
  H(t) = H_0 + \sum_{n \in \mathbb N} W(\varphi^n(\theta_0)) \delta(t-nT)
\end{equation}
which corresponds to a system with free Hamiltonian $H_0$ periodically kicked by ultra-fast pulses with $W(\theta) \in \mathcal L(\mathcal H)$ the kicking operator (depending from the value of $\theta$). The single period evolution operator (from $nT$ to $(n+1)T$) is $U(\varphi^n(\theta_0)) =  e^{-\ihbar^{-1} H_0 T} e^{-\imath W(\varphi^n(\theta_0))}$. $\tilde \psi_n = \tilde \psi(nT)$ with $t \mapsto \tilde \psi(t)$ solution of the Schr\"odinger equation $\ihbar \frac{d\tilde \psi}{dt} = H(t) \tilde \psi(t)$. The series $\tilde \psi_n$ is called the stroboscopic evolution of the quantum system.

\begin{defi}[Mixed state]
Let $(\Gamma,\mu,\varphi^t,\mathcal H,H)$ be a driven quantum system, $\tilde \psi_0 \in \mathcal H$ be a quantum state and $\rho_0 \in L^1_+(\Gamma,d\mu)$ be a density of $\Gamma$ ($\rho_0(\theta) \in \mathbb R^+$, $\int_\Gamma \rho_0(\theta) d\mu(\theta)=1$). The mixed state associated with the initial quantum and statistical states $(\tilde \psi_0,\rho_0)$ is the density matrix
\begin{equation} \label{defdensitymatrix}
\rho(t) = \int_\Gamma |\tilde \psi(t;\theta) \rangle \langle \tilde \psi(t;\theta)| \rho_0(\theta) d\mu(\theta)
\end{equation}
where $\tilde \psi(t;\theta) \in \mathcal H$ is solution of the Schr\"odinger equation $\ihbar \frac{d\tilde \psi(t;\theta)}{dt} = H(\varphi^t(\theta)) \tilde \psi(t;\theta)$ with $\tilde \psi(0;\theta) = \tilde \psi_0$.
\end{defi}
$\rho \in \mathcal L(\mathcal H)$, $\rho^\dagger = \rho$, $\rho \geq 0$ and $\tr \rho = 1$. For a stroboscopic driven quantum system we have $\rho_n = \int_\Gamma |\tilde \psi_n(\theta) \rangle\langle \tilde \psi_n(\theta)|\rho_0(\theta)d\mu(\theta)$.

\begin{defi}[Schr\"odinger-Koopman Hamiltonian]
Let $(\Gamma,\mu,\varphi^t,\mathcal H,H)$ be a driven quantum system and let $\mathcal K = L^2(\Gamma,d\mu) \otimes \mathcal H$ be the ``enlarged'' Hilbert space. We call Schr\"odinger-Koopman Hamiltonian of the driven quantum system the operator $H_K \in \mathcal L(\mathcal K)$ defined by
\begin{equation}
H_K = -\ihbar F^\mu(\theta) \frac{\partial}{\partial \theta^\mu} \otimes 1_{\mathcal H} + H(\theta)
\end{equation}
where $F^\mu \partial_\mu$ is the Koopman generator.
\end{defi}
The enlarged Hilbert space $\mathcal K$ is endowed with the inner product $\llangle \psi|\phi \rrangle = \int_{\Gamma} \langle \psi(\theta)|\phi(\theta) \rangle d\mu(\theta)$ (where $\langle \cdot|\cdot \rangle$ denotes the inner product of $\mathcal H$). For example, consider a kicked diatomic molecule in a plane, where the vibration is treated as a quantum system and the rotation is treated as classical (classical kicked rotator). The enlarged Hilbert space is then $\mathcal K = L^2(\mathbb T^2,\frac{d\theta^1d\theta^2}{4\pi^2}) \otimes L^2(\mathbb R^+,dr)$ where $\theta^1$ is the angular position of the rotator, $\theta^2$ is the reduced momentum of the rotation, and $r$ is the internuclear distance. The classical flow can be a nonautonomous continuous time flow associated with the Hamilton equations for the classical Hamiltonian $\mathscr H(\theta^1,\theta^2,t) = \frac{(\theta^2)^2}{2} + K \cos(\theta^1) \sum_n \delta(t-n\tau)$ ($\tau$ being the kick period and $K$ being the kick strength), or a discrete time flow defined by the Chirikov standard map $\varphi(\theta^1,\theta^2) = (\theta^1 + \theta^2+K\sin(\theta^2), \theta^2+K\sin(\theta^2))$ (which is the stroboscopic evolution of the rotator).

\begin{theo} \label{SKE}
Let $(\Gamma,\mu,\varphi^t,\mathcal H,H)$ be a driven quantum system. Let $t \mapsto \psi(t) \in \mathcal K = L^2(\Gamma,d\mu) \otimes \mathcal H$ be a solution of the Schr\"odinger-Koopman equation:
\begin{equation}
\ihbar \frac{\partial \psi(\theta,t)}{\partial t} = H_K \psi(\theta,t)
\end{equation}
where $H_K$ is the Schr\"odinger-Koopman Hamiltonian. Then $\tilde \psi(t;\theta_0) = \psi(\varphi^t(\theta_0),t)$ is a solution of the usual Schr\"odinger equation.
\end{theo}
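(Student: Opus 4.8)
\medskip
\noindent\textbf{Proof proposal.} The plan is to recognize the integral curves $t\mapsto\varphi^t(\theta_0)$ of the classical flow as the characteristic curves of the first-order transport operator $-\ihbar F^\mu(\theta)\partial_\mu$ appearing in $H_K$, so that restricting a solution $\psi(\theta,t)$ of the Schr\"odinger-Koopman equation to such a curve eliminates the transport term and reduces the equation to the pointwise Schr\"odinger equation driven by $H(\varphi^t(\theta_0))$. This is the method of characteristics adapted to an unknown taking values in the Hilbert space $\mathcal H$.

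Concretely, I would set $\tilde\psi(t;\theta_0):=\psi(\varphi^t(\theta_0),t)$ and differentiate in $t$ by the chain rule, producing a term $\partial_t\psi$ from the explicit time dependence and a term $\frac{d(\varphi^t(\theta_0))^\mu}{dt}\,\partial_\mu\psi$ from the motion along the trajectory, both evaluated at $(\varphi^t(\theta_0),t)$. By the defining relation $\dot\theta=F(\theta)$ of the generator of $\varphi^t$, one has $\frac{d(\varphi^t(\theta_0))^\mu}{dt}=F^\mu(\varphi^t(\theta_0))$. Then I would insert the Schr\"odinger-Koopman equation in the form $\ihbar\,\partial_t\psi=-\ihbar F^\mu\partial_\mu\psi+H(\theta)\psi$ to eliminate $\partial_t\psi$; the transport contribution $-\ihbar F^\mu(\varphi^t(\theta_0))\partial_\mu\psi$ cancels exactly the chain-rule contribution $+\ihbar F^\mu(\varphi^t(\theta_0))\partial_\mu\psi$, leaving $\ihbar\frac{d}{dt}\tilde\psi(t;\theta_0)=H(\varphi^t(\theta_0))\tilde\psi(t;\theta_0)$. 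The initial condition is inherited from $\varphi^0=\id$, so $\tilde\psi(0;\theta_0)=\psi(\theta_0,0)$.

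The main obstacle is the rigorous justification rather than the formal cancellation: a solution $\psi(t)$ of the Schr\"odinger-Koopman equation only belongs to $\mathcal K=L^2(\Gamma,d\mu)\otimes\mathcal H$, hence $\psi(\theta,t)$ and $\partial_\mu\psi$ are a priori defined merely $\mu$-almost everywhere, whereas a single characteristic $t\mapsto\varphi^t(\theta_0)$ typically sweeps out a $\mu$-null set, so the pointwise chain-rule manipulation is not automatically meaningful. I would handle this by either (i) stating explicitly the regularity hypotheses under which the computation is literal --- $\psi(t)$ in the domain of $H_K$ with, say, $\psi(\cdot,t)\in\mathcal C^1(\Gamma)\otimes\mathcal H$ and $t\mapsto\psi(t)$ strongly $\mathcal C^1$, so that evaluation at a point and differentiation along a curve are licit --- or (ii) arguing at the level of propagators: since $e^{tF^\mu\partial_\mu}$ is the composition (Koopman) operator $f\mapsto f\circ\varphi^t$, the unitary group $e^{-\ihbar^{-1}tH_K}$ generated by $H_K$ intertwines the evolution on $\mathcal K$ with the $\theta_0$-pointwise Schr\"odinger propagators through the flow, and the claim is read off from this factorization. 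In the write-up I would fix the chosen framework and then present the short chain-rule computation above.
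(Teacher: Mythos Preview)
Your proposal is correct and is essentially identical to the paper's proof: the paper simply writes out the Schr\"odinger--Koopman equation, applies the chain rule to $\tilde\psi(t;\theta_0)=\psi(\varphi^t(\theta_0),t)$, and observes that the transport term cancels the chain-rule contribution, leaving the ordinary Schr\"odinger equation. The paper's argument is purely formal and does not address the regularity caveat you raise, so your discussion of pointwise-evaluation issues and the propagator-factorization alternative already goes beyond what the paper provides.
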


\begin{proof}
\begin{equation}
\ihbar \frac{\partial \psi(\theta,t)}{\partial t} = - \ihbar F^\mu(\theta) \frac{\partial \psi(\theta,t)}{\partial \theta^\mu} + H(\theta) \psi(\theta,t)
\end{equation}
It follows that
\begin{eqnarray}
\ihbar \frac{d\tilde \psi(t;\theta_0)}{dt} & = & \ihbar \frac{d\psi(\varphi^t(\theta_0),t)}{dt} \\
& = & \ihbar \left. \frac{\partial \psi(\theta,t)}{\partial t} \right|_{\theta = \varphi^t(\theta_0)} + \ihbar \left. F^\mu(\theta) \frac{\partial \psi(\theta,t)}{\partial \theta^\mu} \right|_{\theta = \varphi^t(\theta_0)} \\
& = & \left. H(\theta) \psi(\theta,t) \right|_{\theta=\varphi^t(\theta_0)} \\
& = & H(\varphi^t(\theta_0)) \tilde \psi(t;\theta_0)
\end{eqnarray}
\end{proof}

\begin{defi}[Schr\"odinger-Koopman evolution operator]
Let $(\Gamma,\mu,\varphi^t,\mathcal H,H)$ be a driven quantum system and $H_K \in \mathcal L(\mathcal K)$ be its Schr\"odinger-Koopman Hamiltonian. The Schr\"odinger-Koopman evolution operator of the driven quantum system is
\begin{equation}
U_K(t,0) = e^{-\ihbar^{-1} H_K t}
\end{equation}
\end{defi}
By construction, $\psi(t) = U_K(t,0) \psi(0) \iff \ihbar \frac{\partial \psi}{\partial t} = H_K \psi(t)$ with $\psi \in \mathcal K$. If the classical dynamical system is conservative then $H_K$ is self-adjoint and $U_K$ is unitary.

\begin{prop}
 Let $(\Gamma,\mu,\varphi^t,\mathcal H,H)$ be a conservative driven quantum system, $U_K(t,0) \in \mathcal U(\mathcal K)$ be its Schr\"odinger-Koopman evolution operator, $\mathcal T^t \in \mathcal U(L^2(\Gamma,d\mu))$ be the Koopman operator of the classical system and $U(t,0;\theta) \in \mathcal U(\mathcal H)$ be the evolution operator of the quantum system, i.e. the strongly continuous solution of the equation:
\begin{equation}
\ihbar \frac{dU(t,0;\theta)}{dt} = H(\varphi^t(\theta)) U(t,0;\theta) \qquad U(0,0;\theta) = 1_{\mathcal H}
\end{equation}
The three operators are related by
\begin{equation}
U_K(t,0) = \mathcal T^{-t} U(t,0;\theta) = U(t,0;\varphi^{-t}(\theta)) \mathcal T^{-t}
\end{equation}
\end{prop}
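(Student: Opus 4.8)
The plan is to derive the identity from Theorem~\ref{SKE} together with uniqueness of solutions of the ordinary Schr\"odinger Cauchy problem, rather than by manipulating operator exponentials. First I would fix $\psi_0 \in \mathcal K$ (taking $\psi_0 \in \dom(H_K)$ at first and extending afterwards) and set $\psi(t) = U_K(t,0)\psi_0$, so that $t \mapsto \psi(t)$ solves the Schr\"odinger--Koopman equation with $\psi(\cdot,0)=\psi_0$. By Theorem~\ref{SKE}, for every $\sigma \in \Gamma$ the curve $t \mapsto \psi(\varphi^t(\sigma),t)$ solves $\ihbar \frac{d}{dt}\tilde\psi = H(\varphi^t(\sigma))\tilde\psi$ with initial value $\psi(\sigma,0)=\psi_0(\sigma)$; by uniqueness of that solution (guaranteed by the strong continuity of $\theta \mapsto H(\theta)$ and the construction of $U(t,0;\sigma)$), $\psi(\varphi^t(\sigma),t) = U(t,0;\sigma)\psi_0(\sigma)$.

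It then remains to recover $\psi(\theta,t)=(U_K(t,0)\psi_0)(\theta)$ at an arbitrary point $\theta$. Substituting $\sigma = \varphi^{-t}(\theta)$ and using the group law $\varphi^t \circ \varphi^{-t} = \id_\Gamma$ gives $\psi(\theta,t) = U(t,0;\varphi^{-t}(\theta))\,\psi_0(\varphi^{-t}(\theta))$. Reading the right-hand side in two ways finishes the proof: on one hand it equals $\big(U(t,0;\varphi^{-t}(\theta))\,\mathcal T^{-t}\psi_0\big)(\theta)$, with $\mathcal T^{-t}$ acting first and multiplication by $U(t,0;\varphi^{-t}(\theta))$ afterwards; on the other hand, since $\theta \mapsto U(t,0;\varphi^{-t}(\theta))\psi_0(\varphi^{-t}(\theta))$ is the composition with $\varphi^{-t}$ of $\theta' \mapsto U(t,0;\theta')\psi_0(\theta')$, it also equals $\big(\mathcal T^{-t}\,U(t,0;\theta)\psi_0\big)(\theta)$, where now multiplication by $U(t,0;\cdot)$ acts first and $\mathcal T^{-t}$ afterwards. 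As $\psi_0$ is arbitrary (and both composite operators are unitary on $\mathcal K$, so the identity extends from $\dom(H_K)$ to all of $\mathcal K$ by density), $U_K(t,0) = \mathcal T^{-t}U(t,0;\theta) = U(t,0;\varphi^{-t}(\theta))\mathcal T^{-t}$. The two identifications are a single instance of the intertwining rule $\mathcal T^{-t}A(\theta) = A(\varphi^{-t}(\theta))\mathcal T^{-t}$, valid for any $\theta$-family $A(\theta)$ of operators on $\mathcal H$ acting by multiplication on $\mathcal K$, which follows at once from $\mathcal T^{-t}f(\theta)=f(\varphi^{-t}(\theta))$.

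An alternative, more computational route is to set $V(t) = \mathcal T^{-t}U(t,0;\theta)$ (with $U(t,0;\cdot)$ read as a fibered multiplication operator on $\mathcal K$), note $V(0)=1_{\mathcal K}$, and differentiate: using $\frac{d}{dt}\mathcal T^{-t} = -F^\mu\partial_\mu\,\mathcal T^{-t}$ (from $\mathcal T^{-t}=e^{-tF^\mu\partial_\mu}$), the Schr\"odinger equation for $U(t,0;\theta)$, and the intertwining rule with $\varphi^t\circ\varphi^{-t}=\id_\Gamma$ to collapse $\mathcal T^{-t}H(\varphi^t(\theta))$ into $H(\theta)\mathcal T^{-t}$, one obtains $\ihbar\frac{d}{dt}V(t) = \big(-\ihbar F^\mu\partial_\mu + H(\theta)\big)V(t) = H_K V(t)$, whence $V(t)=U_K(t,0)$ by Stone's theorem. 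I do not expect a genuinely hard analytic obstacle; the delicate points are bookkeeping the order in which $\mathcal T^{-t}$ and the multiplication by $U$ act (which is precisely what the second equality records), and, on the computational route, justifying the Leibniz rule for $\frac{d}{dt}(\mathcal T^{-t}U(t,0;\theta))$ and the use of $H_K$ on a common core (e.g.\ $\psi$ with $\theta\mapsto\psi(\theta)$ of class $\mathcal C^1$ valued in $\dom(H_0)$) before extending by density and unitarity. The only inputs beyond this are the well-posedness of the two Cauchy problems, which is already part of the hypotheses.
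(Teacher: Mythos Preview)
Your proposal is correct. The paper itself does not give a proof of this property but simply refers the reader to \cite{Sapin}, so there is no in-paper argument to compare against. Both routes you sketch are standard and sound: the first one leverages Theorem~\ref{SKE} plus uniqueness for the ordinary Schr\"odinger equation, while the second verifies directly that $V(t)=\mathcal T^{-t}U(t,0;\theta)$ satisfies the same first-order linear ODE as $U_K(t,0)$ with the same initial condition. The intertwining identity $\mathcal T^{-t}A(\theta)=A(\varphi^{-t}(\theta))\mathcal T^{-t}$ you isolate is exactly the mechanism behind the second equality, and your remarks about extending from a core by density and unitarity are the right way to handle the domain bookkeeping. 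Either argument would be an acceptable replacement for the citation.
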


\begin{proof}
See \cite{Sapin}
\end{proof}

Remark: $U(t,t_1;\theta) = U(t-t_1,0;\varphi^{t_1}(\theta))$ (by a variable change $s=t-t_1$ in the Schr\"odinger equation).\\
For a stroboscopic quantum system we define directly the Koopman evolution operator as $U_K = \mathcal T^{-1} U(\theta) = U(\varphi^{-1}(\theta)) \mathcal T^{-1}$.
\begin{theo}
  Let $(\Gamma,\mu,\varphi,\mathcal H,U)$ be a stroboscopic driven quantum system. Let $\psi_n = U_K^n \tilde \psi_0$ be the stroboscopic Schr\"odinger-Koopman state. Then $\tilde \psi_n = \psi_n(\varphi^n(\theta_0))$ with $\tilde \psi_{n+1} = U(\varphi^n(\theta_0)) \tilde \psi_n$.
\end{theo}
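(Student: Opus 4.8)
The plan is to prove this by induction on $n$, in direct parallel with the proof of Theorem~\ref{SKE}, the only difference being that the chain-rule computation of the continuous-time case is replaced by the stroboscopic factorization of $U_K$ already recorded above, $U_K = \mathcal T^{-1} U(\theta) = U(\varphi^{-1}(\theta)) \mathcal T^{-1}$.

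First I would make the action of $U_K$ on $\mathcal K = L^2(\Gamma,d\mu)\otimes\mathcal H$ explicit. Writing $U(\theta)$ for the multiplication (``potential'') operator $\psi \mapsto (\theta \mapsto U(\theta)\psi(\theta))$ and recalling that $\mathcal T$ acts by composition with $\varphi$, so that $(\mathcal T^{-1}\psi)(\theta) = \psi(\varphi^{-1}(\theta))$, either of the two factorizations yields, for every $\psi \in \mathcal K$,
\begin{equation}
(U_K \psi)(\theta) = U(\varphi^{-1}(\theta))\,\psi(\varphi^{-1}(\theta)).
\end{equation}
(The operator $U_K$ is well defined here because $\varphi \in \Aut(\Gamma)$, so $\mathcal T$ is invertible; unitarity is not needed for this statement.) As in the continuous-time case, the datum $\tilde\psi_0 \in \mathcal H$ is identified with the constant section $\theta \mapsto \tilde\psi_0$ of $\mathcal K$, so that $\psi_0 = U_K^0 \tilde\psi_0$ is precisely this constant section.

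Then I would run the induction. For $n=0$ the assertion reads $\psi_0(\varphi^0(\theta_0)) = \psi_0(\theta_0) = \tilde\psi_0$, which holds by the identification just made. Assuming $\psi_n(\varphi^n(\theta_0)) = \tilde\psi_n$, I apply the displayed formula to $\psi_{n+1} = U_K \psi_n$ and evaluate at $\theta = \varphi^{n+1}(\theta_0)$:
\begin{equation}
\psi_{n+1}(\varphi^{n+1}(\theta_0)) = U\big(\varphi^{-1}(\varphi^{n+1}(\theta_0))\big)\,\psi_n\big(\varphi^{-1}(\varphi^{n+1}(\theta_0))\big) = U(\varphi^{n}(\theta_0))\,\psi_n(\varphi^{n}(\theta_0)),
\end{equation}
using that $n \mapsto \varphi^n$ is a group so that $\varphi^{-1}\circ\varphi^{n+1} = \varphi^{n}$. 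By the induction hypothesis the right-hand side equals $U(\varphi^{n}(\theta_0))\tilde\psi_n$, which is by definition $\tilde\psi_{n+1}$ in the stroboscopic recursion $\tilde\psi_{k+1} = U(\varphi^k(\theta_0))\tilde\psi_k$; this closes the induction and proves $\tilde\psi_n = \psi_n(\varphi^n(\theta_0))$ for all $n$.

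The argument is essentially algebraic, so there is no serious analytic obstacle; the one point that needs a word of care — and which is also left implicit in the proof of Theorem~\ref{SKE} — is that ``evaluating $\psi_n$ at a point'' requires choosing the strongly continuous representative of $\psi_n \in \mathcal K$. This is legitimate here: since $\theta \mapsto U(\theta)$ is strongly continuous and $\varphi$ is an automorphism, $\psi_1(\theta) = U(\varphi^{-1}(\theta))\tilde\psi_0$ is strongly continuous, and by the same formula each $\psi_{n+1}$ inherits a strongly continuous representative from $\psi_n$, so the pointwise evaluations above are unambiguous. I note in passing that one cannot simply read this theorem off from Theorem~\ref{SKE} specialized to a $\delta$-kicked Hamiltonian, because such an $H(\theta,t)$ violates the strong continuity in $t$ assumed there; the direct induction above is the clean route.
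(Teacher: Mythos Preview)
Your proof is correct and follows essentially the same route as the paper: both use the factorization $U_K = U(\varphi^{-1}(\theta))\mathcal T^{-1}$ to compute $\psi_{n+1}(\varphi^{n+1}(\theta_0)) = U(\varphi^n(\theta_0))\psi_n(\varphi^n(\theta_0))$, which is exactly the paper's chain of equalities. Your version is simply more explicit about the induction and adds the (welcome) remark on choosing strongly continuous representatives so that pointwise evaluation is meaningful, a point the paper leaves implicit.
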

\begin{proof}
  \begin{eqnarray}
    \tilde \psi_{n+1} & = & \left. U_K^{n+1}(\theta) \psi_0(\theta) \right|_{\theta = \varphi^{n+1}(\theta_0)} \\
    & = & \left. U_K(\theta) U_K^n(\theta) \psi_0(\theta) \right|_{\theta=\varphi^{n+1}(\theta_0)} \\
    & = & \left. U(\varphi^{-1}(\theta)) \mathcal T^{-1} \psi_n(\theta) \right|_{\theta=\varphi^{n+1}(\theta_0)} \\
    & = & \left. U(\varphi^{-1}(\theta)) \psi_n(\varphi^{-1}(\theta)) \right|_{\theta=\varphi^{n+1}(\theta_0)} \\
    & = & U(\varphi^n(\theta_0)) \psi_n(\varphi^n(\theta_0)) \\
    & = & U(\varphi^n(\theta_0)) \tilde \psi_n
  \end{eqnarray}
\end{proof}

\begin{prop}
Let $(\Gamma,\mu,\varphi^t,\mathcal H,H)$ be a conservative driven quantum system and $\rho(t)$ be its mixed state for the initial conditions $(\tilde \psi_0,\rho_0) \in \mathcal H \times L^1_+(\Gamma,d\mu)$ (as defined by equation \ref{defdensitymatrix}). Let $\Psi(\theta,t) = U_K(t,0) \sqrt{\rho_0(\theta)} \tilde \psi_0 \in \mathcal K$. We have
\begin{equation}
\rho(t) = \tr_{L^2(\Gamma,d\mu)} |\Psi(t) \rrangle \llangle \Psi(t)|
\end{equation}
\end{prop}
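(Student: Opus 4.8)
The plan is to reduce the statement to the factorization $U_K(t,0) = \mathcal T^{-t} U(t,0;\theta)$ established in the Property relating $U_K$, $\mathcal T^t$ and $U(t,0;\theta)$, together with the elementary description of the partial trace of a rank-one operator on a tensor product whose first factor is an $L^2$ space, and finally the measure invariance guaranteed by conservativity.

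First I would check that $\Psi(\cdot,0) = \sqrt{\rho_0(\cdot)}\,\tilde\psi_0$ really lies in $\mathcal K$: since $\rho_0 \in L^1_+(\Gamma,d\mu)$, $\sqrt{\rho_0} \in L^2(\Gamma,d\mu)$ with $\|\sqrt{\rho_0}\|_{L^2}^2 = \int_\Gamma \rho_0\,d\mu = 1$, so $\Psi(\cdot,0) = \sqrt{\rho_0}\otimes\tilde\psi_0$ is a well-defined unit vector of $\mathcal K$ to which $U_K(t,0)$ may be applied. Then, using $U_K(t,0) = \mathcal T^{-t} U(t,0;\theta)$: the family $U(t,0;\theta)$ acts on the $\mathcal H$-fibre at fixed $\theta$, so $\big(U(t,0;\theta)\Psi(\cdot,0)\big)(\theta) = \sqrt{\rho_0(\theta)}\,U(t,0;\theta)\tilde\psi_0 = \sqrt{\rho_0(\theta)}\,\tilde\psi(t;\theta)$, while $\mathcal T^{-t}$ precomposes the $\theta$-dependence with $\varphi^{-t}$, so that
\begin{equation}
\Psi(\theta,t) = \sqrt{\rho_0(\varphi^{-t}(\theta))}\ \tilde\psi\big(t;\varphi^{-t}(\theta)\big).
\end{equation}

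Next I would compute the partial trace. Using the identification of $\tr_{L^2(\Gamma,d\mu)}$ for a rank-one operator on $L^2(\Gamma,d\mu)\otimes\mathcal H$ — concretely, evaluating $\langle\phi|\big(\tr_{L^2(\Gamma,d\mu)}|\Psi(t)\rrangle\llangle\Psi(t)|\big)|\chi\rangle = \sum_k \llangle e_k\otimes\phi|\Psi(t)\rrangle\llangle\Psi(t)|e_k\otimes\chi\rrangle$ for arbitrary $\phi,\chi\in\mathcal H$ and an orthonormal basis $\{e_k\}$ of $L^2(\Gamma,d\mu)$, and summing over $k$ by Parseval — one obtains
\begin{equation}
\tr_{L^2(\Gamma,d\mu)} |\Psi(t)\rrangle\llangle\Psi(t)| = \int_\Gamma |\Psi(\theta,t)\rangle\langle\Psi(\theta,t)|\,d\mu(\theta).
\end{equation}
Inserting the explicit $\Psi(\theta,t)$ found above and performing the change of variable $\theta\mapsto\varphi^t(\theta)$, which preserves $\mu$ because the dynamical system is conservative ($d\mu(\varphi^t(\theta))=d\mu(\theta)$), the right-hand side becomes $\int_\Gamma \rho_0(\theta)\,|\tilde\psi(t;\theta)\rangle\langle\tilde\psi(t;\theta)|\,d\mu(\theta)$, which is exactly $\rho(t)$ as defined in (\ref{defdensitymatrix}).

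The step deserving genuine care is the partial-trace identity: on the ``continuous'' factor $L^2(\Gamma,d\mu)$ the naive kernel computation $\sum_k \overline{e_k(\theta)}e_k(\theta') = \delta_\mu(\theta,\theta')$ is only formal, so I would run the argument through matrix elements and Parseval as above, which makes it rigorous without invoking distributions. Everything else is bookkeeping, and the only hypothesis used beyond the earlier Property is conservativity, precisely at the change of variable.
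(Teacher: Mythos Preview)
Your proof is correct and follows essentially the same route as the paper: the factorization $U_K(t,0)=\mathcal T^{-t}U(t,0;\theta)$ to obtain $\Psi(\theta,t)=\sqrt{\rho_0(\varphi^{-t}(\theta))}\,\tilde\psi(t;\varphi^{-t}(\theta))$, the identification of the partial trace with the $\theta$-integral of $|\Psi(\theta,t)\rangle\langle\Psi(\theta,t)|$, and the change of variable via measure invariance. The only difference is that the paper writes the partial trace formally through $|\theta\rangle$-states and the Jacobian explicitly, whereas you justify the same integral identity via Parseval on matrix elements---a cleaner rigorous version of the same step.
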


\begin{proof}
\begin{eqnarray}
\Psi(\theta,t) & = & \mathcal T^{-t} \sqrt{\rho_0(\theta)} \underbrace{U(t,0;\theta) \tilde \psi_0}_{\tilde \psi(t;\theta)} \\
& = & \sqrt{\rho_0(\varphi^{-t}(\theta))} \tilde \psi(t;\varphi^{-t}(\theta))
\end{eqnarray}
where $\tilde \psi(t;\theta)$ is the solution of the usual Schr\"odinger equation.
\begin{eqnarray}
& & \tr_{L^2(\Gamma,d\mu)} |\Psi(t) \rrangle \llangle \Psi(t)| \nonumber \\
& &  =\int_{\Gamma} \langle \theta|\Psi(t) \rrangle \llangle \Psi(t)|\theta\rangle d\mu(\theta) \\
& & =\int_\Gamma |\Psi(\theta,t)\rangle \langle \Psi(\theta,t)| d\mu(\theta) \\
& & =\int_\Gamma |\tilde \psi(t;\varphi^{-t}(\theta)) \rangle \langle \psi(t;\varphi^{-t}(\theta))| \rho_0(\varphi^{-t}(\theta)) d\mu(\theta) \\
& & =\int_{\varphi^{-t}(\Gamma)} |\tilde \psi(t;\theta) \rangle \langle \tilde \psi(t;\theta)| \rho_0(\theta) |\det(\partial \varphi^t_\theta)| \rho_\mu(\varphi^t(\theta)) d\theta^1...d\theta^m
\end{eqnarray}
where $\rho_\mu$ is the density of the measure $\mu$ and $\partial \varphi^t_\theta$ is the Jacobian matrix of $\varphi^t$ at $\theta$. But since the flow is measure-preserving, $|\det(\partial \varphi^t_\theta)| \rho_\mu(\varphi^t(\theta)) d\theta^1...d\theta^m = d\mu(\theta)$ and $\overline{\phi^{-t}(\Gamma)} = \Gamma$. It follows that
\begin{equation}
\tr_{L^2(\Gamma,d\mu)} |\Psi(t) \rrangle \llangle \Psi(t)| = \int_\Gamma |\tilde \psi(t;\theta)\rangle\langle \tilde \psi(t;\theta) \rho_0(\theta)d\mu(\theta) = \rho(t)
\end{equation}
\end{proof}
We have the same result for a stroboscopic driven quantum system with $\rho_n = \tr_{L^2(\Gamma,d\mu)} |\Psi_n\rrangle\llangle \Psi_n|$ where $\Psi_n(\theta) = U_K^n \sqrt{\rho_0(\theta)} \tilde \psi_0$.\\
It follows that the mixed state appears as the partial trace of a pure state of the enlarged Hilbert space $\mathcal K$. The set of the observables of the classical dynamical system $L^2(\Gamma,d\mu)$ plays the role of an environment inducing decoherence and relaxation on the quantum system, characterizing a kind of entanglement between the classical and the quantum systems. A discussion concerning entanglement between classical and quantum systems can be found in \cite{Jauslin}. In the present context, let $\Psi \in \mathcal K$ be the solution of the Schr\"odinger-Koopman equation. Let $(\zeta_i)_i$ be an orthonormal basis of $\mathcal H$, for example the eigenbasis of the isolated quantum system without control by the classical flow (eigenvectors of $H_0$ with $H(\theta) = 1_{L^2(\Gamma,d\mu)} \otimes H_0 + H_{ctrl}(\theta)$). Let a set of orthonormalized Koopman modes $(f_\lambda)_\lambda$ generating a subspace in $L^2(\Gamma,d\mu)$ in which the dynamics associated with $\Psi$ takes place. We can decompose the state in the enlarged Hilbert space onto the tensorial basis: $\Psi(\theta) = \sum_{i,\lambda} c_{i,\lambda}  f_\lambda(\theta) \otimes \zeta_i$. Since the evolution of $\Psi$ is governed by a Hamiltonian non-separable as a sum of an operator of $\mathcal H$ and of an operator of $L^2(\Gamma,d\mu)$, $\Psi$ is not a separable state for $t>0$ even if it is the case for $t=0$ (we cannot write $\Psi = g \otimes \phi$ for some $g \in L^2(\Gamma,d\mu)$ and $\phi \in \mathcal H$). The state $\Psi$ is then entangled for the mathematical viewpoint of the tensor Hilbert space $L^2(\Gamma,d\mu) \otimes \mathcal H = \mathcal K$. It is for this reason that $\rho = \tr_{L^2(\Gamma,d\mu)} |\Psi \rrangle \llangle \Psi| = \sum_{i,j} \sum_\lambda c_{i,\lambda} \overline{c_{j,\lambda}} |\zeta_i \rangle \langle \zeta_j|$ is a mixed state (and not a pure state), even if we start at $t=0$ with a seperable state $\Psi$ (a pure state $\rho$), implying a decoherence phenomenon due to the classical flow. Physically, the pure quantum entanglement corresponds to nonlocal correlations between two quantum systems, it is the entanglement between wave functions of the two systems. In the Schr\"odinger-Koopman picture, we have an entanglement between wave functions of a quantum system and Koopman modes of a classical system. Physically the Koopman modes define observables which are dynamically coherent onto the phase space (for example for a jet in crossflow, one Koopman mode corresponds to the shear-layer structures and another one to the wall structures of the vortex induced by the turbulences), see \cite{Budisic}. Koopman modes are then a kind of generalisation for any classical dynamical system of the notion of normal modes for the classical wave systems. The entanglement in the Schr\"odinger-Koopman approach then corresponds to correlations between the quantum system and the classical system viewed as a collection of observables generated by the Koopman modes. The correlation is nonlocal in the sense that it is associated with the whole of the classical phase space (the Koopman modes are functions with support extended (in general) on the whole of $\Gamma$). But this nonlocality results from a statistical uncertainty (a lack of information concerning the initial condition of the classical flow modelled by the classical statistical distribution $\rho_0 \in L^1_+(\Gamma,d\mu)$ in the previous property), and not from an intrisic uncertainty (induced by the fundamental quantum laws) as in the pure quantum case.

\section{Quasienergies}
Now we introduce the quasienergy states associated with the Koopman Hamiltonian. For the sake of simplicity, we consider that $\mathcal H$ is finite dimensional, the results can be adaptated to infinite dimensional Hilbert spaces (with some topological precautions).

\subsection{The quasienergy spectrum}
\begin{defi}[Quasienergie]
Let $(\Gamma,\mu,\varphi^t,\mathcal H,H)$ be a driven quantum system. We call quasienergies the eigenvalues of the Schr\"odinger-Koopman Hamiltonian:
\begin{equation}
(-\ihbar F^\mu(\theta) \frac{\partial}{\partial \theta^\mu} + H(\theta)) |a,\theta \rangle = \chi_a |a,\theta \rangle
\end{equation}
\end{defi}

It is interesting to consider the case of an Hamiltonian dynamical system with $\theta = (q,p)$ and $\mathscr H \in \mathcal C^1(\Gamma)$ the classical Hamiltonian. In that case we have
\begin{equation}
H_K = \ihbar \{\mathscr H, \cdot\} + H(\theta)
\end{equation}
The Schr\"odinger-Koopman Hamiltonian is then the sum of the quantum Hamitonian $H$ and a non-canonical quantized version of the classical Hamiltonian $\ihbar \{\mathscr H, \cdot\}$. It follows that the quasienergies represent the energies of the quantum system plus energies of the classical dynamical system. This remark is in accordance with the case of the Floquet theory of light-matter interaction. In that case $\Gamma = \mathbb S^1$ with $\varphi^t(\theta) = \theta + \omega t \mod 2\pi$, and $H(\theta) = H_0 + \mu E \sin(\theta)$. $H_K = -\ihbar \omega \frac{\partial}{\partial \theta} + H(\theta)$ and it is proved \cite{Guerin} that it is in a certain topology the limit of the pure quantum Hamiltonian $\hbar \omega a^+a + H_0 + \mu \sqrt{\frac{\hbar \omega}{2\epsilon_0 V}}\imath (a - a^+)$ when the average number of photons tends to $+\infty$ and the volume of the cavity $V$ tends to $+\infty$ ($a$ and $a^+$ are the photon annihilation and creation operators, $\epsilon_0$ is the vacuum permittivity and $\omega$ is the frequency of the photons). In this limit the photon Hamiltonian $\hbar \omega a^+a$ becomes the Koopman generator $- \ihbar \omega \frac{\partial}{\partial \theta}$.\\
As example, consider a quantum system of Hamiltonian $H_0$ perturbed by a classical harmonic oscillator system $\mathscr H(q,p) = \frac{p^2}{2m} + \frac{kq^2}{2}$ with $m$ the inertial parameter and $k$ the stiffness, with a perturbation $\epsilon V(\theta)$ dependent from the oscillator phase $\theta = \arctan \frac{p}{\sqrt{km} q}$. In that case, $\ihbar \{\mathscr H,\cdot\} = \ihbar \left(kq\frac{\partial}{\partial p} - \frac{p}{m} \frac{\partial}{\partial q}\right) = \ihbar \omega \frac{d}{d\theta}$ (with $\omega = \sqrt{\frac{k}{m}}$). The Koopman modes are then $f_m(\theta) = e^{\imath m \theta}$ ($m \in \mathbb Z$) with the associated eigenvalues $m\hbar \omega$. Finally the Schr\"odinger-Koopman Hamiltonian is $H_K = \sum_m m\hbar \omega |f_m\rangle\langle f_m| + H_0 + \epsilon V(\theta)$ and the quasienergy are (at the first order of perturbation) $\chi_{m,i} = m \hbar \omega + \nu_i + \epsilon \int_0^{2\pi} \langle i|V(\theta)|i\rangle \frac{d\theta}{2\pi} + \mathcal O(\epsilon^2)$ (where $H_0|i\rangle = \nu_i|i \rangle$).

\begin{prop}
Let $(\Gamma,\mu,\varphi^t,\mathcal H,H)$ be a conservative driven quantum system and $\chi_a \in \Sp(H_K)$ be a quasienergy associated with the eigenstate $|a,\theta\rangle \in \mathcal K$. Let $U(t,0;\theta) \in \mathcal U(\mathcal H)$ be the evolution operator of the quantum system. We have
\begin{equation}
\label{eigenU}
U(t,0;\theta) |a,\theta \rangle = e^{-\ihbar^{-1} \chi_a t} |a,\varphi^t(\theta)\rangle
\end{equation}
\end{prop}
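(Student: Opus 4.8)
The plan is to read off the result from the factorisation of the Schr\"odinger--Koopman evolution operator proved in the previous Property, namely $U_K(t,0) = \mathcal T^{-t} U(t,0;\theta) = U(t,0;\varphi^{-t}(\theta)) \mathcal T^{-t}$, combined with the fact that, by definition of a quasienergy, $|a,\cdot\rangle \in \mathcal K$ is an eigenvector of $H_K$ with eigenvalue $\chi_a$. The whole proof is essentially one line of operator algebra; the only care needed is in the bookkeeping of which space each symbol acts on.

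First I would apply $U_K(t,0) = e^{-\ihbar^{-1} H_K t}$ to the quasienergy state and use the functional calculus: since $H_K |a,\cdot\rangle = \chi_a |a,\cdot\rangle$ we get $U_K(t,0)|a,\cdot\rangle = e^{-\ihbar^{-1}\chi_a t}|a,\cdot\rangle$, that is, evaluated at a point $\theta$, $\bigl(U_K(t,0)|a,\cdot\rangle\bigr)(\theta) = e^{-\ihbar^{-1}\chi_a t}|a,\theta\rangle$. Second, I would compute the same object through the factorisation: here $U(t,0;\theta)$ has to be understood as the multiplication operator on $\mathcal K$ whose value at $\theta$ is $U(t,0;\theta)\in\mathcal U(\mathcal H)$, while $\mathcal T^{-t}$ acts on $\mathcal H$-valued functions by $(\mathcal T^{-t}g)(\theta)=g(\varphi^{-t}(\theta))$. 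Composing, $\bigl(U_K(t,0)|a,\cdot\rangle\bigr)(\theta) = \bigl(\mathcal T^{-t}[\,U(t,0;\cdot)|a,\cdot\rangle\,]\bigr)(\theta) = U(t,0;\varphi^{-t}(\theta))\,|a,\varphi^{-t}(\theta)\rangle$. Equating the two expressions gives $U(t,0;\varphi^{-t}(\theta))\,|a,\varphi^{-t}(\theta)\rangle = e^{-\ihbar^{-1}\chi_a t}|a,\theta\rangle$, and substituting $\theta \mapsto \varphi^t(\theta)$ (legitimate since $\varphi^t\in\Aut(\Gamma)$) yields exactly the claimed identity.

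As an independent check I would present the direct route: fix $\theta_0\in\Gamma$ and set $h(t) := e^{-\ihbar^{-1}\chi_a t}|a,\varphi^t(\theta_0)\rangle \in \mathcal H$. Using $\frac{d}{dt}|a,\varphi^t(\theta_0)\rangle = F^\mu(\varphi^t(\theta_0))\,\partial_\mu|a,\theta\rangle\big|_{\theta=\varphi^t(\theta_0)}$ and the quasienergy eigenvalue equation $-\ihbar F^\mu\partial_\mu|a,\theta\rangle + H(\theta)|a,\theta\rangle = \chi_a|a,\theta\rangle$ to eliminate the Koopman-derivative term, the $\chi_a$ contributions cancel and one finds that $h$ solves $\ihbar\dot h = H(\varphi^t(\theta_0))h$ with $h(0)=|a,\theta_0\rangle$; uniqueness of the Schr\"odinger evolution then forces $h(t) = U(t,0;\theta_0)|a,\theta_0\rangle$, which is the statement after renaming $\theta_0$.

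The hard part is not any computation but the bookkeeping: one must be consistent about whether a symbol lives on $\mathcal H$, on $L^2(\Gamma,d\mu)$, or on $\mathcal K$, and in particular track correctly the argument shift produced by $\mathcal T^{\pm t}$ when it is composed with the $\theta$-dependent operator $U(t,0;\theta)$ (and, in the eigenvalue equation, remember that $|a,\theta\rangle$ denotes the value at $\theta$ of the $\mathcal K$-vector $|a,\cdot\rangle$). Once the identification ``$U(t,0;\theta)$ acting on $\mathcal K$ $=$ the corresponding multiplication operator'' is made explicit, everything else is immediate.
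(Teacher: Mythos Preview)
Your proposal is correct and follows essentially the same route as the paper: the paper's proof is the two-line argument that $U_K(t,0)|a,\theta\rangle = e^{-\ihbar^{-1}\chi_a t}|a,\theta\rangle$ together with $U_K(t,0)=\mathcal T^{-t}U(t,0;\theta)$ gives $U(t,0;\theta)|a,\theta\rangle = e^{-\ihbar^{-1}\chi_a t}\mathcal T^t|a,\theta\rangle$, which is exactly your first computation after applying $\mathcal T^t$ to both sides instead of substituting $\theta\mapsto\varphi^t(\theta)$. Your second, direct verification via the Schr\"odinger equation is an extra check not present in the paper.
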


\begin{proof}
$U_K(t,0) = e^{-\ihbar^{-1} H_K t} \Rightarrow U_K(t,0)|a,\theta\rangle = e^{-\ihbar^{-1} \chi_a t} |a,\theta \rangle$. But $U_K(t,0)= \mathcal T^{-t} U(t,0;\theta)$, it follows that $U(t,0;\theta)|a,\theta\rangle = e^{-\ihbar^{-1} \chi_a t} \mathcal T^t|a,\theta \rangle$.
\end{proof}

For a stroboscopic driven quantum system we define directly the quasienergy states by $U_K |a \rrangle = e^{-\imath \chi_a} |a \rrangle$ where $\chi_a$ is dimensionless. The last property takes then the form $U(\theta) |a,\theta \rangle = e^{-\imath \chi_a} |a,\varphi(\theta) \rangle$.

\begin{lem}[Orbital stability of the quasienergy spectrum] \label{orbstab1}
  Let $\theta_1,\theta_2 \in \Gamma$ be two distinct points of $\Gamma$. For $i=1,2$, let $\{\chi_{ai}\}_a$ be the set of the quasienergies such that $\exists (t \mapsto |ai,\varphi^t(\theta_i) \rangle \in \mathcal H \setminus\{0\})$ with $U(t,0;\theta_i)|ai,\theta_i \rangle = e^{-\ihbar^{-1} \chi_{ai} t}|ai,\varphi^t(\theta_i)\rangle$. If $\exists t_*$ such that $\varphi^{t_*}(\theta_1) = \theta_2$, then $\forall a$, $\exists b$, such that $\chi_{a1} = \chi_{b2}$.
\end{lem}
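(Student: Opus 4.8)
The plan is to prove the (slightly stronger) fact that the whole family $\{\chi_{a1}\}_a$ of orbit–quasienergies attached to $\theta_1$ coincides with $\{\chi_{b2}\}_b$; the asserted inclusion is then immediate. The only ingredient needed is the cocycle structure of the propagator, namely the composition law $U(t_2,0;\theta) = U(t_2,t_1;\theta)U(t_1,0;\theta)$ together with the Remark $U(t,t_1;\theta) = U(t-t_1,0;\varphi^{t_1}(\theta))$ recalled just above the statement.

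First I would set $\phi(t) := |a1,\varphi^t(\theta_1)\rangle$, so that the hypothesis reads $U(t,0;\theta_1)\phi(0) = e^{-\ihbar^{-1}\chi_{a1}t}\phi(t)$ for every $t$. Since $U(t,0;\theta_1)$ is unitary and $\phi(0)\neq 0$, we get $\phi(t)\neq 0$ for all $t$, in particular $\phi(t_*)\neq 0$. Because $\theta_2 = \varphi^{t_*}(\theta_1)$ one has $\varphi^t(\theta_2) = \varphi^{t+t_*}(\theta_1)$, so the candidate quasienergy curve based at $\theta_2$ is $t \mapsto \phi(t+t_*)$, and proving $\chi_{a1}\in\{\chi_{b2}\}_b$ amounts to verifying $U(t,0;\theta_2)\phi(t_*) = e^{-\ihbar^{-1}\chi_{a1}t}\phi(t+t_*)$.

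Then I would compute as follows. By the Remark (base point $\theta_1$, intermediate time $t_*$), $U(t,0;\theta_2) = U(t,0;\varphi^{t_*}(\theta_1)) = U(t+t_*,t_*;\theta_1)$, and by the composition law $U(t+t_*,t_*;\theta_1) = U(t+t_*,0;\theta_1)\,U(t_*,0;\theta_1)^{-1}$. Inverting the hypothesis at time $t_*$ gives $U(t_*,0;\theta_1)^{-1}\phi(t_*) = e^{\ihbar^{-1}\chi_{a1}t_*}\phi(0)$, and applying the hypothesis at time $t+t_*$ gives $U(t+t_*,0;\theta_1)\phi(0) = e^{-\ihbar^{-1}\chi_{a1}(t+t_*)}\phi(t+t_*)$. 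Chaining the three identities yields $U(t,0;\theta_2)\phi(t_*) = e^{\ihbar^{-1}\chi_{a1}t_*}\,e^{-\ihbar^{-1}\chi_{a1}(t+t_*)}\phi(t+t_*) = e^{-\ihbar^{-1}\chi_{a1}t}\phi(t+t_*)$, which is exactly what was required; hence $\chi_{a1}=\chi_{b2}$ for some $b$. Exchanging the roles of $\theta_1$ and $\theta_2$ (i.e. running the flow for $-t_*$, legitimate when $\varphi^t$ is a one-parameter group) gives the reverse inclusion and thus equality of the two spectra.

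I do not anticipate a genuine obstacle: the argument is pure cocycle bookkeeping. The two points that deserve care are the sign and time-direction conventions in the Remark and the composition law, and the trivial but necessary observation that $\phi(t_*)\neq 0$, so that it indeed defines an admissible nonzero quasienergy curve over $\theta_2$. One caveat worth a sentence: if only a forward semiflow $\mathbb R^+\ni t\mapsto\varphi^t$ is available, the exchange step is unavailable and one obtains precisely the one-sided inclusion stated in the lemma, which is all that is claimed.
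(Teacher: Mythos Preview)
Your proof is correct and follows essentially the same approach as the paper: both exploit the cocycle identity for the propagator (the paper writes $U(t,0;\theta_1)=U(t-t_*,0;\theta_2)U(t_*,0;\theta_1)$ and applies it to $|a1,\theta_1\rangle$, whereas you write the equivalent $U(t,0;\theta_2)=U(t+t_*,0;\theta_1)U(t_*,0;\theta_1)^{-1}$ and apply it to $|a1,\theta_2\rangle$). Your explicit remark that unitarity forces $\phi(t_*)\neq 0$, and your discussion of the reverse inclusion and the semiflow caveat, are nice extras that the paper either treats more tersely or omits.
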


\begin{proof}
\begin{eqnarray}
  U(t,0;\theta_1) & = & U(t,t_*;\theta_1) U(t_*,0;\theta_1) \\
  & = & U(t-t_*,0;\theta_2) U(t_*,0;\theta_1)
\end{eqnarray}
\begin{eqnarray}
  U(t,0;\theta_1) |a1,\theta_1 \rangle & = & e^{-\ihbar^{-1}\chi_{a1}t} |a1,\varphi^t(\theta_1) \rangle \\
  U(t-t_*,0;\theta_2) U(t_*,0;\theta_1) |a1,\theta_1 \rangle & = & e^{-\ihbar^{-1}\chi_{a1}t} |a1,\varphi^t(\theta_1) \rangle \\
  e^{-\ihbar^{-1}\chi_{a1}t_*} U(t-t_*,0;\theta_2)|a1,\theta_2 \rangle & = & e^{-\ihbar^{-1}\chi_{a1}t} |a1,\varphi^t(\theta_1) \rangle  \\
  U(t-t_*,0;\theta_2)|a1,\theta_2 \rangle & = & e^{-\ihbar^{-1}\chi_{a1}(t-t_*)} |a1,\varphi^{t-t_*}(\theta_2) \rangle
\end{eqnarray}
 It follows that $\chi_{a1}$ is a quasienergy associated with $\theta_2$, except if $|a1,\theta_2\rangle = |a1,\varphi^{t-t_*}(\theta_2) \rangle=0$ ($\forall t$). But this last alternative is impossible since $|a1,\varphi^{t-t_*}(\theta_2) \rangle = |a1,\varphi^t(\theta_1)\rangle \not=0$.

\end{proof}

The lemma says that if we restrict the phase space $\Gamma$ to a particular orbit or to any submanifold of this particular orbit, we find the same quasienergy spectrum. Nothing ensures that the quasienergy spectra of the system restricted to two distinct orbits are the same. We can generalize this result:
\begin{theo}[Orbital stability of the quasienergy spectrum]\label{orbstab2}
  Let $\theta_1,\theta_2 \in \Gamma$ two distinct points of $\Gamma$. For $i=1,2$, let $\{\chi_{ai}\}_a$ be the set of the quasienergies such that $\exists (t \mapsto |ai,\varphi^t(\theta_i) \rangle \in \mathcal H \setminus\{0\})$ with $U(t,0;\theta_i)|ai,\theta_i \rangle = e^{-\ihbar^{-1} \chi_{ai} t}|ai,\varphi^t(\theta_i)\rangle$. If $\theta_2 \in \overline{\{\varphi^t(\theta_1)\}_{t \in \mathbb R}}$, then $\forall a$, $\exists b$, such that $\chi_{a1} = \chi_{b2}$.
\end{theo}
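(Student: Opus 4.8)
The plan is to run the argument of Lemma~\ref{orbstab1} ``in the limit''. Fix $\theta_1,\theta_2$ and a quasienergy $\chi_{a1}$ with its family $t\mapsto|a1,\varphi^t(\theta_1)\rangle$; since $\theta_2\in\overline{\{\varphi^t(\theta_1)\}_{t\in\mathbb R}}$, pick a sequence $(t_n)_n$ with $\varphi^{t_n}(\theta_1)\to\theta_2$ in $\Gamma$. Put $u_n:=|a1,\varphi^{t_n}(\theta_1)\rangle/\||a1,\varphi^{t_n}(\theta_1)\rangle\|$. These are unit vectors of $\mathcal H$, and here the standing hypothesis $\dim\mathcal H<\infty$ enters decisively: the unit sphere of $\mathcal H$ is compact, so after passing to a subsequence $u_n\to u_\infty$ with $\|u_\infty\|=1$.

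The second ingredient is a continuous-dependence statement for the propagator. Since $H$ is strongly continuous in $\theta$ (hence norm continuous, $\mathcal H$ being finite dimensional) and the flow is continuous, $(t,\theta)\mapsto H(\varphi^t(\theta))$ is jointly continuous, so the solution $U(t,0;\theta)$ of $\ihbar\frac{dU}{dt}=H(\varphi^t(\theta))U$, $U(0,0;\theta)=1_{\mathcal H}$, depends norm-continuously on $\theta$ at each fixed $t$. Using the Remark on $U(t,t_1;\theta)$ together with the composition law of propagators one gets the cocycle identity $U(t+s,0;\theta)=U(t,0;\varphi^s(\theta))U(s,0;\theta)$; feeding it into $U(t,0;\theta_1)|a1,\theta_1\rangle=e^{-\ihbar^{-1}\chi_{a1}t}|a1,\varphi^t(\theta_1)\rangle$ by the same manipulation as in the proof of Lemma~\ref{orbstab1} yields the rebased form $|a1,\varphi^{t+t_n}(\theta_1)\rangle=e^{\ihbar^{-1}\chi_{a1}t}\,U(t,0;\varphi^{t_n}(\theta_1))\,|a1,\varphi^{t_n}(\theta_1)\rangle$. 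Dividing by $\||a1,\varphi^{t_n}(\theta_1)\rangle\|$ and letting $n\to\infty$ — using $\varphi^{t_n}(\theta_1)\to\theta_2$, $U(t,0;\varphi^{t_n}(\theta_1))\to U(t,0;\theta_2)$ and $u_n\to u_\infty$ — the right-hand side converges, so
\begin{equation}
w(t):=e^{\ihbar^{-1}\chi_{a1}t}\,U(t,0;\theta_2)\,u_\infty
\end{equation}
is a well-defined $\mathcal H$-valued function of $t$.

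It remains to verify that $t\mapsto w(t)$ witnesses $\chi_{a1}\in\{\chi_{b2}\}_b$. One has $w(0)=u_\infty\neq0$ and $\|w(t)\|=|e^{\ihbar^{-1}\chi_{a1}t}|\neq0$ for all $t$ by unitarity of $U$; $U(t,0;\theta_2)w(0)=e^{-\ihbar^{-1}\chi_{a1}t}w(t)$ holds by the very definition of $w$; and passing to the limit in the rebased identity with base point $\varphi^{t+t_n}(\theta_1)\to\varphi^t(\theta_2)$ gives $U(s,0;\varphi^t(\theta_2))w(t)=e^{-\ihbar^{-1}\chi_{a1}s}w(t+s)$, so $w$ is a bona fide quasienergy family based at $\theta_2$. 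Setting $|b2,\varphi^t(\theta_2)\rangle:=w(t)$ yields $\chi_{a1}=\chi_{b2}$. The stroboscopic case is identical, with $U_K$, $\mathcal T$ and integer times in place of their continuous-time counterparts.

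The eigenvalue relation thus transfers almost formally; the real content is in the two non-degeneracy features of the limit. First, $u_\infty\neq0$ is exactly what the finite-dimensional compactness buys: in infinite dimension one would secure only a weak limit, possibly $0$, and would need an extra equicontinuity/tightness input on $\{|a1,\varphi^t(\theta_1)\rangle\}_t$. Second, and subtler, one should check that $w$ is consistent as a section over the orbit of $\theta_2$ — single-valued (and, if continuity is part of the notion of quasienergy family, continuous) on $\{\varphi^t(\theta_2)\}_t$. This is automatic when that orbit is aperiodic, whereas if $\theta_2$ lies on a periodic orbit of period $T$ it forces $u_\infty$ to be a monodromy eigenvector $U(T,0;\theta_2)u_\infty=e^{-\ihbar^{-1}\chi_{a1}T}u_\infty$, which one obtains by refining the choice of $(t_n)$ (the set of limiting directions is invariant under the unitary $e^{\ihbar^{-1}\chi_{a1}T}U(T,0;\theta_2)$, from which a fixed direction can be extracted) or, most transparently, by reading $|a1,\cdot\rangle$ as the restriction of a continuous $H_K$-eigensection, which has constant norm along the orbit of $\theta_1$ and is therefore continuous and non-vanishing at $\theta_2\in\overline{\{\varphi^t(\theta_1)\}_{t\in\mathbb R}}$.
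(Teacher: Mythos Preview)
Your proof is correct and follows essentially the same route as the paper: pick a sequence $\varphi^{t_n}(\theta_1)\to\theta_2$, rebase the quasienergy relation at $\varphi^{t_n}(\theta_1)$ via the cocycle identity (exactly as in Lemma~\ref{orbstab1}), and pass to the limit using continuity of $U(\tau,0;\cdot)$.

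The one technical difference is in how the limit vector is secured. The paper regards $|a1,\theta\rangle$ as the restriction to the orbit of a global $H_K$-eigensection on $\Gamma$, so that $|a1,\theta_2^{(n)}\rangle\to|a1,\theta_2\rangle$ directly by continuity of that section; non-vanishing is then implicit. You instead normalize and invoke compactness of the unit sphere in the finite-dimensional $\mathcal H$ to extract $u_\infty\neq0$, which is more self-contained and makes the finite-dimensionality hypothesis do visible work. You are also more careful than the paper about single-valuedness of the resulting family when $\theta_2$ is periodic, and you note at the end that the global-eigensection viewpoint resolves this cleanly --- which is precisely the paper's stance. So the two arguments are the same in architecture, with your version trading the global-section assumption for a compactness step and a short discussion of the periodic case.
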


\begin{proof}
  Since $\theta_2$ belongs to the topological closure of the orbit of $\theta_1$, $\exists (t_n)_{n\in \mathbb N}$ such that $\lim_{n \to + \infty} \varphi^{t_n}(\theta_1) = \theta_2$. Let $\theta_2^{(n)} \equiv \varphi^{t_n}(\theta_1)$. By application of the previous lemma:
  $$ U(t-t_n,0;\theta_2^{(n)})|a1,\theta_2^{(n)} \rangle  =  e^{-\ihbar^{-1}\chi_{a1}(t-t_n)} |a1,\varphi^{t-t_n}(\theta_2^{(n)}) \rangle $$
  proving that $\chi_{a1}$ is a quasienergy associated with $\theta_2^{(n)}$. This relation being true $\forall t$, it is true for $t=\tau+t_n$:
  $$ U(\tau,0;\theta_2^{(n)})|a1,\theta_2^{(n)} \rangle  =  e^{-\ihbar^{-1}\chi_{a1}\tau} |a1,\varphi^{\tau}(\theta_2^{(n)}) \rangle $$
  $\lim_{n \to +\infty} U(\tau,0;\theta_2^{(n)}) = U(\tau,0;\theta_2)$. $\lim_{n \to+\infty} |a1,\theta_2^{(n)}\rangle = |a1,\theta_2 \rangle$ since $|a1,\theta\rangle$ can be extended to the whole of $\Gamma$ as solution of $(H(\theta)-\ihbar F^\mu \partial_\mu)|a1,\theta \rangle = \chi_{a1} |a1,\theta \rangle$ and $|a1,\theta \rangle$ is  $\mathcal C^1(C)$ for all $C$ integral curve of $F^\mu \partial_\mu$ as $\{\varphi^t(\theta_1)\}_{t \in \mathbb R}$. It follows that $\chi_{a1}$ is a quasienergy associated with $\theta_2$.
\end{proof}

\begin{propo}
If $\chi_a \in \Sp(H_K)$ is a quasienergy associated with the eigenstate $|a,\theta\rangle \in \mathcal K$, and $\lambda \in \Sp(F^\mu \partial_\mu)$ is a Koopman eigenvalue associated with the eigenfunction $f_\lambda \in L^2(\Gamma,d\mu)$ then $\chi_a -\ihbar \lambda \in \Sp(H_K)$ is another quasienergy associated with the eigenstate $f_\lambda(\theta)|a,\theta \rangle$.
\end{propo}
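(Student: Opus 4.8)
The plan is a direct verification: I would apply the Schr\"odinger--Koopman Hamiltonian $H_K = -\ihbar F^\mu(\theta)\partial_\mu \otimes 1_{\mathcal H} + H(\theta)$ to the candidate vector $f_\lambda(\theta)|a,\theta\rangle$ and check that it returns $(\chi_a - \ihbar\lambda)$ times that vector. The only structural input needed is that the Koopman generator $F^\mu\partial_\mu$ is a first-order differential operator and therefore obeys the Leibniz rule when acting on the product of the scalar function $f_\lambda$ with the $\mathcal H$-valued function $\theta\mapsto|a,\theta\rangle$, together with the fact that at each $\theta$ the number $f_\lambda(\theta)$ commutes with the operator $H(\theta)\in\mathcal L(\mathcal H)$.

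Concretely, the key steps, in order, are: (i) expand, by the Leibniz rule,
\[ -\ihbar F^\mu\partial_\mu\bigl(f_\lambda(\theta)|a,\theta\rangle\bigr) = -\ihbar\bigl(F^\mu\partial_\mu f_\lambda(\theta)\bigr)|a,\theta\rangle - \ihbar f_\lambda(\theta)\, F^\mu\partial_\mu|a,\theta\rangle; \]
(ii) substitute the Koopman eigenvalue equation $F^\mu\partial_\mu f_\lambda = \lambda f_\lambda$ in the first term to get $-\ihbar\lambda f_\lambda(\theta)|a,\theta\rangle$; (iii) add the potential term $H(\theta) f_\lambda(\theta)|a,\theta\rangle = f_\lambda(\theta) H(\theta)|a,\theta\rangle$ and regroup, recognizing that the remaining pieces assemble into $f_\lambda(\theta)\bigl(-\ihbar F^\mu\partial_\mu + H(\theta)\bigr)|a,\theta\rangle = f_\lambda(\theta)\,\chi_a|a,\theta\rangle$ by the defining quasienergy equation; (iv) collect to obtain $H_K\bigl(f_\lambda(\theta)|a,\theta\rangle\bigr) = (\chi_a - \ihbar\lambda)\, f_\lambda(\theta)|a,\theta\rangle$, which is the claim.

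The one point that is not pure bookkeeping, and which I would dispatch before step (i), is verifying that $f_\lambda(\theta)|a,\theta\rangle$ is a genuine nonzero element of $\mathcal K = L^2(\Gamma,d\mu)\otimes\mathcal H$ to which $F^\mu\partial_\mu$ can be applied termwise. Since $\Gamma=\mathbb T^m$ is compact and $f_\lambda\in\mathcal C^1(\Gamma)$, $f_\lambda$ is bounded, so $f_\lambda(\theta)|a,\theta\rangle\in\mathcal K$ whenever $|a,\theta\rangle\in\mathcal K$; it is nonzero because a nonzero continuous function vanishes at most on a set of zero $\mu$-measure while $|a,\theta\rangle\neq 0$; and the $\mathcal C^1$ regularity of both factors — the latter along the integral curves of $F^\mu\partial_\mu$, exactly the regularity already invoked in the proof of Theorem \ref{orbstab2} — legitimizes the directional Leibniz expansion in step (i). I do not expect any real obstacle: this proposition is simply the Koopman counterpart of the Floquet ladder structure, multiplication by a Koopman mode shifting the quasienergy by $-\ihbar\lambda$, which reduces to the familiar $\hbar\omega$-spaced photon ladder in the light--matter case where $\lambda = \imath k\omega$.
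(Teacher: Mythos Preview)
Your proposal is correct and matches the paper's own justification, which consists of the single remark that the result ``follows directly from the fact that the Koopman generator is a first order derivative.'' You have simply spelled out the Leibniz-rule computation that this remark encodes, with some extra care about the regularity and nonvanishing of $f_\lambda(\theta)|a,\theta\rangle$ that the paper does not make explicit.
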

This follows directly from the fact that the Koopman generator is a first order derivative. Reciprocally:
\begin{prop}
  If $\chi_b - \chi_a \in -\ihbar \Sp(F^\mu\partial_\mu)$ (with $\chi_a,\chi_b \in \Sp(H_K)$) then for all $|b,\theta\rangle$ quasienergy states associated with $\chi_b$, it exists $|a,\theta \rangle$ quasienergy state associated with $\chi_a$ such that $|b,\theta\rangle = f_{\chi_b-\chi_a}(\theta)|a,\theta \rangle$ (where $f_{\chi_b-\chi_a}$ is a Koopman mode associated with $\chi_b-\chi_a$).
\end{prop}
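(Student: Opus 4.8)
The plan is to run the previous Proposition backwards. Write the hypothesis as $\chi_b - \chi_a = -\ihbar \lambda$ with $\lambda \in \Sp(F^\mu \partial_\mu)$, and let $f_\lambda$ (the object written $f_{\chi_b-\chi_a}$ in the statement) be an associated Koopman mode, so $F^\mu(\theta) \partial_\mu f_\lambda(\theta) = \lambda f_\lambda(\theta)$. Given any quasienergy state $|b,\theta\rangle$ of $\chi_b$, the candidate I would propose is $|a,\theta\rangle := f_\lambda(\theta)^{-1} |b,\theta\rangle$; the relation $|b,\theta\rangle = f_\lambda(\theta)|a,\theta\rangle$ is then immediate, so the whole content reduces to checking that $|a,\theta\rangle$ is a nonzero quasienergy state of $\chi_a$.

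For that I would use only the Leibniz rule for the first-order operator $F^\mu \partial_\mu$ and the identity $F^\mu \partial_\mu (f_\lambda^{-1}) = - f_\lambda^{-2}(F^\mu \partial_\mu f_\lambda) = -\lambda f_\lambda^{-1}$. Since multiplication by the scalar $f_\lambda^{-1}$ commutes with $H(\theta)$, one obtains
\begin{equation}
(-\ihbar F^\mu \partial_\mu + H(\theta))\bigl(f_\lambda^{-1}|b,\theta\rangle\bigr) = \ihbar \lambda\, f_\lambda^{-1}|b,\theta\rangle + f_\lambda^{-1}(-\ihbar F^\mu \partial_\mu + H(\theta))|b,\theta\rangle = (\ihbar \lambda + \chi_b)\,|a,\theta\rangle ,
\end{equation}
and $\ihbar \lambda + \chi_b = \chi_a$ by the choice of $\lambda$. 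As $|b,\theta\rangle \neq 0$, also $|a,\theta\rangle \neq 0$, which would finish the proof.

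The only genuinely delicate point, and the one I expect to be the main obstacle, is to justify that $f_\lambda$ is invertible, i.e. nowhere vanishing, so that $|a,\theta\rangle$ is well defined and --- since $f_\lambda^{-1}$ is then bounded on the compact $\Gamma$ --- actually lies in $\mathcal K$. Here I would argue that, the flow being conservative, the Koopman generator is anti-selfadjoint, hence $\lambda \in \imath \mathbb R$ and $|f_\lambda \circ \varphi^t| = |e^{\lambda t}|\,|f_\lambda| = |f_\lambda|$; thus $|f_\lambda|$ is a continuous ($f_\lambda \in \mathcal C^1(\Gamma)$) flow invariant, hence constant on the closure of every orbit. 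Restricting $\Gamma$ to a single orbit closure --- legitimate because Theorem~\ref{orbstab2} shows the quasienergy spectrum is organised orbit-closure by orbit-closure --- the flow is topologically transitive there, so $|f_\lambda|$ is constant on it, and a nonzero constant (else $f_\lambda$ would be the zero ``eigenfunction'' on that component). Equivalently, $Z = \{f_\lambda = 0\}$ is flow invariant, hence $\mu$-negligible whenever the restricted flow is ergodic, and on $\Gamma \setminus Z$ the computation above holds verbatim; this covers in particular all the ergodic and chaotic classical flows that are the focus of the paper.
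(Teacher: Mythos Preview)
Your argument is essentially the paper's, with only cosmetic differences. The paper phrases it through the evolution-operator characterization $U(t,0;\theta)|b,\theta\rangle = e^{-\ihbar^{-1}\chi_b t}|b,\varphi^t(\theta)\rangle$ rather than the differential form of $H_K$, and multiplies by a Koopman mode $f_{-\lambda}$ for the eigenvalue $-\lambda$ instead of by $f_\lambda^{-1}$; it then sets $|a,\theta\rangle := f_{-\lambda}(\theta)|b,\theta\rangle$ and reads off $|b,\theta\rangle = f_\lambda(\theta)|a,\theta\rangle$. Since for a conservative flow one may take $f_{-\lambda}=\overline{f_\lambda}$, the two constructions coincide precisely when $|f_\lambda|\equiv 1$, i.e.\ the unimodular case; in general your $f_\lambda^{-1}$ is itself a Koopman mode for $-\lambda$ (Proposition~\ref{algepropspect} with $r=-1$), so the two are the same object up to the choice of representative. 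Your careful treatment of the nonvanishing of $f_\lambda$ is in fact more explicit than the paper's, which tacitly assumes that $f_\lambda f_{-\lambda}=1$ and that $f_{-\lambda}|b,\theta\rangle$ is not identically zero.
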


\begin{proof}
  Let $\lambda \in \Sp(F^\mu\partial_\mu)$ be such that $\chi_b = \chi_a -\ihbar \lambda$. We suppose that it exists $|b,\theta \rangle$ with $U(t,0;\theta)|b,\theta \rangle = e^{-\ihbar^{-1}\chi_bt}|b,\varphi^t(\theta)\rangle$ such that $|b,\theta \rangle \not= f_\lambda(\theta)|a,\theta \rangle$ for all Koopman modes $f_\lambda$ associated with $\lambda$ and for all quasienergy states $|a,\theta\rangle$ associated with $\chi_a$. But by multiplying the equation defining $|b,\theta\rangle$ by $f_{-\lambda}(\theta)$ we have
  \begin{eqnarray}
    U(t,0;\theta)f_{-\lambda}(\theta)|b,\theta\rangle & = & e^{-\ihbar^{-1} \chi_b t} f_{-\lambda}(\theta)|b,\varphi^t(\theta) \rangle \\
    & = & e^{-\ihbar^{-1} \chi_b t} e^{\lambda t} f_{-\lambda}(\varphi^t(\theta))|b,\varphi^t(\theta)\rangle \\
    & = & e^{-\ihbar^{-1} \chi_a t} f_{-\lambda}(\varphi^t(\theta))|b,\varphi^t(\theta)\rangle
  \end{eqnarray}
  It follows that $|a,\theta \rangle \equiv f_{-\lambda}(\theta)|b,\theta\rangle$ is a quasienergy state associated with $\chi_a$ and then $|b,\theta \rangle = f_\lambda(\theta) |a,\theta \rangle$ in contradiction with the hypothesis.
\end{proof}
It follows from these propositions the following decomposition of the quasienergy spectrum:
\begin{defi}[Fundamental quasienergies]
  We call fundamental quasienergies a minimal set $\{\tilde \chi_i\}_i \subset \Sp(H_K)$ such that $\forall \chi_a \in \Sp(H_K)$, $\exists \lambda \in \Sp(F^\mu \partial_\mu)$, $\exists i$, $\chi_a = \tilde \chi_i -\ihbar \lambda$ and such that there is for all $\theta$ a subset of the associated fundamental quasienergy states $\{|Z\mu_i,  \theta \rangle\}_i$ ($H_K |Z\mu_i, \theta \rangle = \tilde \chi_i |Z\mu_i, \theta \rangle$) which is a basis of $\mathcal H$.
\end{defi}
We can write $\chi_{\lambda,i} = -\ihbar \lambda + \tilde \chi_i$ and
\begin{equation}
|(\lambda,i),\theta \rangle = f_\lambda(\theta) |Z\mu_i, \theta \rangle
\end{equation}
Note that the choice of $\{\tilde \chi_i\}_{i}$ is not necessarily unique. Moreover the number of fundamental quasienergies can be larger than $\dim \mathcal H$ (because some states $|Z\mu_i,\theta \rangle$ can be equal to $0$ for some $\theta$).\\
This notion of fundamental quasienergy states is a generalization of a result of the Floquet theory implying that all Schr\"odinger-Floquet quasienergy states can be decomposed as $|a,\theta \rangle = e^{\imath n \theta} Z(\theta) |\mu_i \rangle$ with $\chi_a = \tilde \chi_i + n \hbar \omega$, where $\theta \mapsto Z(\theta)$ is a $2\pi$-periodic unitary operator and $|\mu_i \rangle$ is a $\theta$-independent eigenvector associated with the eigenvalue $\tilde \chi_i$ of the monodromy matrix of the evolution \cite{Viennot}. The notation used here $|Z\mu_i, \theta \rangle$ is in accordance with the decomposition $Z(\theta) |\mu_i \rangle$ (only possible for periodically driven systems, since it is a consequence of the Floquet theorem). The notion of fundamental quasienergies in the general case is complicated by the possible complicated structure of the Koopman spectrum, in contrast with simplicity of the Floquet spectrum $\mathbb Z \hbar \omega$.\\

For a stroboscopic driven system, we have $\chi_a = \tilde \chi_i - \imath \lambda$ since $U_K f_\lambda(\theta)|a,\theta \rangle = e^{-\imath \chi_a + \lambda} f_\lambda(\theta)|a,\theta \rangle$.

\subsection{Normalisation of the quasienergy states}
We choose the quasienergy states $\{|a,\theta \rangle\}_a$ normalized in $\mathcal K$: $\llangle a|a \rrangle = 1$.
\begin{equation}
\llangle Z\mu_i|Z\mu_i \rrangle = 1 \iff \int_\Gamma \langle Z\mu_i,\theta|Z\mu_i,\theta\rangle d\mu(\theta) = 1
\end{equation}
\begin{equation}
\llangle (\lambda,i)|(\lambda,i)\rrangle=1 \iff \int_\Gamma \langle Z\mu_i,\theta|Z\mu_i,\theta\rangle |f_\lambda(\theta)|^2d\mu(\theta) = 1
\end{equation}
$|Z\mu_i,\theta \rangle$ cannot be normalized in $\mathcal H$ for all $\theta$, since $N(\theta) |Z\mu_i,\theta \rangle$ is not an eigenvector of $-\ihbar F^\mu \partial_\mu + H(\theta)$ in $\mathcal K$. $f_\lambda$ is then not normalized in $L^2(\Gamma,d\mu)$ but in $L^2(\Gamma,d\zeta_i)$ with the ``quantum eigenmeasure'' $d\zeta_i(\theta) = \langle Z\mu_i,\theta|Z\mu_i,\theta\rangle d\mu(\theta)$.
\begin{eqnarray}
\chi_{\lambda,i} & = & \llangle Z\mu_i|\overline{f_\lambda} H_K f_\lambda |Z\mu_i \rrangle \\
& = & \int_{\Gamma} \langle Z\mu_i,\theta|H(\theta)|Z\mu_i,\theta\rangle d\xi_\lambda(\theta) \nonumber \\
& & \quad - \ihbar \int_\Gamma F^\mu(\theta) \langle Z\mu_i,\theta|\partial_\mu|Z\mu_i,\theta\rangle d\xi_\lambda(\theta) \nonumber \\
& & \quad - \ihbar \int_\Gamma F^\mu(\theta) \overline{f_\lambda(\theta)} \frac{\partial f_\lambda(\theta)}{\partial \theta^\mu} d\zeta_i(\theta)
\end{eqnarray}
with the ``classical eigenmeasure'' $d\xi_\lambda(\theta) = |f_\lambda(\theta)|^2 d\mu(\theta)$.

\begin{prop} \label{unimod}
If $f_\lambda$ are all unimodular then the fundamental quasienergy states can be orthonormalized in $\mathcal H$: $\langle Z\mu_i,\theta|Z\mu_j,\theta \rangle = \delta_{ij}$ for $\mu$-almost all $\theta$.
\end{prop}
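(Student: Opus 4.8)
The plan is to track the $\mathcal H$-Gram matrix of the fundamental states along the classical flow and then renormalise. First I would set $g_{ij}(\theta)=\langle Z\mu_i,\theta|Z\mu_j,\theta\rangle$ and feed the fundamental quasienergy states into the relation (\ref{eigenU}): $|Z\mu_i,\varphi^t(\theta)\rangle=e^{\ihbar^{-1}\tilde\chi_i t}\,U(t,0;\theta)|Z\mu_i,\theta\rangle$. Since the classical system is conservative, $U(t,0;\theta)$ is unitary on $\mathcal H$ and $H_K$ is self-adjoint, so every $\tilde\chi_i$ is real and
\begin{equation}
g_{ij}(\varphi^t(\theta))=e^{-\ihbar^{-1}(\tilde\chi_i-\tilde\chi_j)t}\,g_{ij}(\theta)=\mathcal T^t g_{ij}(\theta).
\end{equation}
Hence $g_{ij}$, unless it vanishes $\mu$-almost everywhere, is a Koopman mode with purely imaginary eigenvalue $\lambda_{ij}$, tied to the quasienergies by $\tilde\chi_j=\tilde\chi_i-\ihbar\lambda_{ij}$ (in accordance with the propositions above). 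In particular each diagonal entry $g_{ii}$ is flow-invariant ($\lambda_{ii}=0$), and so is every $g_{ij}$ with $\tilde\chi_i=\tilde\chi_j$.

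Next I would remove the off-diagonal terms by a ``functorial'' Gram--Schmidt, replacing $|Z\mu_j,\theta\rangle$ by $|Z\mu_j,\theta\rangle-c_{ij}(\theta)|Z\mu_i,\theta\rangle$ with $c_{ij}=g_{ij}/g_{ii}$. The coefficient $c_{ij}$ is a Koopman mode of eigenvalue $\lambda_{ij}$ (ratio of the Koopman mode $g_{ij}$ and the flow-invariant $g_{ii}$), so the identity $H_K(c_{ij}|Z\mu_i\rangle)=\tilde\chi_i c_{ij}|Z\mu_i\rangle-\ihbar(F^\mu\partial_\mu c_{ij})|Z\mu_i\rangle=(\tilde\chi_i-\ihbar\lambda_{ij})c_{ij}|Z\mu_i\rangle=\tilde\chi_j c_{ij}|Z\mu_i\rangle$ shows the subtracted vector is a quasienergy state of $\tilde\chi_j$; thus the modified vector is still a quasienergy state of quasienergy $\tilde\chi_j$, and spans with $|Z\mu_i,\theta\rangle$ the same subspace of $\mathcal H$ at every $\theta$, so the family keeps containing a basis. (Alternatively, for $\tilde\chi_i\neq\tilde\chi_j$ the eigenvalue $\lambda_{ij}$ is non-trivial and, by the reciprocal property stated above together with minimality of the fundamental set and the fact that the unimodular $f_{\lambda_{ij}}$ never vanishes, $g_{ij}$ is already forced to be $\mu$-a.e.\ zero; the Gram matrix is then block-diagonal over the degenerate subspaces and one only recombines inside each block, e.g.\ by $G_{\mathrm{block}}(\theta)^{-1/2}$, whose entries are flow-invariant, hence Koopman modes of eigenvalue $0$.) After this, all off-diagonal $g_{ij}$ with $i\neq j$ vanish $\mu$-a.e. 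Finally each $g_{ii}$ is a real, non-negative, flow-invariant function with $\int_\Gamma g_{ii}\,d\mu=1$; on the flow-invariant set $\{g_{ii}>0\}$ I would rescale $|Z\mu_i,\theta\rangle\mapsto g_{ii}(\theta)^{-1/2}|Z\mu_i,\theta\rangle$, which multiplies by a Koopman mode of eigenvalue $0$ and so preserves the eigenstate equation, bringing $g_{ii}$ to $1$. Unimodularity of the $f_\lambda$ is exactly what makes this renormalisation compatible with the normalisations fixed at the beginning of this subsection without any further adjustment: once $g_{ii}=1$ one automatically gets $\llangle(\lambda,i)|(\lambda,i)\rrangle=\int_\Gamma|f_\lambda|^2\,d\mu=\mu(\Gamma)=1$, and the eigenmeasures $d\xi_\lambda$ and $d\zeta_i$ both reduce to $d\mu$. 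This gives $\langle Z\mu_i,\theta|Z\mu_j,\theta\rangle=\delta_{ij}$ for $\mu$-almost all $\theta$.

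The hard part will be the degenerate case and the attendant regularity/measure bookkeeping: one must exhibit $G_{\mathrm{block}}(\theta)^{-1/2}$ (or the step-by-step Gram--Schmidt coefficients) as a genuinely flow-invariant, $\mathcal C^1$-along-orbits, $\mu$-measurable operator-valued function that stays non-singular wherever the $|Z\mu_i,\theta\rangle$ form a basis, check that the recombined vectors are eigenvectors of $H_K$ on the whole of $\mathcal K$ (not just along single orbits) and still span a basis of $\mathcal H$ for $\mu$-a.e.\ $\theta$, and deal with the exceptional set where some $|Z\mu_i,\theta\rangle$ vanishes or the family degenerates — there $g_{ii}=1$ is unattainable, so the identity is really asserted on the flow-invariant full-measure complement, which is the precise meaning of ``$\mu$-almost all $\theta$'' here. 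The ``$g_{ij}=0$ by minimality'' shortcut for $\tilde\chi_i\neq\tilde\chi_j$ also needs care when $\tilde\chi_i$ is itself degenerate; invoking the Gram--Schmidt construction instead avoids that, at the cost of modifying the states (but not the quasienergies).
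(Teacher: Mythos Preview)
Your argument is sound and takes a genuinely different route from the paper. The paper works entirely through the eigenmeasures introduced just before the statement: starting from the identity $\chi_{\lambda,i}=\llangle Z\mu_i|\overline{f_\lambda}H_Kf_\lambda|Z\mu_i\rrangle$, unimodularity forces $d\xi_\lambda=d\mu$, so the first two integrals in the displayed decomposition collapse to $\tilde\chi_i$; equating the remaining Koopman term computed against $d\zeta_i$ and against $d\mu$ then yields $d\zeta_i=d\mu$, i.e.\ $\langle Z\mu_i,\theta|Z\mu_i,\theta\rangle=1$ a.e., and an analogous computation with $\llangle Z\mu_i|\overline{f_\lambda}H_Kf_\lambda|Z\mu_j\rrangle=0$ kills the off-diagonal terms. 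In other words, the paper argues that the \emph{given} $\mathcal K$-normalised fundamental states are already $\mathcal H$-orthonormal a.e., without modifying them.

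You instead exploit the flow-covariance $g_{ij}\circ\varphi^t=e^{-\ihbar^{-1}(\tilde\chi_i-\tilde\chi_j)t}g_{ij}$ to recognise each Gram entry as a Koopman mode, and then \emph{construct} an orthonormal family by a Koopman-compatible Gram--Schmidt and a flow-invariant rescaling. Your approach is more transparent about why the conclusion holds (everything is visibly orbit-wise), is essentially constructive, and makes explicit that ``can be orthonormalised'' may require recombining within degenerate blocks; the paper's argument is shorter and leaves the states untouched, but the inference ``$\int(\cdots)\,d\zeta_i=\int(\cdots)\,d\mu$ for all $\lambda$ $\Rightarrow$ $d\zeta_i=d\mu$'' is doing real work that you replace by direct orbit analysis. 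In your write-up, the role of unimodularity is somewhat peripheral (it mainly reconciles the final pointwise normalisation with the $\mathcal K$-normalisation of $|(\lambda,i)\rangle$); you might stress that it also guarantees the candidate Koopman mode $f_{\lambda_{ij}}$ is nowhere vanishing, which is what makes the quotient $g_{ij}/g_{ii}$ and the minimality shortcut legitimate.
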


\begin{proof}
If $f_\lambda$ are all unimodular, $d\xi_\lambda(\theta)= d\mu(\theta)$. It follows that $\int_{\Gamma} \langle Z\mu_i,\theta|H(\theta)|Z\mu_i,\theta\rangle d\xi_\lambda(\theta) - \ihbar \int_\Gamma F^\mu(\theta) \langle Z\mu_i,\theta|\partial_\mu|Z\mu_i,\theta\rangle d\xi_\lambda(\theta)  = \llangle Z\mu_i|H_K|Z\mu_i \rrangle = \tilde \chi_i$. We have then $- \ihbar \int_\Gamma F^\mu(\theta) \overline{f_\lambda(\theta)} \frac{\partial f_\lambda(\theta)}{\partial \theta^\mu} d\zeta_i(\theta) = \chi_{\lambda,i}-\tilde \chi_i = -\ihbar \lambda = - \ihbar \int_\Gamma F^\mu(\theta) \overline{f_\lambda(\theta)} \frac{\partial f_\lambda(\theta)}{\partial \theta^\mu} d\mu(\theta)$. It follows that $d\zeta_i(\theta) = d\mu(\theta) \iff \langle Z\mu_i,\theta|Z\mu_i,\theta \rangle = 1$ for $\mu$-almost all $\theta$.\\
For $j\not=i$, $\llangle Z\mu_i|\overline f_\lambda H_K f_\lambda|Z\mu_j\rrangle=0$. It follows that $\int_\Gamma \langle Z\mu_i,\theta|H(\theta)|Z\mu_j \rangle d\mu(\theta) - \ihbar \int_\Gamma F^\mu(\theta) \langle Z\mu_i,\theta|\partial_\mu|Z\mu_j,\theta\rangle d\mu(\theta) - \ihbar \int_\Gamma F^\mu(\theta) \overline{f_\lambda(\theta)} \frac{\partial f_\lambda(\theta)}{\partial \theta^\mu} \langle Z\mu_i,\theta|Z\mu_j,\theta\rangle d\mu(\theta) =0$. But $\int_\Gamma \langle Z\mu_i,\theta|H(\theta)|Z\mu_j \rangle d\mu(\theta) - \ihbar \int_\Gamma F^\mu(\theta) \langle Z\mu_i,\theta|\partial_\mu|Z\mu_j,\theta\rangle d\mu(\theta) = \llangle Z\mu_i|H_K|Z\mu_j \rrangle = 0$. We have then $\int_\Gamma F^\mu(\theta) \overline{f_\lambda(\theta)} \frac{\partial f_\lambda(\theta)}{\partial \theta^\mu} \langle Z\mu_i,\theta|Z\mu_j,\theta\rangle d\mu(\theta) =0 \Rightarrow \langle Z\mu_i,\theta|Z\mu_j,\theta\rangle=0$ for $\mu$-almost all $\theta$.
\end{proof}

\subsection{Choice of fundamental quasienergy states}
In this paragraph, we see how to find the fundamental quasienergy states, the other ones being obtained by composition with the Koopman modes. We start by showing that it is easy to exhibit the fundamental quasienergies on the fixed and cyclic points.
\begin{prop} \label{fundquasifixed}
Let $(\Gamma,\mu,\varphi^t,\mathcal H,H)$ be a conservative driven quantum system.
\begin{itemize}
\item Let $\theta_* \in \Gamma$ be a fixed point of $\varphi^t$, we can choose as fundamental quasienergies the set $\{\tilde \chi_i\}_{i=1,...,\dim \mathcal H} = \Sp(H(\theta_*))$. Moreover $|Z\mu_i, \theta_*\rangle$ is the eigenvector of $H(\theta_*)$ associated with $\tilde \chi_i$.
\item Let $\theta_* \in \Gamma$ be a cyclic point of $\varphi^t$, we can choose as fundamental quasienergies the set $\{\tilde \chi_i\}_{i=1,...,\dim \mathcal H} = \Sp(M_{\theta_*})$ where $M_{\theta_*}$ is the monodromy matrix of the Schr\"odinger equation. Moreover $|Z\mu_i, \theta_*\rangle$ is the eigenvector of $M_{\theta_*}$ associated with $\tilde \chi_i$.
\end{itemize}
Let $(\Gamma, \mu,\varphi,\mathcal H,U)$ be a conservative stroboscopic quantum system.
\begin{itemize}
\item Let $\theta_* \in \Gamma$ be a fixed point of $\varphi$, we can choose as fundamental quasienergies the set $\{e^{-\imath \tilde \chi_i}\}_{i=1,...,\dim \mathcal H} = \Sp(U(\theta_*))$. Moreover $|Z\mu_i,\theta_* \rangle$ is the eigenvector of $U(\theta_*)$ associated with $e^{-\imath \tilde \chi_i}$.
\item Let $\theta_* \in \Gamma$ be a $p$-cyclic point of $\varphi$, we can choose as the quasienergies the set $\{e^{-\imath p \tilde \chi_i} \}_{i=1,...,\dim \mathcal H} = \Sp(U(\varphi^{p-1}(\theta_*))...U(\theta_*))$. Moreover $|Z\mu_i,\theta_*\rangle$ is the eigenvector of $U(\varphi^{p-1}(\theta_*))...U(\theta_*)$ associated with $e^{-\imath p \tilde \chi_i}$.
\end{itemize}
\end{prop}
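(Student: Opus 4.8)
The plan is to evaluate the Schr\"odinger--Koopman eigenvalue equation (and its stroboscopic analogue, equation \ref{eigenU}) at the special point $\theta_*$, where the Koopman transport term degenerates, and then to invoke self-adjointness (or unitarity) of the resulting finite-dimensional operator to obtain the eigenbasis of $\mathcal H$ demanded by the definition of fundamental quasienergies.

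For a continuous-time fixed point I would first observe that $\varphi^t(\theta_*)=\theta_*$ for all $t$ forces $F(\theta_*)=0$, since $\frac{d}{dt}\varphi^t(\theta_*)=F(\varphi^t(\theta_*))=F(\theta_*)$ while the left-hand side vanishes. Evaluating $(-\ihbar F^\mu(\theta)\partial_\mu+H(\theta))|a,\theta\rangle=\chi_a|a,\theta\rangle$ at $\theta=\theta_*$ then annihilates the derivative term and yields $H(\theta_*)|a,\theta_*\rangle=\chi_a|a,\theta_*\rangle$: every quasienergy whose state survives at $\theta_*$ is an eigenvalue of $H(\theta_*)$, with $|a,\theta_*\rangle$ the corresponding eigenvector. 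Conversely, for each eigenpair $(\nu_i,|\mu_i\rangle)$ of $H(\theta_*)$ I would extend $|\mu_i\rangle$ to a global section $\theta\mapsto|Z\mu_i,\theta\rangle$ solving the first-order equation $(-\ihbar F^\mu\partial_\mu+H(\theta))|Z\mu_i,\theta\rangle=\nu_i|Z\mu_i,\theta\rangle$ by propagation along orbits (the method of characteristics used in the proof of Theorem \ref{orbstab2} and encoded in equation \ref{eigenU}). Self-adjointness of $H(\theta_*)$ makes $\{|Z\mu_i,\theta_*\rangle\}_i$ an orthonormal basis of $\mathcal H$, which is exactly the basis condition in the definition; and since restriction to the singleton orbit $\{\theta_*\}$ forces $\lambda=0$ in $\chi_a=\tilde\chi_i-\ihbar\lambda$, the set $\Sp(H(\theta_*))$ is a legitimate minimal choice.

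For a cyclic point of minimal period $T$ the orbit is a topological circle and the argument is the Floquet one: applying equation \ref{eigenU} at $t=T$ gives $U(T,0;\theta_*)|a,\theta_*\rangle=e^{-\ihbar^{-1}\chi_aT}|a,\theta_*\rangle$, so $|a,\theta_*\rangle$ is an eigenvector of the monodromy matrix $M_{\theta_*}=U(T,0;\theta_*)$ with eigenvalue $e^{-\ihbar^{-1}\chi_aT}$. Conversely each eigenvector of the unitary $M_{\theta_*}$ is single-valued around the loop and extends to a global quasienergy state; unitarity supplies the basis. A fixed eigenvalue of $M_{\theta_*}$ determines $\tilde\chi_i$ only modulo $2\pi\hbar/T$, and that indeterminacy is precisely $-\ihbar$ times the Koopman spectrum $\{2\pi\imath n/T\}_{n\in\mathbb Z}$ of the translation flow on the orbit-circle, so one representative per eigenvalue produces the announced $\{\tilde\chi_i\}$. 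The two stroboscopic cases are identical after replacing equation \ref{eigenU} by $U(\theta)|a,\theta\rangle=e^{-\imath\chi_a}|a,\varphi(\theta)\rangle$ and iterating: once for a fixed point, $p$ times for a $p$-cyclic point, which gives $U(\varphi^{p-1}(\theta_*))\cdots U(\theta_*)|a,\theta_*\rangle=e^{-\imath p\chi_a}|a,\theta_*\rangle$; unitarity of these operators again yields the eigenbasis.

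I expect the main obstacle to be the converse (existence) direction rather than the forward one: one must check that the eigenvectors at $\theta_*$ genuinely extend to eigenvectors of $H_K$ in $\mathcal K=L^2(\Gamma,d\mu)\otimes\mathcal H$ — global solvability of the transport equation over all of $\Gamma$ (not merely on the orbit) and $L^2$-integrability of the resulting section — together with completeness, namely that no quasienergy has a state vanishing identically on the orbit of $\theta_*$ that fails to be of the form $\tilde\chi_i-\ihbar\lambda$ with a Koopman mode vanishing there as well. The forward implication and the basis property are routine consequences of equation \ref{eigenU} and the spectral theorem applied to $H(\theta_*)$, $M_{\theta_*}$, $U(\theta_*)$, or $U(\varphi^{p-1}(\theta_*))\cdots U(\theta_*)$.
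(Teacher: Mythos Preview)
Your argument is essentially the paper's, with one harmless twist and one terminological slip. For the continuous-time fixed point you kill the transport term by observing $F(\theta_*)=0$ and reading off $H(\theta_*)|a,\theta_*\rangle=\chi_a|a,\theta_*\rangle$ directly from the PDE; the paper instead specializes equation~\ref{eigenU} to $\theta_*$ and uses $U(t,0;\theta_*)=e^{-\ihbar^{-1}H(\theta_*)t}$ to reach the same conclusion --- your route is slightly more elementary, theirs keeps all four cases uniform under the single relation~\ref{eigenU}. For the cyclic case you write ``the monodromy matrix $M_{\theta_*}=U(T,0;\theta_*)$'', but in the paper $M_{\theta_*}$ is the Floquet \emph{generator} (the self-adjoint matrix with $U(T,0;\theta_*)=e^{-\ihbar^{-1}M_{\theta_*}T}$), so that its eigenvalues are the $\tilde\chi_i$ themselves, not $e^{-\ihbar^{-1}\tilde\chi_iT}$; the paper invokes the Floquet decomposition $U(t,0;\theta_*)=Z(t,0;\theta_*)e^{-\ihbar^{-1}M_{\theta_*}t}$ with $Z(T,0;\theta_*)=1_{\mathcal H}$ to make this step explicit. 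Your stroboscopic arguments and your remarks on the converse direction (existence and $L^2$-integrability of the extension) match or exceed what the paper does --- it, like you, treats the forward implication as the content and does not dwell on global existence.
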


\begin{proof}
By using the decomposition of the quasienergies in equation \ref{eigenU} we have
\begin{eqnarray}
U(t,0;\theta) f_\lambda(\theta)|Z\mu_i,\theta \rangle & = & e^{-\ihbar^{-1} \chi_{\lambda,i}t} f_\lambda(\varphi^t(\theta)) |Z\mu_i,\varphi^t(\theta) \rangle \\
& = & e^{-\lambda t - \ihbar^{-1} \tilde \chi_i t} e^{\lambda t} f_\lambda(\theta) |Z\mu_i,\varphi^t(\theta) \rangle \\
& = & e^{-\ihbar^{-1} \tilde \chi_i t}f_\lambda(\theta) |Z\mu_i,\varphi^t(\theta) \rangle
\end{eqnarray}
Let $\theta_*$ be a fixed point: $\varphi^t(\theta_*) = \theta_*$. Since it exists at least one Koopman mode such that $f_\lambda(\theta_*) \not=0$ (with $\lambda=0$ for example), it follows
\begin{equation}
U(t,0;\theta_*) |Z\mu_i,\theta_* \rangle = e^{-\ihbar^{-1} \tilde \chi_i t} |Z\mu_i,\theta_* \rangle
\end{equation}
But $U(t,0;\theta) = \Te^{-\ihbar \int_0^t H(\varphi^t(\theta)) dt} \Rightarrow U(t,0;\theta_*) = e^{-\ihbar^{-1} H(\theta_*) t}$ ($\Te$ denoting the time-ordered exponential, i.e. the Dyson series), and then
\begin{eqnarray}
& & e^{-\ihbar^{-1} H(\theta_*) t} |Z\mu_i,\theta_* \rangle = e^{-\ihbar^{-1} \tilde \chi_i t} |Z\mu_i,\theta_* \rangle \\
& & \iff H(\theta_*) |Z\mu_i,\theta_* \rangle = \tilde \chi_i |Z\mu_i,\theta_* \rangle
\end{eqnarray}
Let $\theta_*$ be a cyclic point: $\varphi^T(\theta_*) = \theta_*$. Because it exists at least one Koopman mode such that $f_\lambda(\theta_*) \not=0$, it follows
\begin{equation}
U(T,0;\theta_*) |Z\mu_i,\theta_* \rangle = e^{-\ihbar^{-1} \tilde \chi_i T} |Z\mu_i,\theta_* \rangle
\end{equation}
By the Floquet theorem we have $U(t,0;\theta_*) = Z(t,0;\theta_*) e^{-\ihbar^{-1} M_{\theta_*} t}$ with $Z(t+T,0,\theta_*)=Z(t,0;\theta_*)$ ($Z(T,0,\theta_*) = 1_{\mathcal H}$). It follows
\begin{eqnarray}
& & e^{-\ihbar^{-1} M_{\theta_*} T} |Z\mu_i,\theta_* \rangle = e^{-\ihbar^{-1} \tilde \chi_i T} |Z\mu_i,\theta_* \rangle \\
& & \iff M_{\theta_*} |Z\mu_i,\theta_* \rangle = \tilde \chi_i |Z\mu_i,\theta_* \rangle
\end{eqnarray}
Let $\theta_*$ be a fixed point of a discrete time dynamical system: $\varphi(\theta_*)=\theta_*$.
\begin{equation}
  U(\theta_*)|Z\mu_i,\theta_*\rangle = e^{-\imath \tilde \chi_i} |Z\mu_i,\theta_* \rangle
\end{equation}
Let $\theta_*$ be a $p$-cyclic point: $\varphi^p(\theta_*) = \theta_*$. We have
\begin{eqnarray}
  U(\theta_*)|Z\mu_i,\theta_*\rangle & = & e^{-\imath \tilde \chi_i} |Z\mu_i,\varphi(\theta_*)\rangle \\
  U(\varphi(\theta_*))U(\theta_*)|Z\mu_i,\theta_*\rangle & = & e^{-\imath 2 \tilde \chi_i} |Z\mu_i,\varphi^2(\theta_*)\rangle \\
  & \vdots & \\
  U(\varphi^{p-1}(\theta_*))...U(\theta_*)|Z\mu_i,\theta_*\rangle & = & e^{-\imath p \tilde \chi_i} |Z\mu_i,\varphi^p(\theta_*)\rangle \\
  U(\varphi^{p-1}(\theta_*))...U(\theta_*)|Z\mu_i,\theta_*\rangle & = & e^{-\imath p \tilde \chi_i} |Z\mu_i,\theta_*\rangle
\end{eqnarray}
\end{proof}
For the fixed point and for periodic orbit, we know then the fundamental quasienergy spectrum (which is the same on the whole of a periodic orbit because of the orbital stability lemma \ref{orbstab1}). But between two fixed points or between two cycles, does a relation exist for the quasienergies?
\begin{prop}
Let $(\Gamma,\mu,\varphi^t,\mathcal H,H)$ be a conservative driven quantum system, and $\theta_{*1}$ and $\theta_{*2}$ be two fixed points of $\varphi^t$. Let $\{\tilde \chi_{i1}\}_i$ be the eigenvalues of $H(\theta_{*1})$ and $\{\tilde \chi_{i2}\}_i$ be the eigenvalues of $H(\theta_{*2})$. Let $\{|Z\mu_{i1},\theta\rangle\}_i$ be the fundamental quasienergy states associated with the fixed point $\theta_{*1}$ and $\{|Z\mu_{i2},\theta\rangle\}_i$ be the fundamental quasienergy states associated with the fixed point $\theta_{*2}$. If $\tilde \chi_{i2} \not\in \Sp(H(\theta_{1*}))$ then $|Z\mu_{i2},\theta_{1*} \rangle= 0$.
\end{prop}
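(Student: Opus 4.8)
\textit{Proof proposal.} The plan is to use that $|Z\mu_{i2},\theta\rangle$ is, by definition of the fundamental quasienergy states, an eigenvector of the Schr\"odinger-Koopman Hamiltonian on the whole enlarged space $\mathcal K$ with eigenvalue $\tilde\chi_{i2}$, namely $H_K|Z\mu_{i2},\theta\rangle=\tilde\chi_{i2}|Z\mu_{i2},\theta\rangle$, i.e. $(-\ihbar F^\mu(\theta)\partial_\mu+H(\theta))|Z\mu_{i2},\theta\rangle=\tilde\chi_{i2}|Z\mu_{i2},\theta\rangle$. As in the proof of Theorem \ref{orbstab2} I would work with the $\mathcal C^1$ representative of this state, so that the eigenequation holds pointwise for every $\theta\in\Gamma$, in particular at $\theta=\theta_{1*}$. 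Equivalently, the relation (\ref{eigenU}) applied with $\chi_a=\tilde\chi_{i2}$ and eigenstate $|Z\mu_{i2},\cdot\rangle$ gives $U(t,0;\theta)|Z\mu_{i2},\theta\rangle=e^{-\ihbar^{-1}\tilde\chi_{i2}t}|Z\mu_{i2},\varphi^t(\theta)\rangle$ for all $\theta$; I would run the argument in this dynamical form to sidestep any discussion of the transverse differentiability of $\theta\mapsto|Z\mu_{i2},\theta\rangle$ near the fixed point.

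The next step is to specialize to $\theta=\theta_{1*}$. Since $\theta_{1*}$ is a fixed point, $\varphi^t(\theta_{1*})=\theta_{1*}$ for all $t$, so the identity becomes $U(t,0;\theta_{1*})|Z\mu_{i2},\theta_{1*}\rangle=e^{-\ihbar^{-1}\tilde\chi_{i2}t}|Z\mu_{i2},\theta_{1*}\rangle$. Along this constant trajectory the driving Hamiltonian is the fixed operator $H(\theta_{1*})$, so the time-ordered exponential collapses and $U(t,0;\theta_{1*})=e^{-\ihbar^{-1}H(\theta_{1*})t}$ — this reduction is already carried out inside the proof of Property \ref{fundquasifixed} and can simply be quoted. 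Hence $e^{-\ihbar^{-1}H(\theta_{1*})t}|Z\mu_{i2},\theta_{1*}\rangle=e^{-\ihbar^{-1}\tilde\chi_{i2}t}|Z\mu_{i2},\theta_{1*}\rangle$ for all $t$, and differentiating at $t=0$ (legitimate since $\mathcal H$ is finite dimensional) yields $H(\theta_{1*})|Z\mu_{i2},\theta_{1*}\rangle=\tilde\chi_{i2}|Z\mu_{i2},\theta_{1*}\rangle$. Alternatively the same identity is obtained by evaluating the eigenequation for $H_K$ directly at $\theta_{1*}$: the Koopman generator vanishes there because $F(\theta_{1*})=\frac{d}{dt}\varphi^t(\theta_{1*})\big|_{t=0}=0$, so the first-order term drops out, leaving $H(\theta_{1*})|Z\mu_{i2},\theta_{1*}\rangle=\tilde\chi_{i2}|Z\mu_{i2},\theta_{1*}\rangle$ provided one grants enough regularity of the state near $\theta_{1*}$.

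The conclusion is then immediate: if $|Z\mu_{i2},\theta_{1*}\rangle\neq0$ it would be an eigenvector of $H(\theta_{1*})$ for the eigenvalue $\tilde\chi_{i2}$, hence $\tilde\chi_{i2}\in\Sp(H(\theta_{1*}))$, contradicting the hypothesis; therefore $|Z\mu_{i2},\theta_{1*}\rangle=0$. I do not expect a genuine obstacle, since the statement is essentially a corollary of the fixed-point analysis of Property \ref{fundquasifixed}; the only points deserving a sentence of care are the passage to the $\mathcal C^1$ representative of the quasienergy state (so that pointwise evaluation at $\theta_{1*}$ is meaningful, the eigenequation being an ODE along the integral curves of $F^\mu\partial_\mu$ rather than a mere $L^2$ identity) and the collapse $U(t,0;\theta_{1*})=e^{-\ihbar^{-1}H(\theta_{1*})t}$, both already available earlier in the paper. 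The same argument with the monodromy matrix $M_{\theta_{1*}}$ in place of $H(\theta_{1*})$ would yield the analogous statement for two cyclic points.
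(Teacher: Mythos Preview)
Your proposal is correct and essentially matches the paper's own proof: the paper evaluates the eigenequation $(-\ihbar F^\mu\partial_\mu+H(\theta))|Z\mu_{i2},\theta\rangle=\tilde\chi_{i2}|Z\mu_{i2},\theta\rangle$ at $\theta=\theta_{1*}$ and uses $F^\mu(\theta_{1*})=0$ to reduce it to $H(\theta_{1*})|Z\mu_{i2},\theta_{1*}\rangle=\tilde\chi_{i2}|Z\mu_{i2},\theta_{1*}\rangle$, which is exactly your ``alternative'' route. Your primary dynamical variant via $U(t,0;\theta_{1*})=e^{-\ihbar^{-1}H(\theta_{1*})t}$ and differentiation at $t=0$ is an equivalent reformulation of the same step and introduces no new ingredient.
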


\begin{proof}
$(-\ihbar F^\mu(\theta) \partial_\mu + H(\theta))|Z\mu_{i2},\theta \rangle = \tilde \chi_{i2} |Z\mu_{i2},\theta \rangle \Rightarrow H(\theta_{1*}) |Z\mu_{i2},\theta_{1*} \rangle =\tilde \chi_{i2} |Z\mu_{i2},\theta_{1*} \rangle$ because $F^\mu(\theta_{1*}) = 0$ since $\theta_{1*}$ is a fixed point. If follows that $|Z\mu_{i2},\theta_{1*} \rangle$ is an eigenvector of $H(\theta_{1*})$ with eigenvalue $\tilde \chi_{i2}$ except if $|Z\mu_{i2},\theta_{1*} \rangle=0$.
\end{proof}

We have the same thing for a stroboscopic driven quantum system with $e^{-\imath \tilde \chi_{i_2}}$ for two fixed points $\theta_{*1}$ and $\theta_{*2}$.

\begin{prop} \label{cycltocycl}
  Let $(\Gamma,\mu,\varphi^t,\mathcal H,H)$ be a conservative driven quantum system, and $\theta_{*1}$ and $\theta_{*2}$ be two cyclic points of $\varphi^t$. Let $\{\tilde \chi_{i1}\}_i$ be the eigenvalues of $M_{\theta_{1*}}$ and $\{\tilde \chi_{i2}\}_i$ be the eigenvalues of $M_{\theta_{2*}}$ ($M_{\theta_{i*}}$ are the monodromy matrices). Let $\{|Z\mu_{i1},\theta\rangle\}_i$ be the fundamental quasienergy states associated with the cyclic point $\theta_{*1}$ and $\{|Z\mu_{i2},\theta\rangle\}_i$ be the fundamental quasienergy states associated with the cyclic point $\theta_{*2}$. If $\theta_{*1}$ and $\theta_{*2}$ belong to two different cycles of periods $T_1$ and $T_2$ such that $\frac{T_2}{T_1} = \frac{p}{q}\in \mathbb Q$ ($\frac{p}{q}$ being an irreductible fraction), then if $\tilde \chi_{i2}\not\in \Sp(M_{\theta_{1*}}) + \frac{2\pi \hbar}{pT_1} \mathbb Z$ then $|Z\mu_{i2},\theta_{1*} \rangle= 0$. If $\theta_{*1}$ and $\theta_{*2}$ belong to the same cycle of period $T$, then $\tilde \chi_{i2} \in \Sp(M_{\theta_{1*}}) + \frac{2\pi \hbar}{T-t_2} \mathbb Z$ (where $\varphi^{t_2}(\theta_{1*})=\theta_{2*}$) and $\exists j \in\{1,...,\dim \mathcal H\}$ such that $|Z\mu_{j1},\theta\rangle = |Z\mu_{i2},\theta \rangle$.\\

  Let $(\Gamma,\mu,\varphi,\mathcal H,H)$ be a conservative strobocopic driven quantum system, and $\theta_{*1}$ and $\theta_{*2}$ be two respectively $n_1$-cyclic and $n_2$-cyclic points of $\varphi$. $\frac{n_2}{n_1} = \frac{p}{q}$ ($\frac{p}{q}$ being an irreductible fraction). If $e^{-\imath qn_2 \tilde \chi_2} \not\in \Sp(U(\varphi^{pn_1-1}(\theta_{*1}))...U(\theta_{*1}))$ then $|Z\mu_{i2},\theta_{*1} \rangle = 0$. If $\theta_{*1}$ and $\theta_{*2}$ belong to the same cycle of period $n$, then $e^{-\imath n \tilde \chi_2} \in \Sp(U(\varphi^{n-1}(\theta_{*1}))...U(\theta_{*1}))$ and $\exists j \in \{1,...,\dim \mathcal H\}$ such that $|Z\mu_{j1},\theta \rangle = |Z\mu_{i2},\theta \rangle$.
\end{prop}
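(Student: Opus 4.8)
The whole statement follows from one propagation identity, read off on the various periodic orbits. First I would record that, specialising the computation in Property~\ref{fundquasifixed} to the constant Koopman mode $f_0\equiv1$ (always an eigenfunction of $F^\mu\partial_\mu$ with eigenvalue $0$) --- equivalently, applying equation~\ref{eigenU} to the genuine $\mathcal K$-eigenvector $|Z\mu_i,\theta\rangle$ of eigenvalue $\tilde\chi_i$ --- one has, for every $\theta\in\Gamma$,
\begin{equation}
U(t,0;\theta)\,|Z\mu_i,\theta\rangle = e^{-\ihbar^{-1}\tilde\chi_i t}\,|Z\mu_i,\varphi^t(\theta)\rangle ,
\end{equation}
where $|Z\mu_i,\cdot\rangle$ is the global extension of class $\mathcal C^1$ along integral curves of $F^\mu\partial_\mu$ already used in the proof of Theorem~\ref{orbstab2}. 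Together with the remark $U(t,t_1;\theta)=U(t-t_1,0;\varphi^{t_1}(\theta))$ and the fact that at a $T$-cyclic point $U(kT,0;\theta_*)=U(T,0;\theta_*)^k=e^{-\ihbar^{-1}M_{\theta_*}kT}$ (Floquet theorem, $Z(kT,0;\theta_*)=1_{\mathcal H}$), this identity is all that is needed; the stroboscopic version is the same with $U(\theta)|Z\mu_i,\theta\rangle=e^{-\imath\tilde\chi_i}|Z\mu_i,\varphi(\theta)\rangle$ and, at an $n$-cyclic point, $U(\varphi^{kn-1}(\theta_*))\cdots U(\theta_*)=(U(\varphi^{n-1}(\theta_*))\cdots U(\theta_*))^{k}$.

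For two points on \emph{different} cycles I would take the common period $t=pT_1=qT_2$ --- well defined because $p/q$ is irreducible --- and evaluate the propagation identity for the state $|Z\mu_{i2},\cdot\rangle$ at $\theta=\theta_{1*}$. Since $\varphi^{pT_1}(\theta_{1*})=\theta_{1*}$ and $U(pT_1,0;\theta_{1*})=U(T_1,0;\theta_{1*})^p=e^{-\ihbar^{-1}M_{\theta_{1*}}pT_1}$, this gives
\begin{equation}
e^{-\ihbar^{-1}M_{\theta_{1*}}pT_1}\,|Z\mu_{i2},\theta_{1*}\rangle = e^{-\ihbar^{-1}\tilde\chi_{i2}pT_1}\,|Z\mu_{i2},\theta_{1*}\rangle .
\end{equation}
Hence, if $|Z\mu_{i2},\theta_{1*}\rangle\neq0$, the number $e^{-\ihbar^{-1}\tilde\chi_{i2}pT_1}$ lies in the spectrum of the unitary $e^{-\ihbar^{-1}M_{\theta_{1*}}pT_1}$, i.e.\ $\tilde\chi_{i2}\in\Sp(M_{\theta_{1*}})+\frac{2\pi\hbar}{pT_1}\mathbb Z$; contraposing yields $|Z\mu_{i2},\theta_{1*}\rangle=0$. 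The denominator $pT_1=qT_2$ is precisely what makes the conclusion insensitive to the representative of $\tilde\chi_{i2}$, which is defined only modulo $\frac{2\pi\hbar}{T_2}$. The stroboscopic case is identical with $k=pn_1=qn_2$ in place of $pT_1=qT_2$ and $e^{-\imath pn_1\tilde\chi_{i2}}=e^{-\imath qn_2\tilde\chi_{i2}}$.

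For two points on the \emph{same} cycle of period $T$, with $\varphi^{t_2}(\theta_{1*})=\theta_{2*}$, I would first evaluate the identity at $\theta=\theta_{1*}$, $t=T$: since $\varphi^T(\theta_{1*})=\theta_{1*}$ and $U(T,0;\theta_{1*})=e^{-\ihbar^{-1}M_{\theta_{1*}}T}$, and since $|Z\mu_{i2},\theta_{1*}\rangle\neq0$ (it equals $e^{\ihbar^{-1}\tilde\chi_{i2}t_2}U(t_2,0;\theta_{1*})^{-1}|Z\mu_{i2},\theta_{2*}\rangle$ and $|Z\mu_{i2},\theta_{2*}\rangle$ is a monodromy eigenvector), $|Z\mu_{i2},\theta_{1*}\rangle$ is an eigenvector of $M_{\theta_{1*}}$ of eigenvalue $\tilde\chi_{i2}$ modulo $\frac{2\pi\hbar}{T}$, giving a spectral inclusion $\tilde\chi_{i2}\in\Sp(M_{\theta_{1*}})+\frac{2\pi\hbar}{T}\mathbb Z$ of the stated type. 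Next, applying the remark twice yields the conjugation $U(T,0;\theta_{2*})=U(t_2,0;\theta_{1*})\,U(T,0;\theta_{1*})\,U(t_2,0;\theta_{1*})^{-1}$, so the eigenvector of $M_{\theta_{2*}}$ at eigenvalue $\tilde\chi_{i2}$ is $U(t_2,0;\theta_{1*})$ applied to the eigenvector of $M_{\theta_{1*}}$ at the same eigenvalue --- which is exactly the vector $|Z\mu_{j1},\theta_{2*}\rangle$ supplied by the propagation identity at $\theta=\theta_{1*}$, $t=t_2$, for the index $j$ with $\tilde\chi_{j1}=\tilde\chi_{i2}$. Since two $\mathcal K$-eigenvectors of $H_K$ at the same eigenvalue $\tilde\chi$ that coincide up to a scalar at one point of an orbit coincide up to the same scalar along the whole orbit and admit identical $\mathcal C^1$-extensions off it, $|Z\mu_{j1},\theta\rangle=|Z\mu_{i2},\theta\rangle$ for all $\theta$, the scalar being fixed to a phase by the $\mathcal K$-normalisation. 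The stroboscopic same-cycle statement is the verbatim analogue with $U(\varphi^{n-1}(\theta_{1*}))\cdots U(\theta_{1*})$ replacing $M_{\theta_{1*}}$.

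I expect the main obstacle to be the upgrade from ``same eigenline'' to ``same normalised state'' in the same-cycle case: one must transport the $\mathcal K$-normalisation through the conjugation and treat separately a degenerate monodromy eigenvalue, where the honest conclusion is an equality of eigenspaces rather than of individual states $|Z\mu_i,\theta\rangle$. A secondary point is reconciling the denominator $T-t_2$ appearing in the statement with the $T$ that the conjugation argument produces directly; this is a matter of fixing, once and for all at the outset, how the return time around the cycle is parametrised.
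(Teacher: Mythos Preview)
Your proof is correct and, for the different-cycles part (continuous and stroboscopic), follows the paper's argument essentially verbatim: evaluate the propagation identity at the common multiple period $pT_1=qT_2$ and read off the eigenvalue equation for $e^{-\ihbar^{-1}M_{\theta_{1*}}pT_1}$.

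For the same-cycle part your route diverges from the paper's. You conjugate the full-period evolution operators directly, $U(T,0;\theta_{2*})=U(t_2,0;\theta_{1*})\,U(T,0;\theta_{1*})\,U(t_2,0;\theta_{1*})^{-1}$, which is simpler and yields $\Sp(M_{\theta_{1*}})=\Sp(M_{\theta_{2*}})\mod\frac{2\pi\hbar}{T}$. The paper instead splits via the Floquet decomposition $U(T,0;\theta_{*1})=Z(T,t_2;\theta_{*1})e^{-\ihbar^{-1}M_{\theta_{*2}}(T-t_2)}Z(t_2,0;\theta_{*1})e^{-\ihbar^{-1}M_{\theta_{*1}}t_2}$, uses $Z(T,t_2;\theta_{*1})=Z(t_2,0;\theta_{*1})^{-1}$, and obtains the similarity of $e^{-\ihbar^{-1}M_{\theta_{*1}}(T-t_2)}$ and $e^{-\ihbar^{-1}M_{\theta_{*2}}(T-t_2)}$, hence the $\frac{2\pi\hbar}{T-t_2}$ appearing in the statement. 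Your version is cleaner and arguably more natural (the result should not depend on which of the two points is labelled first), but it does not reproduce the $T-t_2$ denominator as stated; you were right to flag this. Your caveats about degeneracy and the passage from eigenline to normalised eigenvector are also well taken: the paper does not address them and stops at the similarity of the exponentiated monodromies.
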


\begin{proof}
Let $\theta_{*1}$ and $\theta_{*2}$ belonging to two different cycles with $\frac{T_2}{T_1} = \frac{p}{q} \in \mathbb Q$. $U(qT_2,0;\theta)|Z\mu_{i2},\theta \rangle = e^{-\ihbar^{-1} \tilde \chi_{2i} qT_2} |Z\mu_{i2},\varphi^{qT_2}(\theta) \rangle$\\ $\Rightarrow Z(qT_2,0;\theta_{*1}) e^{-\ihbar^{-1} M_{\theta_{*1}} qT_2} |Z\mu_{i2},\theta_{*1} \rangle = e^{-\ihbar^{-1} \tilde \chi_{2i}qT_2} |Z\mu_{i2},\varphi^{qT_2}(\theta_{*1}) \rangle$. Since $qT_2=pT_1$ we have $Z(qT_2,0;\theta_{*1}) = Z(pT_1,0;\theta_{*1})=1_{\mathcal H}$ and $\varphi^{qT_2}(\theta_{*1}) = \varphi^{pT_1}(\theta_{*1}) = \theta_{*1}$. We have then $e^{-\ihbar^{-1} M_{\theta_{*1}} pT_1} |Z\mu_{i2},\theta_{*1} \rangle = e^{-\ihbar^{-1} \tilde \chi_{2i}pT_1} |Z\mu_{i2},\theta_{*1} \rangle$. $|Z\mu_{i2},\theta_{*1} \rangle $ is then an eigenvector of $M_{\theta_{*1}}$ with eigenvalue $\tilde \chi_{2i} \mod \frac{2\pi \hbar}{T_1}$ except if $|Z\mu_{i2},\theta_{*1} \rangle = 0$.\\
If $\theta_{*1}$ and $\theta_{*2}$ belong to the same cycle, the same arguments occur with $p=q=1$. Moreover $\exists t_2<T$ such that $\varphi^{t_2}(\theta_{*1}) = \theta_{*2}$. It follows
\begin{eqnarray}
U(T,0;\theta_{*1}) & = & U(T,t_2;\theta_{*1}) U(t_2,0;\theta_{*1}) \\
e^{-\ihbar^{-1}M_{\theta_{*1}} T} & = & Z(T,t_2;\theta_{*1}) e^{-\ihbar^{-1} M_{\theta_{*2}} (T-t_2)} \nonumber \\
& & \quad \times Z(t_2,0;\theta_{*1}) e^{-\ihbar^{-1} M_{\theta_{*1}} t_2} \\
e^{-\ihbar^{-1}M_{\theta_{*1}} (T-t_2)} & = & Z(T,t_2;\theta_{*1}) e^{-\ihbar^{-1} M_{\theta_{*2}} (T-t_2)} Z(t_2,0;\theta_{*1})
\end{eqnarray}
But $Z(T,t_2;\theta_{*1}) Z(t_2,0;\theta_{*1}) = Z(T,0;\theta_{*1}) = 1_{\mathcal H} \Rightarrow Z(T,t_2;\theta_{*1}) = Z(t_2,0;\theta_{*1})^{-1}$. It follows that $e^{-\ihbar^{-1}M_{\theta_{*1}} (T-t_2)}$ and $e^{-\ihbar^{-1} M_{\theta_{*2}} (T-t_2)}$ are then similar and $\Sp(M_{\theta_{*1}}) = \Sp(M_{\theta_{*2}}) \mod \frac{2\pi \hbar}{T-t_2}$.\\

For a stroboscopic system we have
\begin{equation}
  U(\varphi^{qn_2-1}(\theta))...U(\theta)|Z\mu_{i2},\theta \rangle = e^{-\imath qn_2 \tilde \chi_{i2}} |Z\mu_{i2},\varphi^{qn_2}(\theta)\rangle
\end{equation}
and then
\begin{eqnarray}
  U(\varphi^{qn_2-1}(\theta_{*1}))...U(\theta_{*1})|Z\mu_{i2},\theta_{*1} \rangle & = & e^{-\imath qn_2 \tilde \chi_{i2}} |Z\mu_{i2},\varphi^{qn_2}(\theta_{*1})\rangle \\
  & = & e^{-\imath qn_2 \tilde \chi_{i2}} |Z\mu_{i2},\varphi^{pn_1}(\theta_{*1})\rangle \\
  & = & e^{-\imath qn_2 \tilde \chi_{i2}} |Z\mu_{i2},\theta_{*1}\rangle
\end{eqnarray}
It follows that $e^{-\imath qn_2 \tilde \chi_{i2}} \in \Sp(U(\varphi^{qn_2-1}(\theta_{*1}))...U(\theta_{*1}))$ or $|Z\mu_{i2},\theta_{*1}\rangle=0$.\\
If $\theta_{*1}$ and $\theta_{*2}$ belong to the same cycle, we have the same thing with $n_1=n_2=1$. Moreover $\exists p<n$ such that $\varphi^p(\theta_{*1}) = \theta_{*2}$. We have then
\begin{eqnarray}
  U(\varphi^{n-1}(\theta_{*1}))...U(\theta_{*1}) & = & U(\varphi^{n-p-1}(\theta_{*2}))...U(\theta_{*2}) \nonumber \\
  & & \quad \times U(\varphi^{p-1}(\theta_{*1}))...U(\theta_{*1}) \\
  & = & U(\varphi^{n-p}(\theta_{*2}))^\dagger...U(\varphi^{n-1}(\theta_{*2}))^\dagger \nonumber \\
  & & \quad \times U(\varphi^{n-1}(\theta_{*2})...U(\theta_{*2}) \nonumber \\
  & & \qquad \times U(\varphi^{p-1}(\theta_{*1}))...U(\theta_{*1})
\end{eqnarray}
But $\varphi^{n-p}(\theta_{*2}) = \varphi^{n-p}(\varphi^p(\theta_{*1})) = \varphi^n(\theta_{*1}) = \theta_{*1}$, and then $U(\varphi^{n-p}(\theta_{*2}))^\dagger...U(\varphi^{n-1}(\theta_{*2})^\dagger = \left(U(\varphi^{p-1}(\theta_{*1}))...U(\theta_{*1})\right)^\dagger$. It follows that $U(\varphi^{n-1}(\theta_{*1}))...U(\theta_{*1})$ and $U(\varphi^{n-1}(\theta_{*2})...U(\theta_{*2})$ are similar and have then the same spectrum.
\end{proof}
With these two properties, we see that the condition for which a fundamental quasienergy state has a non-zero continuation from a cyclic orbit to another one, is that some spectral properties of $U$ are the same in the two orbits. Such a case is not generic since the cyclic orbits depend from the structure of the classical flow whereas the structure of $U$ depends on the quantum system. It follows that in general, each fixed point and each cyclic orbit involve specific fundamental quasienergies.\\

We know now how to find the fundamental quasienergies on the fixed points and on the cyclic orbits. Now, we want consider a ``non-regular'' orbit. Let $\Gamma_e$ be an ergodic component of $\Gamma$ for the flow $\varphi^t$ with respect to the measure $\mu$. For $\mu$-almost all $\theta \in \Gamma_e$, $\overline{\{\varphi^t(\theta)\}_{t \in \mathbb R^+}} = \Gamma_e$. Possibly, $\Gamma_e = \Gamma$ if the flow is ergodic on the whole of $\Gamma$. Since by definition $U(t,0;\theta) |a,\theta \rangle  = e^{-\ihbar^{-1} \chi_a t} |a,\varphi^t(\theta) \rangle$, we have by applying the ergodic theorem for $\mu$-almost all $\theta \in \Gamma_e$
\begin{eqnarray}
\lim_{T \to +\infty} \frac{1}{T} \int_0^T e^{\ihbar^{-1} \chi_a t} U(t,0;\theta) dt |a,\theta \rangle & = & \lim_{T \to +\infty} \frac{1}{T} \int_0^T |a,\varphi^t(\theta)\rangle dt \\
  & = & \int_{\Gamma_e} |a,\theta \rangle d\mu(\theta) \\
  & = & |\bar a \rangle
\end{eqnarray}
Let $V_a(\theta) = \lim_{T \to +\infty} \frac{1}{T} \int_0^T e^{\ihbar^{-1} \chi_a t} U(t,0;\theta) dt$. This result can be used to compute the fundamental quasienergy states for all $\theta \in \Gamma_e$. Indeed
\begin{equation}
  |Z\mu_i,\theta \rangle = V_i(\theta)^{-1} |\overline{Z\mu_i} \rangle
\end{equation}
It needs then to find $\tilde \chi_i$ and $|\overline{Z\mu_i} \rangle$ to solve the problem. Let $\theta_* \in \Gamma_e$ be a fixed point embedded into the ergodic component (the discussion can be easily adaptated to a cyclic point). The existence of such a point is not incompatible with the ergodic hypothesis, this one states only that the set of all fixed and cyclic points embedded into $\Gamma_e$ has a zero measure by $\mu$. Since $\theta_* \in \overline{\{\varphi^t(\theta)\}_{t \in \mathbb R^+}}$ (for $\mu$-almost all $\theta \in \Gamma_e$), by the orbital stability theorem \ref{orbstab2}, we can choose for the fundamental quasienergy spectrum associated with $\Gamma_e$, the fundamental quasienergy spectrum associated with $\theta_*$ by the property \ref{fundquasifixed}. We know then $\{\tilde \chi_i\}_i$ and  we can compute $V_i(\theta)$. Now it needs to find $|\overline{Z\mu_i} \rangle$. $V_i(\theta_*) |Z\mu_i,\theta_* \rangle \not= |\overline{Z\mu_i} \rangle$ since the orbit from $\theta_*$ is trivially not dense into $\Gamma_e$ (the ``$\mu$-almost'' in the ergodic properties excludes precisely such points). Let $\epsilon$ be a deviation into $\Gamma_e$ in the neighbourhood of zero, we can write that
\begin{equation}
  |\overline{Z\mu_i} \rangle = V_i(\theta_*+\epsilon) |Z\mu_i,\theta_*+\epsilon\rangle
\end{equation}
It follows that
\begin{equation}
  |Z\mu_i,\theta \rangle = V_i(\theta)^{-1} V_i(\theta_*+\epsilon) |Z\mu_i,\theta_*+\epsilon \rangle
\end{equation}
The last operation consists to find $|Z\mu_i,\theta_*+\epsilon \rangle$ which can be computed from $\{|Z\mu_i,\theta_*\rangle\}_i$ by using a local expansion around $\theta_*$ (see \ref{appendixB1}).\\

For a stroboscopic system, we have
\begin{equation}
  V_a(\theta) = \lim_{N \to +\infty} \frac{1}{N} \sum_{n=0}^{N-1} e^{\imath n \chi_a} U(\varphi^{n-1}(\theta))...U(\theta)
\end{equation}

\section{Schr\"odinger-Koopman dynamics}
Now we want to study the quantum dynamics in the enlarged Hilbert space $\mathcal K$ when we start from a quasienergy state. In a first time, we consider the case where the initial condition for the classical flow is a single point of $\Gamma$, and in a second time we consider the case where it is a probability distribution on $\Gamma$.
\subsection{Ergodic geometric phase}
In this section, we denote by $T\Gamma$ the tangent bundle of $\Gamma$, $T_\theta \Gamma$ the tangent vector space of $\Gamma$ at $\theta$, by $\Omega^1 \Gamma$ the space of differential 1-forms on $\Gamma$, and $i_X : \Omega^1 \Gamma \to \mathbb R$ (with $X \in T\Gamma$) the inner product of the manifold $\Gamma$.
\begin{theo}
Let $(\Gamma,\mu,\varphi^t,\mathcal H,H)$ be a conservative driven quantum system and $|a,\theta\rangle \in \mathcal K$ be a normalized quasienergy state. Let $A_a = \langle a,\theta|d|a,\theta \rangle \in \Omega^1 \Gamma$ be the Berry potential associated with the quasienergy state and $X(t) = F^\mu(\varphi^t(\theta_0)) \partial_\mu \in T_{\varphi^t(\theta_0)} \Gamma$ be the tangent vector field of the flow.
If $\varphi^t$ is ergodic then
\begin{equation}
\tilde \psi(t) \sim e^{-\ihbar^{-1} \int_0^t \langle a,\varphi^t(\theta_0)|H(\varphi^t(\theta_0))|a,\varphi^t(\theta_0)\rangle dt - \int_0^t i_{X(t)}A_a dt} |a,\varphi^t(\theta_0)\rangle
\end{equation}
with $t$ in the neighbourhood of $+\infty$ and $\tilde \psi$ solution of the Schr\"odinger equation for $\mu$-almost all initial conditions $(|a,\theta_0\rangle,\theta_0) \in \mathcal H \times \Gamma$.\\
$e^{-\ihbar^{-1} \int_0^t \langle a,\varphi^t(\theta_0)|H(\varphi^t(\theta_0))|a,\varphi^t(\theta_0)\rangle dt}$ is the dynamical phase and $e^{- \int_0^t i_{X(t)}A_a dt}$ is the (non-adiabatic) geometric phase of the driven dynamics.
\end{theo}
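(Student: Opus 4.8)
\textit{Proof plan.} The idea is that when the quantum system is prepared exactly in the quasienergy state its Schr\"odinger evolution is already known in closed form through equation \ref{eigenU}, so that the only real task is to re-express the single scalar $\chi_a t$ occurring there as an integral along the classical orbit of a dynamical density plus a geometric density. First I would set $\tilde\psi_0 = |a,\theta_0\rangle$; since the driven system is conservative $H_K$ is self-adjoint, hence $\chi_a\in\mathbb R$, and equation \ref{eigenU} gives the \emph{exact} solution $\tilde\psi(t) = e^{-\ihbar^{-1}\chi_a t}\,|a,\varphi^t(\theta_0)\rangle$. Everything then reduces to rewriting this exponent.

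Next I would differentiate $|a,\varphi^t(\theta_0)\rangle$ along the flow. With $\theta(t)=\varphi^t(\theta_0)$, the quasienergy equation $(-\ihbar F^\mu\partial_\mu + H(\theta))|a,\theta\rangle = \chi_a|a,\theta\rangle$ gives $\frac{d}{dt}|a,\theta(t)\rangle = F^\mu(\theta(t))\,\partial_\mu|a,\theta\rangle\big|_{\theta(t)} = \frac{\imath}{\hbar}\,(\chi_a - H(\theta(t)))\,|a,\theta(t)\rangle$. Because $i_{X(t)}A_a$ is, by definition of $A_a$ and of the velocity field $X(t)=F^\mu\partial_\mu$, nothing but $\langle a,\theta(t)|\frac{d}{dt}|a,\theta(t)\rangle$, pairing the previous relation with $\langle a,\theta(t)|$ in $\mathcal H$ produces the pointwise identity $\chi_a\,n_a(\theta(t)) = \langle a,\theta(t)|H(\theta(t))|a,\theta(t)\rangle - \ihbar\,i_{X(t)}A_a$, where $n_a(\theta)=\langle a,\theta|a,\theta\rangle$. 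Since $\chi_a$, $n_a$ and $\langle a|H|a\rangle$ are real, $i_{X(t)}A_a$ is purely imaginary, so $\frac{d}{dt}n_a(\theta(t)) = \overline{i_{X(t)}A_a}+i_{X(t)}A_a = 0$ (equivalently, unitarity of $U(t,0;\theta_0)$): $n_a$ is constant along every orbit. Ergodicity enters here and only here: a measurable function constant along $\mu$-almost every orbit is $\mu$-a.e.\ constant, and $\int_\Gamma n_a\,d\mu=\llangle a|a\rrangle=1$ then forces $n_a=1$ for $\mu$-almost every $\theta$. Hence, for $\mu$-almost every $\theta_0$, integrating the identity over $[0,t]$ and multiplying by $-\ihbar^{-1}$ (using $\ihbar^{-1}\ihbar=\imath^2=-1$) gives $-\ihbar^{-1}\chi_a t = -\ihbar^{-1}\int_0^t\langle a,\varphi^s(\theta_0)|H(\varphi^s(\theta_0))|a,\varphi^s(\theta_0)\rangle\,ds - \int_0^t i_{X(s)}A_a\,ds$; substituting this into $\tilde\psi(t)=e^{-\ihbar^{-1}\chi_a t}|a,\varphi^t(\theta_0)\rangle$ yields the announced factorisation into the dynamical phase and the geometric phase, the latter being unimodular precisely because $i_XA_a$ is purely imaginary.

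The delicate point — and the reason the statement is given as an asymptotic equivalence ``$\sim$'' valid for $t$ near $+\infty$ rather than as a strict identity — concerns the status of $|a,\theta\rangle$ itself: for a genuinely ergodic flow the Koopman operator, hence $H_K$, generically has no $L^2$ eigenvectors besides the trivial one, so $|a,\theta\rangle$ must be understood through the ergodic-averaging construction $V_a(\theta)^{-1}|\bar a\rangle$ recalled above; equation \ref{eigenU} and the normalisation $n_a\equiv1$ then hold only in this almost-everywhere / limiting sense, and controlling the corresponding error is the main work needed to make the argument rigorous. Finally, Birkhoff's ergodic theorem is what makes the decomposition meaningful in the large-$t$ regime, since it gives $\frac{1}{t}\int_0^t\langle a|H|a\rangle\,ds \to \llangle a|H(\theta)|a\rrangle$ and $\frac{1}{t}\int_0^t i_{X(s)}A_a\,ds \to \int_\Gamma i_XA_a\,d\mu$, so that both the dynamical and the geometric phase acquire well-defined average growth rates as $t\to+\infty$.
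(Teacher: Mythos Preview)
Your proof is correct but follows a genuinely different route from the paper's. The paper computes $\chi_a = \llangle a|H_K|a\rrangle$ as a \emph{space} average over $\Gamma$, then invokes Birkhoff's ergodic theorem to replace the two space integrals $\int_\Gamma\langle a|H|a\rangle\,d\mu$ and $\int_\Gamma i_XA_a\,d\mu$ by time averages along the orbit; this is why the result is stated only as an asymptotic $\sim$ for large $t$. You instead project the pointwise eigenequation $(-\ihbar F^\mu\partial_\mu+H)|a,\theta\rangle=\chi_a|a,\theta\rangle$ onto $\langle a,\theta|$ to obtain $\chi_a\,n_a(\theta)=\langle a,\theta|H(\theta)|a,\theta\rangle-\ihbar\,i_XA_a$ directly, and then use ergodicity in the more elementary form ``invariant $\Rightarrow$ constant'' to force the orbit-constant norm $n_a$ to equal $1$ a.e. Your argument therefore yields an \emph{exact} identity for the exponent at every $t$ and $\mu$-a.e.\ $\theta_0$, which is strictly stronger than the paper's asymptotic statement; your closing remark about Birkhoff is then a pleasant bonus (well-defined average rates) rather than a load-bearing step. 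The trade-off is that your route leans on pointwise validity of the eigenequation and on $|a,\theta\rangle$ being regular enough along orbits, a point you flag yourself.

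One small correction: your parenthetical ``$\ihbar^{-1}\ihbar=\imath^2=-1$'' is wrong (it equals $1$). Carrying the sign through honestly gives $-\ihbar^{-1}\chi_a t = -\ihbar^{-1}\int_0^t\langle a|H|a\rangle\,ds + \int_0^t i_{X(s)}A_a\,ds$, i.e.\ the geometric term enters with the opposite sign to the one printed in the statement. The paper's own derivation, if one actually multiplies its final relation by $-\ihbar^{-1}$, produces the same $+$ sign, so this is a typo in the theorem as stated rather than a flaw in your argument; since $i_XA_a\in\imath\mathbb R$ the two versions differ only by complex conjugation of the geometric phase factor.
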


\begin{proof}
Let $\psi \in \mathcal K$ be the solution of the Schr\"odinger-Koopman equation $\ihbar \frac{\partial \psi}{\partial t} = H_K \psi$ with $\psi(\theta,t=0) = |a,\theta \rangle$. We have
\begin{eqnarray}
\psi(\theta,t) & = & U_K(t,0) |a,\theta \rangle \\
& = & e^{-\ihbar^{-1} H_K t} |a,\theta \rangle \\
& = & e^{-\ihbar^{-1} \chi_a t} |a,\theta \rangle
\end{eqnarray}
By applying theorem \ref{SKE} we have
\begin{equation}
\tilde \psi(t) = e^{-\ihbar^{-1} \chi_a t} |a,\varphi^t(\theta_0) \rangle
\end{equation}
But since $\chi_a = \llangle a|H_K|a\rrangle$ we have (with $X = F^\mu(\theta) \partial_\mu \in T\Gamma$)
\begin{eqnarray}
\chi_a & = & \int_\Gamma \langle a,\theta|H(\theta)|a,\theta\rangle d\mu(\theta) - \ihbar \int_\Gamma F^\mu(\theta) \langle a,\theta|\partial_\mu|a,\theta \rangle d\mu(\theta) \\
& = & \int_\Gamma \langle a,\theta|H(\theta)|a,\theta\rangle d\mu(\theta) - \ihbar \int_\Gamma i_X A_a d\mu(\theta)
\end{eqnarray}
By applying the Birkhoff ergodic theorem \cite{Lasota} we have for $\mu$-almost all $\theta_0$
\begin{eqnarray}
& & \lim_{t \to +\infty} \frac{1}{t} \int_0^t \langle a,\varphi^t(\theta_0)|H(\varphi^t(\theta_0))|a,\varphi^t(\theta_0)\rangle dt \nonumber \\
& & \qquad  = \int_\Gamma \langle a,\theta|H(\theta)|a,\theta\rangle d\mu(\theta) 
\end{eqnarray}
\begin{equation}
\lim_{t \to +\infty} \frac{1}{t} \int_0^t i_{X(t)} A_a dt = \int_\Gamma i_X A_a d\mu(\theta)
\end{equation}
It follows
\begin{equation}
\chi_a \sim \frac{1}{t} \int_0^t \langle a,\varphi^t(\theta_0)|H(\varphi^t(\theta_0))|a,\varphi^t(\theta_0)\rangle dt -\frac{\ihbar}{t} \int_0^t i_{X(t)} A_a dt
\end{equation}
\end{proof}

For a stroboscopic driven quantum system, we have
\begin{equation}
  U(\theta)|a,\theta\rangle = e^{-\imath \chi_a} |a,\varphi(\theta)\rangle \Rightarrow \langle a,\theta|U(\theta)|a,\theta\rangle = e^{-\imath\chi_a} \langle a,\theta|a,\varphi(\theta) \rangle
\end{equation}
and then
\begin{eqnarray}
  \chi_a & = & \imath \ln \langle a,\theta|U(\theta)|a,\theta\rangle - \imath \ln \langle a,\theta|a,\varphi(\theta) \rangle \\
  & = & \imath \int_\Gamma \ln \langle a,\theta|U(\theta)|a,\theta\rangle d\mu(\theta) - \imath \int_\Gamma \ln \langle a,\theta|a,\varphi(\theta) \rangle d\mu(\theta)
\end{eqnarray}
The last equation following from $\int_\Gamma \chi_a d\mu(\theta) = \chi_a$ since $\chi_a$ is independent from $\theta$. But we have also
\begin{equation}
  \langle a,\varphi^n(\theta)|U(\varphi^n(\theta))|a,\varphi^n(\theta)\rangle = e^{-\imath \chi_a} \langle a,\varphi^n(\theta)|a,\varphi^{n+1}(\theta) \rangle
\end{equation}
It follows that
\begin{eqnarray}
  & & \langle a,\theta|U(\theta)|a,\theta \rangle \langle a,\varphi(\theta)|U(\varphi(\theta))|a,\varphi(\theta)\rangle...\langle a,\varphi^n(\theta)|U(\varphi^n(\theta))|a,\varphi^n(\theta)\rangle \nonumber \\
  & & = e^{-\imath \chi_a n} \langle a,\theta|a,\varphi(\theta)\rangle \langle a,\varphi(\theta)|a,\varphi^2(\theta)\rangle...\langle a,\varphi^n(\theta)|a,\varphi^{n+1}(\theta) \rangle
\end{eqnarray}
$\langle a,\theta|a,\varphi(\theta)\rangle...\langle a,\varphi^n(\theta)|a,\varphi^{n+1}(\theta) \rangle$ is very similar to the Bargmann invariant $\langle a,\theta|a,\varphi(\theta)\rangle... \langle a,\varphi^{n-1}(\theta)|a,\varphi^{n}(\theta) \rangle \langle a,\varphi^{n}(\theta)|a,\theta \rangle$ for a cyclic dynamics such that $\varphi^{n+1}(\theta) = \theta$ \cite{Rabei}. We can then consider $\langle a,\theta|a,\varphi(\theta)\rangle... \langle a,\varphi^n(\theta)|a,\varphi^{n+1}(\theta) \rangle$ for $n \to + \infty$ as a kind of Bargmann invariant for a non-cyclic dynamics. And since the Bargmann invariant is closely related to the geometric phases \cite{Rabei}, we can consider $\langle a,\theta|a,\varphi(\theta)\rangle ...\langle a,\varphi^n(\theta)|a,\varphi^{n+1}(\theta) \rangle$ for $n \to + \infty$ as the stroboscopic geometric phase. We have then
\begin{eqnarray}
  \chi_a & = & \lim_{n \to +\infty} \frac{\imath}{n} \sum_{k=0}^{n} \ln \langle a,\varphi^k(\theta)|U(\varphi^k(\theta))|a,\varphi^k(\theta) \rangle \nonumber \\
  & & \quad - \lim_{n \to+\infty} \frac{\imath}{n} \sum_{k=0}^n \ln \langle a,\varphi^k(\theta)|a,\varphi^{k+1}(\theta) \rangle
\end{eqnarray}
By using the ergodic theorem we recover:\\ $\lim_{n\to +\infty} \frac{1}{n} \sum_{k=0}^{n} \ln \langle a,\varphi^k(\theta)|U(\varphi^k(\theta))|a,\varphi^k(\theta) \rangle  = \int_\Gamma \ln \langle a,\theta|U(\theta)|a,\theta \rangle d\mu(\theta)$ and $\lim_{n\to+\infty} \frac{1}{n} \sum_{k=0}^n \ln \langle a,\varphi^k(\theta)|a,\varphi^{k+1}(\theta) \rangle = \int_\Gamma \ln \langle a,\theta|a,\varphi(\theta)\rangle d\mu(\theta)$ which is the argument of the ergodic geometric phase of the stroboscopic quantum system.\\
Remark: for an usual cyclic geometric phase, we have: \\ $e^{\oint_{\mathcal C} A_a} = \lim_{n \to + \infty} \langle a,\theta|a,\varphi^{\Delta t}(\theta) \rangle...\langle a,\varphi^{(n-1) \Delta t}|a,\varphi^{n \Delta t} \theta \rangle \langle a,\varphi^{n \Delta t}|a,\theta \rangle$ for $\varphi^t$ a continuous time dynamical system, $T$-cyclic by starting from $\theta$ ($\mathcal C$ is the cycle into $\Gamma$), with $\Delta t = \frac{T}{n+1}$ (we consider a partition $0<\Delta t<...<(n+1)\Delta t=T$ of $[0,T]$) and with $A_a = \langle a,\theta|d|a,\theta\rangle$. This formula results from the fact that $\langle a,\varphi^{p\Delta t}(\theta) |a,\varphi^{(p+1)\Delta t}(\theta) \rangle = 1 + \left. \langle a,\varphi^t(\theta)|\partial_t|a,\varphi^t(\theta)\rangle\right|_{t=p \Delta t} \Delta t + \mathcal O(\Delta t^2) = e^{\left. \langle a,\varphi^t(\theta)|\partial_t|a,\varphi^t(\theta)\rangle\right|_{t=p \Delta t} \Delta t} + \mathcal O(\Delta t^2)$ and by the definition of the integral as a Riemann sum. We see then that the discrete time ergodic geometric phase has a similar expression of the cyclic geometric phase (viewed as an infinite number of steps in the Bargmann invariant in the finite time range $T$).

\subsection{Density matrix}
\begin{theo}
  Let $(\Gamma,\mu,\varphi^t,\mathcal H,H)$ be a mixing conservative driven quantum system with $\mu$ a preserved measure. Let $\rho_i = \tr_{L^2(\Gamma,d\mu)}|Z\mu_i \rrangle \llangle Z\mu_i|$ be the density matrices associated with the fundamental quasienergy states (by construction, $\rho_i$ are stationnary and then are kinds of steady states). Let $\rho(t) = \tr_{L^2(\Gamma,d\mu)}|\psi(t) \rrangle \llangle \psi(t)|$ be the density matrix associated with the solution of the Schr\"odinger-Koopman equation for a initial condition $\psi_0(\theta) = \tilde \psi_0 \sqrt{\rho_0(\theta)}$ (with $\rho_0 \in L^1_+(\Gamma,d\mu)$, $\rho_0\geq 0$ and $\int_\Gamma \rho_0(\theta) d\mu(\theta)=1$; and $\tilde \psi_0 \in \mathcal H$, $\|\tilde \psi_0\|=1$).\\
  If the limit $\lim_{t\to +\infty} \rho(t) = \rho_\infty$ exists then $\rho_\infty = \sum_i p_i \rho_i$ ($\sum_i p_i = 1$, $p_i \in [0,1]$) is a combination of steady states.
\end{theo}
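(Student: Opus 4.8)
\noindent\emph{Sketch of the proof.} The plan is to exploit that a mixing flow is, a fortiori, weakly mixing, so by the Koopman spectral characterisation of weak mixing the generator $F^\mu\partial_\mu$ has no eigenvalue besides the simple eigenvalue $\lambda=0$, whose eigenfunctions are the constants. Hence there is no dressing by nontrivial Koopman modes: every quasienergy state is a fundamental one, the family $\{|Z\mu_i,\theta\rangle\}_i$ is a basis of $\mathcal H$ for $\mu$-almost every $\theta$, and --- by the computation in the proof of Property \ref{fundquasifixed} specialised to $f_\lambda\equiv1$ --- one has $U(t,0;\theta)|Z\mu_i,\theta\rangle=e^{-\ihbar^{-1}\tilde\chi_i t}|Z\mu_i,\varphi^t(\theta)\rangle$. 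I would then use $U_K(t,0)=U(t,0;\varphi^{-t}(\theta))\mathcal T^{-t}$ to write $\psi(t,\theta)=\sqrt{\rho_0(\varphi^{-t}(\theta))}\,U(t,0;\varphi^{-t}(\theta))\tilde\psi_0$, expand $\tilde\psi_0=\sum_i a_i(\theta)|Z\mu_i,\theta\rangle$ pointwise (where $a_i(\theta)=\langle\widetilde{Z\mu_i},\theta|\tilde\psi_0\rangle$ are the coordinates in the dual basis), and read off from $\rho(t)=\int_\Gamma|\psi(t,\theta)\rangle\langle\psi(t,\theta)|\,d\mu(\theta)$ the expression
\begin{equation}
\rho(t)=\sum_{i,j}e^{-\ihbar^{-1}(\tilde\chi_i-\tilde\chi_j)t}\int_\Gamma(\mathcal T^{-t}g_{ij})(\theta)\,|Z\mu_i,\theta\rangle\langle Z\mu_j,\theta|\,d\mu(\theta)
\end{equation}
with $g_{ij}=\rho_0\,a_i\overline{a_j}\in L^1(\Gamma,d\mu)$.

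The second step is to let $t\to+\infty$ inside the integral. Applying the mixing property $\int_\Gamma(\mathcal T^{-t}f)\,h\,d\mu\to(\int_\Gamma f\,d\mu)(\int_\Gamma h\,d\mu)$ to each matrix element of the $\mathcal L(\mathcal H)$-valued integrand, the coefficient of $e^{-\ihbar^{-1}(\tilde\chi_i-\tilde\chi_j)t}$ converges to $G_{ij}R_{ij}$, where $G_{ij}=\int_\Gamma g_{ij}\,d\mu$ is a Hermitian positive semidefinite matrix and $R_{ij}=\int_\Gamma|Z\mu_i,\theta\rangle\langle Z\mu_j,\theta|\,d\mu$ satisfies $R_{ii}=\rho_i$ and $\tr R_{ij}=\llangle Z\mu_j|Z\mu_i\rrangle=\delta_{ij}$. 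Grouping the terms by the distinct values $\Omega$ of $\tilde\chi_i-\tilde\chi_j$ one gets $\rho(t)=\sum_\Omega e^{-\ihbar^{-1}\Omega t}C_\Omega(t)$ with $C_\Omega(t)\to C_\Omega^\infty=\sum_{\tilde\chi_i-\tilde\chi_j=\Omega}G_{ij}R_{ij}$. The hypothesis that $\rho_\infty=\lim_{t\to+\infty}\rho(t)$ exists is then used to eliminate the oscillating components: for $\Omega'\neq0$, computing $\lim_{T\to+\infty}\frac{1}{T}\int_0^T e^{\ihbar^{-1}\Omega' t}\rho(t)\,dt$ in two ways gives $0$ (using $\rho(t)\to\rho_\infty$) and $C_{\Omega'}^\infty$ (term by term, the summand $\Omega=\Omega'$ contributing its Ces\`aro limit and all the others averaging to zero), whence $C_{\Omega'}^\infty=0$; since moreover each $C_\Omega(t)$ with $\Omega\neq0$ tends to $0$, the finite sum $\rho(t)-C_0(t)=\sum_{\Omega\neq0}e^{-\ihbar^{-1}\Omega t}C_\Omega(t)$ vanishes as $t\to+\infty$, so $\rho_\infty=C_0^\infty=\sum_{\tilde\chi_i=\tilde\chi_j}G_{ij}R_{ij}$.

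The last step is to recast this in the announced form. When the fundamental quasienergies are non-degenerate only $i=j$ survives, and $\rho_\infty=\sum_i p_i\rho_i$ with $p_i=G_{ii}=\int_\Gamma\rho_0\,|a_i|^2\,d\mu\geq0$. In the degenerate case I would use the non-uniqueness of the fundamental quasienergy states: within each eigenspace $I_k$ of $H_K$ one replaces $\{|Z\mu_i,\theta\rangle\}_{i\in I_k}$ by a constant-unitary combination --- still a family of fundamental quasienergy states, since all $\tilde\chi_i$ in the block coincide and the identity $U(t,0;\theta)|Z\mu_i,\theta\rangle=e^{-\ihbar^{-1}\tilde\chi_i t}|Z\mu_i,\varphi^t(\theta)\rangle$ is preserved --- chosen so as to diagonalise the positive semidefinite block $(G_{ij})_{i,j\in I_k}$; the cross-terms then disappear and again $\rho_\infty=\sum_i p_i\rho_i$ with $p_i\geq0$. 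Finally, $\rho_i\geq0$ and $\tr\rho_i=\llangle Z\mu_i|Z\mu_i\rrangle=1$ make each $\rho_i$ a density matrix, and $\tr\rho(t)=1$ for every $t$ forces $\sum_i p_i=\tr\rho_\infty=1$, which with $p_i\geq0$ gives $p_i\in[0,1]$. The step I expect to be the main obstacle is the elimination of the oscillating cross-terms: it is precisely where the existence of the limit is invoked and where the genuine (not merely Ces\`aro) convergence supplied by mixing is indispensable. A secondary technical point is the integrability needed to legitimate the use of the mixing property on the operator-valued integrand, which I would secure under boundedness hypotheses on $\rho_0$, on the $|Z\mu_i,\theta\rangle$ and on the dual basis (the ``topological precautions'' alluded to in the paper).
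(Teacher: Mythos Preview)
Your proposal is correct, and its overall architecture---expand, apply mixing to factorise the $\theta$-integral, then use a Ces\`aro-type average together with the hypothesis $\rho(t)\to\rho_\infty$ to eliminate the oscillating cross-terms---is the same as the paper's. The difference lies in the choice of expansion. The paper writes $\psi_0=\sum_{i,\lambda}c_{i\lambda}f_\lambda(\theta)|Z\mu_i,\theta\rangle$ with $\theta$-independent coefficients, and after applying mixing invokes Property~\ref{nullFmix} ($\int_\Gamma f_{\lambda-\nu}\,d\mu=\delta_{\lambda\nu}$) to collapse the double sum over Koopman labels; the Ces\`aro step is the double limit $\lim_{T}\lim_{t}\frac{1}{T}\int_t^{t+T}\rho(t')\,dt'$, yielding $p_i=\sum_\lambda|c_{i\lambda}|^2$. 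You instead expand $\tilde\psi_0$ pointwise in the $\mathcal H$-basis $\{|Z\mu_i,\theta\rangle\}_i$ with $\theta$-dependent coefficients $a_i(\theta)$ and absorb $\rho_0$ into $g_{ij}$, bypassing the Koopman modes altogether. Your route is cleaner for a mixing flow: as you observe, the Koopman point spectrum is then reduced to $\{0\}$, so the paper's discrete sum over $\lambda$ is only a formal proxy for a continuous-spectrum decomposition (a point the paper acknowledges in passing). Your Fourier-average isolation of each frequency $\Omega$ and your explicit treatment of the degenerate case $\tilde\chi_i=\tilde\chi_j$ by block-diagonalising $(G_{ij})_{i,j\in I_k}$ within each eigenspace are refinements the paper does not spell out; the paper's final formula implicitly assumes the $\tilde\chi_i$ are pairwise distinct.
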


\begin{proof}
  By using the fact that by definition the set of fundamental quasienergy states generates $\mathcal H$, and that the set of the Koopman modes is an eigenbasis of $L^2(\Gamma,d\mu)$, we have $\psi_0(\theta) = \sum_i \sum_\lambda c_{i\lambda} f_\lambda(\theta) |Z\mu_i,\theta \rangle$ and then $\psi(t,\theta) = \sum_i \sum_\lambda c_{i\lambda} e^{-\ihbar^{-1} \chi_{i\lambda} t}f_\lambda(\theta) |Z\mu_i,\theta \rangle$. For the sake of simplicity, we write the sum on $\lambda$ as a discrete sum without degeneracy index, the formulae can be easily adapted with degeneracies and with a continuous Koopman spectrum. We have then
  \begin{eqnarray}
    \rho(t) & = & \sum_{ij}\sum_{\lambda \nu} c_{i\lambda}\overline{c_{j\nu}} e^{\ihbar^{-1}(\tilde \chi_j - \tilde \chi_i)t} \nonumber \\
    & & \quad \times \int_\Gamma f_{\lambda-\nu}(\theta)|Z\mu_i,\varphi^t(\theta)\rangle\langle Z\mu_j,\varphi^t(\theta)|d\mu(\theta)
  \end{eqnarray}
  Since the $\varphi^t$ is mixing, we have by the property \ref{mixergo}:
 \begin{eqnarray}
    & & \lim_{t \to +\infty} \int_\Gamma f_{\lambda-\nu}(\theta)|Z\mu_i,\varphi^t(\theta)\rangle\langle Z\mu_j,\varphi^t(\theta)|d\mu(\theta) \nonumber\\
    & & \quad = \int_\Gamma f_{\lambda-\nu}(\theta)d\mu(\theta) \int_\Gamma |Z\mu_i,\theta \rangle \langle Z\mu_j,\theta|d\mu(\theta)
  \end{eqnarray}
  But $\int_\Gamma f_{\lambda-\nu}(\theta)d\mu(\theta) = \delta_{\lambda \nu}$ because of property \ref{nullFmix}. It follows that $\forall \epsilon >0$, $\exists t_\epsilon>0$, such that $\forall t\geq t_\epsilon$ we have
  \begin{equation}
    \left|\rho(t) - \sum_{ij}\sum_\lambda c_{i\lambda}\overline c_{j\lambda} e^{\ihbar^{-1}(\tilde \chi_j-\tilde \chi_i)t} \int_\Gamma|Z\mu_i,\theta\rangle \langle Z\mu_j,\theta|d\mu(\theta) \right| < \epsilon
  \end{equation}
  It follows that
  \begin{eqnarray}
    & & \sum_{ij} D_{ij} \frac{1}{T} \int_{t_\epsilon}^{t_\epsilon+T} e^{\ihbar^{-1}(\tilde \chi_j-\tilde \chi_i)t} dt - \epsilon \nonumber \\
    & & \qquad < \frac{1}{T} \int_{t_\epsilon}^{t_\epsilon+T} \rho(t)dt < \nonumber\\
    & & \qquad \qquad \sum_{ij} D_{ij} \frac{1}{T} \int_{t_\epsilon}^{t_\epsilon+T} e^{\ihbar^{-1}(\tilde \chi_j-\tilde \chi_i)t} dt + \epsilon
  \end{eqnarray}
with $D_{ij} = \sum_\lambda c_{i\lambda}\overline c_{j\lambda} \int_\Gamma|Z\mu_i,\theta\rangle \langle Z\mu_j,\theta|d\mu(\theta)$. And then
  \begin{eqnarray}
    & & \sum_{i \not=j} D_{ij} \frac{e^{\ihbar^{-1}(\tilde \chi_j-\tilde \chi_i)(t_\epsilon+T)}-e^{\ihbar^{-1}(\tilde \chi_j-\tilde \chi_i)t_\epsilon}}{\ihbar^{-1}(\tilde \chi_j-\tilde \chi_i)T} +\sum_i D_{ii} - \epsilon \nonumber \\
    & & \qquad < \frac{1}{T} \int_{t_\epsilon}^{t_\epsilon+T} \rho(t)dt < \nonumber\\
    & & \quad \qquad \sum_{i\not=j} D_{ij}  \frac{e^{\ihbar^{-1}(\tilde \chi_j-\tilde \chi_i)(t_\epsilon+T)}-e^{\ihbar^{-1}(\tilde \chi_j-\tilde \chi_i)t_\epsilon}}{\ihbar^{-1}(\tilde \chi_j-\tilde \chi_i)T}+ \sum_i D_{ii} + \epsilon
  \end{eqnarray}
  But $\lim_{T \to +\infty} \frac{e^{\ihbar^{-1}(\tilde \chi_j-\tilde \chi_i)(t_\epsilon+T)}-e^{\ihbar^{-1}(\tilde \chi_j-\tilde \chi_i)t_\epsilon}}{\ihbar^{-1}(\tilde \chi_j-\tilde \chi_i)T} = 0$. Finally we have then
  \begin{equation}
    \lim_{T \to +\infty} \lim_{t \to +\infty} \frac{1}{T} \int_t^{t+T} \rho(t')dt' = \sum_{i,\lambda} |c_{i\lambda}|^2 \int_\Gamma |Z\mu_i,\theta \rangle \langle Z\mu_i,\theta|d\mu(\theta)
  \end{equation}
  If $\lim_{t \to +\infty} \rho(t) = \rho_\infty$, then\\ $\lim_{T \to +\infty} \lim_{t \to +\infty} \frac{1}{T} \int_t^{t+T} \rho(t')dt' = \lim_{T \to +\infty}  \frac{1}{T} \int_t^{t+T} \rho_\infty dt' = \rho_\infty$. It follows that $\rho_\infty = \sum_i p_i \rho_i$ with $p_i = \sum_\lambda |c_{i\lambda}|^2$.
\end{proof}

\section{Example: kicked spin systems kicked controlled by classical flows}
\subsection{The model}
We consider an ensemble of $N$ spins without spin-spin interaction. A constant and uniform magnetic field $\vec B$ is applied on the spin ensemble inducing an energy level splitting by the Zeeman effect. Let $H_0 = \frac{\hbar \omega_1}{2} |\downarrow \rangle \langle \downarrow|$ be the Hamiltonian of a single spin with the Zeeman effect. The spin ensemble is submitted to trains of ultrashort pulses kicking the spins. Let $\omega_0$ be the frequency of the kick trains. These trains of pulses can be modulated following three variables: $\theta^1$ the kick strength, $\theta^2$ the kick delay and $\theta^3$ the kick direction. The modulation follows a discrete time classical flow $\varphi \in \Aut(\mathbb T^3)$. The stroboscopic dynamics of a spin is governed by the evolution operator (see \cite{Viennot2}):
\begin{equation} \label{Uexemple}
  U(\theta) = e^{-\imath \frac{H_0}{\hbar \omega_0} (2\pi - \theta^2)} \left[1+(e^{-\imath \theta^1}-1) W(\theta^3) \right] e^{-\imath \frac{H_0}{\hbar \omega_0} \theta^2}
\end{equation}
where $W(\theta^3) = |w(\theta^3)\rangle \langle w(\theta^3)|$ is the kick operator, $|w(\theta^3)\rangle = \cos \theta^3 |\uparrow \rangle  + \sin \theta^3 |\downarrow \rangle$. So, the stroboscopic dynamics of the $i$-th spin is $\tilde \psi_{n+1}^{(i)} = U(\varphi^n(\theta_{0i})) \tilde \psi_n^{(i)}$ where $\theta_{0i} \in \mathbb T^3$ is the initial condition of the train of pulses kicking the $i$-th spin. For $N$ large, $\sum_{i=1}^N \delta(\theta-\theta_{0i}) \tilde \psi_0^{(i)} \simeq \psi_0(\theta)$ is a state $\psi_0 \in \mathcal K$ of the enlarged Hilbert space $\mathcal K = L^2(\mathbb T^3,d\mu) \otimes \mathbb C^2$.\\

In this section, we study the driven stroboscopic quantum system $(\mathbb T^2,\mu,\varphi,\mathbb C^2,U)$ where the phase space is reduced to $\mathbb T^2$ by setting $\theta^3 = \frac{\pi}{4}$ or by setting $\theta^2 = 0$. $\mu$ is the Haar probability measure on $\mathbb T^2$: $d\mu(\theta^1,\theta^2) = \frac{d\theta^1 d\theta^2}{4 \pi^2}$. We will consider three different classical flows:
\begin{itemize}
\item The cyclic continuous automorphism of the torus (CAT) map defined by $\varphi(\theta) = \left(\begin{array}{cc} -1 & 1 \\ -1 & 0 \end{array} \right) \left( \begin{array}{c} \theta^1 \\ \theta^2 \end{array} \right) \mod \left(\begin{array}{c} 2\pi \\ 2 \pi \end{array} \right)$. Since $\left(\begin{array}{cc} -1 & 1 \\ -1 & 0 \end{array} \right)^3 = \left(\begin{array}{cc} 1 & 0 \\ 0 & 1 \end{array} \right)$, all points $\theta \in \mathbb T^2 \setminus\{0\}$ are 3-cyclic by this flow. $0 \in \mathbb T^2$ is the single fixed point. Due to this cyclicity, the Koopman spectrum of this flow is $\Sp(\mathcal T) = \{1, e^{\frac{2 \imath \pi}{3}}, e^{\frac{4 \imath \pi}{3}} \}$.
\item The Arnold's CAT map defined by $\varphi(\theta) = \left(\begin{array}{cc} 1 & 1 \\ 1 & 2 \end{array} \right) \left( \begin{array}{c} \theta^1 \\ \theta^2 \end{array} \right) \mod \left(\begin{array}{c} 2\pi \\ 2 \pi \end{array} \right)$. The Arnold's CAT map is a chaotic flow, mixing (and then ergodic) on the whole of $\mathbb T^2$. $0 \in \mathbb T^2$ is a fixed point, and we have an infinity but countable number of cyclic points (forming then a set of zero measure by $\mu$). Due to its chaotic behaviour, its discrete Koopman spectrum is reduced to $\Sp_d(\mathcal T) = \{1\}$ and its continuous Koopman spectrum is $\Sp_{cont} (\mathcal T) = U(1) \setminus \{1\}$ ($U(1)$ being the unit circle in $\mathbb C$).
  \item The Chirikov standard map defined by $\varphi(\theta) = \left( \begin{array}{c} \theta^1 + K \sin(\theta^2) \mod 2\pi \\ \theta^2 + \varphi^1(\theta) \mod 2\pi \end{array} \right)$, where $K$ is an adjustable parameter. This flow presents a chaotic sea (mixing and then ergodic component) with islands of stability (regions of periodic orbits). The respective sizes of the chaotic sea and of the islands of stability depend on $K$, more $K$ is large more the flow is chaotic. 
\end{itemize}
The dynamics of the different flows is represented figure \ref{classicalflow}.
\begin{figure}
  \begin{center}
    \includegraphics[width=10cm]{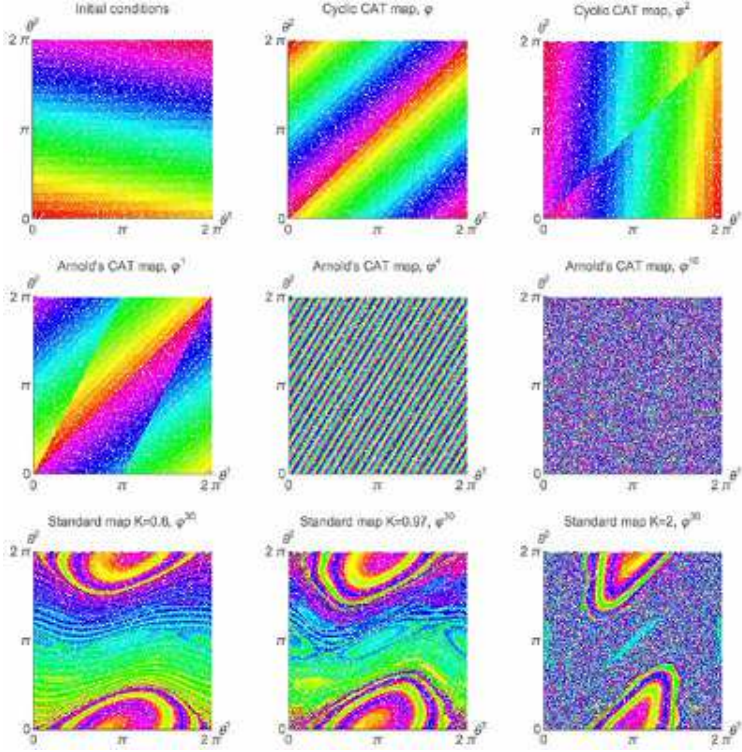}
    \caption{\label{classicalflow} Evolution of the points of $\mathbb T^2$ under the three classical flows, colored with respect to their initial positions. Up: the three configurations of the 3-cyclic CAT map. Middle: first, fourth and tenth iterations of the Arnold's CAT map. The chaotic character of this flow is illustrated by the noisy aspect of the last graph. Bottom: Thirtieth iteration of the standard map for $K=0.6$ (barely chaotic), $K=0.97$ (small chaotic sea with a lot of islands of stability) and $K=2$ (highly chaotic, large chaotic sea with some islands of stability).}
    \end{center}
\end{figure}
Remark: all figures presented in this section are computed with numerical simulations based on the semi-analytical formulea presented in this paper which involve repetition of the action of the evolution operator eq. \ref{Uexemple}. They are realized by using the \textit{Mathematica} software. The number of spins is $N=40 000$ in the simulations.

\subsection{A cyclic CAT map}
\subsubsection{Quasienergy states and SK modes:}
We consider first the case of the 3-cyclic CAT map. The elements of the orbifold $\mathbb T^2/\Phi$ are the 3-cyclic orbits covering $\mathbb T^2$ ($\Phi = \left\{\small \left(\begin{array}{ccc} 1 & 0 \\ 0 & 1 \end{array}\right), \left(\begin{array}{ccc} -1 & 1 \\ -1 & 0 \end{array}\right), \left(\begin{array}{ccc} -1 & 1 \\ -1 & 0 \end{array}\right)^2 \normalsize \right\}$ is the cyclic group acting on $\mathbb T^2$ as $\varphi$). Due to properties \ref{fundquasifixed} and \ref{cycltocycl}, each element of $\mathbb T^2/\Phi$ involves a different couple of fundamental quasienergies:
\begin{equation}
  \Sp_{fqe} = \bigcup_{\theta \in \mathbb T^2/\Phi} \{\tilde \chi_{\uparrow,\theta},\tilde \chi_{\downarrow,\theta} \}
\end{equation}
where
\begin{equation}
  \{e^{-3 \imath \tilde \chi_{\uparrow,\theta}}, e^{-3 \imath \tilde \chi_{\downarrow,\theta}} \} = \Sp\left(U(\varphi^2(\theta))U(\varphi(\theta)) U(\theta)\right)
\end{equation}
We can choose the labels $\uparrow/\downarrow$ and $\theta$ in order to $\theta \mapsto \tilde \chi_{\uparrow/\downarrow}(\theta) \equiv \tilde \chi_{\uparrow/\downarrow,\theta}$ be continuous functions. Note that $\tilde \chi_{\uparrow,\theta}$ is not a $\theta$-dependent fundamental quasienergy, $\tilde \chi_{\uparrow,\theta}$ and $\tilde \chi_{\uparrow,\theta'}$ ($\theta' \not= \theta$) are two distinct fundamental quasienergies; $\theta$ are just continuous indices due to the continuous character of the fundamental quasienergy spectrum. The choice of grouping the fundamental quasienergies into two continuous functions is just a convenience convention. The fundamental quasienergy spectrum is represented figure \ref{quasienergCAT}.
\begin{figure}
  \begin{center}
    \includegraphics[width=5.5cm]{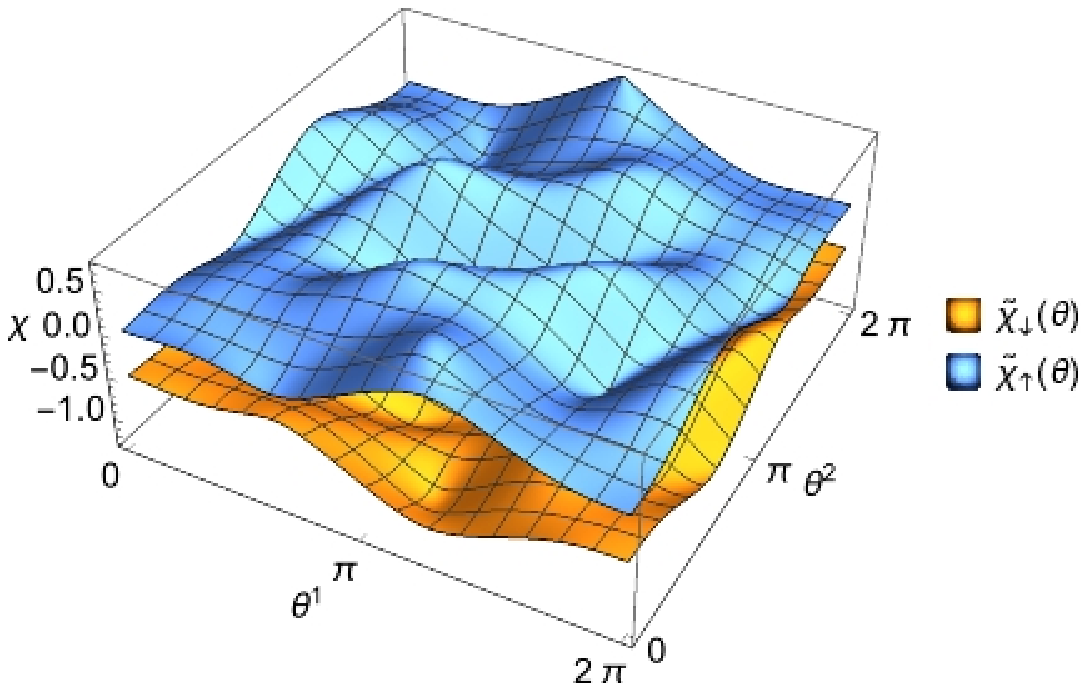} \includegraphics[width=5.5cm]{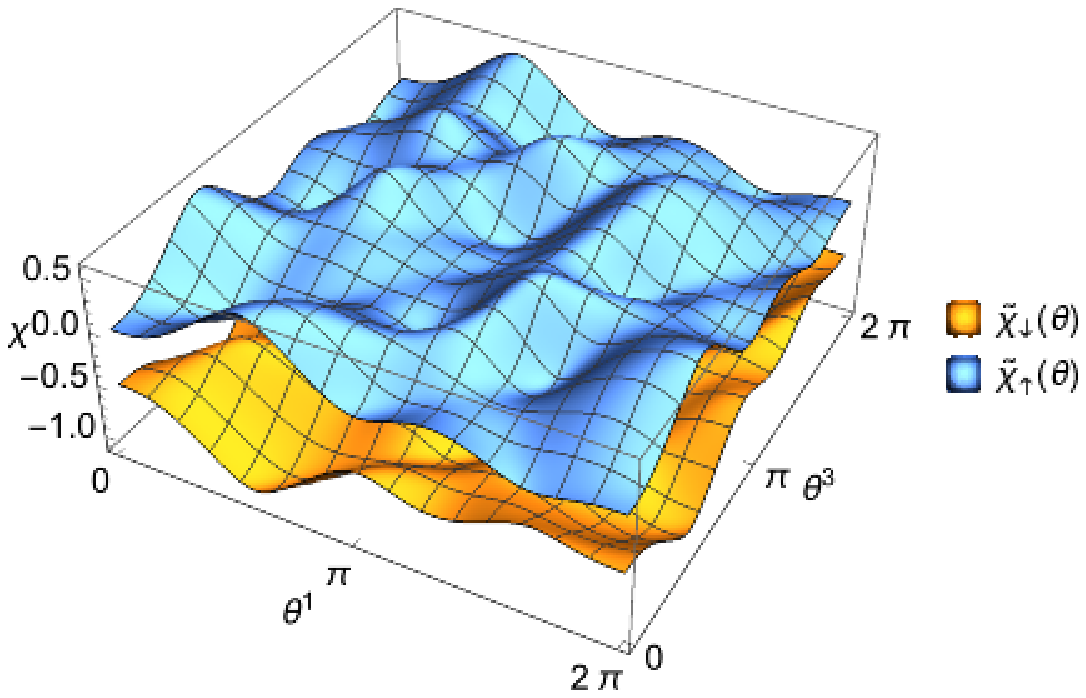} 
    \caption{\label{quasienergCAT} $\Sp_{fqe}$ organized as two continuous functions $\theta \mapsto \tilde \chi_{\uparrow/\downarrow}(\theta)$ for the cyclic CAT map, with $\frac{\omega_1}{\omega_0} = 2.5$ and $\theta^3 = \frac{\pi}{4}$ (left) or $\theta^2=0$ (right).}
  \end{center}
\end{figure}
The fundamental quasienergy states have the form: 
\begin{equation}
  |Z\mu_{\uparrow/\downarrow,\theta_0},\theta \rangle = \sum_{n=0}^2 |Z\mu_{\uparrow/\downarrow,\varphi^n(\theta_0)} \rangle \otimes \delta(\theta-\varphi^n(\theta_0))
\end{equation}
where $\theta_0 \in \mathbb T^2/\Phi$ and $|Z\mu_{\uparrow/\downarrow,\theta_0} \rangle$ is eigenvector of $U(\varphi^2(\theta_0))U(\varphi(\theta_0)) U(\theta_0)$. We can superpose the fundamental quasienergy states of the different orbits to obtain continuous states on $\mathbb T^2$:
\begin{equation}
  |Z \tilde \mu_{\uparrow/\downarrow},\theta \rangle = \int_{\mathbb T^2/\Phi}^\oplus |Z\mu_{\uparrow/\downarrow,\theta_0},\theta \rangle d\theta^1_0 d\theta^2_0
\end{equation}
Figure \ref{quasistateCAT} represents this state.
\begin{figure}
  \begin{center}
    \includegraphics[width=5.5cm]{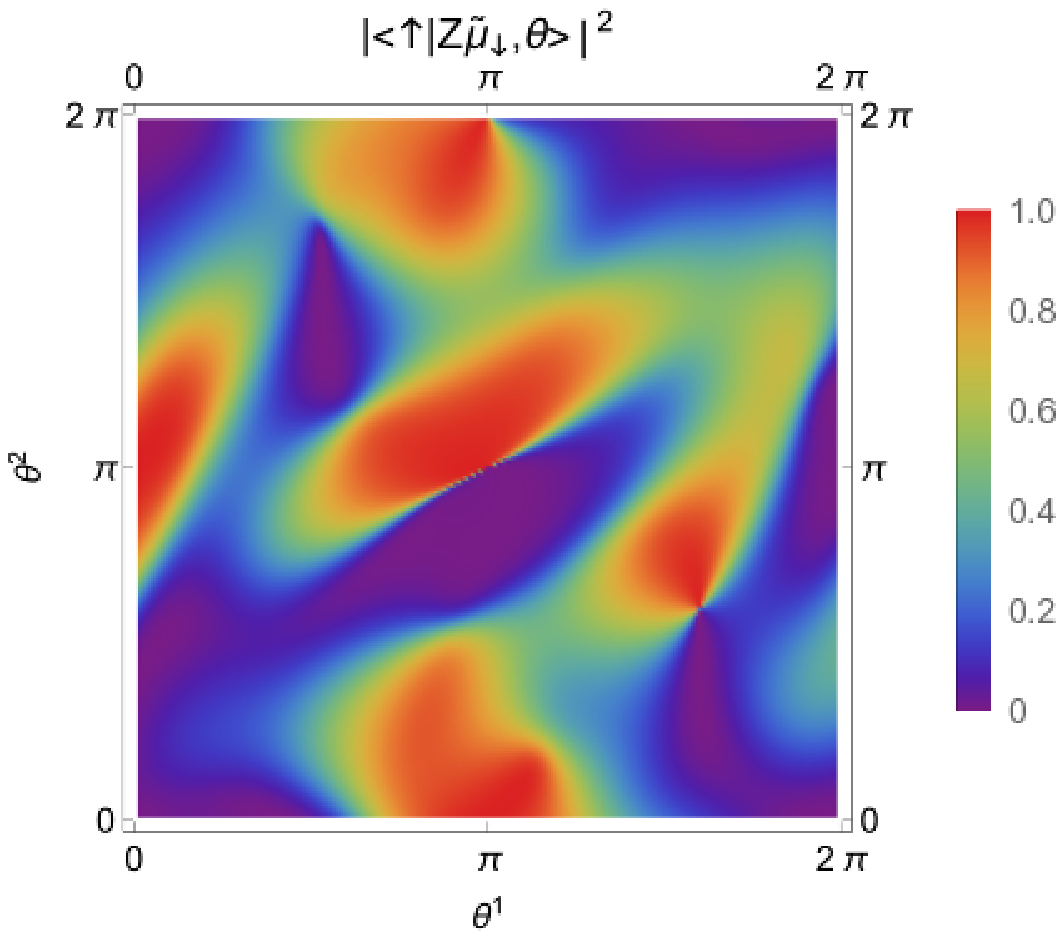} \includegraphics[width=5.5cm]{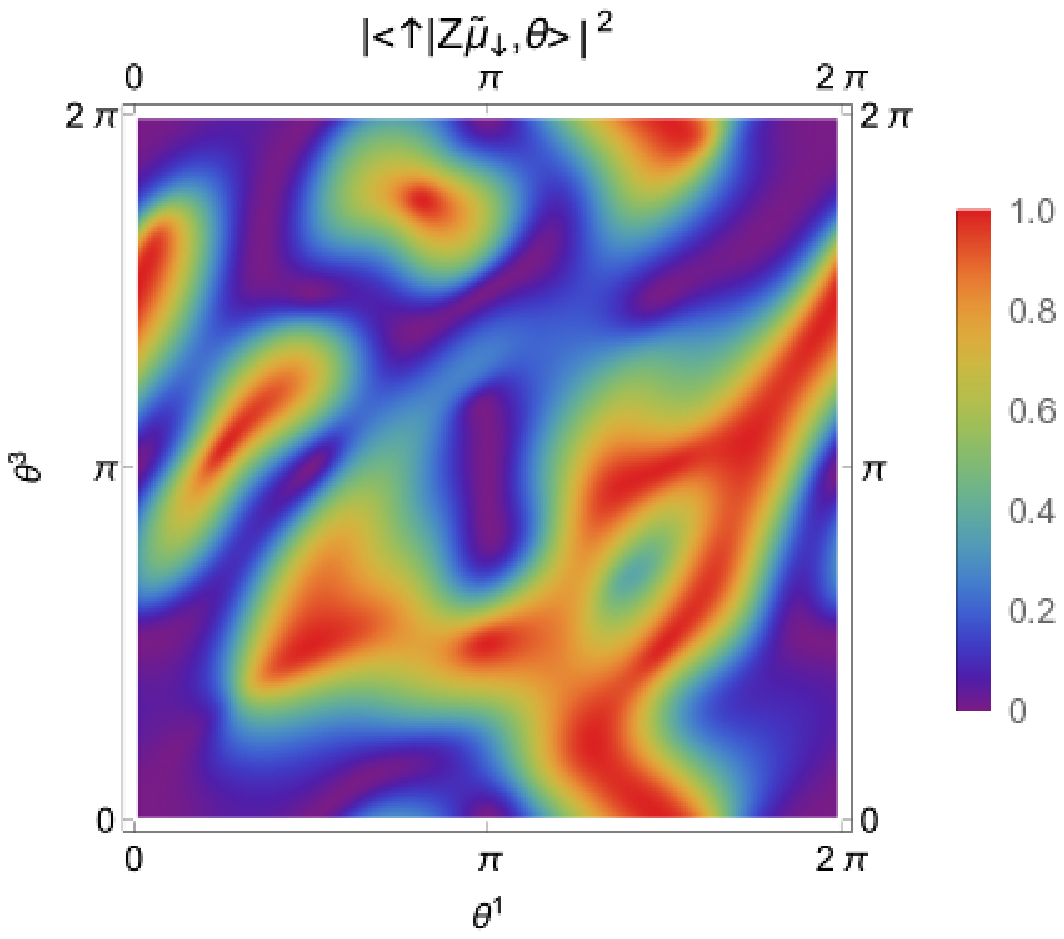}
    \caption{\label{quasistateCAT} Occupation probability of the state $|\uparrow\rangle$ with respect to $\theta$ for the fundamental quasienergy state $|Z\tilde \mu_{\downarrow},\theta \rangle$ of the cyclic CAT map, with $\frac{\omega_1}{\omega_0} = 2.5$ and $\theta^3 = \frac{\pi}{4}$ (left) or $\theta^2=0$ (right).}
  \end{center}
\end{figure}
The structures appearing in $\mathbb T^2$ are related to the structure of $\theta \mapsto U(\theta)$ (as we see it by comparing the two choices -- modulation of the kick delay or of the kick direction --).\\

In section 3 we have introduced the operator $V_i(\theta) = \lim_{N \to +\infty} V_i^{(N)}(\theta)$ with
\begin{equation}
  V_i^{(N)}(\theta) = \frac{1}{N} \sum_{n=0}^{N-1} e^{\imath n \tilde \chi_i} U(\varphi^{n-1}(\theta))...U(\theta)
\end{equation}
For an ergodic orbit, $V_i(\theta)^{-1}$ permits to compute the fundamental quasienergy states. Here, the limit does not exist since the orbits are not ergodic. But $V_i^{(N)}(\theta)$ could be interpreted as a kind of generalization of Fourier modes of the dynamics, in a same manner that the Koopman modes \cite{Budisic}. We consider then the state:
\begin{equation}
  |SK_\uparrow^{(N)},\theta \rangle = V_\uparrow^{(N)}(\theta)^{-1}V_\uparrow^{(N)}(0)|Z\mu_{\uparrow,0},0 \rangle
\end{equation}
that we call a Schr\"odinger-Koopman (SK) mode. It is represented figure \ref{SKmodeCAT}.
\begin{figure}
  \begin{center}
    \includegraphics[width=5.5cm]{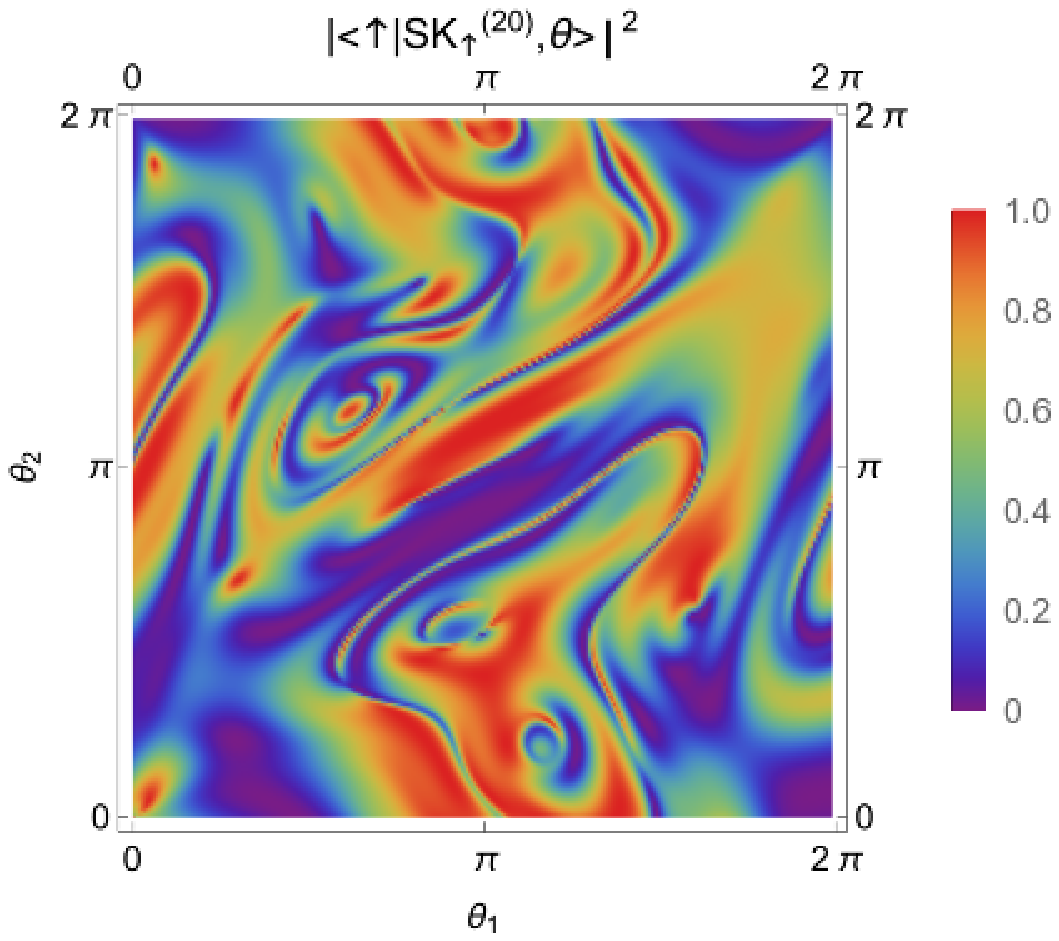} \includegraphics[width=5.5cm]{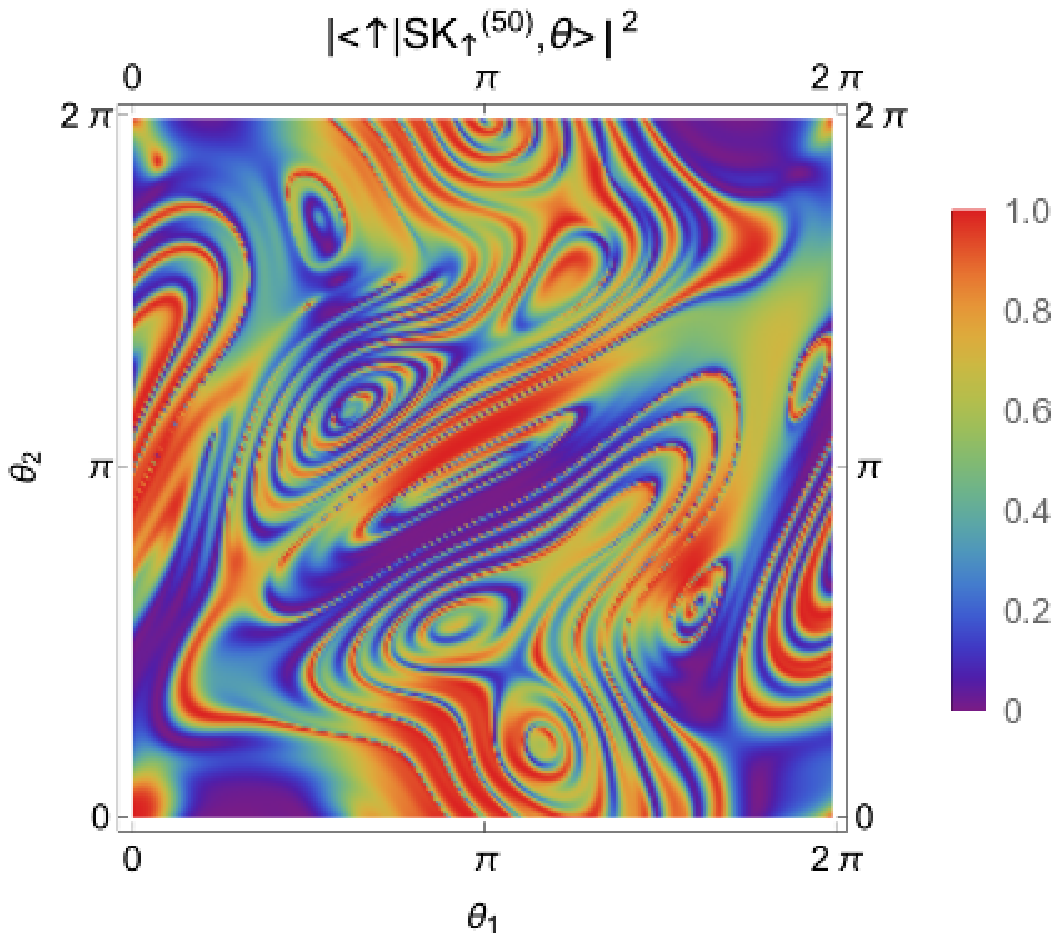}\\
    \includegraphics[width=5.5cm]{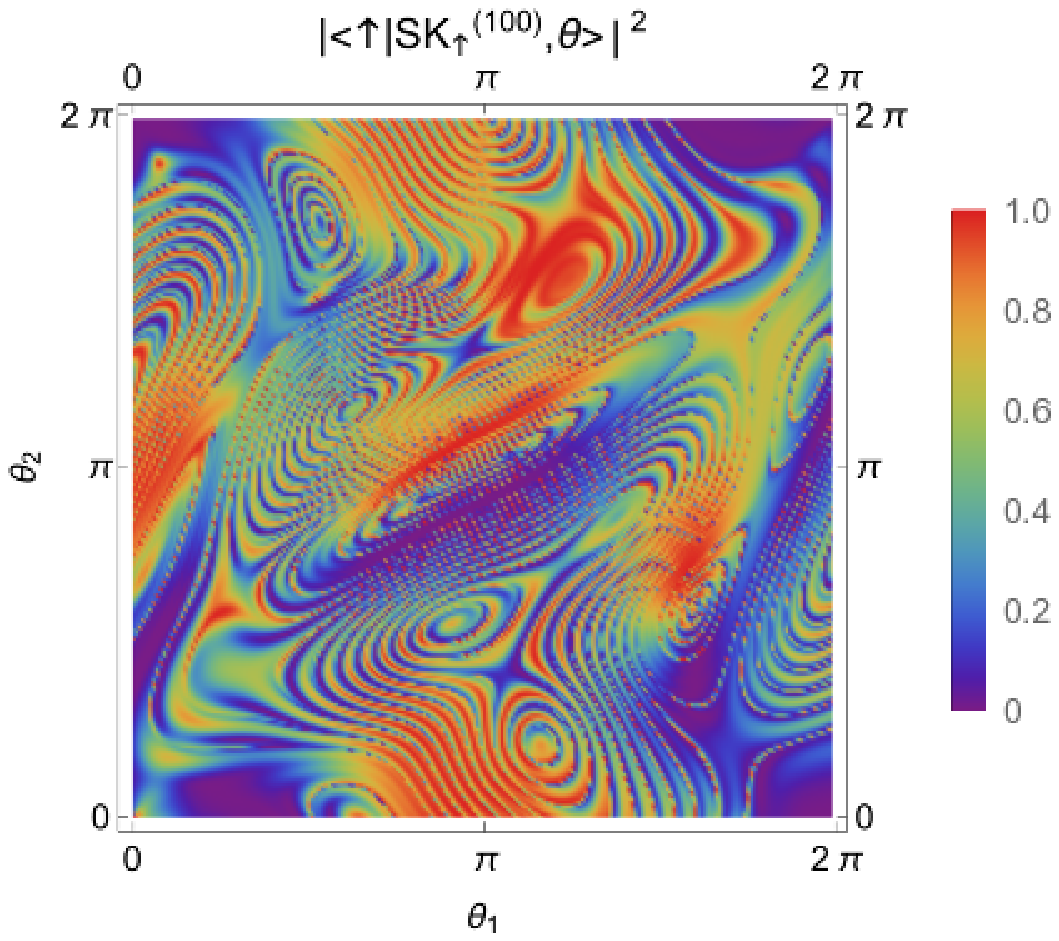} \includegraphics[width=5.5cm]{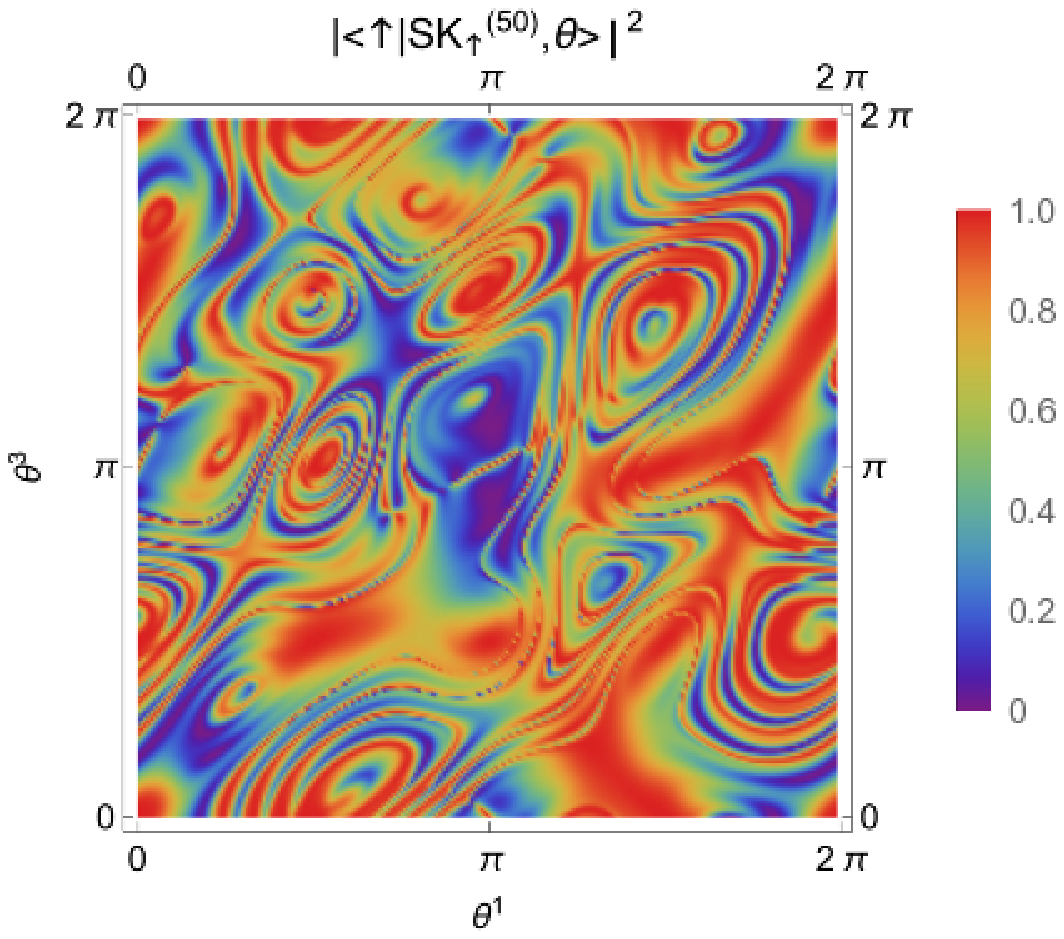}
    \caption{\label{SKmodeCAT} Occupation probability of the state $|\uparrow\rangle$ with respect to $\theta$ for the SK mode $|SK_{\uparrow}^{(N)},\theta \rangle$ of the cyclic CAT map, with $\frac{\omega_1}{\omega_0} = 2.5$ and $\theta^3 = \frac{\pi}{4}$ (top and bottom left) or $\theta^2=0$ (bottom right).}
  \end{center}
\end{figure}
We remark that the structures appearing in the fundamental quasienergy states (fig. \ref{quasistateCAT}) can be refound in the SK modes added with ``interferences''.

\subsubsection{Dynamics:}
We consider the dynamics for four initial conditions:
\begin{itemize}
\item $\psi_0(\theta) = \frac{\mathbb{I}_{D_0}(\theta)}{\mu(D_0)} \frac{1}{\sqrt 2}(|\uparrow \rangle + |\downarrow \rangle)$ where $\mathbb{I}_{D_0}$ is the characteristic function on $D_0$ a small square of side length equal to $10^{-3}$. This state corresponds to a highly coherent ensemble of spins with a small uniform dispersion of the first kicks.
\item $\psi_0(\theta) =  \frac{1}{\sqrt 2}(|\uparrow \rangle + |\downarrow \rangle)$ which corresponds to a large uniform dispersion (on the whole of $\mathbb T^2$) of the first kicks.
\item $\psi_0(\theta) = |Z\tilde \mu_\uparrow,\theta \rangle$ the superposition of fundamental quasienergy states.
\item $\psi_0(\theta) = |Z\mu_{\uparrow,\theta_0},\theta \rangle$ a fundamental quasienergy state.
\end{itemize}
The dynamics $\psi_n = U_K^n \psi_0$ corresponds to the dynamics $\psi_n^{(i)} = U(\varphi^n(\theta_{0i})) \psi_0(\theta_{0i})$ of a large number of spins with $\{\theta_{0i}\}_i$ randomly chosen following the probability distribution of density function $\tr_{\mathbb C^2} |\psi_0(\theta) \rrangle \llangle \psi_0(\theta)|$ (the numerical simulations are realized with such a spin ensemble). We consider then the density matrix $\rho_n = \tr_{L^2(\mathbb T^2,d\mu)} |\psi_n \rrangle \llangle \psi_n| = \lim_{N \to +\infty} \frac{1}{N} \sum_{i=1}^N |\psi_n^{(i)} \rangle \langle \psi_n^{(i)}|$ corresponding to the mixed state of the spin ensemble. The results of the different dynamics is represented figure \ref{dynamicsCAT}.
\begin{figure}
  \begin{center}
    \includegraphics[width=4.2cm]{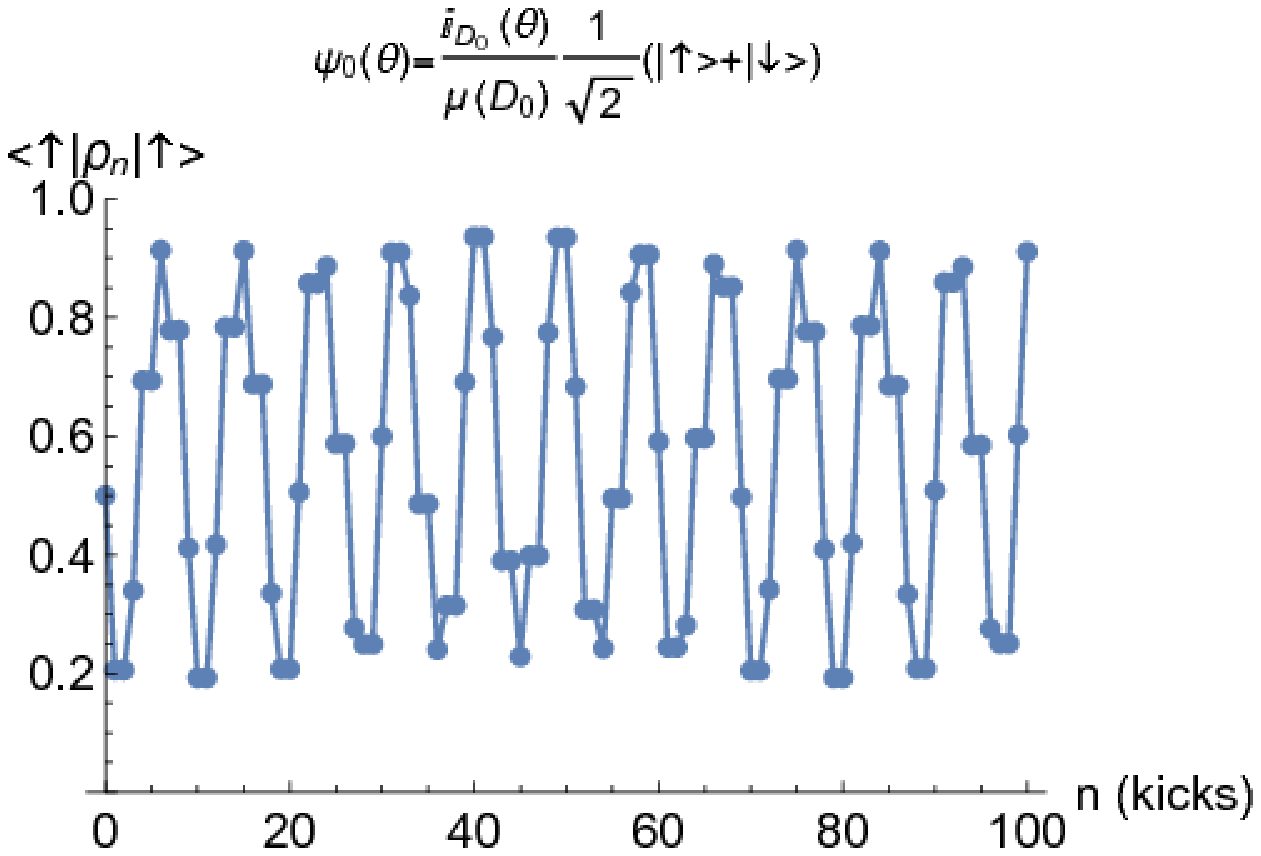} \includegraphics[width=4.2cm]{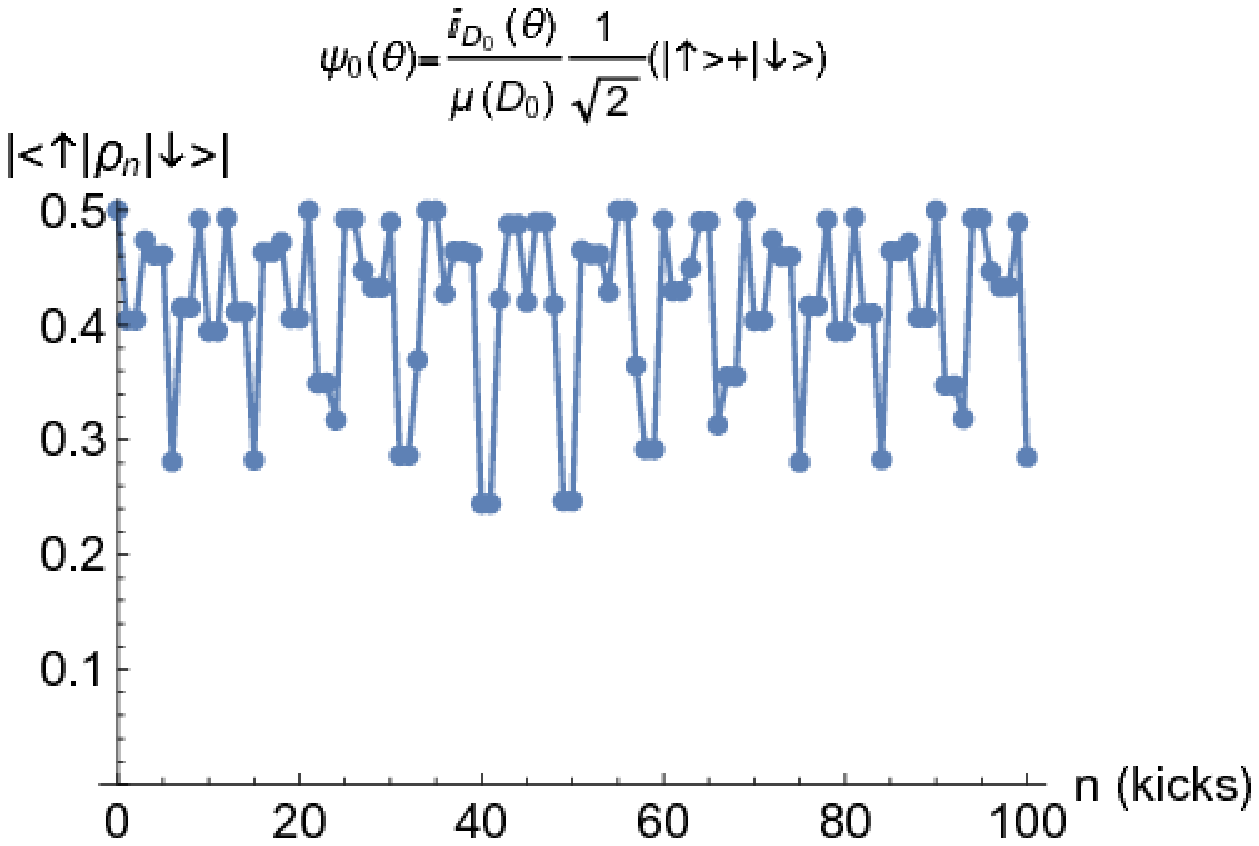} \includegraphics[width=4.2cm]{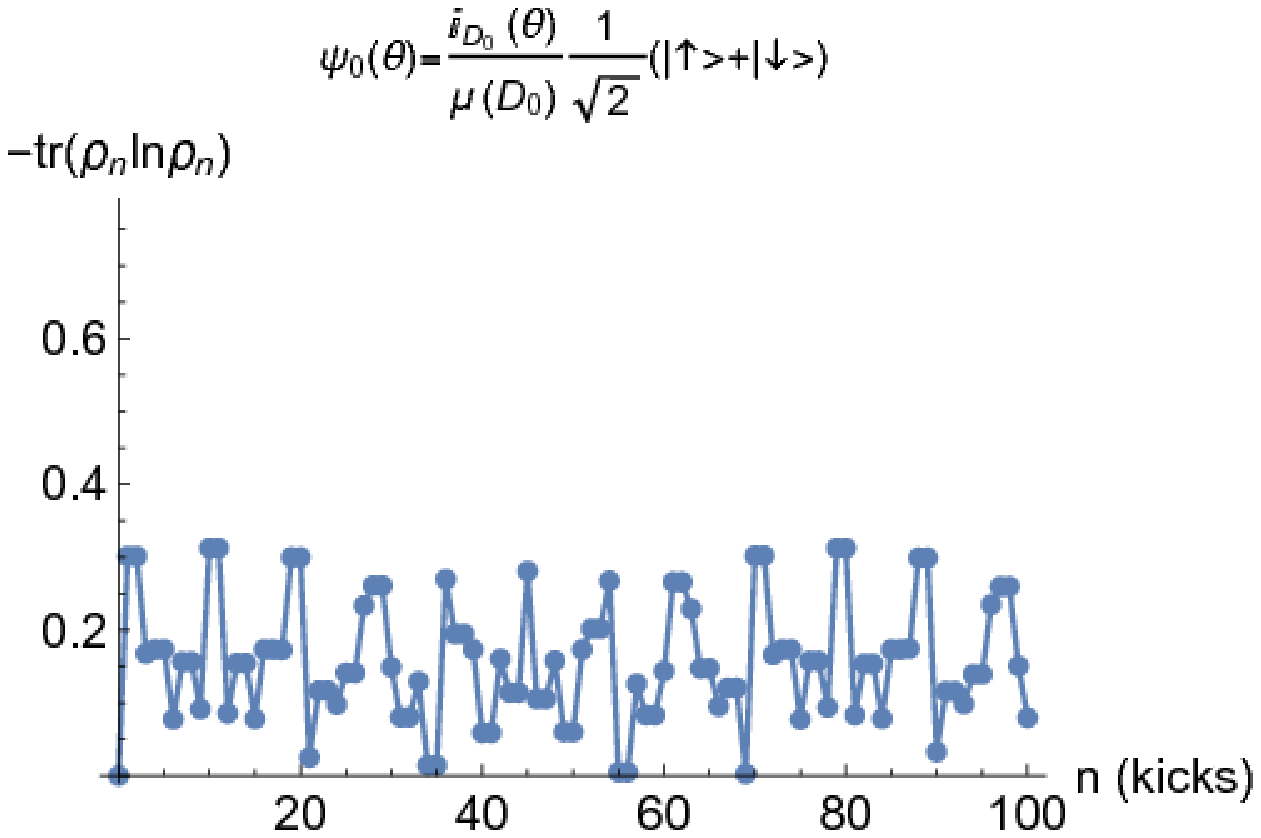}\\
    \includegraphics[width=4.2cm]{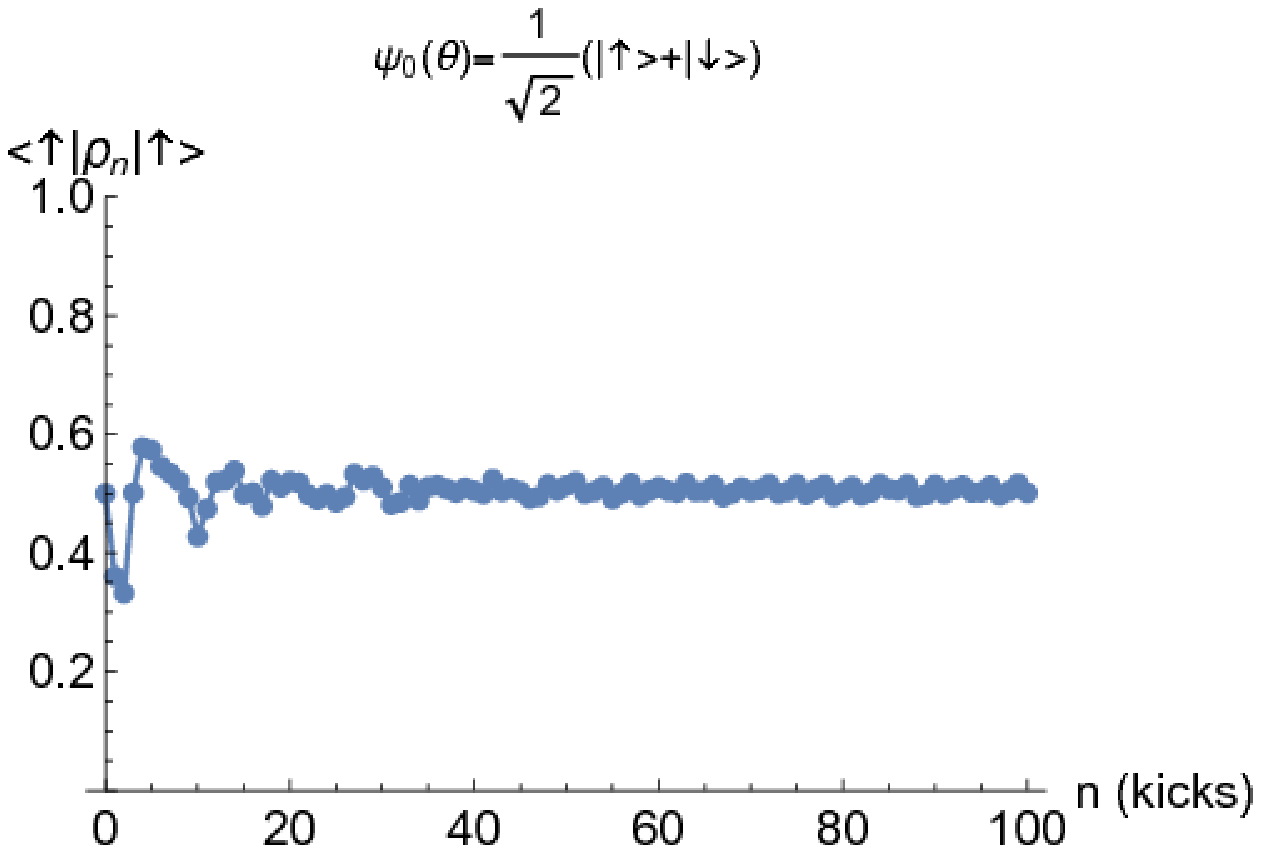} \includegraphics[width=4.2cm]{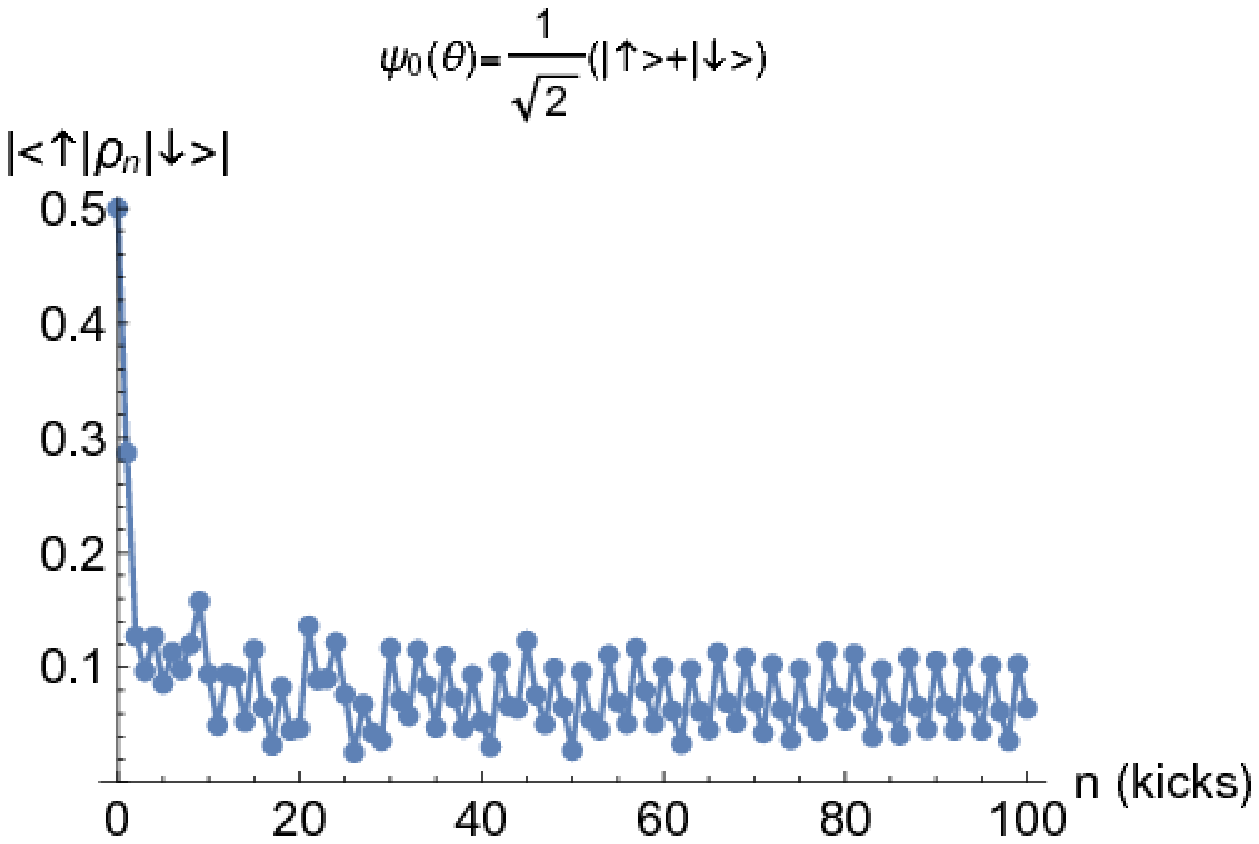} \includegraphics[width=4.2cm]{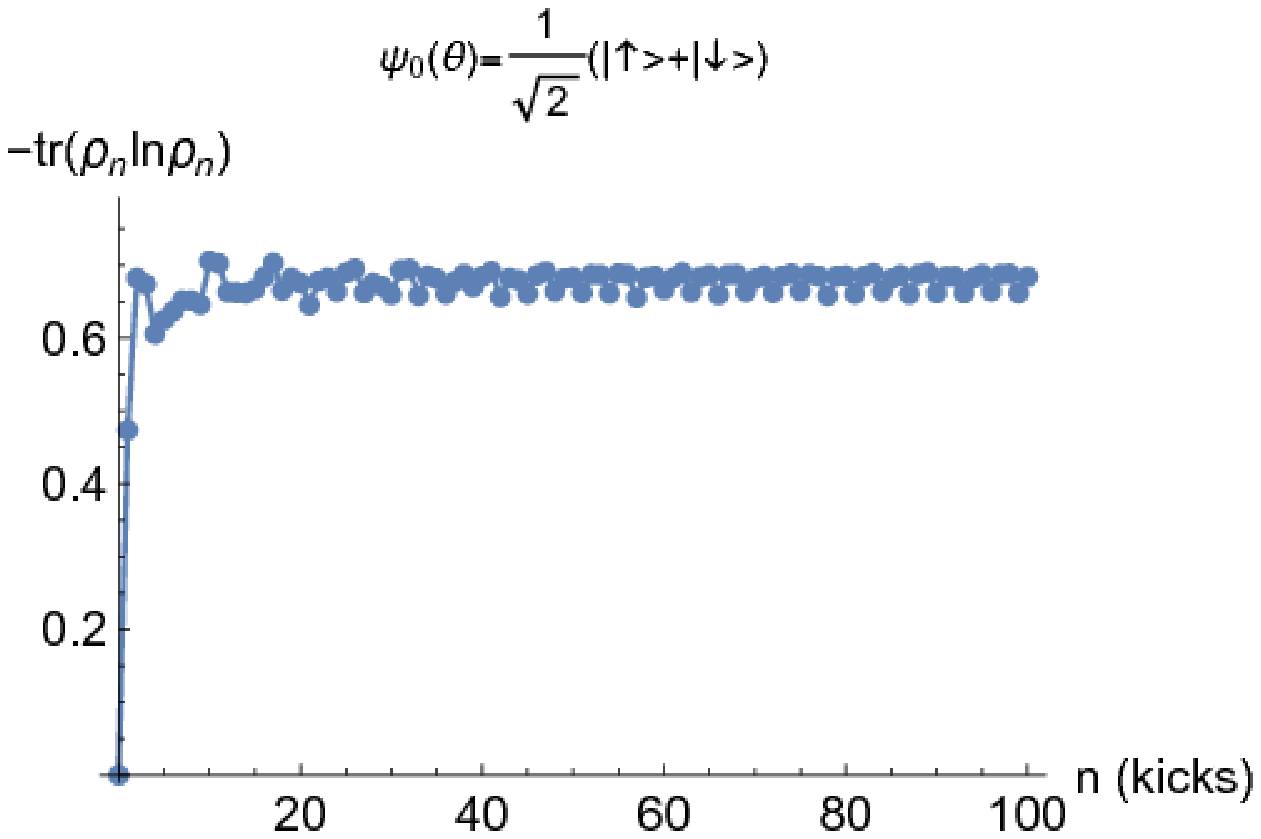}\\
    \includegraphics[width=4.2cm]{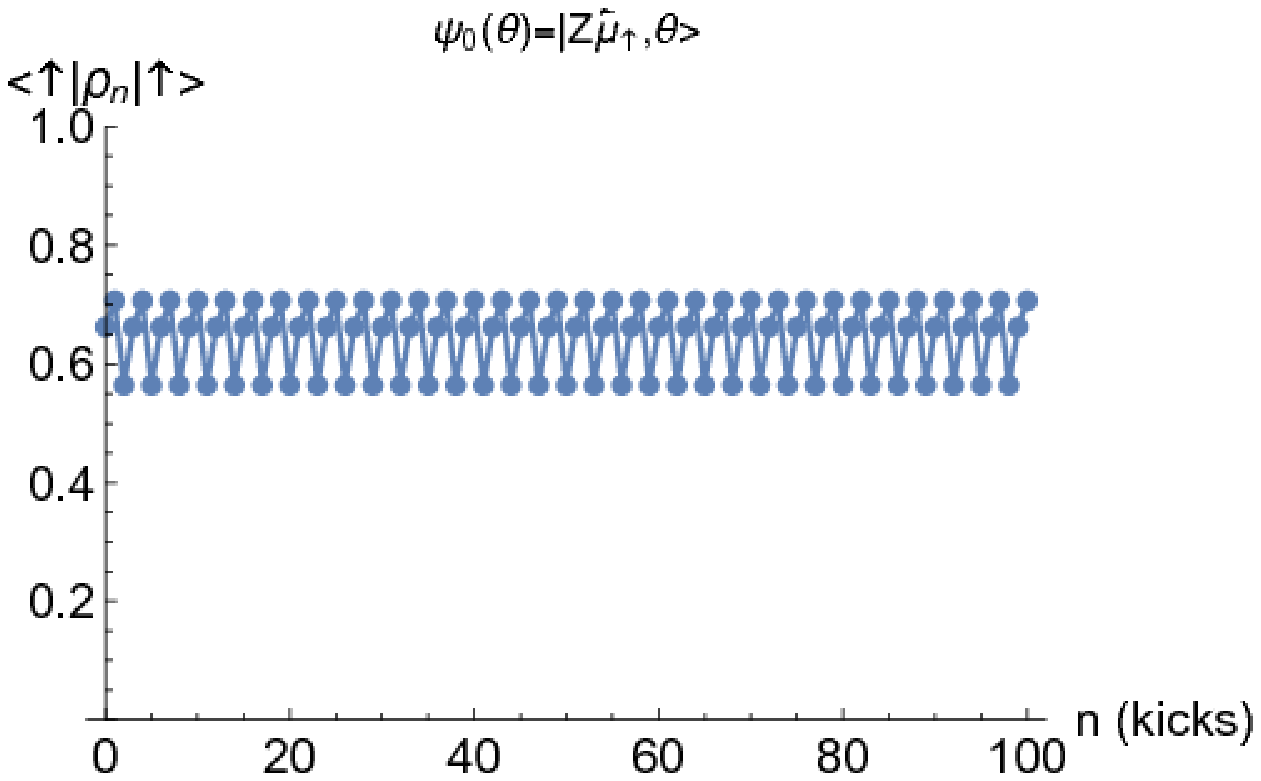} \includegraphics[width=4.2cm]{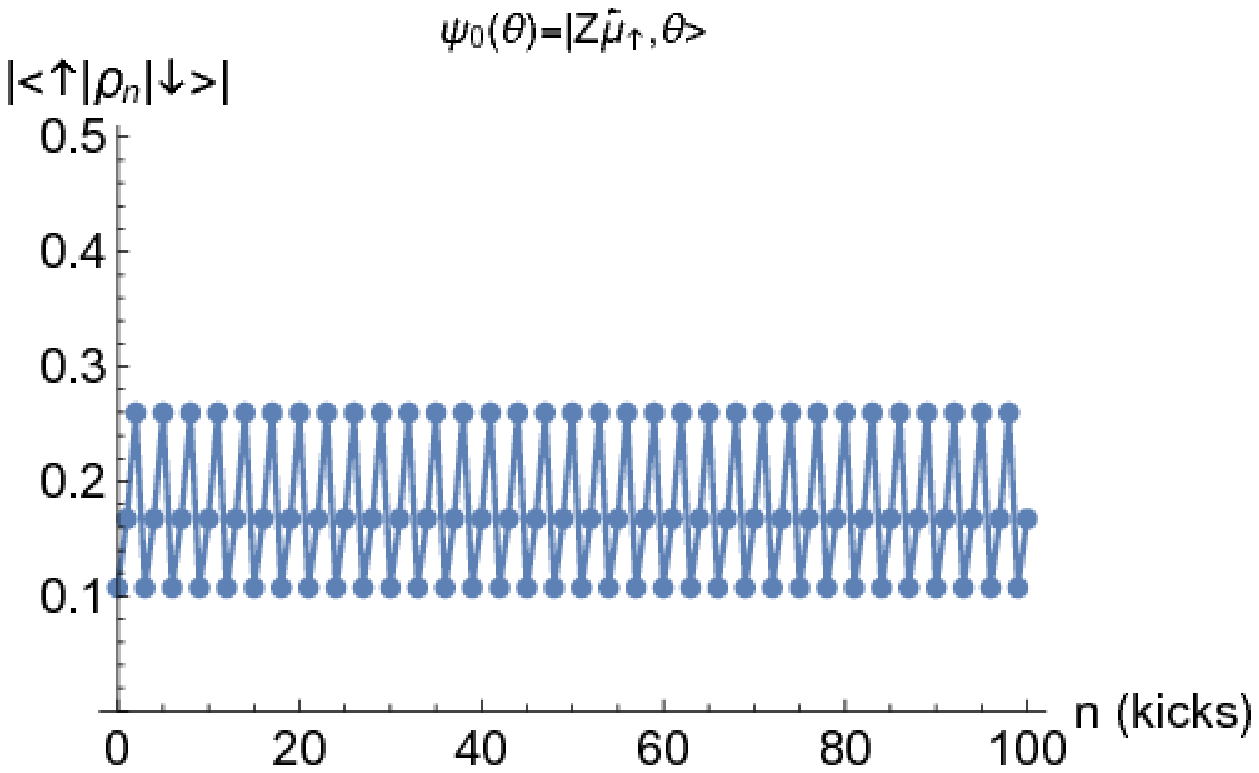} \includegraphics[width=4.2cm]{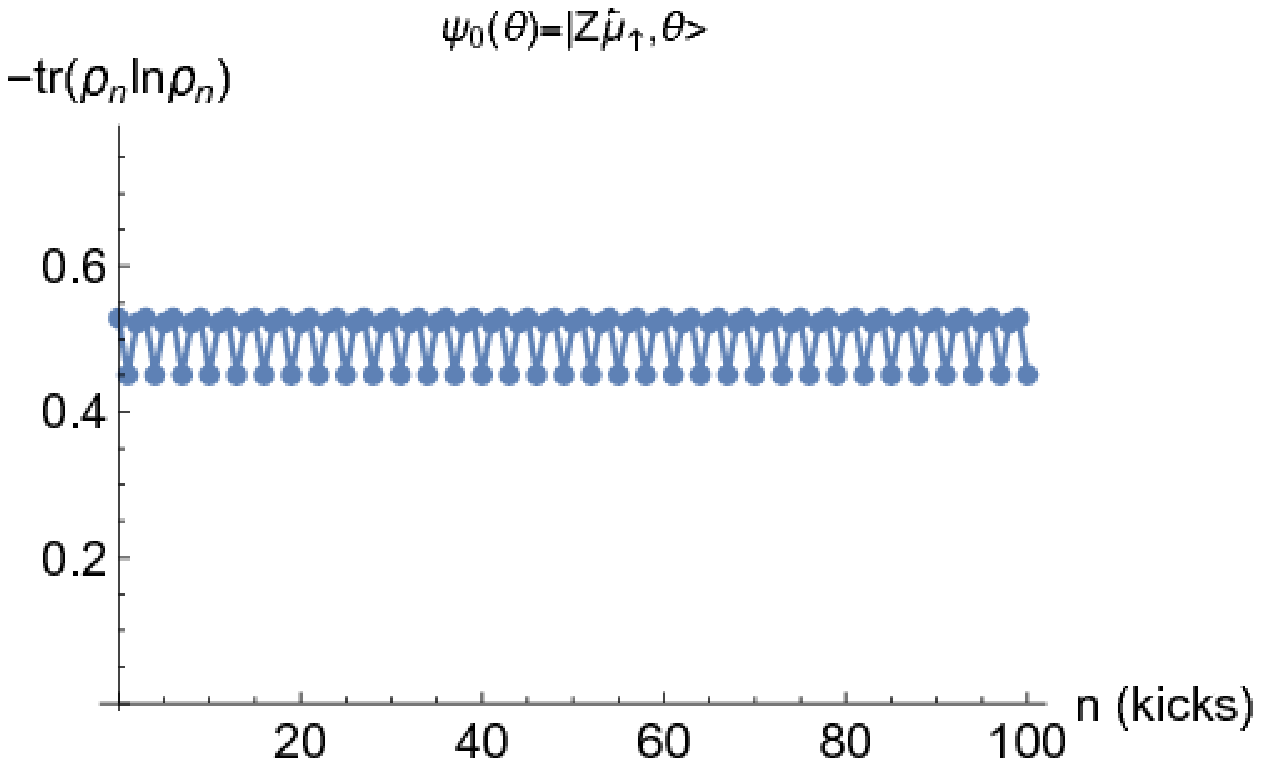}\\
    \includegraphics[width=4.2cm]{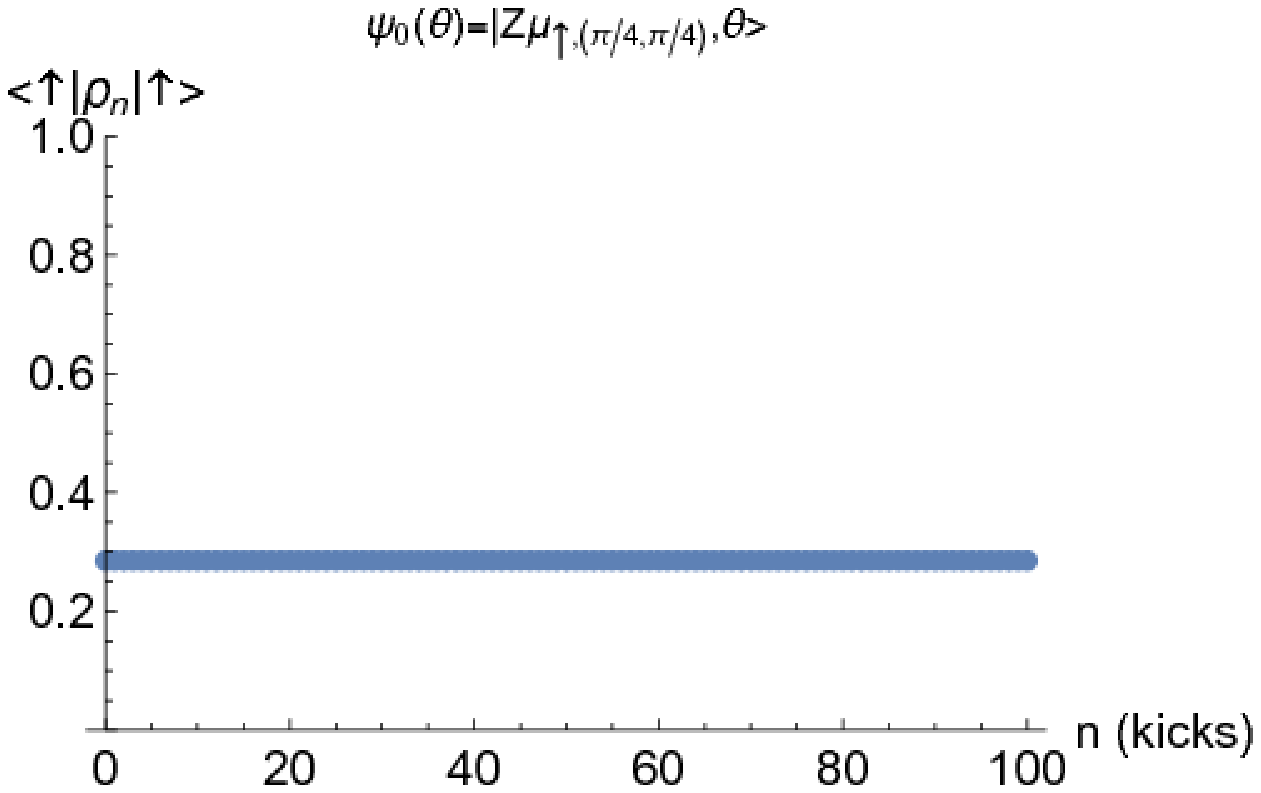} \includegraphics[width=4.2cm]{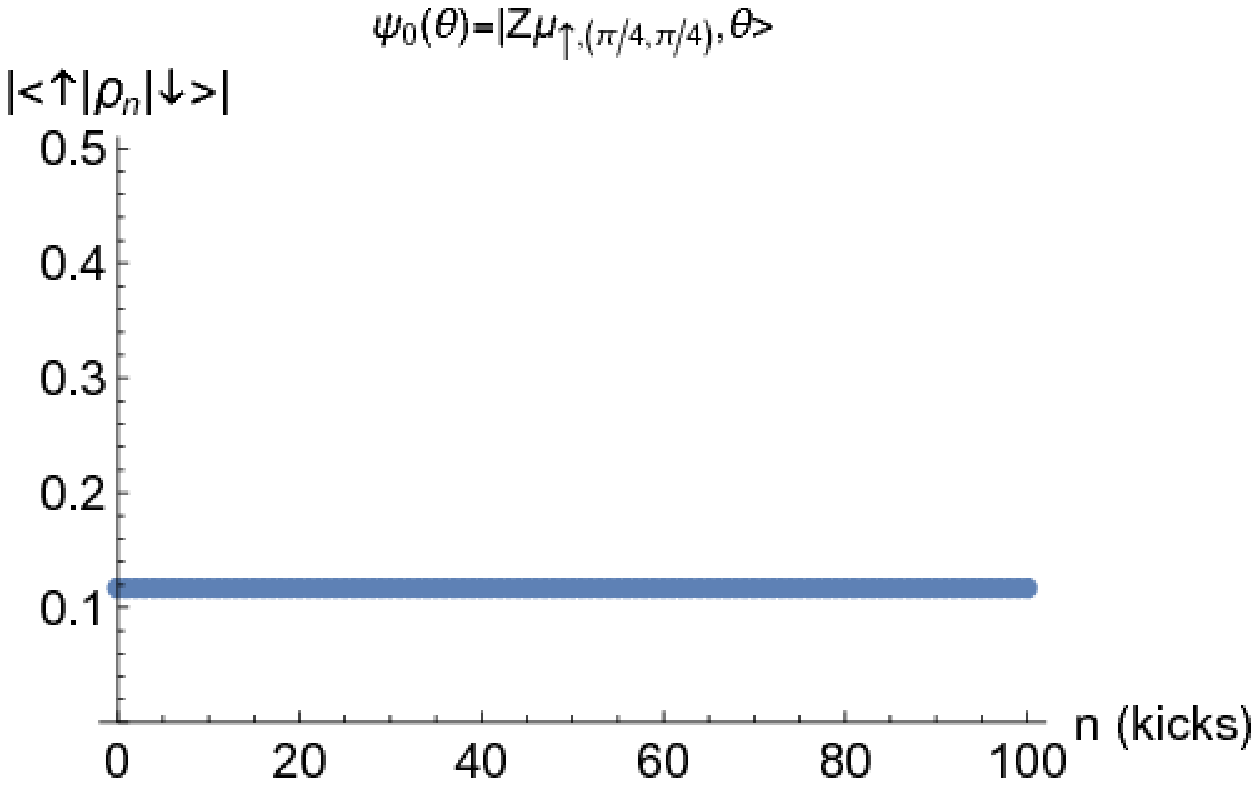} \includegraphics[width=4.2cm]{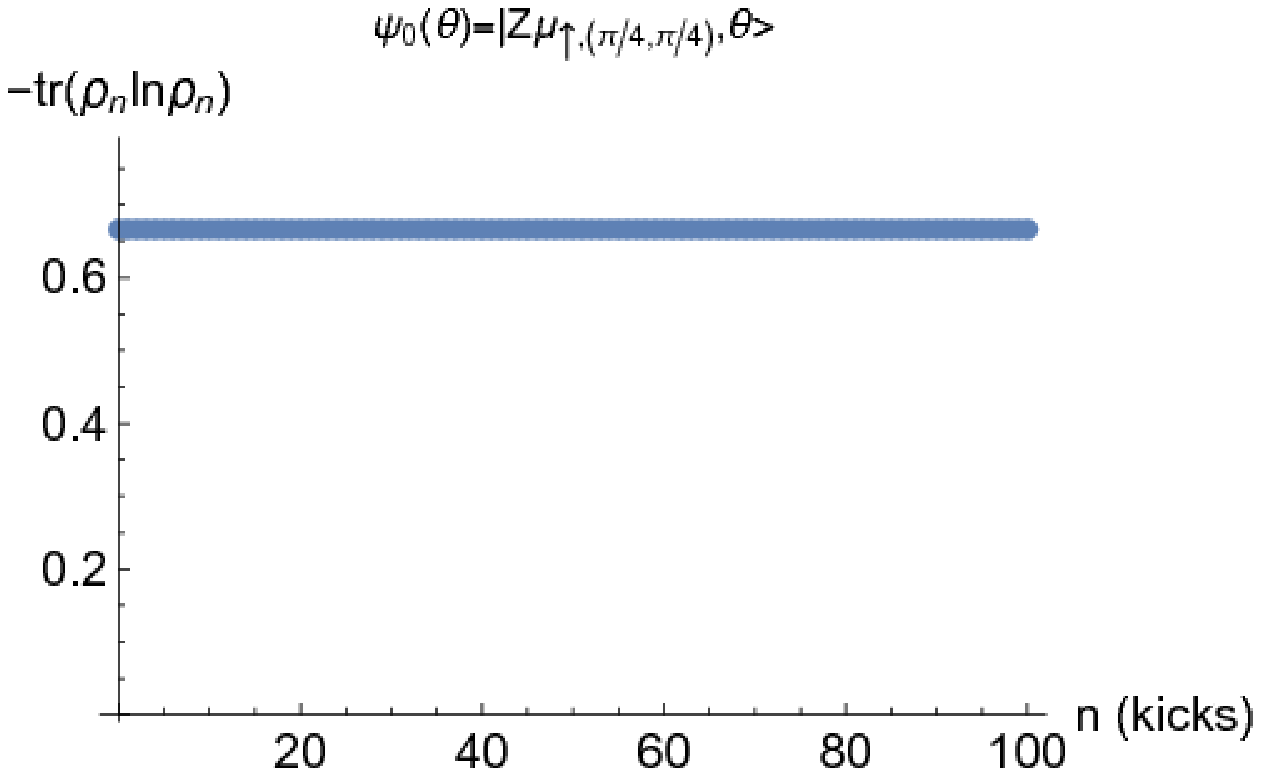}\\
    \caption{\label{dynamicsCAT} Population of the state $|\uparrow \rangle$ (left), coherence (center) and von Neumann entropy (right) of the mixed state for the stroboscopic dynamics of the spin ensemble (with $\frac{\omega_1}{\omega_0} = 2.5$ and $\theta^3 = \frac{\pi}{4}$) where the kick modulation is governed by the cyclic CAT map. The initial condition is for a small uniform dispersion of the first kicks (first line), a large uniform dispersion (second line), the superposition of fundamental quasienergy states (third line), and a single fundamental quasienergy state (last line).}
  \end{center}
\end{figure}
The classical flow being cyclic, it does not generate decoherence for a small initial dispersion of the kicks. The only one decoherence phenomenon occurs for a large initial 
dispersion due to the large dephasing induced in the spin dynamics. As expected, the fundamental quasienergy state is a steady state. The superposition of fundamental quasienergy states is on average stationnary but with small oscillations.\\

It is interesting to consider the structure of $\psi_n(\theta)$ for the initial uniform state $\psi_0(\theta) = \frac{1}{\sqrt 2}(|\uparrow \rangle + |\downarrow \rangle)$, figure \ref{psiCAT}.
\begin{figure}
  \begin{center}
    \includegraphics[width=5.5cm]{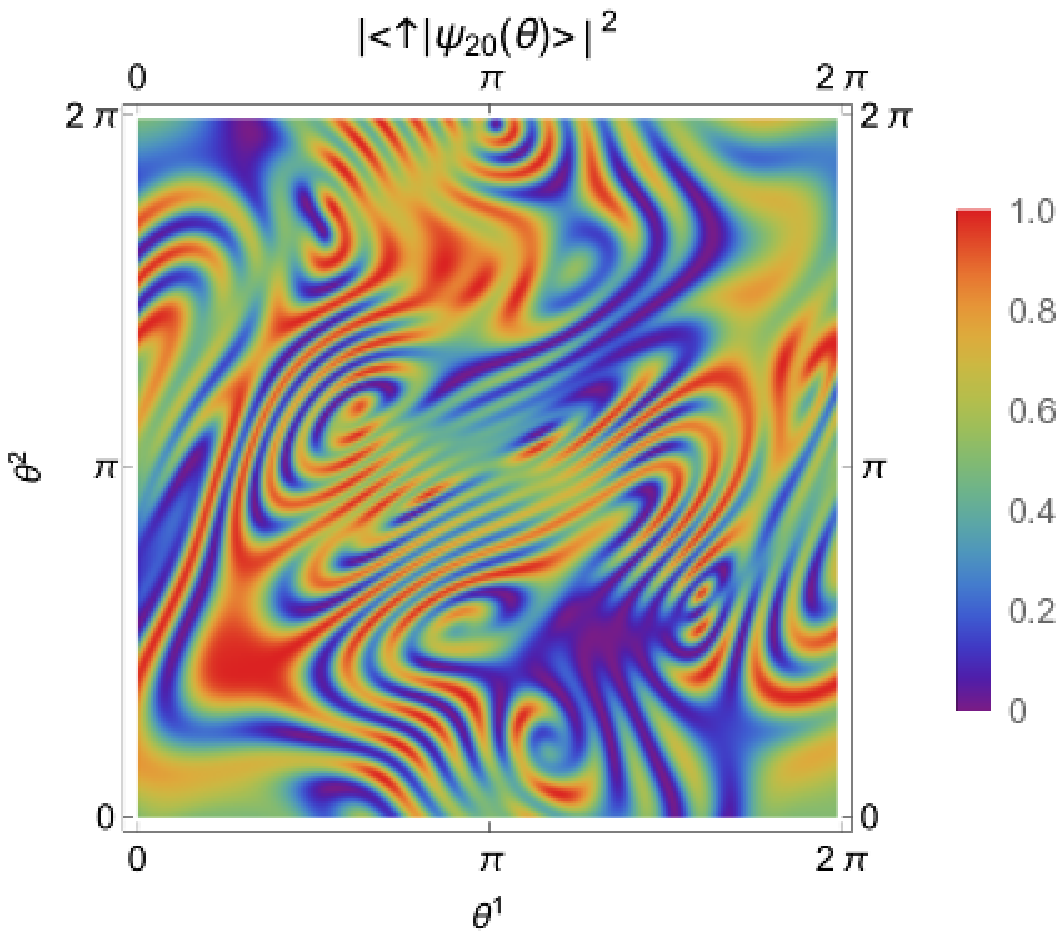} \includegraphics[width=5.5cm]{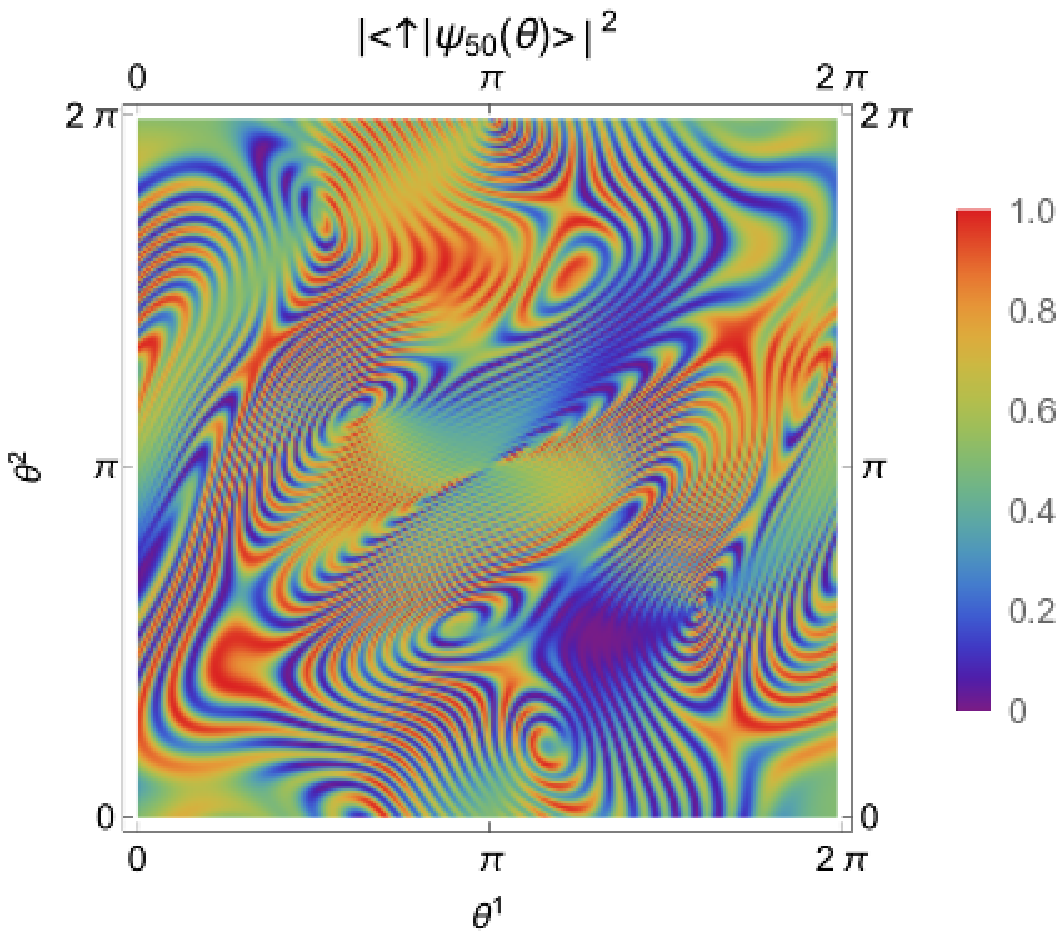}\\
    \includegraphics[width=5.5cm]{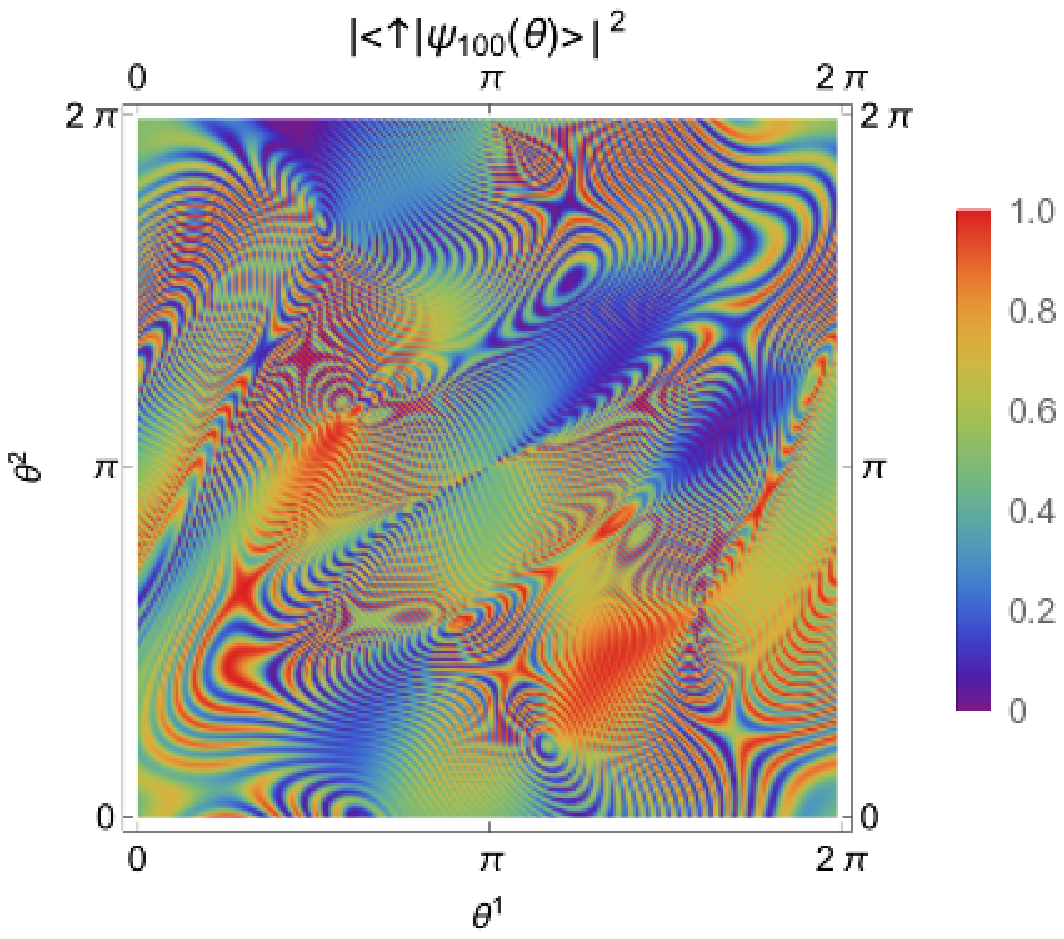} \includegraphics[width=5.5cm]{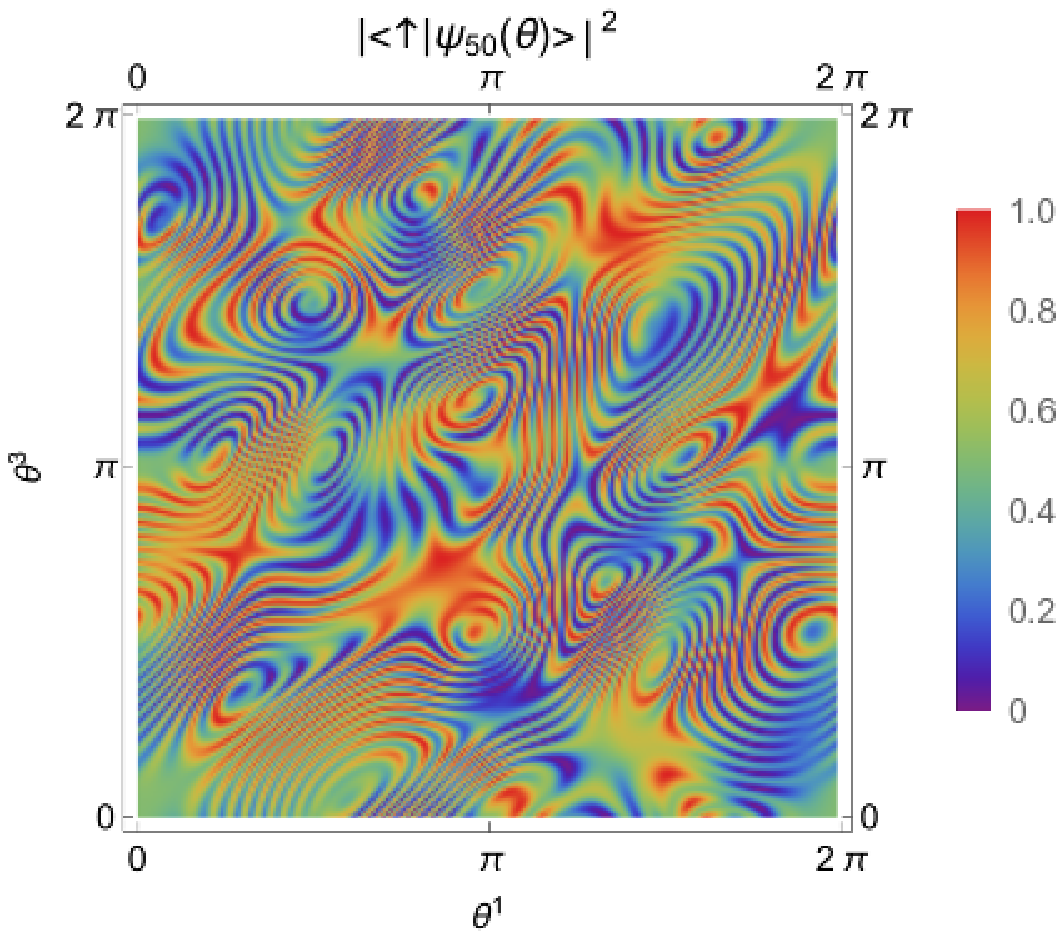}
    \caption{\label{psiCAT} Occupation probability of the state $|\uparrow\rangle$ with respect to $\theta$ for the state $\psi_n$ after $n$ kicks by the cyclic CAT map for the uniform initial condition, with $\frac{\omega_1}{\omega_0} = 2.5$ and $\theta^3 = \frac{\pi}{4}$ (top and bottom left) or $\theta^2=0$ (bottom right).}
  \end{center}
\end{figure}
It is interesting to note that we recover the structures of the SK modes (fig. \ref{SKmodeCAT}).

\subsection{The Arnold's CAT map}
\subsubsection{Quasienergy states :}
Since the Arnold's CAT map is ergodic on the whole of $\mathbb T^2$, the fundamental quasienergies at the fixed point $0$ are stable on the whole of $\mathbb T^2$. We have then
\begin{equation}
  \Sp_{fqe} = \{\tilde \chi_\uparrow, \tilde \chi_\downarrow\}
\end{equation}
with $\tilde \chi_\uparrow = 0$ and $\tilde \chi_\downarrow = \frac{\omega_1}{\omega_0} \pi$. The quasienergy states are obtained by
\begin{equation}
  |Z\mu_\uparrow,\theta \rangle = V_\uparrow(\theta)^{-1} V_\uparrow(\epsilon) |Z\mu_\uparrow,\epsilon \rangle
\end{equation}
with $V_\uparrow(\theta) = \lim_{N \to +\infty} \frac{1}{N} \sum_{n=0}^{N-1} e^{-\imath n \tilde \chi_\uparrow} U(\varphi^{n-1}(\theta))... U(\theta)$ (in practice for $N$ large), and $|Z\mu_\uparrow,\epsilon \rangle$ ($\epsilon$ small) computed from $|Z\mu_{\uparrow/\downarrow},0\rangle$ (eigenvectors of $U(0)$) by using the local expansion formula (see \ref{appendixB1}). A fundamental quasienergy state is represented figure \ref{quasistateArnold}.
\begin{figure}
  \begin{center}
    \includegraphics[width=5.5cm]{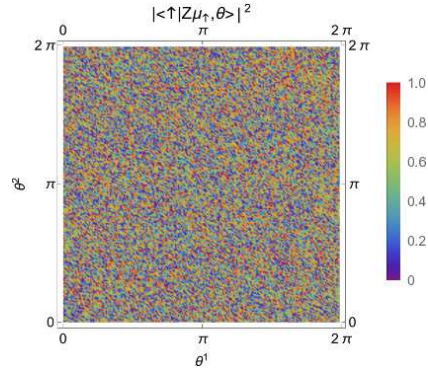}
    \caption{\label{quasistateArnold} Occupation probability of the state $|\uparrow\rangle$ with respect to $\theta$ for the fundamental quasienergy state $|Z\mu_{\uparrow},\theta \rangle$ of the Arnold's CAT map, with $\frac{\omega_1}{\omega_0} = 2.5$ and $\theta^3 = \frac{\pi}{4}$}
  \end{center}
\end{figure}
We see that the fundamental quasienergy state is totally ``uncoherent'' with respect to $\theta$, in accordance with the chaotic behavior of the Arnold's CAT map. This figure recalls the noisy aspect of the orbits of the flow (fig. \ref{classicalflow}).

\subsubsection{Dynamics :}
As for the previous example, we consider three initial conditions:
\begin{itemize}
\item $\psi_0(\theta) = \frac{\mathbb{I}_{D_0}(\theta)}{\mu(D_0)} \frac{1}{\sqrt 2}(|\uparrow \rangle + |\downarrow \rangle)$ corresponding to a highly coherent ensemble of spins with a small uniform dispersion of the first kicks.
\item $\psi_0(\theta) =  \frac{1}{\sqrt 2}(|\uparrow \rangle + |\downarrow \rangle)$ which corresponds to a large uniform dispersion (on the whole of $\mathbb T^2$) of the first kicks.
\item $\psi_0(\theta) = |Z\mu_{\uparrow},\theta \rangle$ a fundamental quasienergy state.
\end{itemize}
The dynamics are represented figure \ref{dynamicsArnold}.
\begin{figure}
  \begin{center}
    \includegraphics[width=4.2cm]{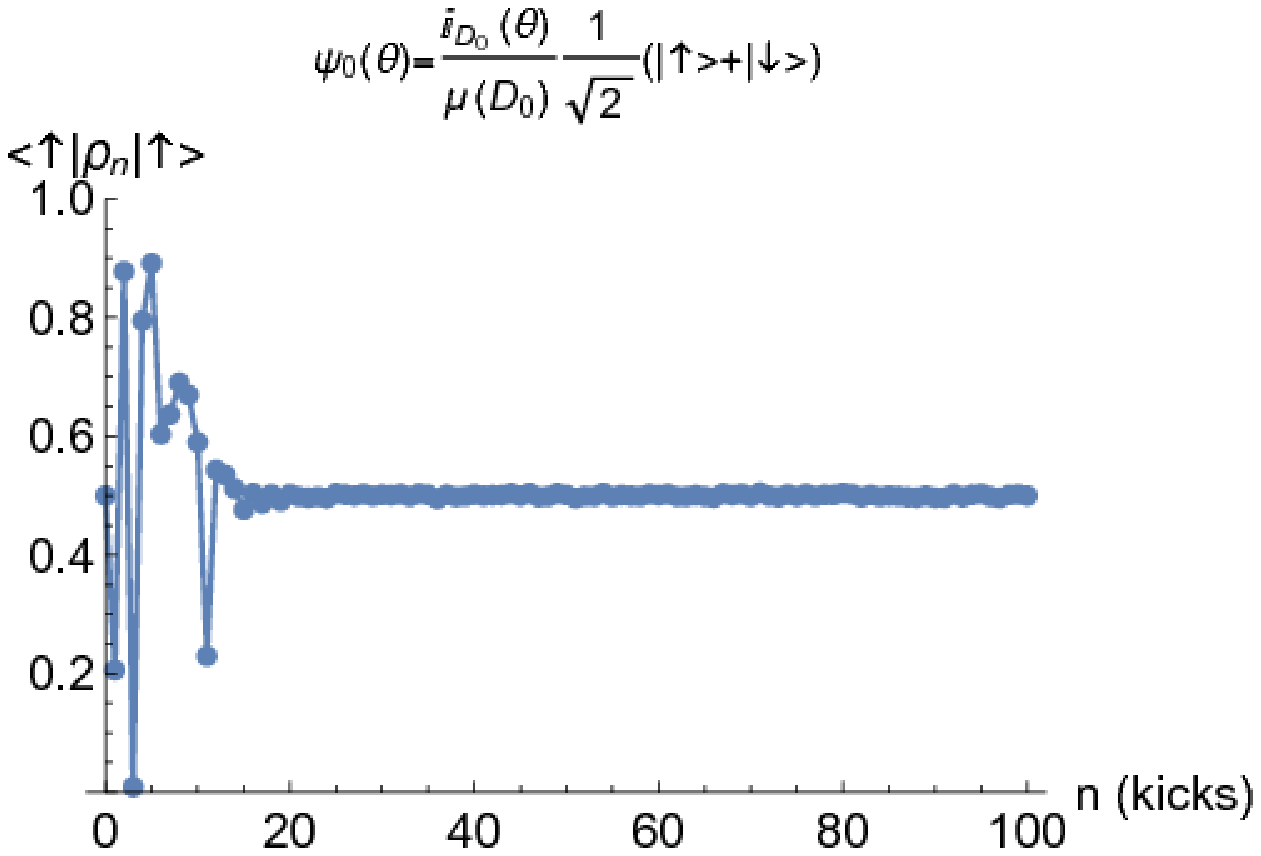} \includegraphics[width=4.2cm]{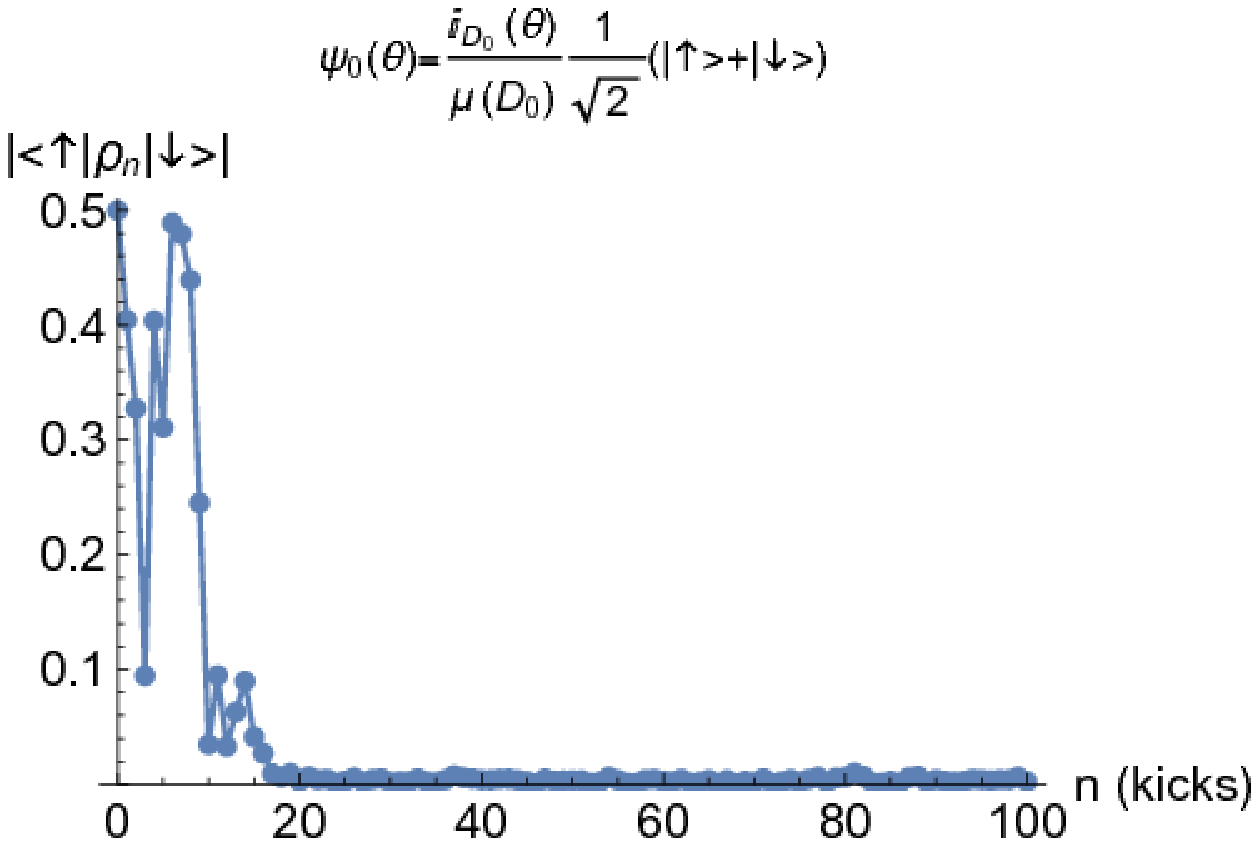} \includegraphics[width=4.2cm]{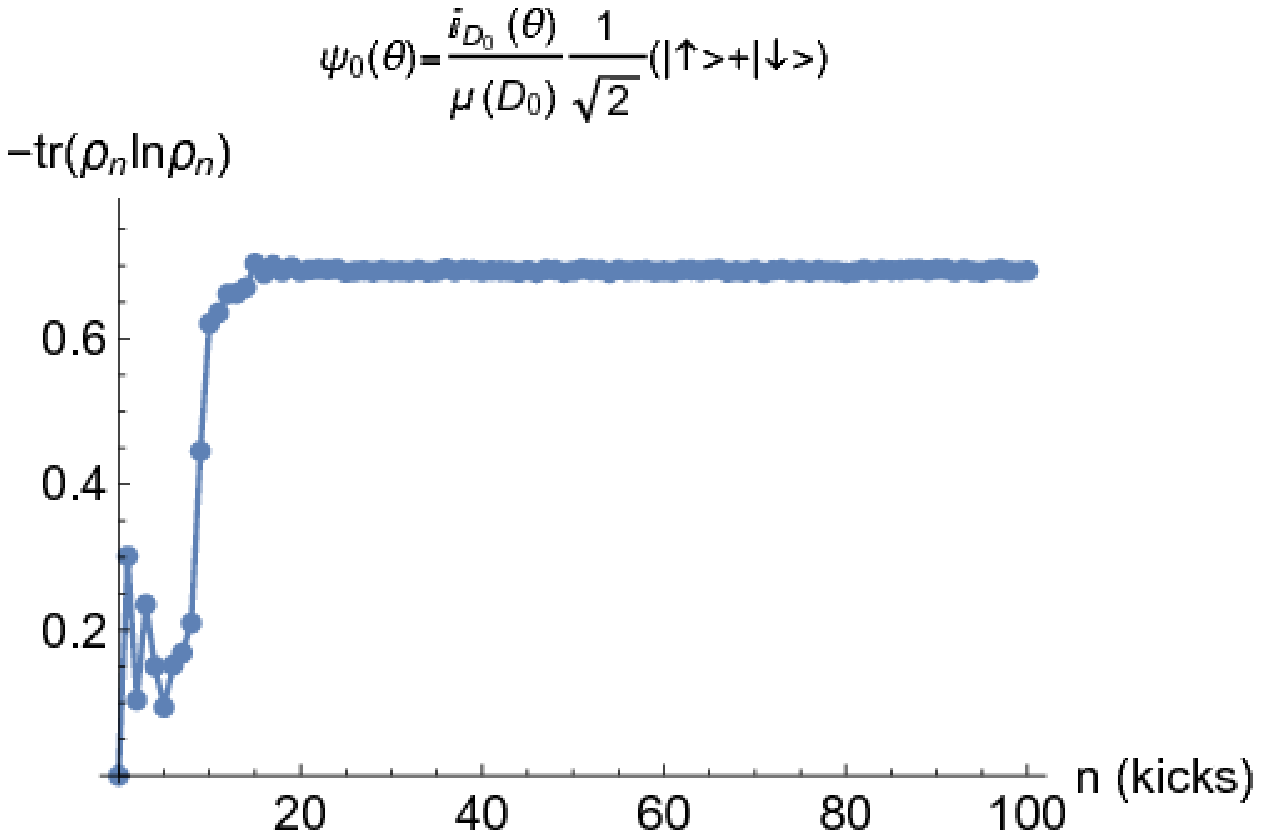}\\
    \includegraphics[width=4.2cm]{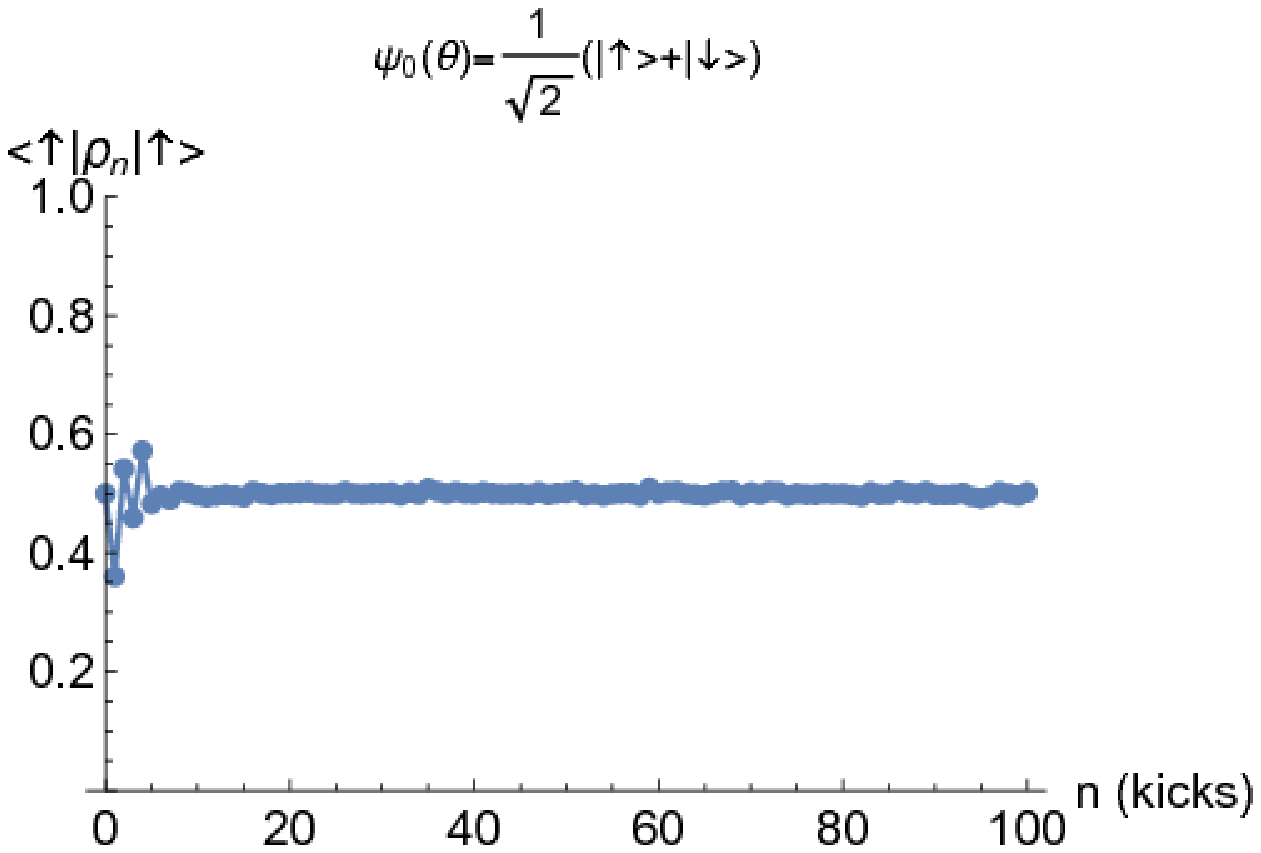} \includegraphics[width=4.2cm]{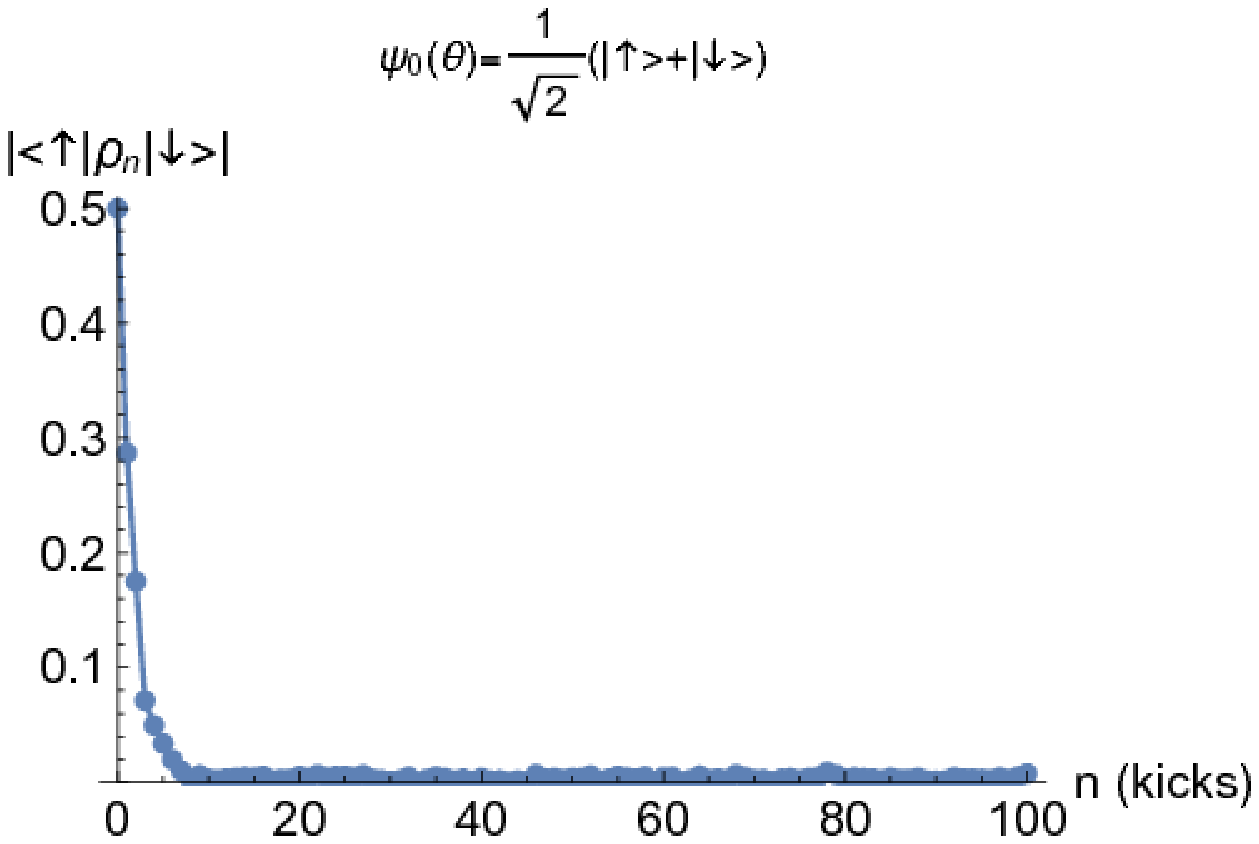} \includegraphics[width=4.2cm]{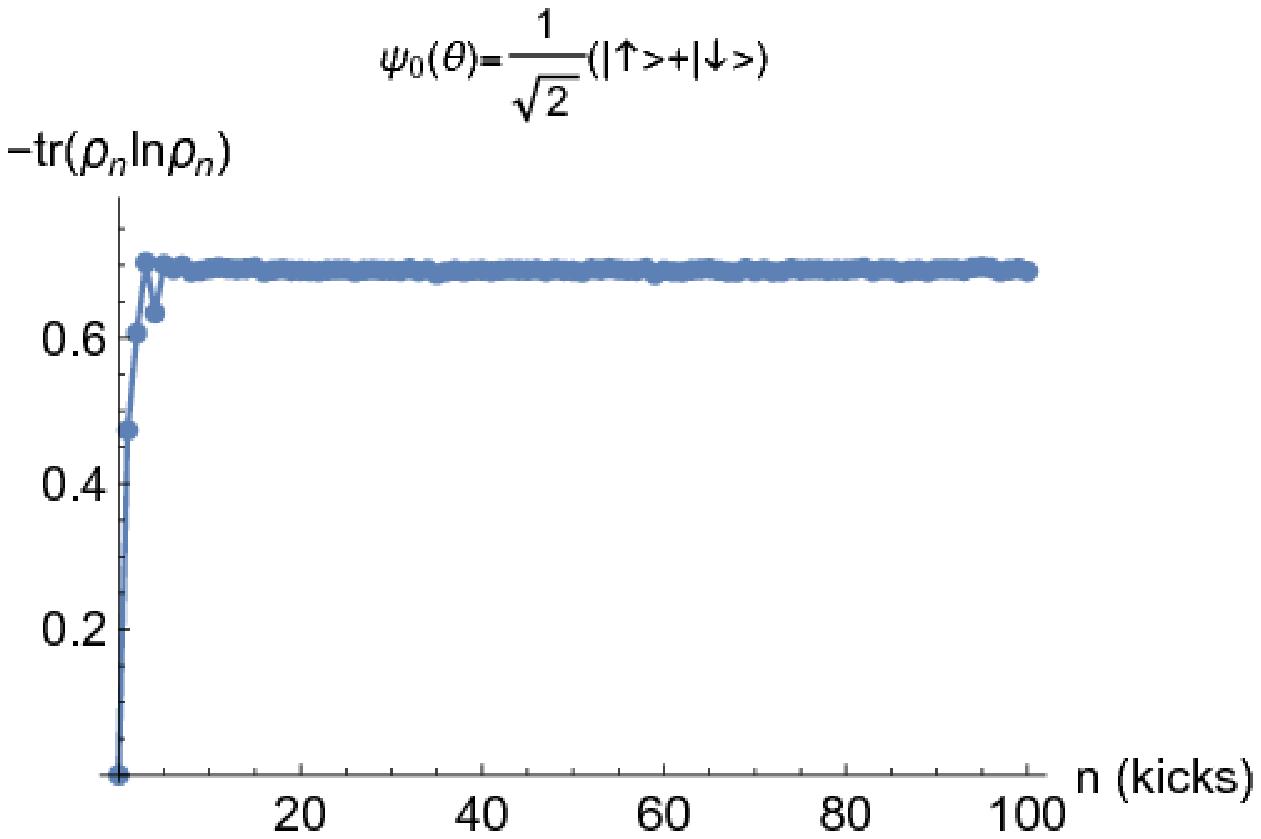}\\
    \includegraphics[width=4.2cm]{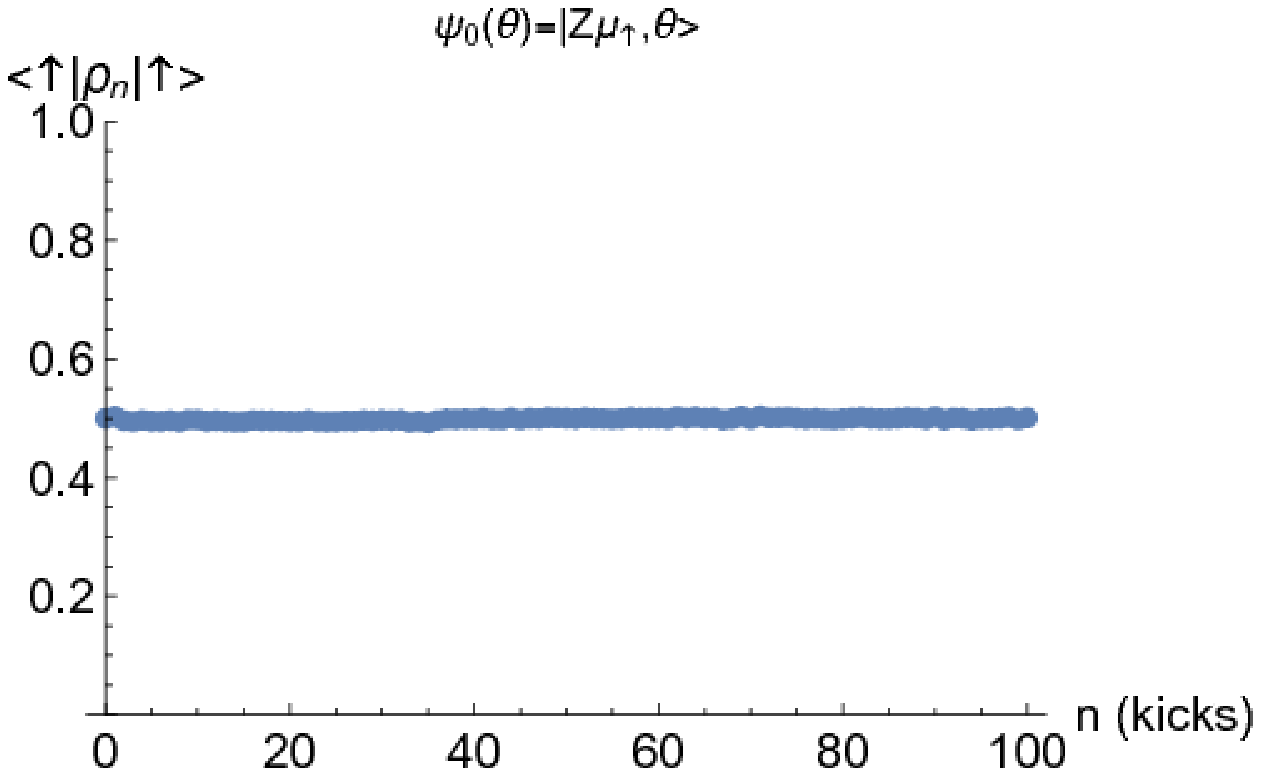} \includegraphics[width=4.2cm]{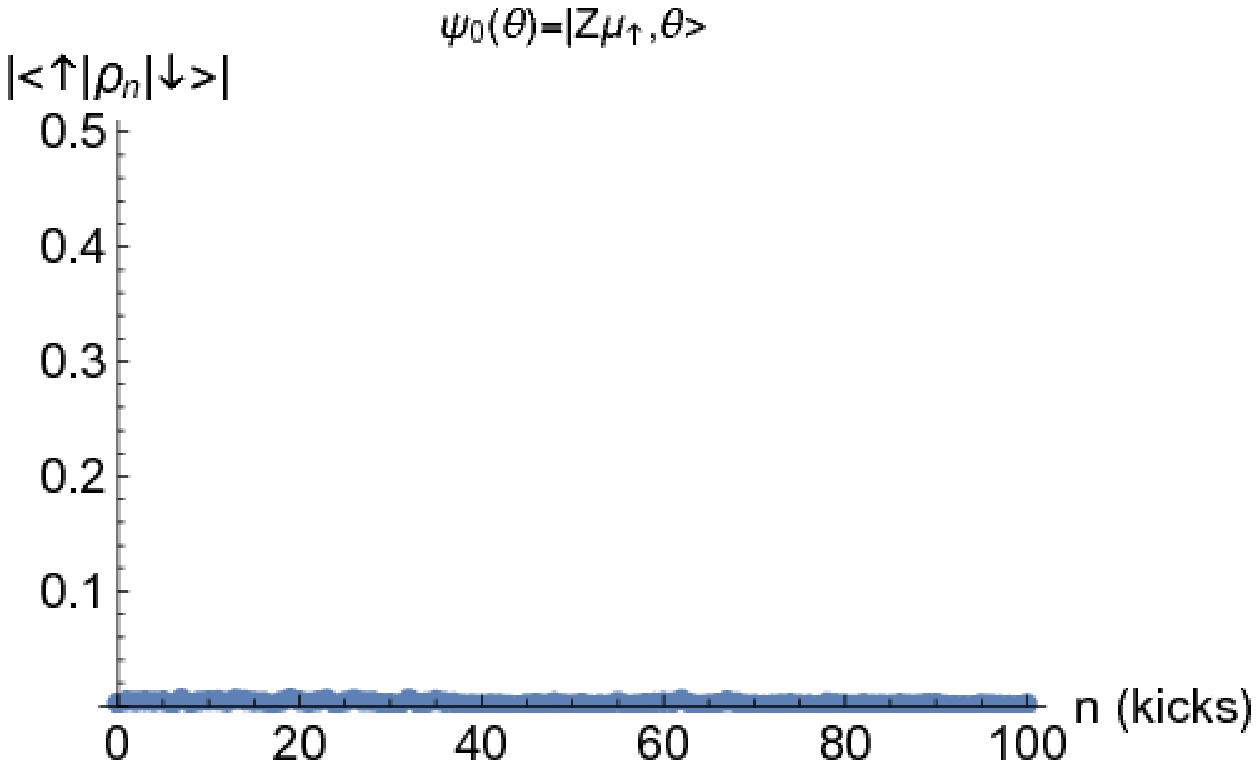} \includegraphics[width=4.2cm]{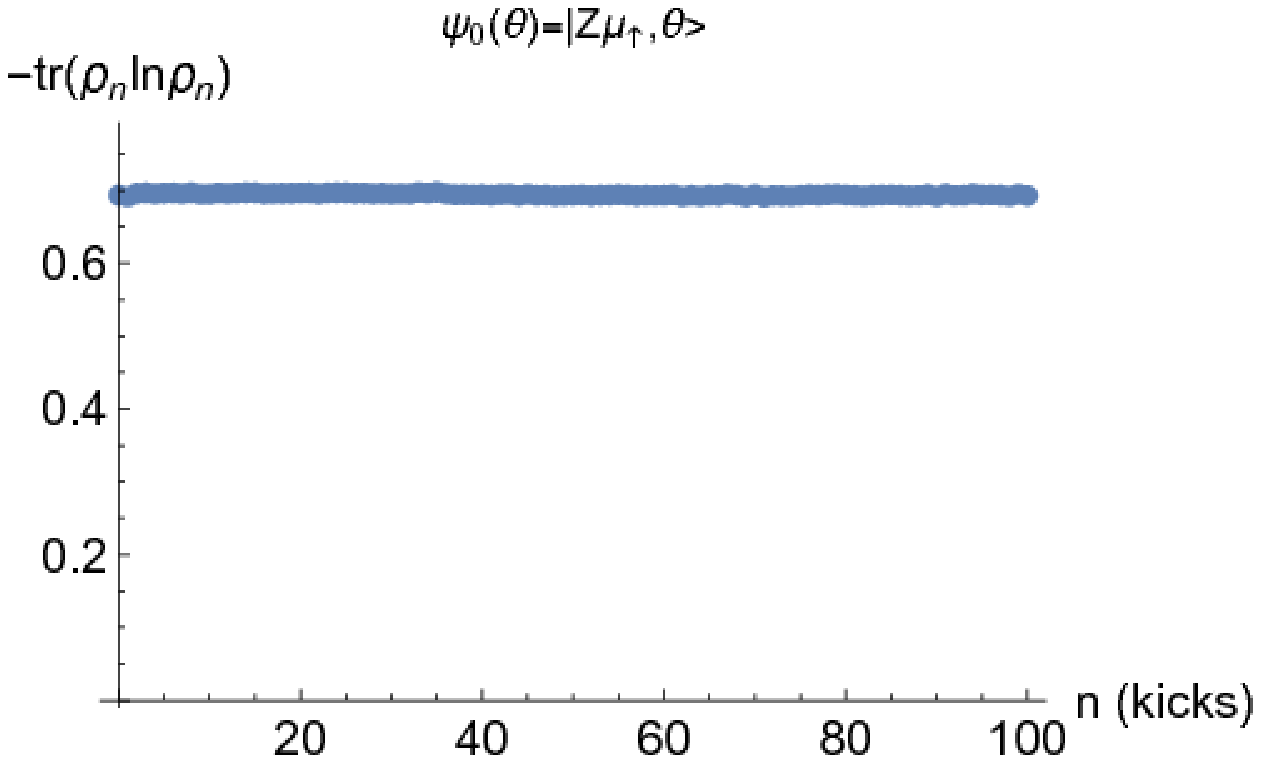}\\
    \caption{\label{dynamicsArnold} Population of the state $|\uparrow \rangle$ (left), coherence (center) and von Neumann entropy (right) of the mixed state for the stroboscopic dynamics of the spin ensemble (with $\frac{\omega_1}{\omega_0} = 2.5$ and $\theta^3 = \frac{\pi}{4}$) where the kick modulation is governed by the Arnold's CAT map. The initial condition is for a small uniform dispersion of the first kicks (first line), a large uniform dispersion (second line) and a fundamental quasienergy state (last line).}
  \end{center}
\end{figure}
We have a decoherence phenomenon for large but also for small initial dispersions of the kicks, in accordance with the chaotic behaviour (and with the sensibility to initial conditions of the flow, for a detailed discussion see \cite{Viennot2,Aubourg}). As expected, the fundamental quasienergy state is a steady state of the quantum system for which the reduced density matrix is the microcanonical density matrix $\left(\small \begin{array}{cc} 1/2 & 0 \\ 0 & 1/2 \end{array} \normalsize \right)$.\\
The representation of the final state for all initial condition is completely similar to figure \ref{quasistateArnold}.

\subsection{The standard map}
\subsubsection{Quasienergy states and SK modes :}
The standard map presents both the behaviours of the two previous examples. We compute the state:
\begin{equation}
  |Z\mu_\uparrow,\theta \rangle = V_\uparrow(\theta)^{-1} V_\uparrow(\epsilon) |Z\mu_\uparrow,\epsilon \rangle
\end{equation}
where $V_\uparrow(\theta) = \frac{1}{N} \sum_{n=0}^{N-1} e^{-\imath n \tilde \chi_\uparrow} U(\varphi^{n-1}(\theta))... U(\theta)$ (with large $N$) and with $\tilde \chi_\uparrow$ and $|Z\mu_\uparrow,0 \rangle$ computed at the fixed point $0$. In the chaotic sea, this state is a fundamental quasienergy state, whereas it is just a SK mode in the islands of stability. It is represented figure \ref{quasistateSTD}.
\begin{figure}
  \begin{center}
    \includegraphics[width=5.5cm]{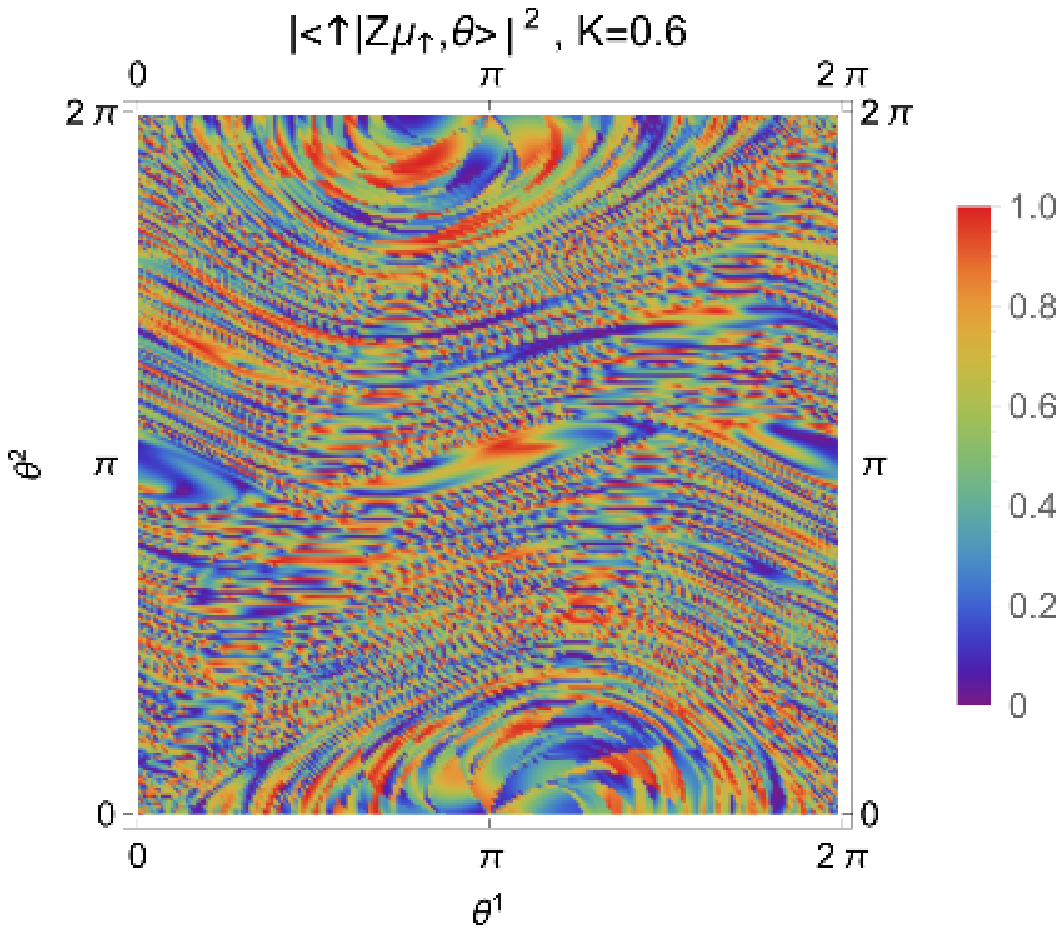} \includegraphics[width=5.5cm]{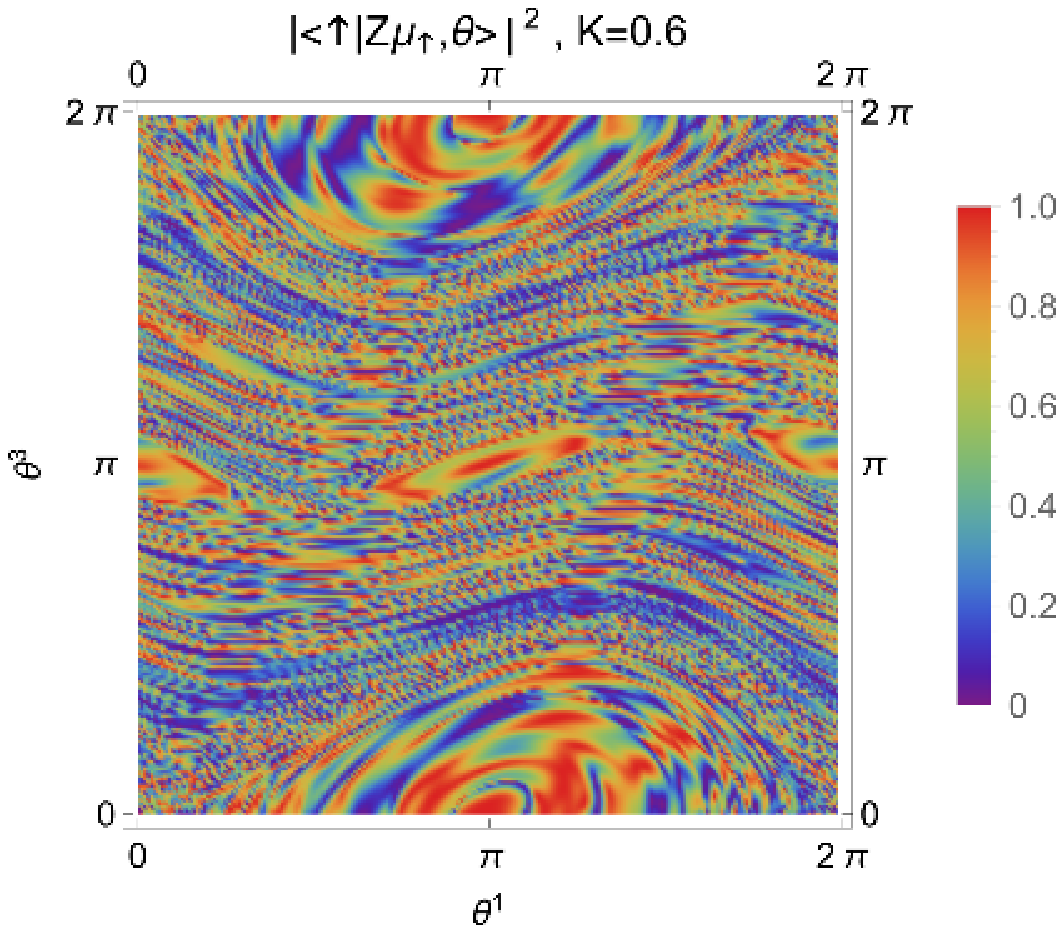}\\
    \includegraphics[width=5.5cm]{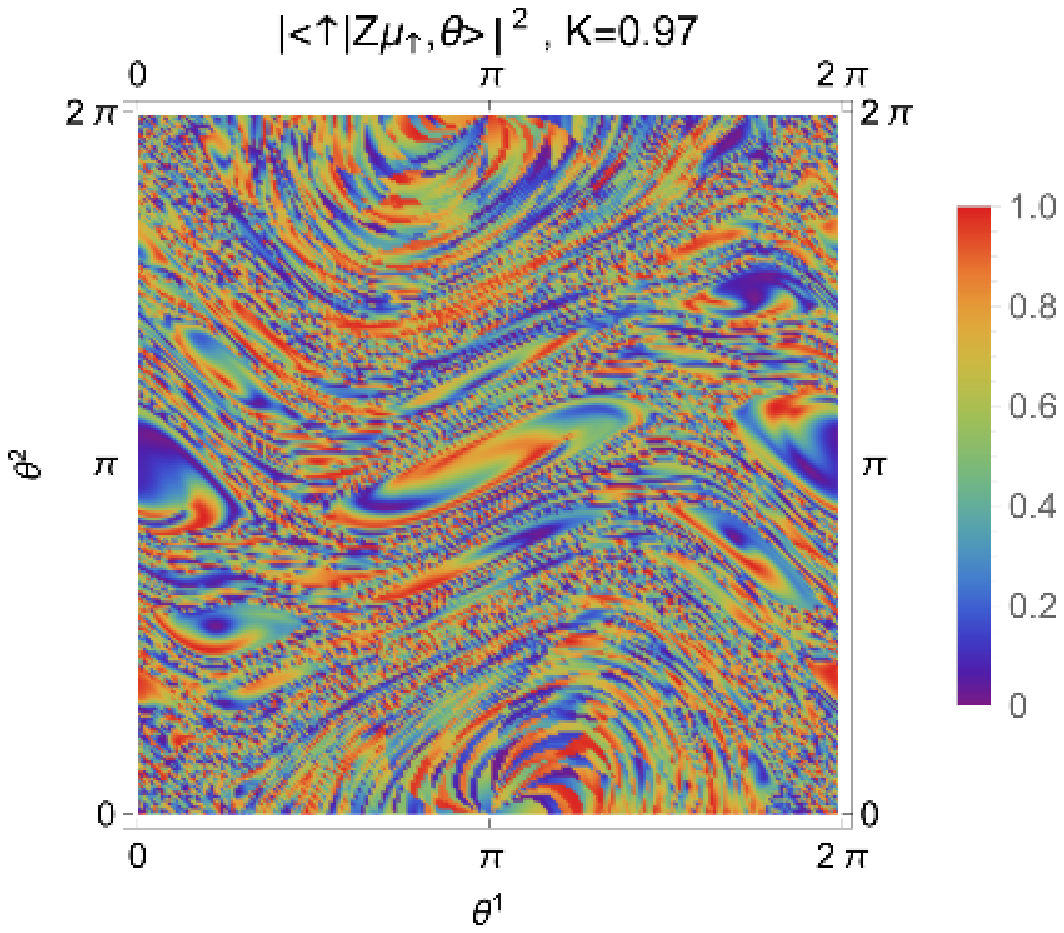} \includegraphics[width=5.5cm]{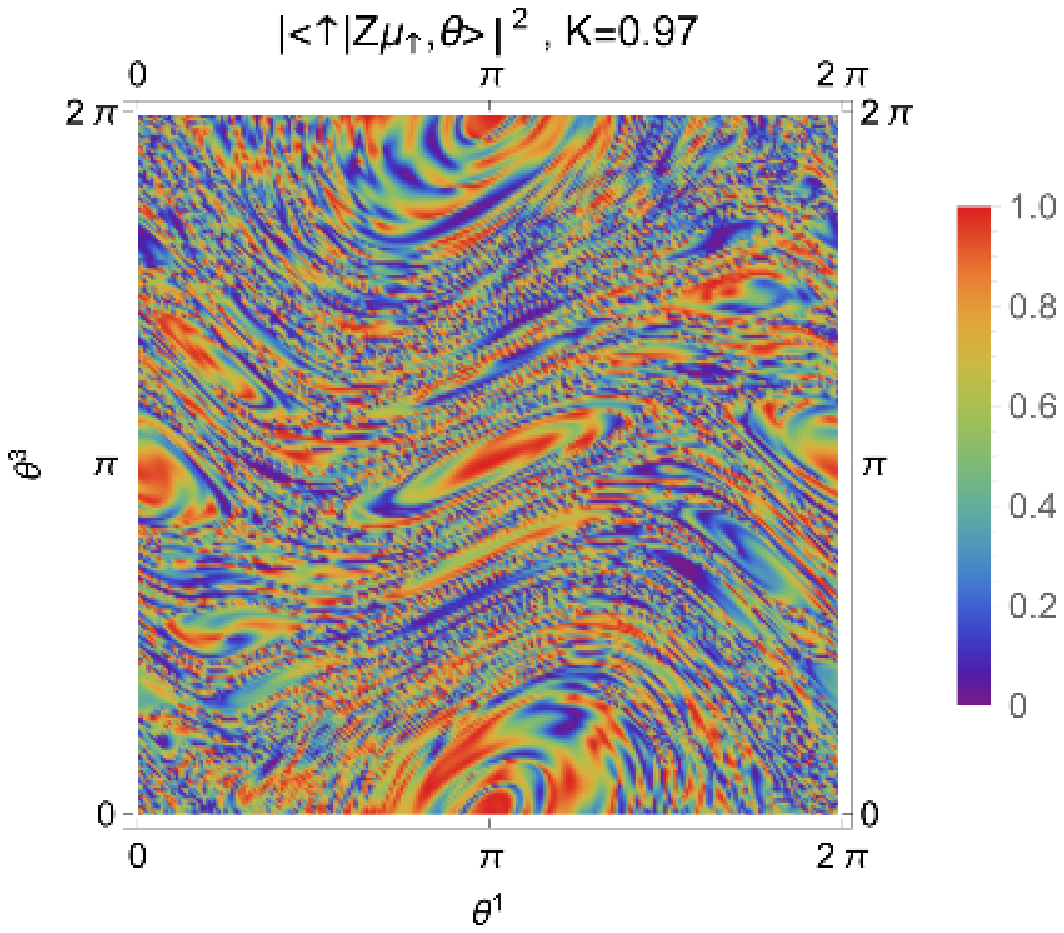}\\
    \includegraphics[width=5.5cm]{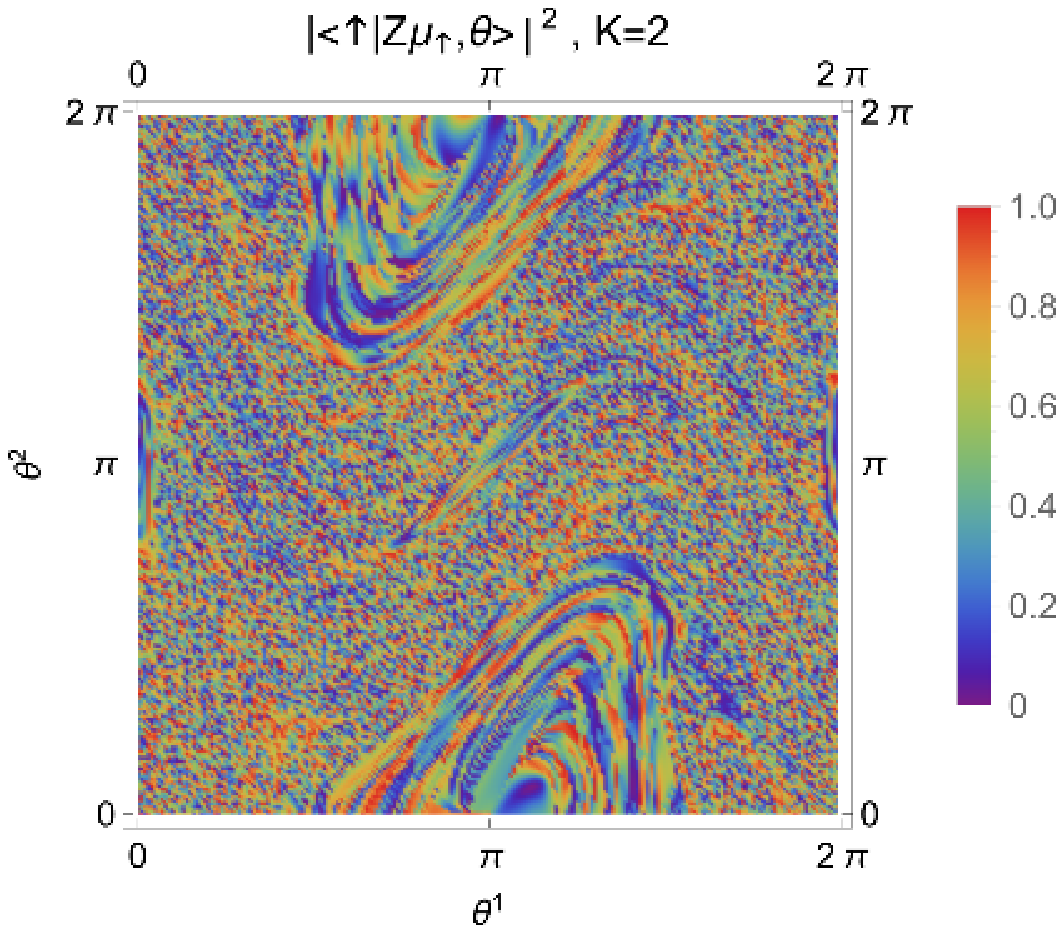} \includegraphics[width=5.5cm]{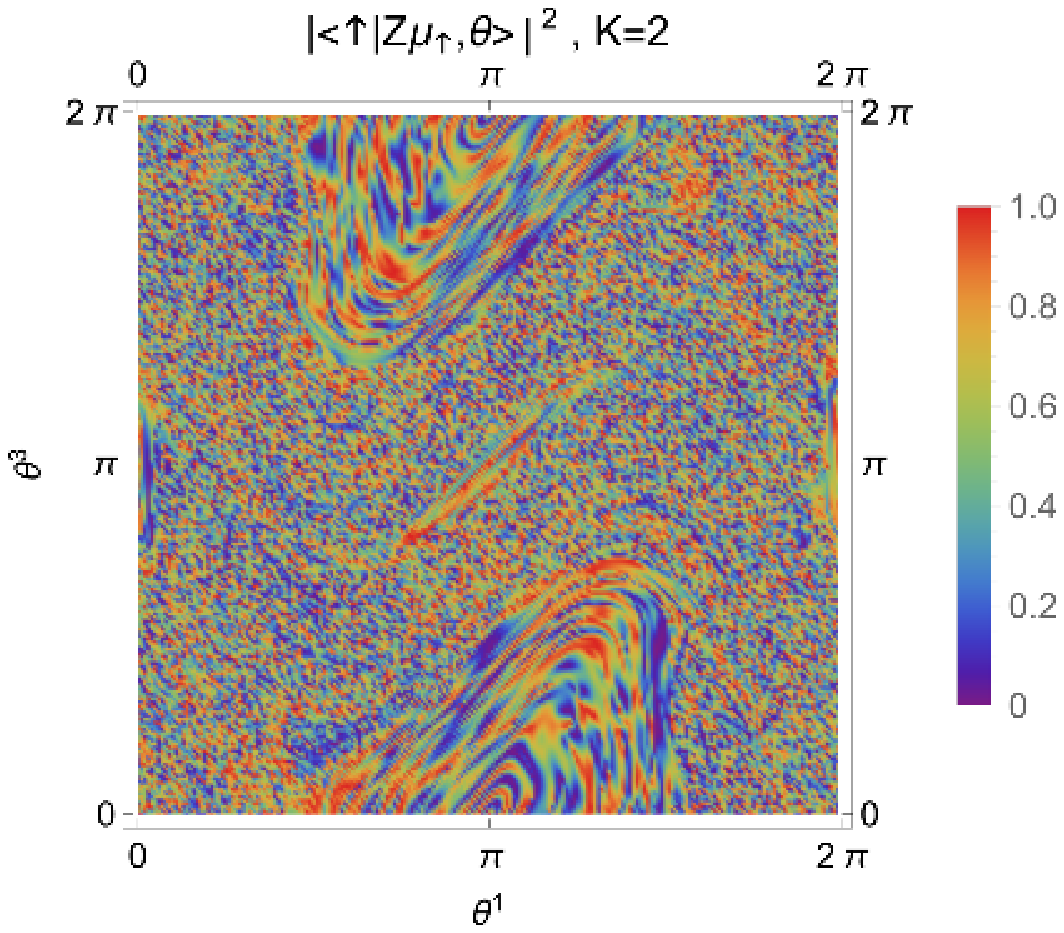}
    \caption{\label{quasistateSTD} Occupation probability of the state $|\uparrow\rangle$ with respect to $\theta$ for the fundamental quasienergy state $|Z\mu_{\uparrow},\theta \rangle$ of the standard map, with $\frac{\omega_1}{\omega_0} = 2.5$ and $\theta^3 = \frac{\pi}{4}$ (left) or $\theta^2=0$ (right), for $K=0.6$ (top), $K=0.97$ (middle) and $K=2$ (bottom).}
  \end{center}
\end{figure}
We see by comparison with the orbits of the classical flow (figure \ref{classicalflow}), that the structure of the islands of stability embedded into the chaotic sea is clearly apparent in the fundamental quasienergy state of the spin ensemble. The chaotic sea appears as an uncoherent region for the probability distribution associated with the quasienergy state, whereas the islands of stability appear as more coherent regions (with some interference structures as for the cyclic map).

\subsubsection{Dynamics :}
As for the previous examples we consider the following initial conditions:
\begin{itemize}
\item $\psi_0(\theta) = \frac{\mathbb{I}_{D_0}(\theta)}{\mu(D_0)} \frac{1}{\sqrt 2}(|\uparrow \rangle + |\downarrow \rangle)$ corresponding to a highly coherent ensemble of spins with a small uniform dispersion of the first kicks.
\item $\psi_0(\theta) =  \frac{1}{\sqrt 2}(|\uparrow \rangle + |\downarrow \rangle)$ which corresponds to a large uniform dispersion (on the whole of $\mathbb T^2$) of the first kicks.
\item $\psi_0(\theta) = |Z\mu_{\uparrow},\theta \rangle$ a fundamental quasienergy state.
\end{itemize}
The dynamics are represented figures \ref{dynamicsSTD06} and \ref{dynamicsSTD2}.
\begin{figure}
  \begin{center}
    \includegraphics[width=4.2cm]{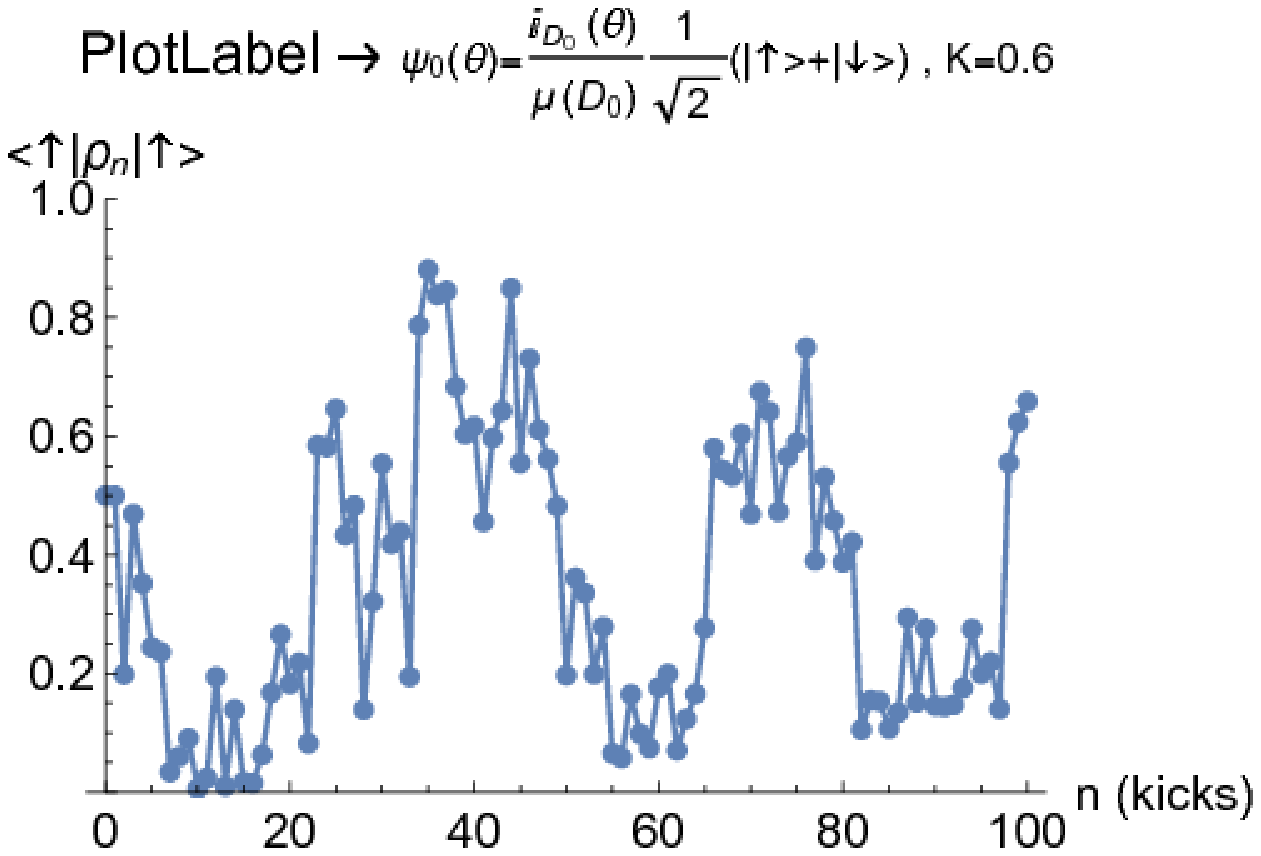} \includegraphics[width=4.2cm]{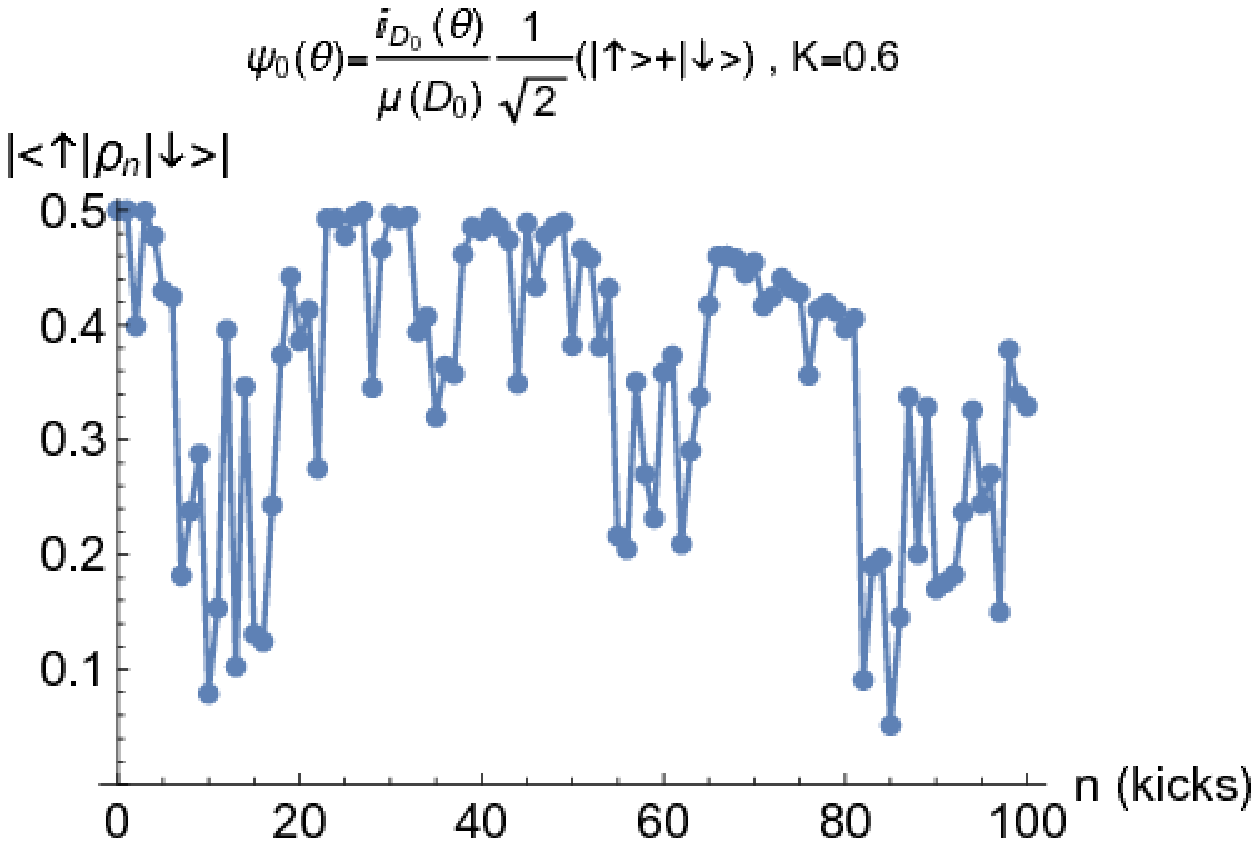} \includegraphics[width=4.2cm]{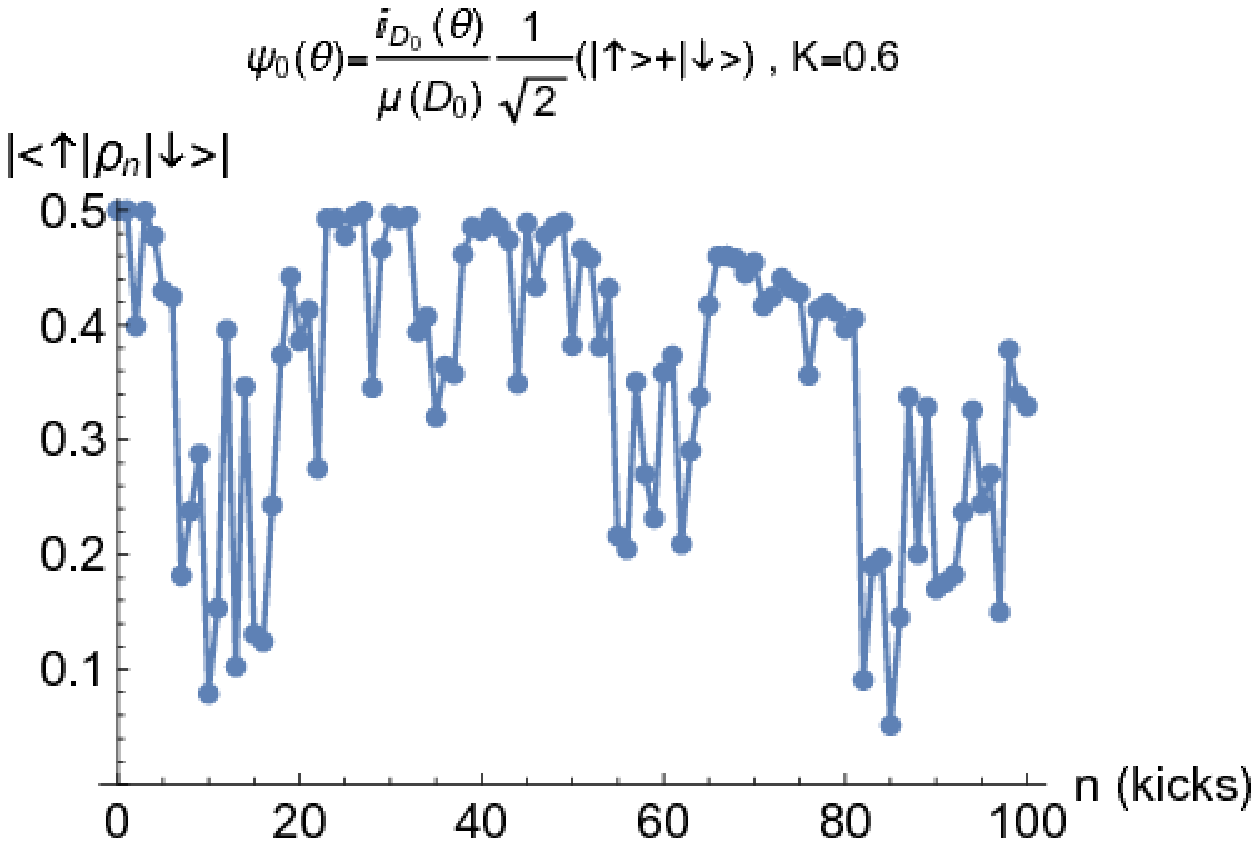}\\
    \includegraphics[width=4.2cm]{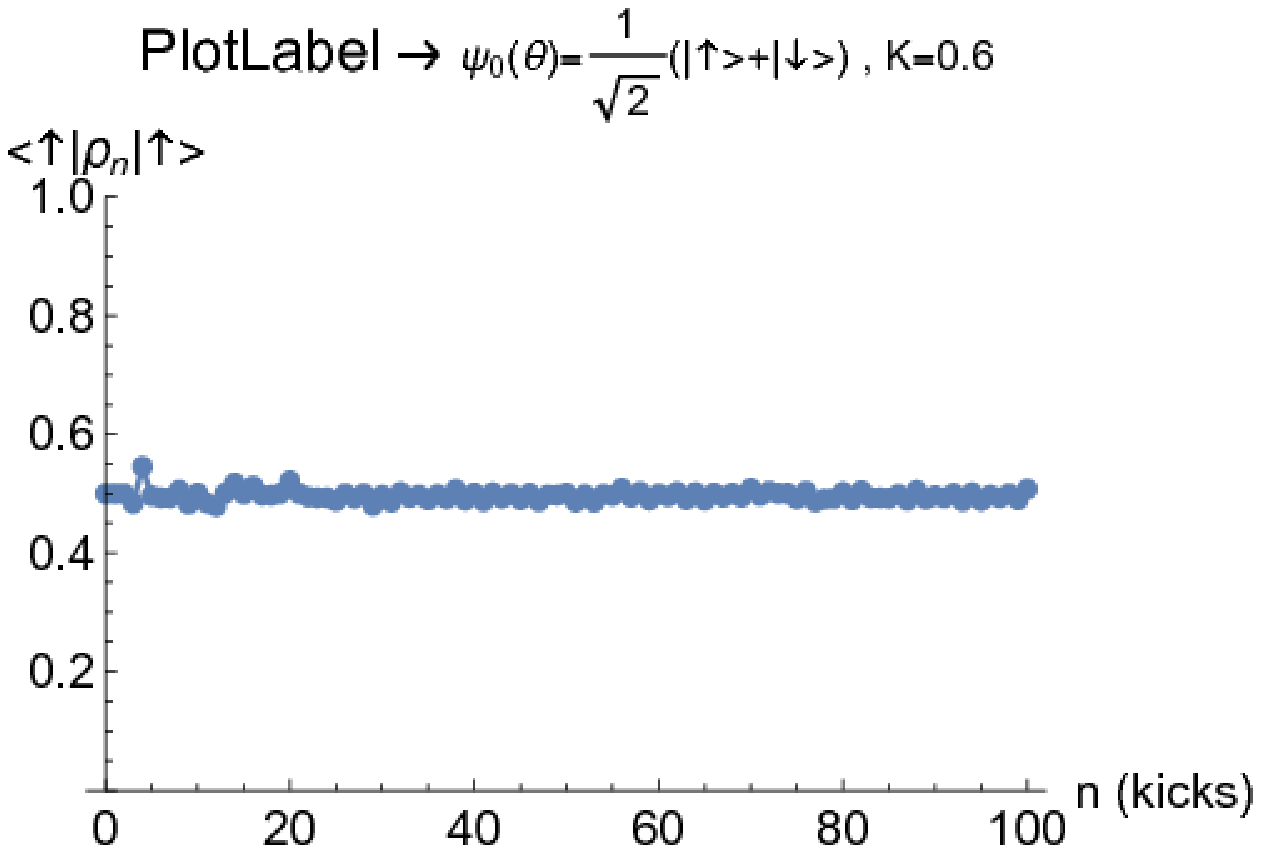} \includegraphics[width=4.2cm]{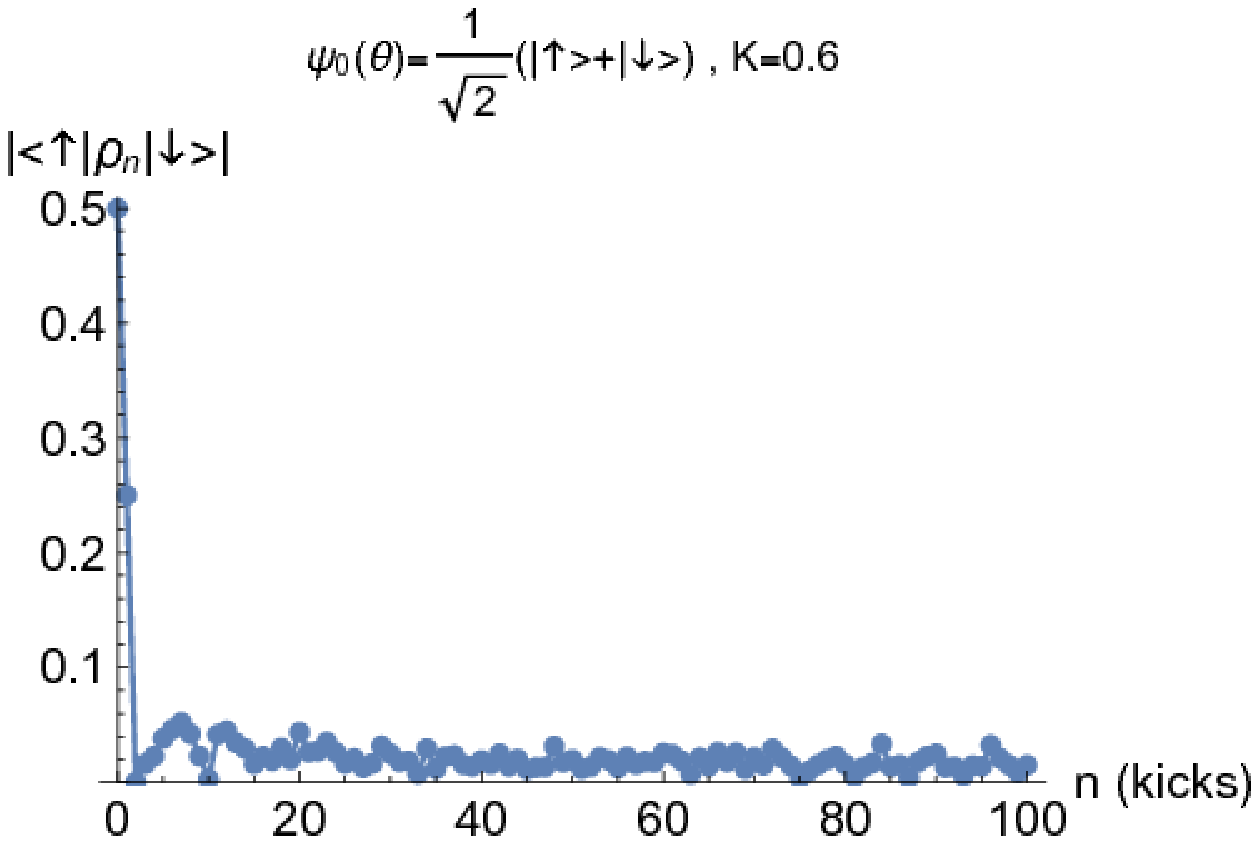} \includegraphics[width=4.2cm]{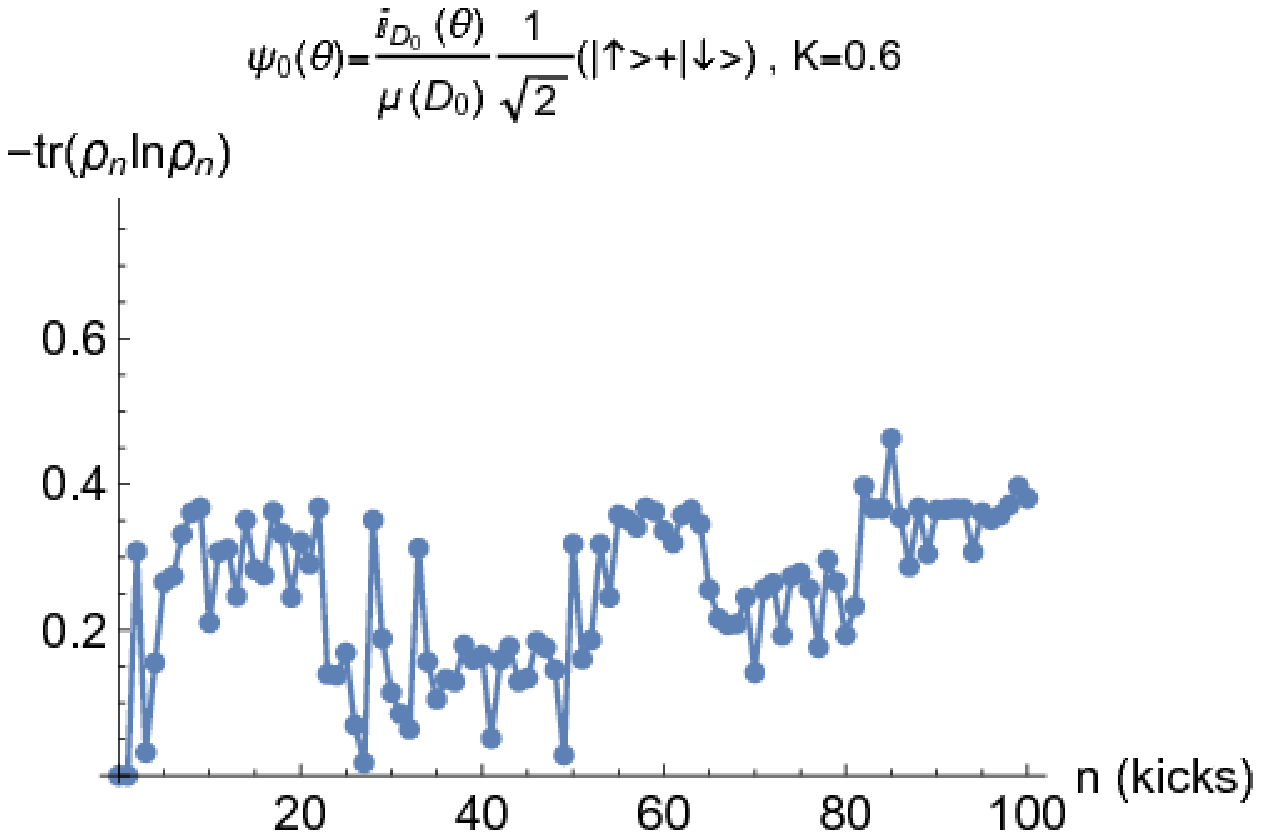}\\
    \includegraphics[width=4.2cm]{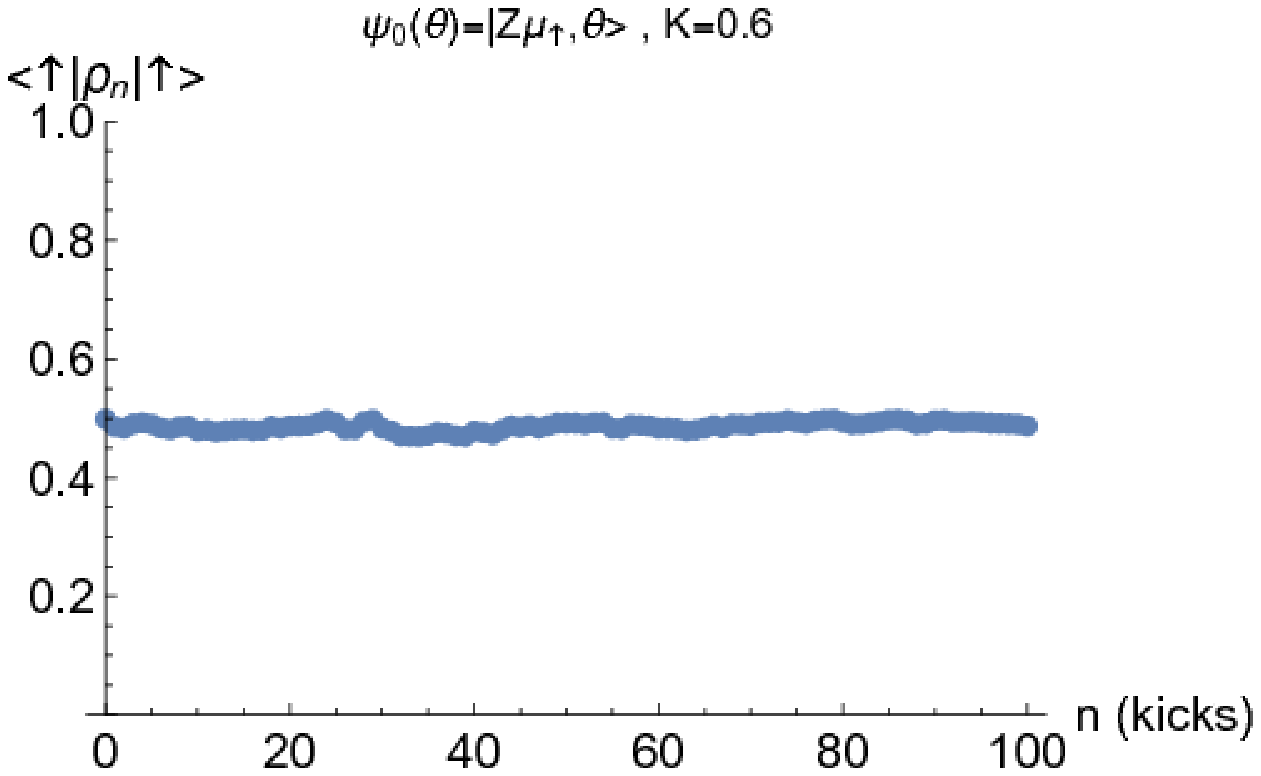} \includegraphics[width=4.2cm]{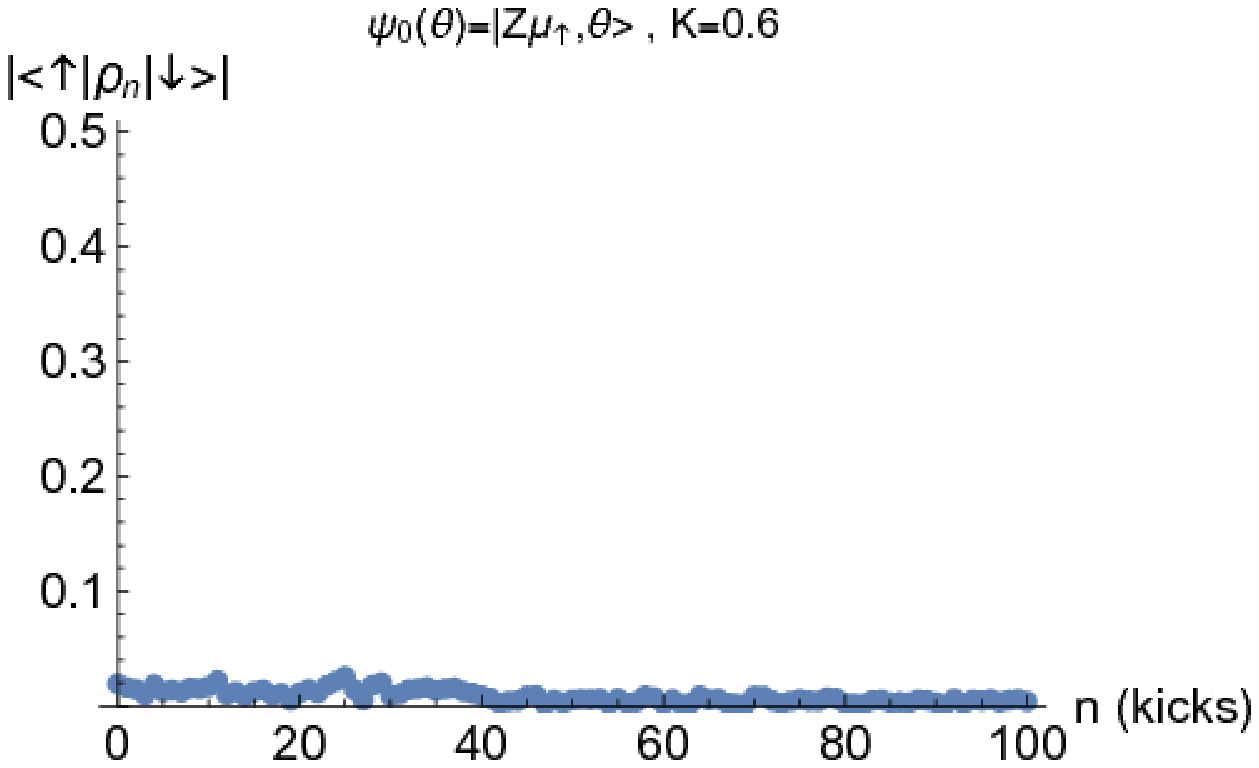} \includegraphics[width=4.2cm]{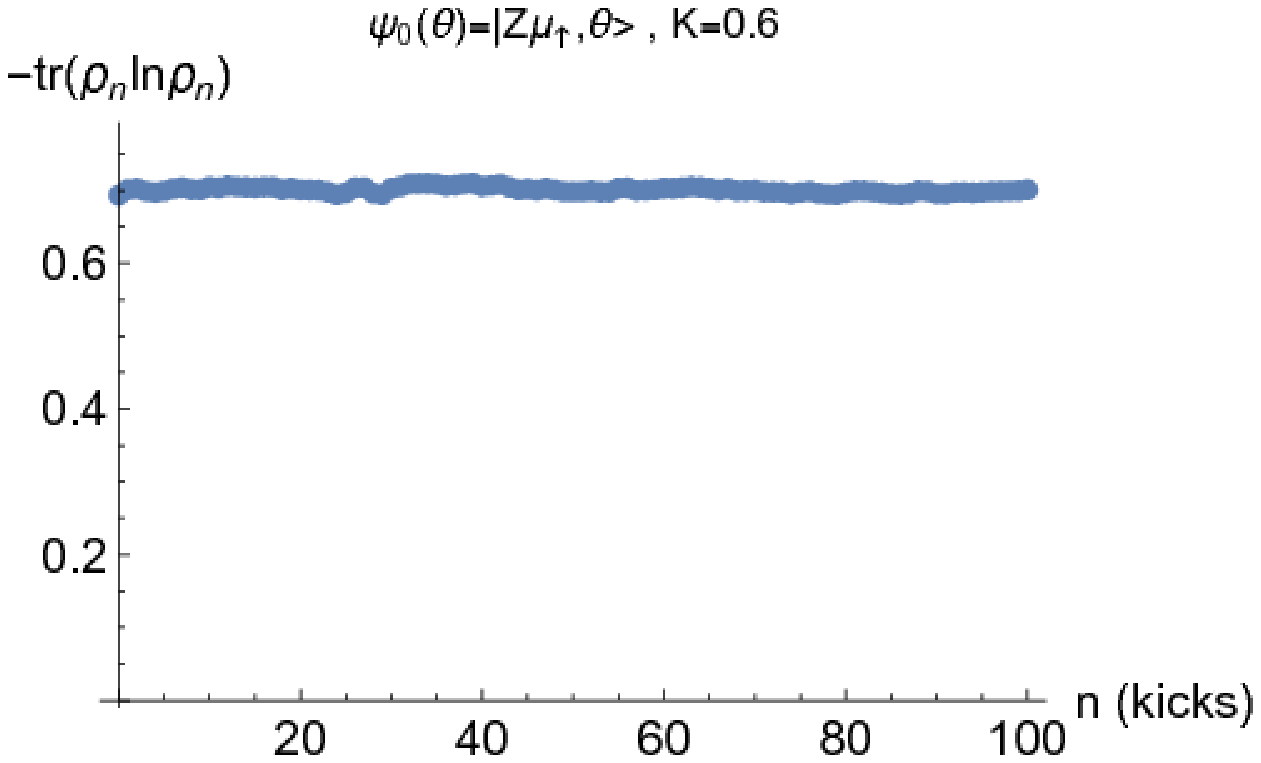}\\
    \caption{\label{dynamicsSTD06} Population of the state $|\uparrow \rangle$ (left), coherence (center) and von Neumann entropy (right) of the mixed state for the stroboscopic dynamics of the spin ensemble (with $\frac{\omega_1}{\omega_0} = 2.5$ and $\theta^2 = 0$) where the kick modulation is governed by the standard map with $K=0.6$ (barely chaotic). The initial condition is for a small uniform dispersion of the first kicks (first line), a large uniform dispersion (second line) and a fundamental quasienergy state (last line).}
  \end{center}
\end{figure}
\begin{figure}
  \begin{center}
    \includegraphics[width=4.2cm]{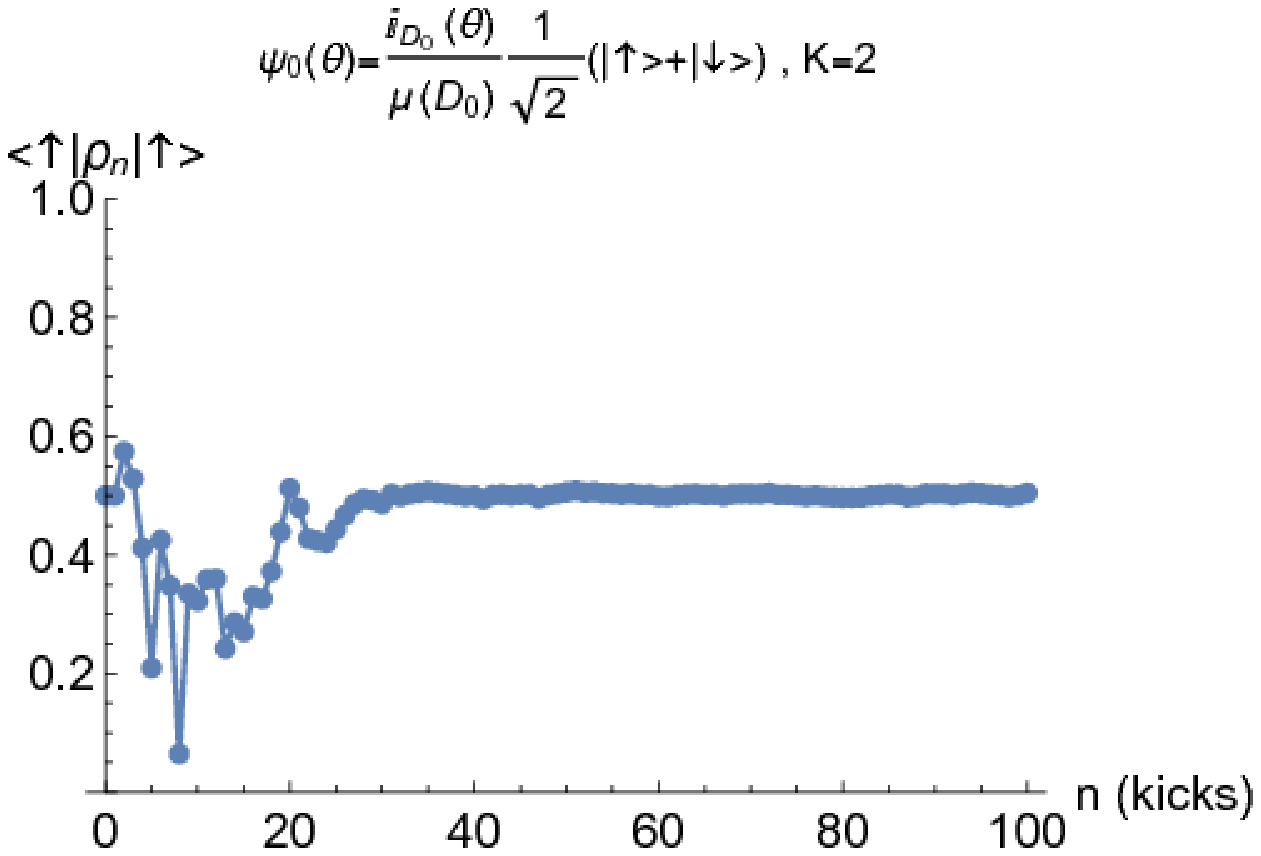} \includegraphics[width=4.2cm]{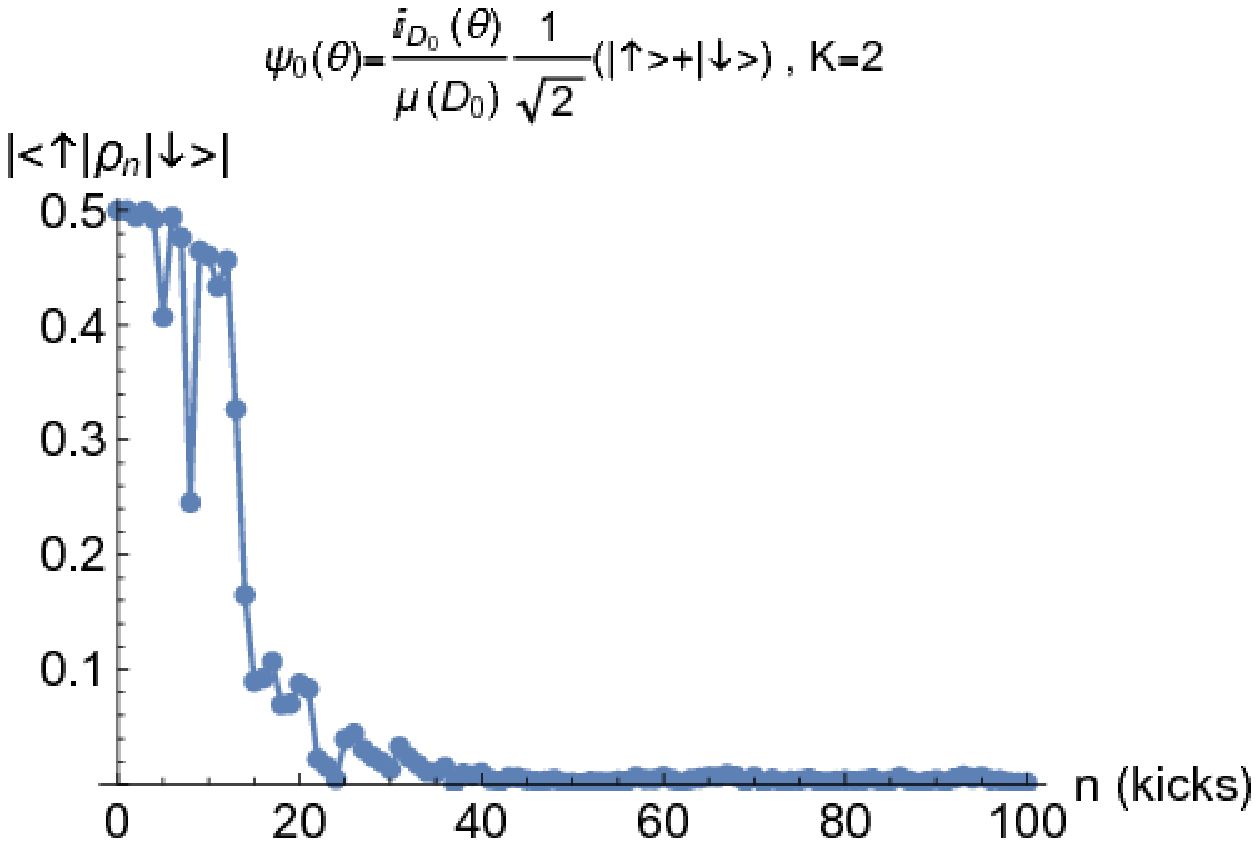} \includegraphics[width=4.2cm]{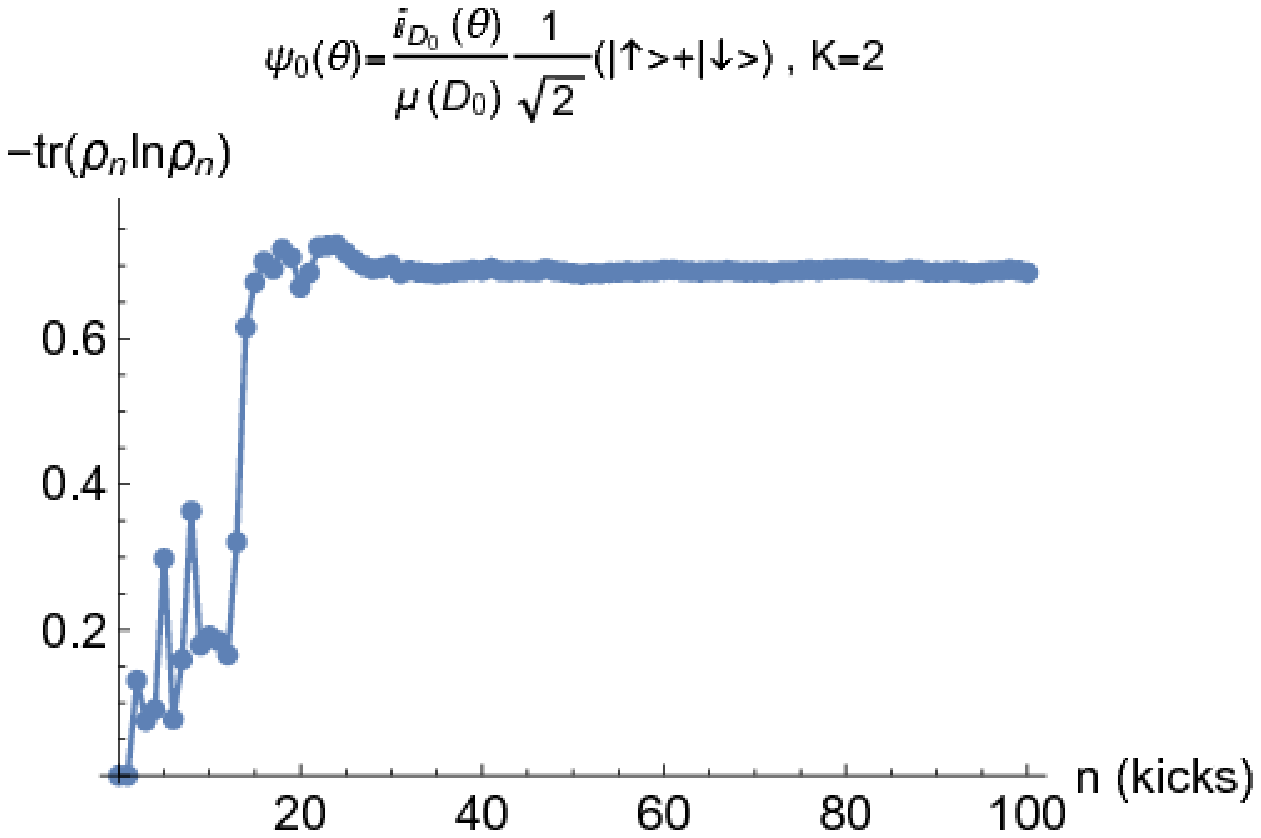}\\
    \includegraphics[width=4.2cm]{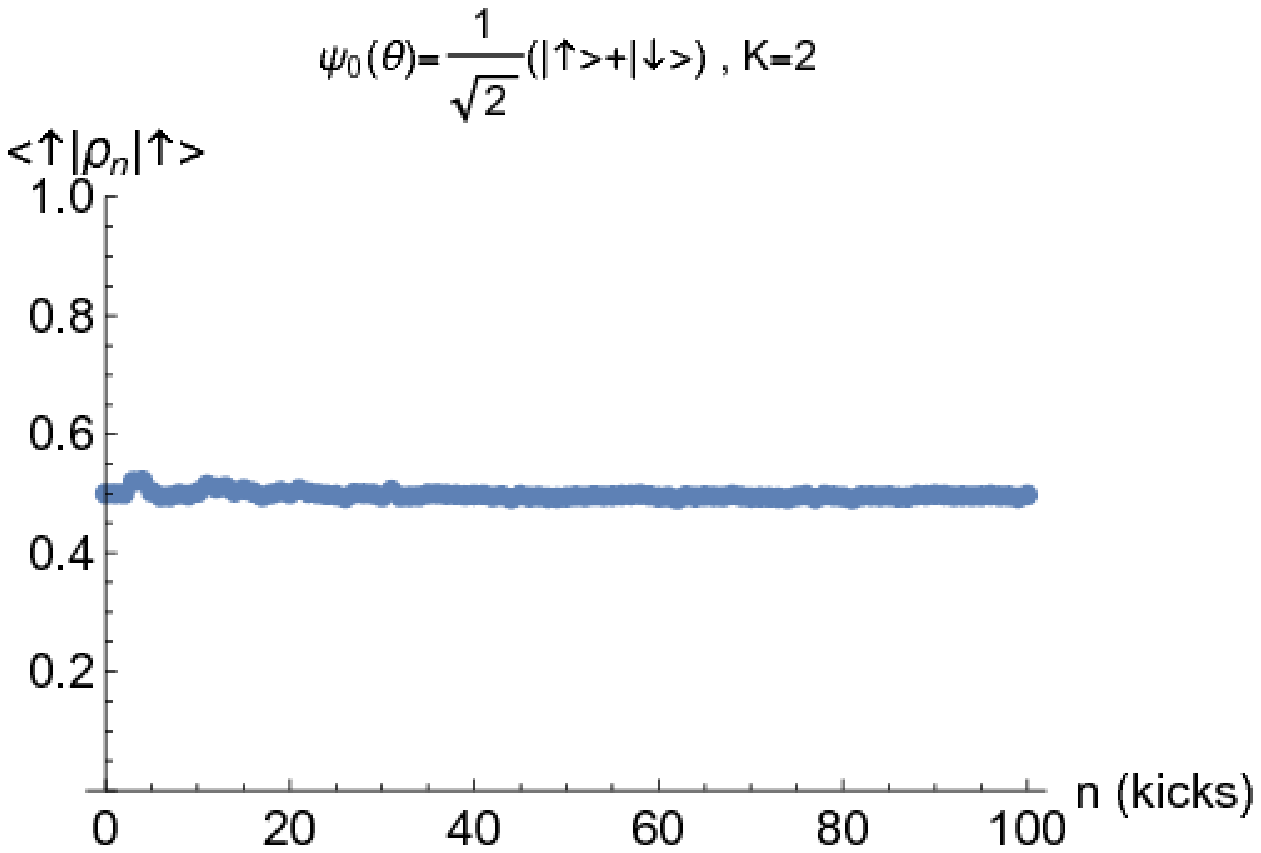} \includegraphics[width=4.2cm]{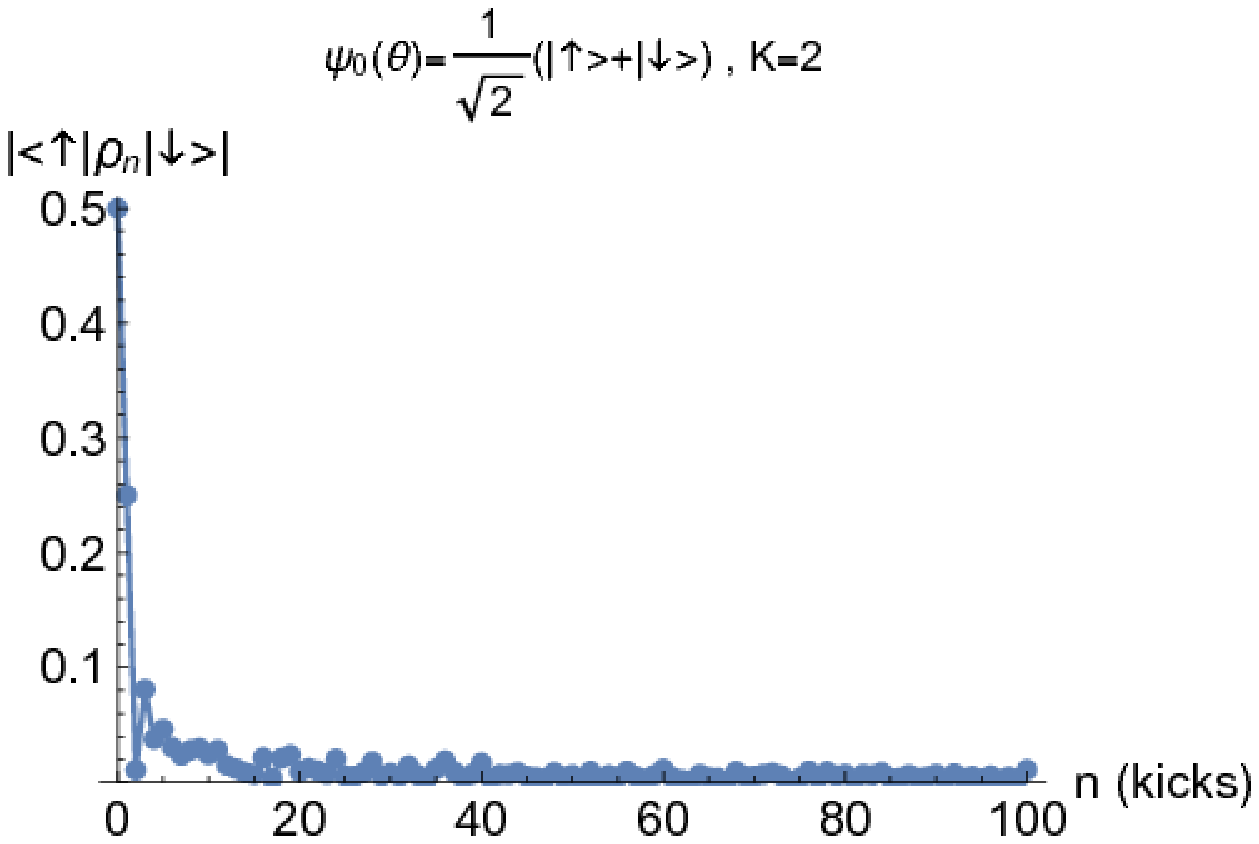} \includegraphics[width=4.2cm]{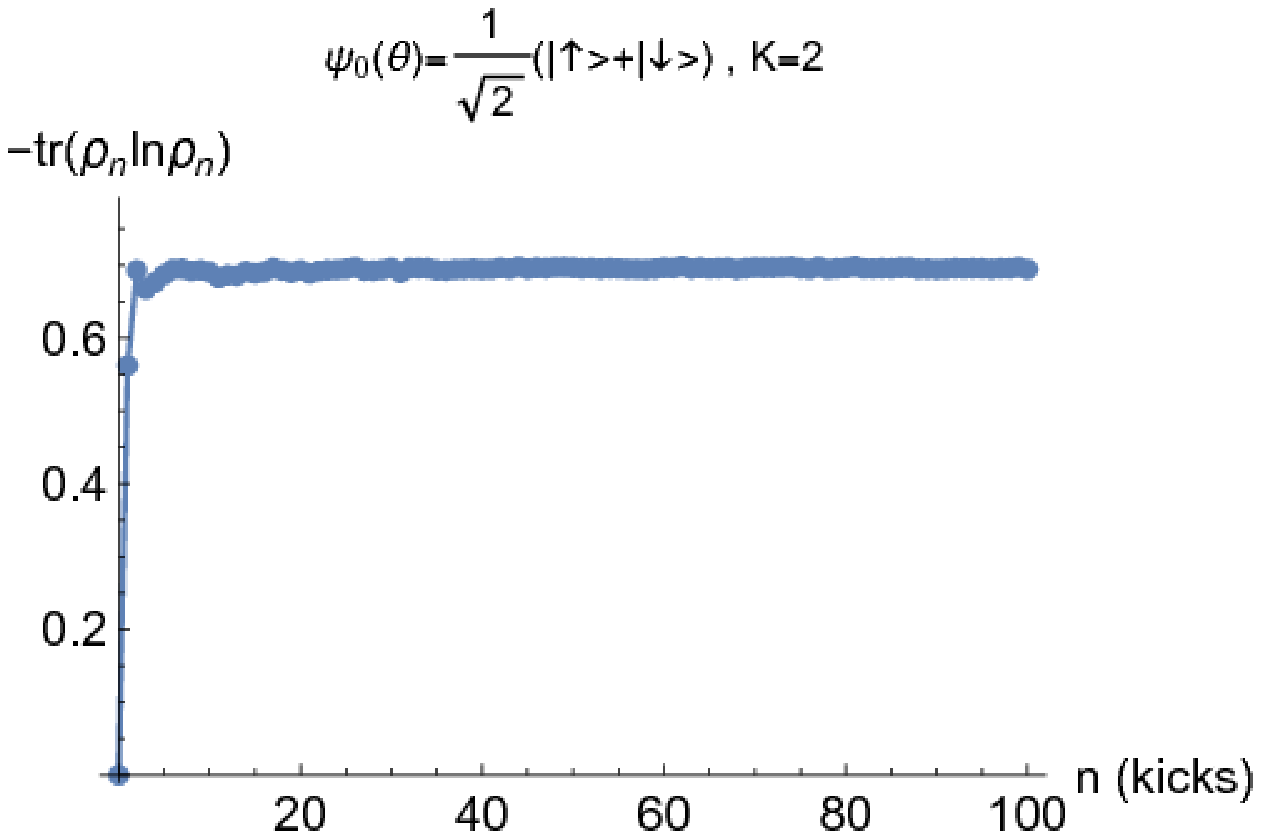}\\
    \includegraphics[width=4.2cm]{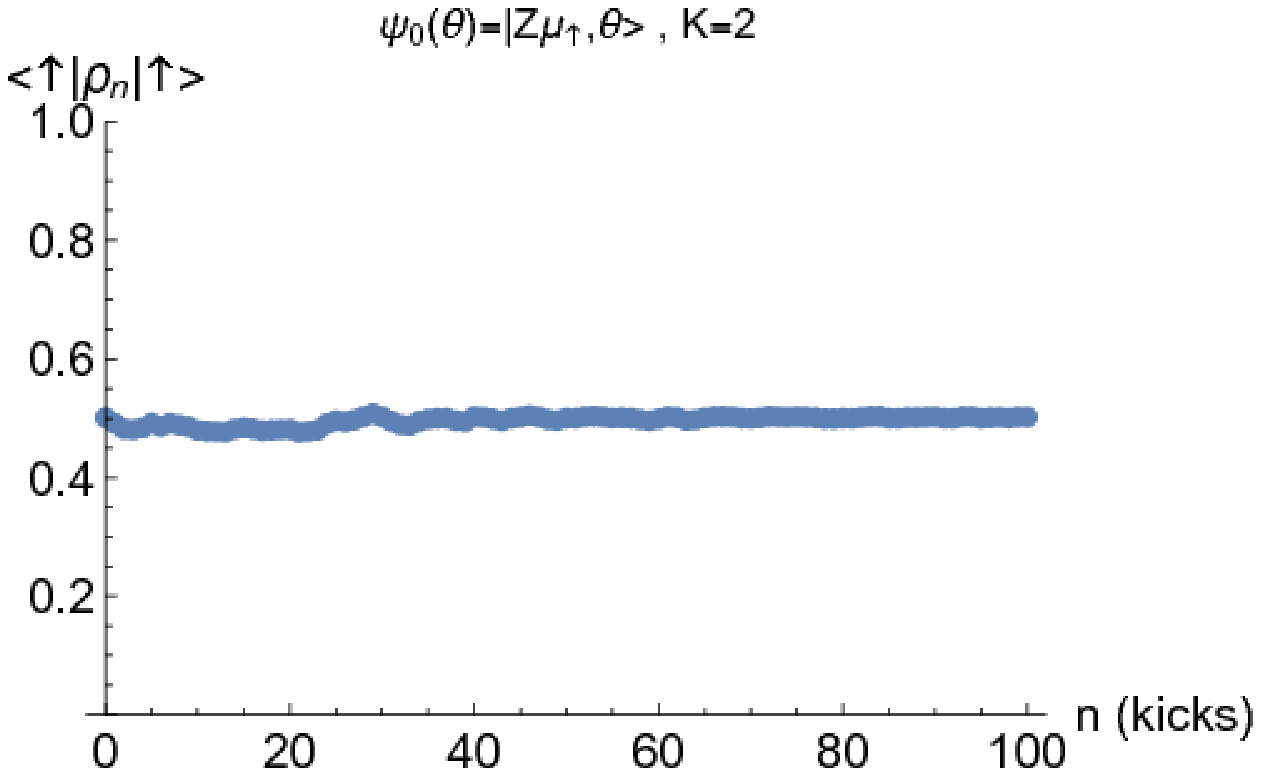} \includegraphics[width=4.2cm]{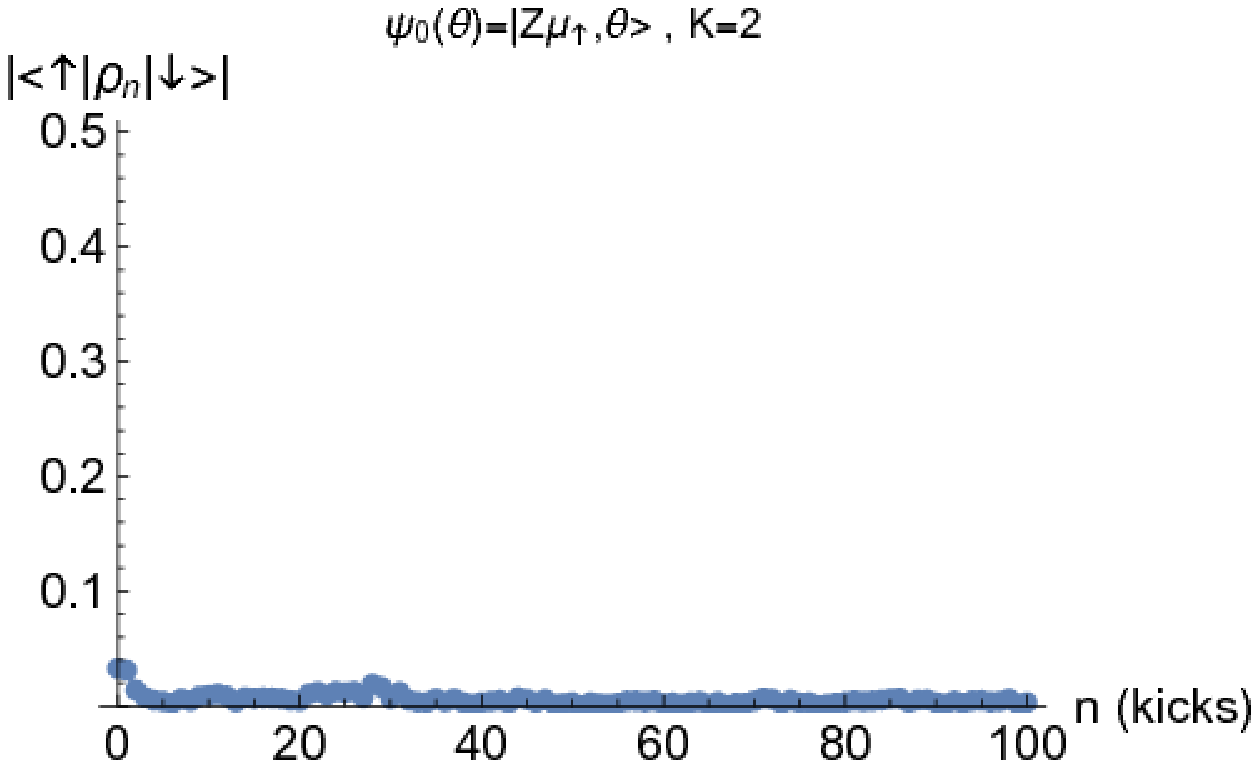} \includegraphics[width=4.2cm]{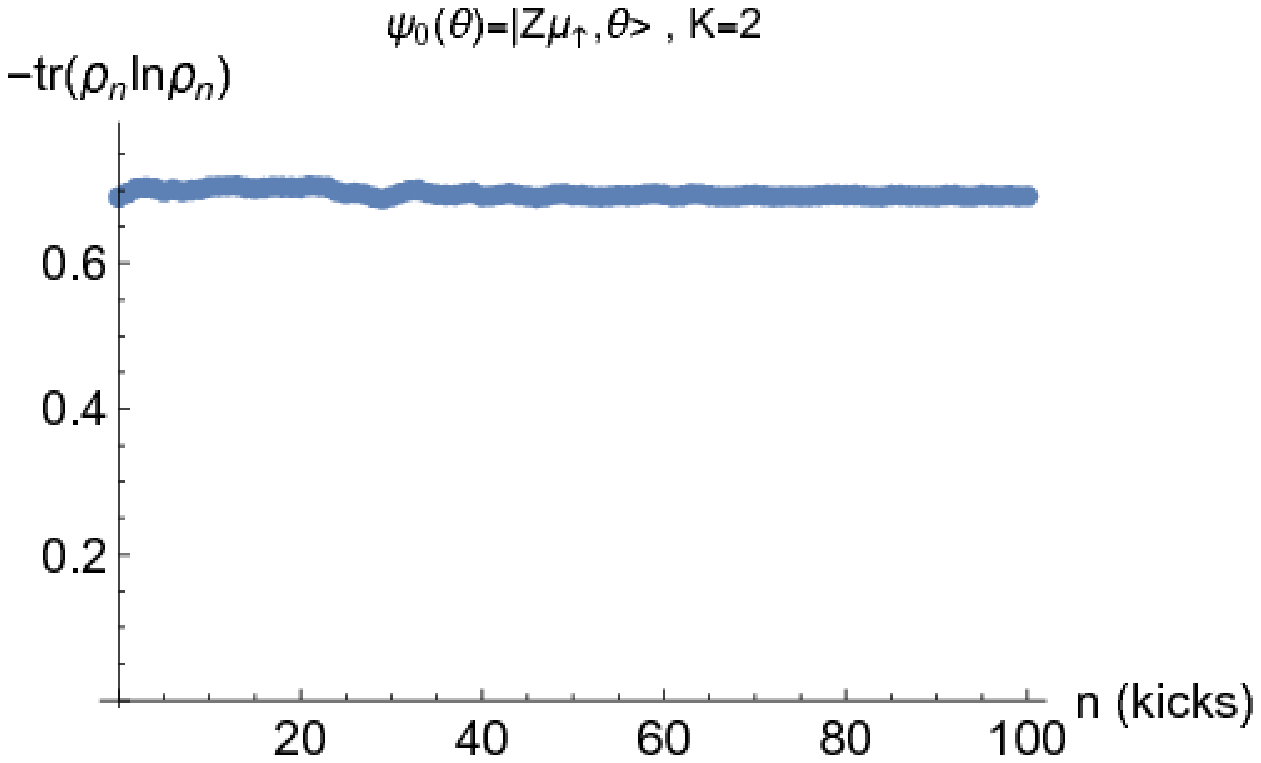}\\
    \caption{\label{dynamicsSTD2} Population of the state $|\uparrow \rangle$ (left), coherence (center) and von Neumann entropy (right) of the mixed state for the stroboscopic dynamics of the spin ensemble (with $\frac{\omega_1}{\omega_0} = 2.5$ and $\theta^2 = 0$) where the kick modulation is governed by the standard map with $K=2$ (highly chaotic). The initial condition is for a small uniform dispersion of the first kicks (first line), a large uniform dispersion (second line) and a fundamental quasienergy state (last line).}
  \end{center}
\end{figure}
For $K=0.6$ (barely chaotic), we do not see decoherence phenomenon for the small initial dispersion because its center is in a region of stability. In contrast, for $K=2$ (strongly chaotic), we have a high decoherence phenomenon. As expected, the fundamental quasienergy state is a steady state.\\
The final states for the initial uniform distribution is represented figure \ref{psiSTD}.
\begin{figure}
  \begin{center}
    \includegraphics[width=5.5cm]{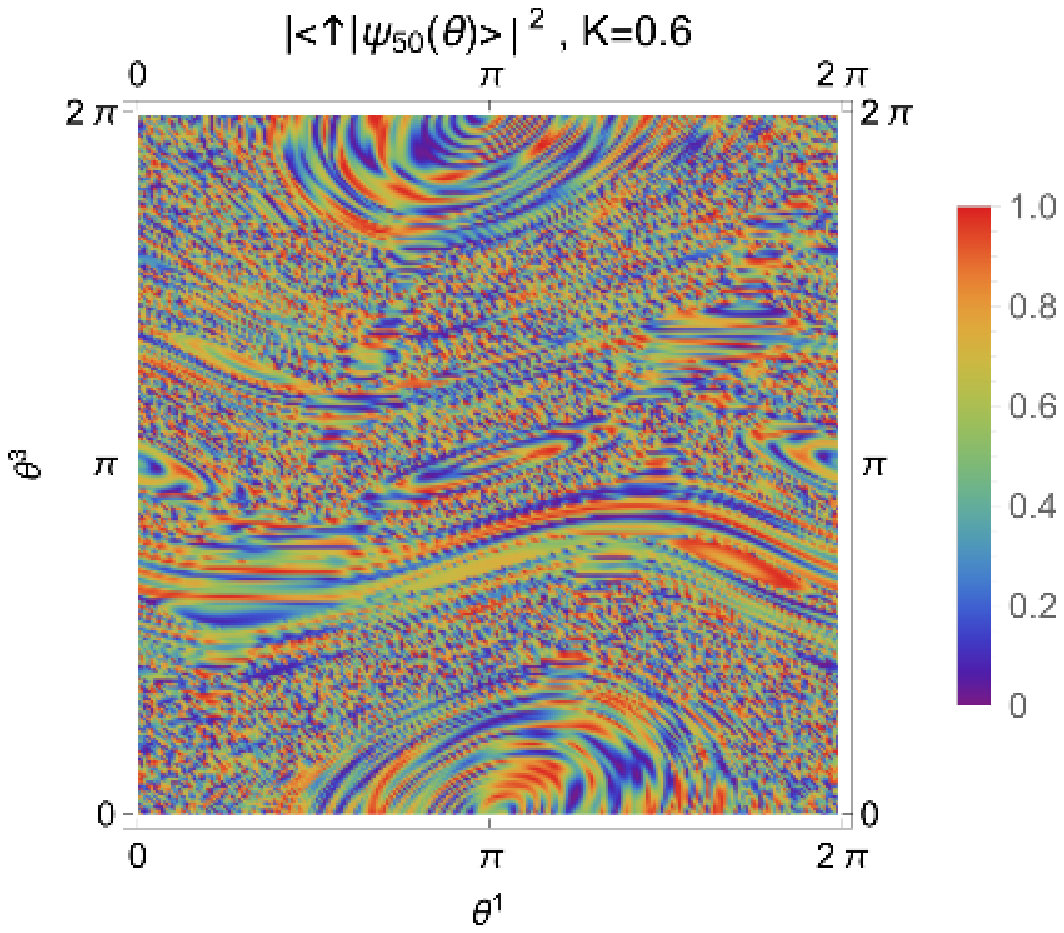} \\
    \includegraphics[width=5.5cm]{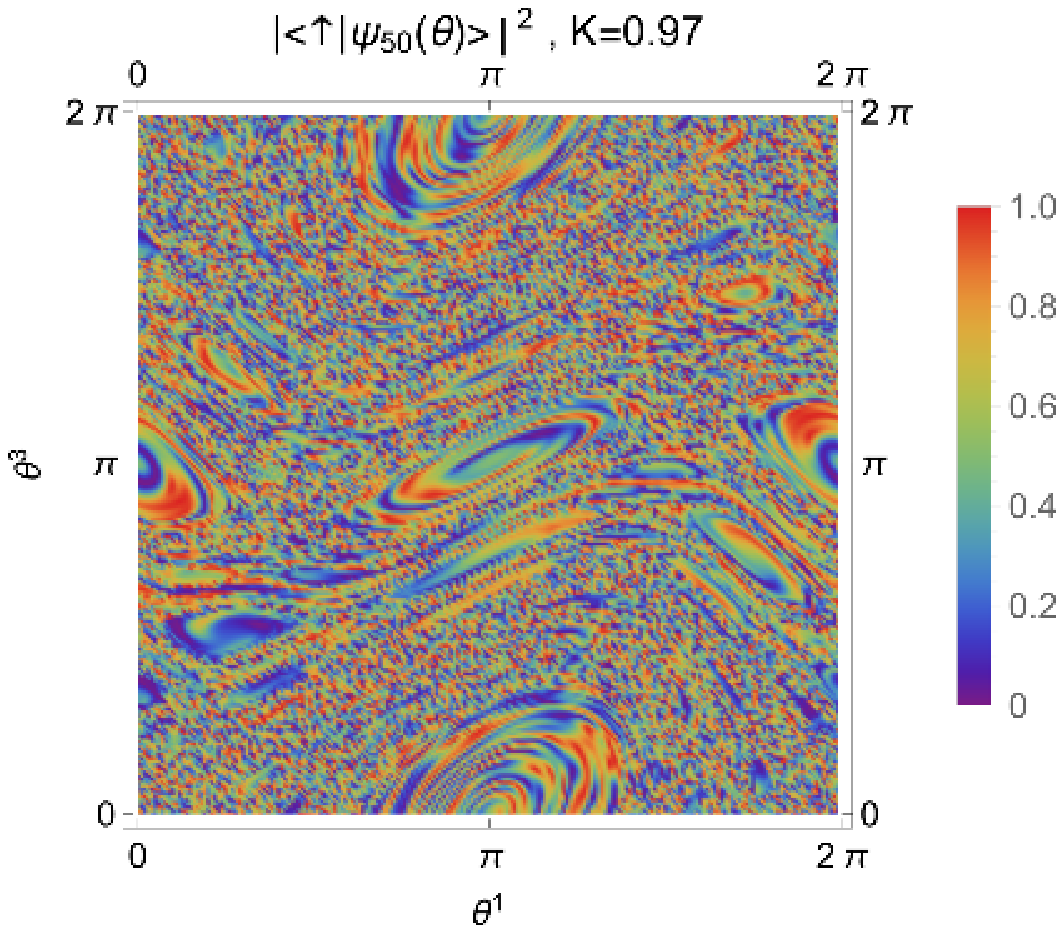}\\
    \includegraphics[width=5.5cm]{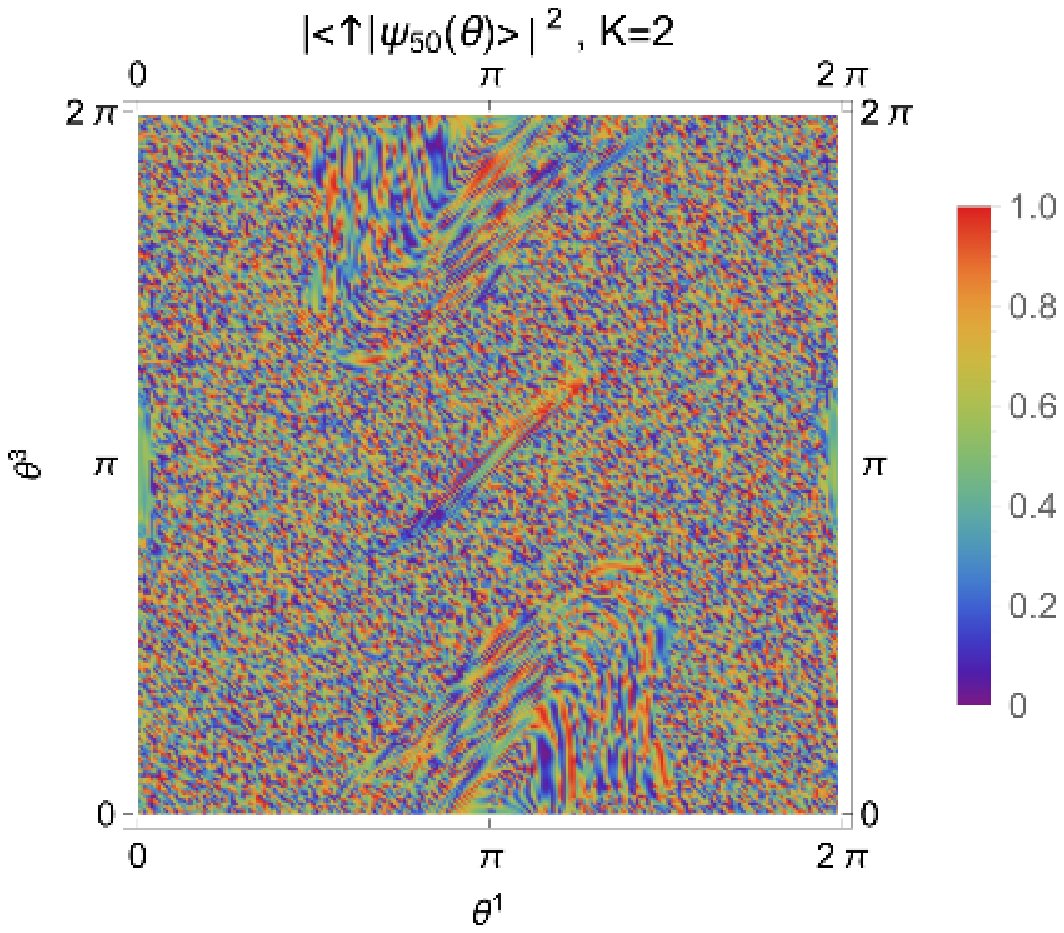}
    \caption{\label{psiSTD} Occupation probability of the state $|\uparrow\rangle$ with respect to $\theta$ for the state $\psi_n$ after $n$ kicks which are modulated following the standard map ($K=0.6, 0.97, 2$) for the uniform initial condition, with $\frac{\omega_1}{\omega_0} = 2.5$ and $\theta^2 = 0$.}
  \end{center}
\end{figure}
We recover the structures of the fundamental quasienergy states.

\section{Discussion and conclusion}
\subsection{Discussion about applications to quantum control and quantum information}
Non-abelian (cyclic) geometric phases are used to develop a geometric method of quantum computation called holonomic quantum computation (HQC) \cite{Lucarelli}. In this approach, the quantum system is supposed to be totally isolated (no decoherence). For more realistic situations where quantum systems are submitted to environmental noises, it is maybe possible to use the ergodic geometric phases associated with SK quasienergies to develop a version of the HQC with decoherence induced by noises modelized by mixing flows. In a same manner, a quantum adiabatic computation algorithm based on Floquet quasienergy states has been proposed in \cite{Tanaka}. It will be interesting to generalize this approach with SK quasienergy states.\\
More precisely, such approaches are developped to perfectly isolated quantum systems. But in the real situations, quantum control or quantum computation are realized on systems submitted to environmental noises responsible to decoherence phenomenons. In some cases, these effects can be modelized by classical random processes as for example in \cite{Yu}. Stochastic noises and chaotic processes are very similar for several properties and the distinguishing is a difficult task (see for example \cite{Ravetti}). We see for example figure \ref{classicalflow} that after enough iterations, the effects of the Arnold's CAT map is very similar to a 2D white noise. Consider the example defined by the evolution operator eq. \ref{Uexemple} where the kick strength $\theta^1$ is the control parameter and where the kick delay $\theta^2$ and the kick direction $\theta^3$ are not controlled but are perturbed by an environmental noise modelized by the Arnold's CAT map. In absence of noise, if we slowly increase $\theta^1$ from $0$ to $2\pi$ for a spin prepared in a Floquet quasienergy state, after the control it is in the other quasienergy state due to a phenomenon called Cheon anholonomy \cite{Miyamoto}. This control can be then assimilated to the realization of the NOT gate in the quasienergy basis. But with the noise modelled by the Arnold's CAT map, the result of the control is totally perturbed. Our approach permits to integrate the effect of the noise in the same formalism by substituting the SK quasienergy states (fig. \ref{quasistateArnold}) to the Floquet quasienergy states. The decoherence effects due to this noise are visible in the density matrix $\rho = \tr_{L^2(\Gamma,d\mu)}|\Psi \rrangle \llangle \Psi|$ as in figure \ref{dynamicsArnold}. This illustrative example is extreme since the noise amplitude is choosen as being very strong (the whole of the delays and the directions is involved by the perturbation), inducing a very rapid decoherence and relaxation to the microcanonical mixed state. The material presented in this paper is just a first step to the applications to quantum control and quantum information with classical noises, since in general the environmental noises are modelled by stochastic processes (as Brownian motion for example) rather than by deterministic chaotic flows. Moreover, in some approaches of the decoherence, the environment is modelled by a quantum bath, and the resulting density matrix obeys to a Lindblad equation. It is known that the Lindblad equation is equivalent to stochastic Schr\"odinger equations (see \cite{Breuer} part 3). These equations are governed by Hamiltonians including stochastic (Wiener or Poisson) processes. Even if the methods used to treat chaotic and stochastic processes are similar \cite{Lasota}, the use of random variables in the classical flow induces some mathematical difficulties which are not the subjet of the present paper. The extension of the present work to stochastic processes could be the subject of futur works.\\
Adiabatic Floquet approach is a tool used to treat the control of quantum systems by laser or magnetic fields (see for example \cite{Leclerc}). In this approach, the fast oscillations of the electromagnetic field is treated by using the Floquet theory, and adiabatic control is realized by slow variations of the other field parameters (amplitude, phase, polarisation direction,...), the adiabatic approach concerning the instantaneous Floquet quasienergy states. But this control theory is limited by the restriction than all control parameters must be slowly modulated. We can extend the possible control field shapes by considering fast evolving control parameters governed by a classical flow with other slow evolving control parameters used for the adiabatic control. The fast oscillations of the field and the fast control parameters can be treated by the SK approach, and an adiabatic approach can be used on the resulting instantaneous SK quasienergy states. The solving of the control problem consists then to find the shape of the control path in the slow parameter space and to fix the classical flow (model, parameters, initial condition). The added adjustable property associated to the classical flow in this adiabatic SK method increases the possibilities of accessible control goals with respect to the usual adiabatic Floquet method. Moreover, parameters defining the classical flow can be also slowly modulated. As exemple, we can consider the model eq. \ref{Uexemple} where the kick strength $\theta^1$ and the kick delay $\theta^2$ is governed by the standard map and where the kick direction $\theta^3$ and the map parameter $K$ are slowly modulated to realize an adiabatic control (slow modulations involving that the significative evolutions of $\theta^3$ and $K$ correspond to several kicks). Evolutions of $K$ permit to change the chaotic sea and the islands of stability in the SK quasienergy states (fig. \ref{quasistateSTD}) used in the adiabatic control. Such applications can be the subject of future works, which need to study an adiabatic theorem adaptated to the SK quasienergy states.

\subsection{Conclusion}
SK quasienergy states can be used to study mixed classical-quantum system, quantum control and quantum information with classical noises. A kicked spin ensemble with kick modulation following a classical flow can be an example of these three cases \cite{Viennot2, Aubourg, Aubourg2}. The fundamental quantum quasienergies are associated with the fixed points, the cyclic points and the ergodic orbits of the classical flow. It is interesting to compare a cyclic CAT map with the chaotic Arnold's CAT map. The Koopman spectrum of the flow of the cyclic map is pure point whereas the fundamental SK spectrum of the spin ensemble driven by this flow has a continuous component. In contrast, the Koopman spectrum of the flow of the Arnold's map is continuous, whereas the fundamental SK spectrum of the driven spin ensemble is pure point. The quasienergy states are the steady states of the driven quantum system which are associated with specific geometric phases if the flow is ergodic. The reduced density matrix of the quantum system evolves to a density matrix of these steady states if the flow is mixing. In the examples, we have seen that the structures appearing in the phase space of the classical flow are transmitted to the quasienergy state of the quantum ensemble as probability distributions. Another specific structures associated with the structure of the Hamiltonian or of the evolution operator appear as well as interferences for the SK modes in the regions of cyclic orbits.\\
In this paper, we have treated only conservative flows. It will be interesting to study the case of dissipative flows, particularly chaotic dissipative flows having a strange attractor. Such systems are more complicated since their Koopman operators are not unitary (and then their SK evolution operators are not unitary and their SK quasienergy spectra will not be real). Another question concerns the interpretation of the SK states as states of an ensemble of copies of one quantum system, as in the example of the spin ensemble treated in this paper. To have a simple interpretation of the SK quasienergy states, we have supposed that no interaction between the spins occurs. It will be interesting to find how modify the SK theory to take into account the interactions between the quantum subsystems driven by the classical flow.

\ack
The authors acknowledge support from I-SITE Bourgogne-Franche-Comt\'e under grants from the I-QUINS project, and support from the R\'egion de Bourgogne-Franche-Comt\'e under grants from the APEX project. Simulations have been executed on computers of the Utinam Institute supported by the R\'egion de Bourgogne-Franche-Comt\'e and the Institut des Sciences de l'Univers (INSU).\\
The authors thank Professor Hans-Rudolf Jauslin for useful discussions.

\appendix
\section{Usefull properties of the Koopman operator} \label{appendixA}
\begin{propo}
\label{algepropspect}
Let $\lambda_1,\lambda_2 \in \Sp(F^\mu\partial_\mu)$ be two eigenvalues associated with $f_{\lambda_1}(\theta)$ and $f_{\lambda_2}(\theta)$, then $\lambda_1+\lambda_2 \in \Sp(F^\mu \partial_\mu)$ with the associated eigenfunction $f_{\lambda_1+\lambda_2}(\theta) = f_{\lambda_1}(\theta) f_{\lambda_2}(\theta)$. Moreover, $\forall r \in \mathbb R$ such that $(f_{\lambda_1})^r \in \mathcal C^1(\Gamma)$, then $r\lambda_1 \in \Sp(F^\mu \partial_\mu)$ with the associated eigenfunction $f_{r\lambda_1}(\theta)= (f_{\lambda_1}(\theta))^r$.
\end{propo}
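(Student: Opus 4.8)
The plan is to exploit that the Koopman generator $F^\mu \partial_\mu$ is a \emph{first-order} differential operator, so that the whole statement reduces to two short computations — one with the Leibniz rule, one with the chain rule — together with a check that the candidate functions are genuine (nonzero, $\mathcal C^1$) elements of $L^2(\Gamma,d\mu)$. Equivalently one could argue directly with the Koopman operator $\mathcal T^t$, which, being the composition operator $f \mapsto f\circ\varphi^t$, is an algebra homomorphism on observables and therefore automatically sends products and powers of eigenfunctions to eigenfunctions; I will phrase things at the level of the generator since that is how the proposition is stated.

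First I would handle the sum. Put $g \equiv f_{\lambda_1} f_{\lambda_2}$; the Leibniz rule gives
\[
F^\mu(\theta)\partial_\mu g(\theta) = \big(F^\mu\partial_\mu f_{\lambda_1}\big)(\theta)\, f_{\lambda_2}(\theta) + f_{\lambda_1}(\theta)\,\big(F^\mu\partial_\mu f_{\lambda_2}\big)(\theta) = (\lambda_1+\lambda_2)\, g(\theta),
\]
using $F^\mu\partial_\mu f_{\lambda_k} = \lambda_k f_{\lambda_k}$. A product of $\mathcal C^1$ functions is $\mathcal C^1$, so $g \in \mathcal C^1(\Gamma)$; it then remains only to know $g \not\equiv 0$ to conclude $\lambda_1+\lambda_2 \in \Sp(F^\mu\partial_\mu)$ with eigenfunction $g$.

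Next I would handle the scalar multiple. For $r \in \mathbb R$ such that $h \equiv (f_{\lambda_1})^r \in \mathcal C^1(\Gamma)$ (this regularity being assumed in the statement), the chain rule gives
\[
F^\mu(\theta)\partial_\mu h(\theta) = r\, (f_{\lambda_1}(\theta))^{r-1}\, F^\mu(\theta)\partial_\mu f_{\lambda_1}(\theta) = r\lambda_1\, (f_{\lambda_1}(\theta))^r = r\lambda_1\, h(\theta),
\]
so that $r\lambda_1 \in \Sp(F^\mu\partial_\mu)$ with eigenfunction $h$, again once $h \not\equiv 0$ is known.

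The only genuinely non-routine point, and the one I expect to be the main obstacle, is the non-triviality of $g$ and $h$ in $L^2(\Gamma,d\mu)$. I would settle it by recalling that for a conservative flow $\mathcal T^t$ is unitary (its generator anti-selfadjoint), so $\lambda \in \imath\mathbb R$ and, on each ergodic component, $|f_\lambda|$ is $\varphi^t$-invariant hence constant almost everywhere; after the usual normalisation every Koopman mode is unimodular. Then $|g| = |f_{\lambda_1}|\,|f_{\lambda_2}|$ and $|h| = |f_{\lambda_1}|^{r}$ are likewise constant and nonzero, so $g,h \in L^2(\Gamma,d\mu)\setminus\{0\}$, which completes the argument. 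This unimodular picture also makes transparent why the $\mathcal C^1$ proviso is needed for the power: writing $f_{\lambda_1} = e^{\imath\phi}$, forming $(f_{\lambda_1})^r = e^{\imath r \phi}$ requires a globally well-defined $\mathcal C^1$ branch of $\phi$, i.e.\ the vanishing of the winding numbers (no monodromy obstruction) on $\mathbb T^m$.
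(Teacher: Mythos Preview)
Your argument is correct and matches the paper's own justification, which is simply the one-line remark that ``This result follows directly from the fact that the Koopman generator is a first order derivative.'' Your Leibniz/chain-rule computations are exactly the unpacking of that remark, and your additional care about non-triviality via unimodularity (which the paper does not address here) is a welcome refinement rather than a departure.
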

This result follows directly from the fact that the Koopman generator is a first order derivative. Note that the condition $(f_{\lambda})^r \in \mathcal C^1(\Gamma)$ can drastically reduce the acceptable $r$. For example let $(\mathbb S^1,\varphi^t,\frac{d\theta}{2\pi})$ be the classical dynamical system such that $\varphi^t(\theta) = \theta + \omega t \mod 2\pi$ (with $\omega>0$ constant). The Koopman generator is $\omega \frac{\partial}{\partial \theta}$, and $\Sp(\omega \frac{\partial}{\partial \theta}) = \imath \omega \mathbb Z$ with $f_{\imath n \omega}(\theta) = e^{\imath n \theta}$ ($n\in \mathbb Z$). $(f_{\imath n \omega})^r \in \mathcal C^1(\mathbb S^1)$ (continuous, derivable and $2\pi$-periodic with respect to $\theta$) if only if $r \in \mathbb Z$.

\begin{prop}
\label{mixergo}
Let $(\Gamma,\varphi^t,\mu)$ be a conservative dynamical system.
\begin{itemize}
\item If the dynamical system is mixing then $\forall f,g \in L^2(\Gamma,d\mu)$
\begin{equation}
\lim_{t\to +\infty} \int_\Gamma \overline{g(\theta)} \mathcal T^t f(\theta)d\mu(\theta) = \int_\Gamma \overline{g(\theta)} d\mu(\theta) \int_\Gamma f(\theta) d\mu(\theta)
\end{equation}
\item If the dynamical system is ergodic then $\forall f\in L^2(\Gamma,d\mu)$, for $\mu$-almost all $\theta_0 \in \Gamma$,
\begin{equation}
\lim_{t \to +\infty} \frac{1}{T} \int_0^T \mathcal T^t f(\theta_0) dt = \int_\Gamma f(\theta) d\mu(\theta)
\end{equation}
\end{itemize}
\end{prop}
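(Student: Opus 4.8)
The plan is to dispatch the two items separately, reducing each to a classical theorem of ergodic theory after an elementary preliminary reduction.

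For the mixing item, I would begin from the set-theoretic definition of a mixing flow, namely $\lim_{t\to+\infty}\mu(\varphi^{-t}(A)\cap B)=\mu(A)\mu(B)$ for all $A,B\in\mathscr{T}$. Since $\int_\Gamma \overline{\mathbb{I}_B(\theta)}\,\mathcal{T}^t\mathbb{I}_A(\theta)\,d\mu(\theta)=\int_\Gamma \mathbb{I}_B(\theta)\,\mathbb{I}_A(\varphi^t(\theta))\,d\mu(\theta)=\mu(\varphi^{-t}(A)\cap B)$ while $\int_\Gamma\overline{\mathbb{I}_B}\,d\mu\int_\Gamma\mathbb{I}_A\,d\mu=\mu(B)\mu(A)$, the claimed identity holds whenever $f$ and $g$ are indicator functions, and hence --- by bilinearity of both sides in $(f,g)$ --- whenever $f$ and $g$ are simple functions. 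The remaining step is a density argument: simple functions are dense in $L^2(\Gamma,d\mu)$, and $\mathcal{T}^t$ is unitary so $\|\mathcal{T}^t h\|_{L^2}=\|h\|_{L^2}$ for every $h$ and every $t$. Given $f,g\in L^2(\Gamma,d\mu)$ and $\epsilon>0$, choose simple $f_0,g_0$ with $\|f-f_0\|_{L^2},\|g-g_0\|_{L^2}<\epsilon$; Cauchy--Schwarz then bounds both $|\int_\Gamma\overline g\,\mathcal{T}^tf\,d\mu-\int_\Gamma\overline{g_0}\,\mathcal{T}^tf_0\,d\mu|$ and $|\int_\Gamma\overline g\,d\mu\int_\Gamma f\,d\mu-\int_\Gamma\overline{g_0}\,d\mu\int_\Gamma f_0\,d\mu|$ by a fixed constant times $\epsilon$, \emph{uniformly in $t$}, and combining this with the already-proved convergence for $(f_0,g_0)$ closes an $\epsilon/3$ estimate.

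For the ergodic item, I would invoke the Birkhoff pointwise ergodic theorem for the measure-preserving flow $\varphi^t$. Since $\mu$ is a probability measure, $L^2(\Gamma,d\mu)\subset L^1(\Gamma,d\mu)$, so for every $f\in L^2(\Gamma,d\mu)$ the time average $\bar f(\theta_0):=\lim_{T\to+\infty}\frac{1}{T}\int_0^T f(\varphi^t(\theta_0))\,dt=\lim_{T\to+\infty}\frac{1}{T}\int_0^T \mathcal{T}^tf(\theta_0)\,dt$ exists for $\mu$-almost all $\theta_0$, defines a $\varphi^t$-invariant element of $L^1(\Gamma,d\mu)$, and satisfies $\int_\Gamma\bar f\,d\mu=\int_\Gamma f\,d\mu$. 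Ergodicity is precisely the statement that every $\varphi^t$-invariant $L^1$ function is constant $\mu$-almost everywhere; hence $\bar f$ equals the constant $\int_\Gamma f\,d\mu$, which is the assertion. If one only wanted convergence in $L^2$ norm rather than pointwise almost everywhere, the von Neumann mean ergodic theorem would suffice in place of Birkhoff's.

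The underlying computations are routine; the only point that wants a little care is the uniformity in $t$ of the approximation in the mixing item, which is exactly where unitarity of $\mathcal{T}^t$ enters. A cleaner but less elementary alternative for that item is to note that the set of pairs $(f,g)\in L^2\times L^2$ for which the limit identity holds is a closed subspace of $L^2\times L^2$ containing the total family $\{\mathbb{I}_A\otimes\mathbb{I}_B\}_{A,B\in\mathscr{T}}$, hence is the whole space; I would nonetheless present the $\epsilon/3$ version as the most transparent.
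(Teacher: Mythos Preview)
Your argument is correct and entirely standard: the reduction of the mixing statement from indicators to simple functions to all of $L^2$ via unitarity of $\mathcal T^t$ and an $\epsilon/3$ estimate is the textbook route, and your treatment of the ergodic item via Birkhoff's theorem (using $L^2\subset L^1$ on a probability space and the fact that ergodicity forces invariant functions to be constant) is exactly right.

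There is nothing substantive to compare, however, because the paper does not prove this property at all: its proof reads in full ``See \cite{Eisner}.'' The statement is a well-known pair of classical results in ergodic theory, and the authors simply delegate to the reference (Eisner, Farkas, Haase, Nagel, \textit{Operator theoretic aspects of ergodic theory}). Your write-up is thus strictly more informative than the paper's own treatment; if anything, it could be shortened to a one-line citation in the same spirit, but as a self-contained argument it is fine as written.
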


\begin{proof}
See \cite{Eisner}.
\end{proof}

\begin{prop} \label{nullFmix}
Let $(\Gamma,\varphi^t,\mu)$ be a mixing conservative classical dynamical system such that $\Sp(F^\mu\partial_\mu)\setminus\{0\}$ is continuous. $\forall \lambda \in \Sp(F^\mu\partial_\mu)\setminus\{0\}$ we have
\begin{equation}
\int_\Gamma f_\lambda(\theta) d\mu(\theta) = 0
\end{equation}
\end{prop}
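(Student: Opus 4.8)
The plan is to pair the Koopman mode $f_\lambda$ against the constant observable in the mixing identity of Property~\ref{mixergo}, and to use that $\mathcal T^t$ multiplies $f_\lambda$ by a scalar of modulus one.

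First I would note that, the flow being conservative, $\mathcal T^t$ is unitary and $F^\mu\partial_\mu$ is anti-selfadjoint, so $\Sp(F^\mu\partial_\mu)\subset\imath\mathbb R$; write $\lambda=\imath\omega$ with $\omega\in\mathbb R$, where $\omega\neq0$ because $\lambda\in\Sp(F^\mu\partial_\mu)\setminus\{0\}$. (Conservativity is not truly essential here: the final step below only uses $\lambda\neq0$, since $e^{\lambda t}$ either escapes to infinity, tends to $0$, or fails to converge, in each case forcing the integral to vanish.) Since $\mu(\Gamma)=1$, the constant function $1$ lies in $L^2(\Gamma,d\mu)$, so I may apply Property~\ref{mixergo} with $f=f_\lambda$ and $g=1$:
\begin{equation}
\lim_{t\to+\infty}\int_\Gamma \mathcal T^t f_\lambda(\theta)\,d\mu(\theta)=\int_\Gamma d\mu(\theta)\int_\Gamma f_\lambda(\theta)\,d\mu(\theta)=\int_\Gamma f_\lambda(\theta)\,d\mu(\theta).
\end{equation}
On the other hand $\mathcal T^t f_\lambda=e^{\lambda t}f_\lambda$, so the left-hand side equals $e^{\imath\omega t}\int_\Gamma f_\lambda(\theta)\,d\mu(\theta)$. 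Writing $c=\int_\Gamma f_\lambda(\theta)\,d\mu(\theta)$, we get $\lim_{t\to+\infty}e^{\imath\omega t}c=c$; as $\omega\neq0$ the family $(e^{\imath\omega t})_{t\geq 0}$ has no limit as $t\to+\infty$, so $c=0$, which is the assertion.

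The only delicate point --- and the reason the hypothesis that $\Sp(F^\mu\partial_\mu)\setminus\{0\}$ be continuous is stated --- is whether $f_\lambda$ may legitimately be inserted into Property~\ref{mixergo}: a genuinely mixing flow has no nonconstant $L^2$ eigenfunction of $\mathcal T^t$, so for $\lambda\neq0$ the mode $f_\lambda$ is really a generalized eigenfunction attached to the continuous part of the spectrum, and the identity above should be understood in the appropriate spectral sense (or via approximate eigenfunctions $f_\lambda^{(n)}\to f_\lambda$ with $\mathcal T^t f_\lambda^{(n)}\simeq e^{\lambda t}f_\lambda^{(n)}$, passing to the limit afterwards). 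Under the $\mathcal C^1$ conventions adopted in the paper for the $f_\lambda$ this causes no difficulty, but it is the sole step that is not a one-line computation.
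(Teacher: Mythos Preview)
Your argument is correct and takes a genuinely different route from the paper's. The paper pairs $f_\lambda$ against itself in the mixing identity and smears over a spectral window $[\lambda-\delta\lambda,\lambda+\delta\lambda]$: the right-hand side becomes $\int_{\lambda-\delta\lambda}^{\lambda+\delta\lambda}\bigl|\int_\Gamma f_\lambda\,d\mu\bigr|^2 d\lambda$, while the left-hand side, after invoking unimodularity of $f_\lambda$ so that $\int_\Gamma|f_\lambda|^2 d\mu=1$, reduces to $\lim_{t\to+\infty}\int_{\lambda-\delta\lambda}^{\lambda+\delta\lambda} e^{\lambda t}\,d\lambda=0$ by a Riemann--Lebesgue type computation. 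You instead pair $f_\lambda$ against the constant function $1$, which is shorter and sidesteps both the unimodularity step and the spectral integration: one lands directly on $\lim_{t\to+\infty}e^{\imath\omega t}c=c$, forcing $c=0$.

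The trade-off is this. In the paper's proof the continuous-spectrum hypothesis is used operationally: it is the $d\lambda$-integration that makes $e^{\lambda t}$ vanish in the limit, and the spectral smearing is what gives the insertion of $f_\lambda$ into Property~\ref{mixergo} its meaning (the smeared object is a genuine $L^2$ element). In your proof the hypothesis plays no computational role; you correctly identify it as a flag that $f_\lambda$ is only a generalized eigenfunction for $\lambda\neq0$, since a mixing flow admits no nonconstant $L^2$ Koopman eigenfunction. Both arguments therefore share the same delicate point --- inserting $f_\lambda$ into the mixing identity --- but the paper handles it by the spectral averaging, whereas you state the caveat and proceed formally. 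Your version is cleaner as a heuristic; the paper's is closer to a rigorous resolution.
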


\begin{proof}
Since the dynamical system is mixing and then ergodic, $f_\lambda$ is unimodular (see \cite{Eisner}) and then $f_\lambda \in L^2(\Gamma,d\mu)$). It follows that
\begin{eqnarray}
& & \lim_{t \to +\infty} \int_{\lambda-\delta \lambda}^{\lambda +\delta \lambda} \int_\Gamma \overline{f_\lambda(\theta)} f_\lambda(\varphi^t(\theta)) d\mu(\theta) d\lambda \nonumber \\
& & \qquad = \int_{\lambda-\delta \lambda}^{\lambda +\delta \lambda} |\int_\Gamma f_\lambda(\theta) d\mu(\theta)|^2 d\lambda
\end{eqnarray}
where $\delta \lambda$ is such that $0 \not\in [\lambda-\delta \lambda,\lambda+\delta\lambda]$. But
\begin{eqnarray}
& & \lim_{t \to +\infty} \int_{\lambda-\delta \lambda}^{\lambda +\delta \lambda} \int_\Gamma \overline{f_\lambda(\theta)} f_\lambda(\varphi^t(\theta)) d\mu(\theta) d\lambda \nonumber \\
& & \qquad = \lim_{t \to +\infty} \int_{\lambda-\delta \lambda}^{\lambda +\delta \lambda} e^{\lambda t} \int_\Gamma |f_\lambda(\theta)|^2 d\mu(\theta) d\lambda \\
& & \qquad = \lim_{t \to +\infty} \int_{\lambda-\delta \lambda}^{\lambda +\delta \lambda} e^{\lambda t} d\lambda
\end{eqnarray}
Because $|f_\lambda(\theta)|^2=1$ since it is unimodular. But $\lim_{t \to +\infty} \int_{\lambda-\delta \lambda}^{\lambda +\delta \lambda} e^{\lambda t} d\lambda=0$ ($\lambda \in \imath \mathbb R^*$), it follows that $\int_{\lambda-\delta \lambda}^{\lambda +\delta \lambda} |\int_\Gamma f_\lambda(\theta) d\mu(\theta)|^2 d\lambda = 0 \Rightarrow |\int_\Gamma f_\lambda(\theta) d\mu(\theta)|^2 = 0$.
\end{proof}

\section{Expansion of the quasienergy states}
\subsection{Local expansion} \label{appendixB1}
\begin{prop}
Let $(\Gamma,\mu,\varphi^t,\mathcal H,H)$ be a conservative driven quantum system, $\theta_* \in \Gamma$ be a fixed point of $\varphi^t$, $\{\tilde \chi_i\}_i$ be the fundamental quasienergies associated with $\theta_*$ and $\{|Z\mu_i,\theta\rangle\}_i$ be the associated quasienergy states. Let $(\vec e_a)_a$ be the eigendirections in $\Gamma$ in the neighbourhood of $\theta_*$ and $\{\underline \lambda_a\}_a$ be the associated local Lyapunov eigenvalues (i.e. the eigenvectors and the eigenvalues of the Jacobian matrix of the flow $\partial \varphi^t_{\theta_*}$ supposed here diagonalizable). We have
\begin{equation}
\left. \langle Z\mu_j,\theta|\nabla_{\vec e_a}|Z\mu_i,\theta\rangle \right|_{\theta=\theta_*} = \frac{\langle Z\mu_j,\theta_*| \left. \nabla_{\vec e_a} H \right|_{\theta=\theta_*} | Z\mu_i, \theta_* \rangle}{\tilde \chi_i - \tilde \chi_j + \ihbar \underline \lambda_a}
\end{equation}
\end{prop}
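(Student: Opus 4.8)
The plan is to treat the defining eigenvalue equation of the quasienergy state as a first-order linear partial differential equation on a neighbourhood of $\theta_*$, differentiate it once in the eigendirection $\vec e_a$, and then project onto the unperturbed left eigenvector $\langle Z\mu_j,\theta_*|$. This is the standard Hellmann--Feynman / first-order stationary perturbation computation, with the Koopman term $-\ihbar F^\mu\partial_\mu$ playing the role of a perturbation that, at $\theta_*$, contributes only through the linearisation of the flow because the flow vanishes there.

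Concretely, I start from
\[
\left(-\ihbar F^\mu(\theta)\frac{\partial}{\partial\theta^\mu} + H(\theta)\right)|Z\mu_i,\theta\rangle = \tilde\chi_i\,|Z\mu_i,\theta\rangle ,
\]
which holds for $\theta$ in a neighbourhood of $\theta_*$ since $|Z\mu_i,\theta\rangle\in\mathcal K$ is an eigenvector of $H_K$; as noted after Theorem \ref{orbstab2}, $|Z\mu_i,\cdot\rangle$ is $\mathcal C^1$ along integral curves of $F^\mu\partial_\mu$, and here I use the (slightly stronger) $\mathcal C^1$-regularity on a full neighbourhood of the fixed point. I apply $\partial_\nu$, contract with $e_a^\nu$, and evaluate at $\theta_*$. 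Two simplifications occur there: the term $F^\mu\,e_a^\nu\partial_\nu\partial_\mu|Z\mu_i,\theta\rangle$ drops out because $F^\mu(\theta_*)=0$, and $e_a^\nu\partial_\nu F^\mu(\theta_*) = \underline\lambda_a e_a^\mu$ because $\vec e_a$ is an eigenvector of the Jacobian $\partial F_{\theta_*}$ (equivalently of $\partial\varphi^t_{\theta_*}=e^{t\partial F_{\theta_*}}$) with eigenvalue $\underline\lambda_a$. Writing $|\partial_a i\rangle := \left.\nabla_{\vec e_a}|Z\mu_i,\theta\rangle\right|_{\theta_*}$, what remains is
\[
\bigl(H(\theta_*) - \tilde\chi_i - \ihbar\underline\lambda_a\bigr)|\partial_a i\rangle = - \left.\nabla_{\vec e_a}H\right|_{\theta_*}|Z\mu_i,\theta_*\rangle .
\]

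Then I project this identity onto $\langle Z\mu_j,\theta_*|$, which by Property \ref{fundquasifixed} is an eigenvector of the self-adjoint operator $H(\theta_*)$ with eigenvalue $\tilde\chi_j$, so that $\langle Z\mu_j,\theta_*|H(\theta_*) = \tilde\chi_j\langle Z\mu_j,\theta_*|$. This gives
\[
(\tilde\chi_j - \tilde\chi_i - \ihbar\underline\lambda_a)\,\langle Z\mu_j,\theta_*|\partial_a i\rangle = -\langle Z\mu_j,\theta_*|\left.\nabla_{\vec e_a}H\right|_{\theta_*}|Z\mu_i,\theta_*\rangle ,
\]
and dividing yields the announced formula. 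There is no deep obstacle here; the two points needing care are the implicit non-resonance assumption $\tilde\chi_i - \tilde\chi_j + \ihbar\underline\lambda_a \neq 0$ (under which the division is legitimate, and which is automatic for $i=j$ only when $\underline\lambda_a\neq 0$), and the transversal differentiability of $\theta\mapsto|Z\mu_i,\theta\rangle$ near $\theta_*$ — the excerpt only records $\mathcal C^1$-regularity along orbits, so one either invokes the regularity of the eigenvalue problem near the fixed point or simply posits the required smoothness, consistently with the diagonalisability already assumed for $\partial\varphi^t_{\theta_*}$.
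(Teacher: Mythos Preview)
Your proof is correct and follows a genuinely different route from the paper. The paper works with the \emph{integrated} relation $U(t,0;\theta)|Z\mu_i,\theta\rangle = e^{-\ihbar^{-1}\tilde\chi_i t}|Z\mu_i,\varphi^t(\theta)\rangle$: it Taylor-expands both sides in $\delta\theta$ around $\theta_*$, matches first-order terms, then takes a time derivative to bring in $H$ via $\ihbar\partial_t U = H(\varphi^t(\theta))U$, and finally uses $(\partial F_{\theta_*})^\nu_\rho e^\rho_a = \underline\lambda_a e^\nu_a$ together with $\partial\varphi^t_{\theta_*}=e^{t\partial F_{\theta_*}}$ to isolate the eigendirection. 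You instead differentiate the \emph{infinitesimal} eigenvalue equation $(-\ihbar F^\mu\partial_\mu + H)|Z\mu_i\rangle = \tilde\chi_i|Z\mu_i\rangle$ directly, exploiting $F(\theta_*)=0$ to kill the second-derivative term. Your approach is shorter and is exactly the Hellmann--Feynman computation; the paper's approach, while longer, has the advantage that it is formulated purely in terms of $U$ and $\partial\varphi^t_{\theta_*}$ and therefore carries over verbatim to the stroboscopic case (no generator $F$, only $U(\theta)$ and the discrete Jacobian), which the paper indeed states immediately afterwards. Your method would require a separate argument there.
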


\begin{proof}
$U(t,0;\theta)|Z\mu_i,\theta\rangle = e^{-\ihbar^{-1} \tilde \chi_i t} |Z\mu_i,\varphi^t(\theta)\rangle$. Let $\theta = \theta_* + \delta \theta$. By Taylor expansions we have $\varphi^t(\theta)^\mu = \theta_*^\mu + {(\partial \varphi^t_{\theta_*})^\mu}_\nu \delta \theta^\nu + \mathcal O(\|\delta \theta\|^2)$ and
\begin{eqnarray}
& & \left(U(t,0;\theta_*)+\left.\frac{\partial U}{\partial \theta^\mu} \right|_{\theta_*} \delta \theta^\mu \right) \left(|Z\mu_i,\theta_* \rangle + \left.\frac{\partial}{\partial \theta^\mu}|Z\mu_i\rangle\right|_{\theta_*} \delta \theta^\mu \right) \nonumber \\
& & = e^{-\ihbar^{-1} \tilde \chi_i t}  \left(|Z\mu_i,\theta_* \rangle + \left.\frac{\partial}{\partial \theta^\mu}|Z\mu_i\rangle\right|_{\theta_*} {(\partial \varphi^t_{\theta_*})^\mu}_\nu \delta \theta^\nu \right) + \mathcal O(\|\delta \theta\|^2)
\end{eqnarray}
\begin{eqnarray}
& & U(t,0;\theta_*)\left.\frac{\partial}{\partial \theta^\mu}|Z\mu_i\rangle\right|_{\theta_*} \delta \theta^\mu+\left.\frac{\partial U}{\partial \theta^\mu} \right|_{\theta_*} |Z\mu_i,\theta_* \rangle \delta \theta^\mu  \nonumber \\
& & = e^{-\ihbar^{-1} \tilde \chi_i t}  \left.\frac{\partial}{\partial \theta^\mu}|Z\mu_i\rangle\right|_{\theta_*} {(\partial \varphi^t_{\theta_*})^\mu}_\nu \delta \theta^\nu 
\end{eqnarray}
\begin{eqnarray}
& & e^{-\ihbar^{-1} \tilde \chi_j t} \left. \langle Z\mu_j|\frac{\partial}{\partial \theta^\mu}|Z\mu_i\rangle\right|_{\theta_*} + \langle Z\mu_j,\theta_*|\left.\frac{\partial U}{\partial \theta^\mu} \right|_{\theta_*} |Z\mu_i,\theta_* \rangle \nonumber \\
& & = e^{-\ihbar^{-1} \tilde \chi_i t}  \left. \langle Z\mu_j|\frac{\partial}{\partial \theta^\nu}|Z\mu_i\rangle\right|_{\theta_*} {(\partial \varphi^t_{\theta_*})^\nu}_\mu
\end{eqnarray}
\begin{eqnarray}
& & \langle Z\mu_j,\theta_*|\left.\frac{\partial U}{\partial \theta^\mu} \right|_{\theta_*} |Z\mu_i,\theta_* \rangle \nonumber \\
& & \label{dU} = \left(e^{-\ihbar^{-1} \tilde \chi_i t} {(\partial \varphi^t_{\theta_*})^\nu}_\mu - e^{-\ihbar^{-1} \tilde \chi_j t} {\delta^\nu}_\mu \right) \left. \langle Z\mu_j|\frac{\partial}{\partial \theta^\nu}|Z\mu_i\rangle\right|_{\theta_*}
\end{eqnarray}
\begin{eqnarray}
& & \langle Z\mu_j,\theta_*|\left.\ihbar \frac{\partial^2 U}{\partial t \partial \theta^\mu} \right|_{\theta_*} |Z\mu_i,\theta_* \rangle \nonumber \\
& & = \left(\tilde \chi_i e^{-\ihbar^{-1} \tilde \chi_i t} {(\partial \varphi^t_{\theta_*})^\nu}_\mu + \ihbar e^{-\ihbar^{-1} \tilde \chi_i t} {(\partial \dot \varphi^t_{\theta_*})^\nu}_\mu - \tilde \chi_j e^{-\ihbar^{-1} \tilde \chi_j t} {\delta^\nu}_\mu \right) \nonumber \\
& & \label{d2U} \quad \times \left. \langle Z\mu_j|\frac{\partial}{\partial \theta^\nu}|Z\mu_i\rangle\right|_{\theta_*}
\end{eqnarray}
But
\begin{eqnarray}
\ihbar \frac{\partial^2 U}{\partial t \partial \theta^\mu} & = & \frac{\partial}{\partial \theta^\mu} \left(H(\varphi^t(\theta))U(t,0;\theta) \right) \\
& = & \left. \frac{\partial H}{\partial \theta^\nu} \right|_{\varphi^t(\theta)} {(\partial\varphi^t_\theta)^\nu}_\mu U(t,0;\theta)+ H(\varphi^t(\theta)) \frac{\partial U}{\partial \theta^\mu}
\end{eqnarray}
By using equation \ref{dU}
\begin{eqnarray}
& & \langle Z\mu_j,\theta_*|\left.\ihbar \frac{\partial^2 U}{\partial t \partial \theta^\mu} \right|_{\theta_*} |Z\mu_i,\theta_* \rangle \nonumber \\
& & = e^{-\ihbar^{-1} \tilde \chi_i t} \langle Z\mu_j,\theta_*|\left. \frac{\partial H}{\partial \theta^\nu} \right|_{\theta_*}|Z\mu_i,\theta_*\rangle {(\partial\varphi^t_\theta)^\nu}_\mu \nonumber\\
& & + \tilde \chi_j \left(e^{-\ihbar^{-1} \tilde \chi_i t} {(\partial \varphi^t_{\theta_*})^\nu}_\mu - e^{-\ihbar^{-1} \tilde \chi_j t} {\delta^\nu}_\mu \right) \left. \langle Z\mu_j|\frac{\partial}{\partial \theta^\nu}|Z\mu_i\rangle\right|_{\theta_*}
\end{eqnarray}
By comparison with equation \ref{d2U} we have
\begin{eqnarray}
&& \langle Z\mu_j,\theta_*|\left. \frac{\partial H}{\partial \theta^\nu} \right|_{\theta_*}|Z\mu_i,\theta_*\rangle {(\partial\varphi^t_\theta)^\nu}_\mu \nonumber\\
& & = \left((\tilde \chi_i - \tilde \chi_j) {(\partial\varphi^t_\theta)^\nu}_\mu + \ihbar  {(\partial \dot \varphi^t_{\theta_*})^\nu}_\mu \right) \left. \langle Z\mu_j|\frac{\partial}{\partial \theta^\nu}|Z\mu_i\rangle\right|_{\theta_*}
\end{eqnarray}
$\dot \varphi^t(\theta) = F(\varphi^t(\theta)) \Rightarrow {(\partial \dot \varphi^t_\theta)^\nu}_\mu = {(\partial F_{\varphi^t(\theta)})^\nu}_\rho {(\partial \varphi^t)^\rho}_\mu$ it follows that
\begin{eqnarray}
&& \langle Z\mu_j,\theta_*|\left. \frac{\partial H}{\partial \theta^\nu} \right|_{\theta_*}|Z\mu_i,\theta_*\rangle {(\partial\varphi^t_\theta)^\nu}_\mu \nonumber\\
& & = \left((\tilde \chi_i - \tilde \chi_j) {\delta^\nu}_\rho + \ihbar {(\partial F_{\theta_*})^\nu}_\rho \right) {(\partial\varphi^t_\theta)^\rho}_\mu \left. \langle Z\mu_j|\frac{\partial}{\partial \theta^\nu}|Z\mu_i\rangle\right|_{\theta_*}
\end{eqnarray}
By definition ${(\partial F_{\theta_*})^\nu}_\rho e^\rho_a = \underline \lambda_a e^\nu_a$ and ${(\partial \varphi^t_{\theta_*})^\rho}_\mu e^\mu_a = e^{\underline \lambda_a t} e^\rho_a$.
\begin{eqnarray}
&& \langle Z\mu_j,\theta_*|\left. \frac{\partial H}{\partial \theta^\nu} \right|_{\theta_*}|Z\mu_i,\theta_*\rangle e^{\underline\lambda_a t} e^\nu_a \nonumber\\
& & = \left(\tilde \chi_i - \tilde \chi_j+ \ihbar \underline\lambda_a \right) e^{\underline\lambda_a t} e^\nu_a \left. \langle Z\mu_j|\frac{\partial}{\partial \theta^\nu}|Z\mu_i\rangle\right|_{\theta_*}
\end{eqnarray}
\begin{eqnarray}
& & \langle Z\mu_j,\theta_*|\left. \nabla_{\vec e_a} H \right|_{\theta_*}|Z\mu_i,\theta_*\rangle  \nonumber \\
& & = \left(\tilde \chi_i - \tilde \chi_j+ \ihbar \underline\lambda_a \right) \left. \langle Z\mu_j|\nabla_{\vec e_a}|Z\mu_i\rangle\right|_{\theta_*}
\end{eqnarray}
\end{proof}

For a stroboscopic driven quantum system we have
\begin{equation}
  \left. \langle Z\mu_j,\theta|\nabla_{\vec e_a}|Z\mu_i,\theta\rangle\right|_{\theta=\theta_*} = \frac{\langle Z\mu_j,\theta_*|\left. \nabla_{\vec e_a} U \right|_{\theta=\theta_*}|Z\mu_i,\theta_* \rangle}{e^{-\imath (\tilde \chi_i+\imath \underline \lambda_a)} - e^{-\imath \tilde \chi_j}}
\end{equation}
for a fixed point $\theta_*$. Moreover, we can also consider a $p$-cyclic point $\theta_*$ and we have
\begin{equation}
  \left. \langle Z\mu_j,\theta|\nabla_{\vec e_a}|Z\mu_i,\theta\rangle\right|_{\theta=\theta_*} = \frac{\langle Z\mu_j,\theta_*|\left. \nabla_{\vec e_a} V_p \right|_{\theta=\theta_*}|Z\mu_i,\theta_* \rangle}{e^{-\imath p(\tilde \chi_i+\imath \underline \lambda_a)} - e^{-\imath p \tilde \chi_j}}
\end{equation}
with $V_p(\theta) = U(\varphi^{p-1}(\theta))...U(\theta)$.\\

Let $\vartheta^a = e^a_\mu \theta^\mu$ be the eigencoordinates in the neighbourhood of $\theta_*$. By using this property we can write
\begin{eqnarray}
|Z\mu_i,\theta \rangle & = & |Z\mu_i,\theta_*\rangle + \sum_a \left. \nabla_{\vec e_a}|Z\mu_i\rangle \right|_{\theta_*} (\vartheta^a-\vartheta^a_*) + \mathcal O(\|\theta-\theta_*\|^2) \\
& = & |Z\mu_i,\theta_*\rangle \nonumber \\
& & \quad + \sum_a\sum_j \frac{\langle Z\mu_j,\theta_*|\left.\nabla_{\vec e_a}H \right|_{\theta_*}|Z\mu_i,\theta_*\rangle}{\tilde \chi_i - \tilde \chi_j + \ihbar \underline\lambda_a} |Z\mu_j,\theta_*\rangle (\vartheta^a-\vartheta^a_*) \nonumber \\
& & \quad + \mathcal O(\|\theta-\theta_*\|^2)
\end{eqnarray}
because $(|Z\mu_j,\theta_*\rangle)_j$ is a basis of $\mathcal H$ (it is the set of the eigenvectors of $H(\theta_*)$). We see that $\frac{\langle Z\mu_j,\theta_*|\left.\nabla_{\vec e_a}H \right|_{\theta_*}|Z\mu_i,\theta_*\rangle}{\tilde \chi_i - \tilde \chi_j + \ihbar \underline \lambda_a}(\vartheta^a-\vartheta^a_*) $ measures the propensity of the classical dynamical system to induce a transition from $|Z\mu_i,\theta_*\rangle$ to $|Z\mu_j,\theta_*\rangle$ in the neighbourhood of $\theta_*$. We see also a phenomenon of resonance if $\R(\underline\lambda_a)\simeq 0$  and $\tilde \chi_i-\tilde \chi_j \simeq \hbar \I(\underline\lambda_a)$. $\I(\underline\lambda_a)$ is the frequency of the rotation of the flow around $\theta_*$ and $|\R(\underline\lambda_a)|$ is the inverse of the ``life duration'' of the flow around $\theta_*$ (if $\R(\underline\lambda_a)<0$, $1/|\R(\underline\lambda_a)|$ is the characteristic duration of the fall on $\theta_*$ and if $\R(\underline\lambda_a)>0$ it is the characteristic duration of the escape from the neighbourhood of $\theta_*$). If the flow rotates around $\theta_*$ with a frequency tuned with the quantum transition frequency $\frac{\tilde \chi_i-\tilde\chi_j}{\hbar}$, it induces a strong transition $|Z\mu_i,\theta_*\rangle \to |Z\mu_j,\theta_*\rangle$ (if the duration of the rotation is sufficiently large i.e. $\R(\underline\lambda_a)\simeq 0$) as the same thing than the oscillations of an electromagnetic field with the same tuned frequency.

\subsection{Perturbative expansion}
\begin{prop}
  Let $(\Gamma,\mu,\varphi^t,\mathcal H,H)$ be a conservative driven quantum system, $\theta_* \in \Gamma$ be a fixed or cyclic point of $\varphi^t$, $\{\tilde \chi_i\}_i$ be the fundamental quasienergies associated with $\theta_*$ and $\{|Z\mu_i,\theta\rangle\}_i$ be the associated quasienergy states. Let $\Sp(F^\mu\partial_\mu) \ni \lambda \to H_\lambda \in \mathcal L(\mathcal H)$ be such that $H(\theta) = H(\theta_*) + \sum_{\lambda\not=0} f_\lambda^\alpha(\theta) H_{\lambda \alpha}$ (with $f_\lambda^\alpha(\theta_*) = 0$). $\alpha$ runs on the degeneracy of $\lambda$, the summation on $\alpha$ is implicit.  We suppose that $\exists \epsilon>0$ such that $|\langle Z\mu_j,\theta_*|H_{\lambda \alpha}|Z\mu_i,\theta_* \rangle| < \epsilon$, $\forall \lambda \not=0$. We have then
\begin{eqnarray}
|Z\mu_i,\theta \rangle & = & |Z\mu_i,\theta_* \rangle \nonumber \\
& & \quad + \sum_{\lambda \not= 0} \sum_j f_\lambda^\alpha(\theta) \frac{\langle Z\mu_j,\theta_*|H_{\lambda \alpha}|Z\mu_i,\theta_*\rangle}{\tilde \chi_i - \tilde \chi_j + \ihbar \lambda} |Z\mu_j,\theta_*\rangle \nonumber \\
& & \quad + \mathcal O(\epsilon^2) 
\end{eqnarray}
\end{prop}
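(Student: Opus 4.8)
The plan is to recognize the claim as the first-order term of Rayleigh--Schr\"odinger perturbation theory for the Schr\"odinger--Koopman Hamiltonian on $\mathcal K$, and to obtain the formula by inserting an ansatz into the eigenvalue equation and matching orders in $\epsilon$. Write $H_K = H_K^{(0)} + V$ with $H_K^{(0)} = -\ihbar F^\mu(\theta)\partial_\mu + H(\theta_*)$ and $V(\theta) = \sum_{\lambda\neq 0} f_\lambda^\alpha(\theta) H_{\lambda\alpha}$; by the hypothesis $|\langle Z\mu_j,\theta_*|H_{\lambda\alpha}|Z\mu_i,\theta_*\rangle|<\epsilon$ (the $f_\lambda^\alpha$ being bounded) the operator $V$ is a perturbation of order $\epsilon$. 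From the proof of Property~\ref{fundquasifixed} at a fixed point one has $H(\theta_*)|Z\mu_i,\theta_*\rangle = \tilde\chi_i|Z\mu_i,\theta_*\rangle$, and since $F^\mu\partial_\mu f_\lambda^\alpha = \lambda f_\lambda^\alpha$ the vectors $f_\lambda^\alpha(\theta)|Z\mu_i,\theta_*\rangle\in\mathcal K$ are eigenstates of $H_K^{(0)}$ with eigenvalues $\tilde\chi_i - \ihbar\lambda$, the constant mode $\lambda=0$ returning exactly the fundamental quasienergy state $|Z\mu_i,\theta_*\rangle$ at energy $\tilde\chi_i$. Note that $f_\lambda^\alpha(\theta_*)=0$ for $\lambda\neq0$ is automatic at a fixed point, since $F(\theta_*)=0$ forces $\lambda f_\lambda^\alpha(\theta_*)=0$.

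Next I would posit $|Z\mu_i,\theta\rangle = |Z\mu_i,\theta_*\rangle + \sum_{\lambda\neq0}\sum_j b_{ij}^{\lambda\alpha} f_\lambda^\alpha(\theta)|Z\mu_j,\theta_*\rangle + \mathcal O(\epsilon^2)$ (the first-order correction can involve only the Koopman modes already present in $V$), substitute into $(-\ihbar F^\mu\partial_\mu + H(\theta))|Z\mu_i,\theta\rangle = \tilde\chi_i|Z\mu_i,\theta\rangle$, and retain terms through first order in $\epsilon$. Here $-\ihbar F^\mu\partial_\mu$ kills the constant term and produces the factors $-\ihbar\lambda$ on the correction; $H(\theta_*)$ produces the factors $\tilde\chi_i$ and $\tilde\chi_j$; and $V$ acting on the leading term gives $\sum_{\lambda\neq0}\sum_j f_\lambda^\alpha(\theta)\langle Z\mu_j,\theta_*|H_{\lambda\alpha}|Z\mu_i,\theta_*\rangle|Z\mu_j,\theta_*\rangle$, after inserting the resolution of the identity $\sum_j|Z\mu_j,\theta_*\rangle\langle Z\mu_j,\theta_*|$, which is legitimate because the $|Z\mu_j,\theta_*\rangle$ are an orthonormal basis of $\mathcal H$ (eigenvectors of the self-adjoint $H(\theta_*)$). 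The action of $V$ on the correction is $\mathcal O(\epsilon^2)$ and is dropped. Matching the coefficient of each basis vector $f_\lambda^\alpha(\theta)|Z\mu_j,\theta_*\rangle$ — valid since the $f_\lambda^\alpha$ are orthonormal in $L^2(\Gamma,d\mu)$ and the $|Z\mu_j,\theta_*\rangle$ span $\mathcal H$ — yields $(\tilde\chi_j - \tilde\chi_i - \ihbar\lambda) b_{ij}^{\lambda\alpha} + \langle Z\mu_j,\theta_*|H_{\lambda\alpha}|Z\mu_i,\theta_*\rangle = 0$, which is exactly the asserted value of $b_{ij}^{\lambda\alpha}$. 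The $\lambda=0$ sector produces no correction: for $j\neq i$ the coefficient is forced to zero because $V$ has no constant Koopman component, and for $j=i$ the constant-mode coefficient is a normalization freedom that, the non-constant modes being $L^2$-orthogonal to the constant, affects $\llangle Z\mu_i|Z\mu_i\rrangle$ only at order $\epsilon^2$ and may be set to zero.

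Finally I would argue the $\mathcal O(\epsilon^2)$ remainder: this is precisely the first-order term of the Rayleigh--Schr\"odinger series for $H_K = H_K^{(0)}+V$, norm-convergent when $\epsilon$ is small compared with the relevant energy denominators, so the truncation error is $\mathcal O(\epsilon^2)$. \textbf{The main obstacle is the control of those denominators}: $H_K^{(0)}$ has an extremely degenerate, ladder-type spectrum $\{\tilde\chi_i-\ihbar\lambda\}$, so the argument needs the non-resonance condition $\tilde\chi_i - \tilde\chi_j + \ihbar\lambda \neq 0$ for every $(\lambda,j)\neq(0,i)$ with nonvanishing matrix element (and a uniform lower bound on these differences for a quantitative bound); in the resonant case one must first diagonalize $V$ within the degenerate eigenspace. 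For a cyclic point $\theta_*$ of period $T$ the same scheme applies with $H_K^{(0)}$ replaced by the associated Floquet--Koopman operator whose unperturbed eigenstates are built from the monodromy eigenvectors $|Z\mu_i,\theta_*\rangle$ (eigenvalue $\tilde\chi_i$) and the Koopman modes, via the normal form $U(t,0;\theta_*) = Z(t,0;\theta_*)e^{-\ihbar^{-1}M_{\theta_*}t}$ from the proof of Property~\ref{fundquasifixed}; the condition $f_\lambda^\alpha(\theta_*)=0$ for $\lambda\neq0$ must then be retained as an explicit hypothesis, since it excludes the Koopman modes resonant with the cycle.
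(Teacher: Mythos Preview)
Your proposal is correct and follows essentially the same route as the paper: the paper also posits the ansatz $|Z\mu_i,\theta\rangle = |Z\mu_i,\theta_*\rangle + \sum_{\lambda\neq0} f_\lambda^\alpha(\theta)|\zeta_{i\lambda\alpha}\rangle$, substitutes into the eigenequation $(-\ihbar F^\mu\partial_\mu + H(\theta))|Z\mu_i,\theta\rangle = \tilde\chi_i|Z\mu_i,\theta\rangle$, collects coefficients of each Koopman mode, expands $|\zeta_{i\lambda\alpha}\rangle$ in the basis $|Z\mu_j,\theta_*\rangle$, and projects to obtain the same linear relation for the coefficients. The only cosmetic difference is that the paper keeps the cross term $\sum_{\nu\neq0,\lambda} H_{\lambda-\nu}|\zeta_{i\nu}\rangle$ and iterates once to display the $\mathcal O(\epsilon^2)$ correction explicitly, whereas you drop $V$ acting on the correction immediately as $\mathcal O(\epsilon^2)$; your explicit framing as Rayleigh--Schr\"odinger perturbation theory for $H_K = H_K^{(0)}+V$ and your remarks on the non-resonance condition and the cyclic-point case go beyond what the paper spells out, but the computation is the same.
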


\begin{proof}
We set
\begin{equation}
|Z\mu_i,\theta \rangle = |Z\mu_i,\theta_*\rangle + \sum_{\lambda \not=0} f_\lambda^\alpha(\theta)|\zeta_{i\lambda\alpha} \rangle
\end{equation}
with $|\zeta_{i\lambda\alpha}\rangle \in \mathcal H$. The eigenequation $(-\ihbar F^\mu \partial_\mu + H(\theta))|Z\mu_i,\theta \rangle = \tilde \chi_i |Z\mu_i,\theta \rangle$ becomes
\begin{eqnarray}
& &\left(-\ihbar F^\mu \partial_\mu + H(\theta_*) +\sum_{\lambda\not=0} H_{\lambda \alpha} f_\lambda^\alpha(\theta)\right)\left(|Z\mu_i,\theta_*\rangle+ \sum_{\lambda \not=0} f_\lambda^\alpha(\theta)|\zeta_{i\lambda\alpha} \rangle \right) \nonumber \\
& & \qquad = \tilde \chi_i \left(|Z\mu_i,\theta_*\rangle + \sum_{\lambda \not=0} f_\lambda^\alpha(\theta)|\zeta_{i\lambda\alpha} \rangle \right)
\end{eqnarray}
\begin{eqnarray}
& & -\ihbar \sum_{\lambda \not=0} \lambda f_\lambda^\alpha(\theta) |\zeta_{i\lambda\alpha} \rangle  + \sum_{\lambda\not=0} f_\lambda^\alpha(\theta)H(\theta_*)| \zeta_{i\lambda\alpha} \rangle  \nonumber \\
& & + \sum_{\lambda\not=0} f_\lambda^\alpha(\theta)H_{\lambda\alpha} |Z\mu_i,\theta_*\rangle + \sum_{\lambda,\nu\not=0} f_{\lambda+\nu}^{\alpha+\beta}(\theta) H_{\lambda\alpha} |\zeta_{i\nu\beta} \rangle \nonumber \\
& & \qquad = \tilde \chi_i \sum_{\lambda \not=0} f_\lambda^\alpha(\theta)|\zeta_{i\lambda\alpha} \rangle
\end{eqnarray}
\begin{eqnarray}
  & & \sum_{\lambda \not=0} f_\lambda^\alpha(\theta) \left(-\ihbar \lambda |\zeta_{i\lambda\alpha} \rangle + H(\theta_*)| \zeta_{i\lambda\alpha} \rangle + H_{\lambda\alpha} |Z\mu_i,\theta_*\rangle \right. \nonumber \\
  & & \qquad \left. + \sum_{\nu\not=0,\lambda}H_{\lambda-\nu,\alpha-\beta} |\zeta_{i\nu\beta} \rangle \right) \nonumber \\
& & \qquad \qquad = \tilde \chi_i \sum_{\lambda \not=0} f_\lambda^\alpha(\theta)|\zeta_{i\lambda\alpha} \rangle
\end{eqnarray}
It follows that
\begin{equation}
\left(H(\theta_*)-\tilde \chi_i -\ihbar \lambda \right) | \zeta_{i\lambda\alpha} \rangle + H_{\lambda\alpha} |Z\mu_i,\theta_*\rangle +  \sum_{\nu\not=0,\lambda}H_{\lambda-\nu,\alpha-\beta} |\zeta_{i\nu\beta} \rangle = 0
\end{equation}
Since $(|Z\mu_j,\theta_*\rangle)_j$ is a basis of $\mathcal H$ we set $|\zeta_{i\lambda\alpha} \rangle = \sum_j \zeta_{ij\lambda\alpha} |Z\mu_j,\theta_* \rangle$ and then
\begin{eqnarray}
& & \sum_j \zeta_{ij\lambda\alpha} \left(\tilde \chi_j-\tilde \chi_i -\ihbar \lambda \right) | Z\mu_j,\theta_* \rangle \nonumber \\
& & \qquad + H_{\lambda\alpha} |Z\mu_i,\theta_*\rangle +  \sum_{\nu\not=0,\lambda} \sum_j \zeta_{ij\nu\beta} H_{\lambda-\nu,\alpha-\beta} |Z\mu_j,\theta_* \rangle = 0
\end{eqnarray}
By projection of this equation on $\langle Z\mu_j,\theta_*|$ we find
\begin{eqnarray}
& & \zeta_{ij\lambda\alpha} \left(\tilde \chi_j-\tilde \chi_i -\ihbar \lambda \right) + \langle Z\mu_j,\theta_*|H_{\lambda\alpha}|Z\mu_i,\theta_*\rangle \nonumber \\
& & \qquad + \sum_{\nu\not=0,\lambda} \sum_k \zeta_{ik\nu\beta} \langle Z\mu_j,\theta_*|H_{\lambda-\nu,\alpha-\beta} |Z\mu_k,\theta_* \rangle = 0
\end{eqnarray}
It follows
\begin{eqnarray}
\zeta_{ij\lambda\alpha} & = & \frac{\langle Z\mu_j,\theta_*|H_{\lambda\alpha}|Z\mu_i,\theta_*\rangle}{\tilde \chi_i-\tilde \chi_j+\ihbar \lambda} \nonumber \\
& & \quad  + \sum_{\nu\not=0,\lambda} \sum_k \zeta_{ik\nu\beta} \frac{\langle Z\mu_j,\theta_*|H_{\lambda-\nu,\alpha-\beta} |Z\mu_k,\theta_* \rangle}{\tilde \chi_i-\tilde \chi_j+\ihbar \lambda} \\
& = & \frac{\langle Z\mu_j,\theta_*|H_{\lambda\alpha}|Z\mu_i,\theta_*\rangle}{\tilde \chi_i-\tilde \chi_j+\ihbar \lambda} \nonumber \\
& &   + \sum_{\nu\not=0,\lambda} \sum_k \frac{\langle Z\mu_k,\theta_*|H_{\nu\beta}|Z\mu_i,\theta_*\rangle}{\tilde \chi_i-\tilde \chi_k+\ihbar \nu} \frac{\langle Z\mu_j,\theta_*|H_{\lambda-\nu,\alpha-\beta} |Z\mu_k,\theta_* \rangle}{\tilde \chi_i-\tilde \chi_j+\ihbar \lambda} \nonumber \\
& &  + \mathcal O(\epsilon^3)
\end{eqnarray}
\end{proof}
Moreover we have (see \cite{Budisic})
\begin{eqnarray}
  & & f_\lambda^\alpha(\theta) \langle Z\mu_j,\theta_*|H_{\lambda\alpha}|Z\mu_i,\theta_*\rangle \nonumber \\
  & & \qquad = \lim_{T \to +\infty} \frac{1}{T} \int_0^T e^{-\lambda t} \langle Z\mu_j,\theta_*|H(\varphi^t(\theta))|Z\mu_i,\theta_*\rangle dt
\end{eqnarray}
permitting to find the element of the decomposition $f_\lambda^\alpha H_{\lambda \alpha}$.\\

For a stroboscopic driven quantum system, we have for a fixed point $\theta_*$:
\begin{eqnarray}
  |Z\mu_i,\theta \rangle & = & |Z\mu_i,\theta_* \rangle \nonumber \\
  & & \qquad + \sum_{\lambda\not=0} \sum_j f_\lambda^\alpha(\theta) \frac{\langle Z\mu_j,\theta_*|U_{\lambda \alpha}|Z\mu_i,\theta_*\rangle}{e^{-\imath (\tilde \chi_i+\imath \lambda)} - e^{-\imath \tilde \chi_j}}|Z\mu_j,\theta_* \rangle \nonumber \\
  & & \qquad + \mathcal O(\epsilon^2)
\end{eqnarray}
with $U(\theta) = U(\theta_*) + \sum_{\lambda\not=0} f_\lambda^\alpha(\theta) U_{\lambda \alpha}$ and $|\langle Z\mu_j,\theta_*|U_{\lambda \alpha}|Z\mu_i,\theta_* \rangle|< \epsilon$; $f_\lambda^\alpha(\theta) U_{\lambda \alpha} = \lim_{N\to+\infty} \frac{1}{N} \sum_{n=0}^{N-1} e^{-\lambda n} U(\varphi^n(\theta))$.\\

We have the same comments that for the local expansion, with a resonance phenomenon if $\tilde \chi_i - \tilde \chi_j \simeq \hbar \I(\lambda)$ ($\R(\lambda)=0$ since we consider a conservative system). If the Koopman operator presents an absolutely continuous spectrum then resonances are strongly likely.

\section*{References}

\end{document}